\documentclass[hidelinks,11pt]{article}
\usepackage{mymacros}
\usepackage{hyperref}
\usepackage[normalem]{ulem}
\usepackage{enumitem}

\usepackage{etoolbox}

\usepackage{orcidlink}

\providecommand{\customgenericname}{}
\newcommand{\newcustomtheorem}[2]{%
  \newenvironment{#1}[1]
  {%
   \renewcommand\customgenericname{#2}%
   \renewcommand\theinnercustomgeneric{##1}%
   \innercustomgeneric
  }
  {\endinnercustomgeneric}
}
\newcustomtheorem{customthm}{Theorem}

\usepackage[T1]{fontenc}
\usepackage{lmodern}


\usepackage{graphicx}
\usepackage{subcaption}

\usepackage{bbm} 

\DeclareMathAlphabet{\mathpzc}{OT1}{pzc}{m}{it}

\definecolor{darkmagenta}{rgb}{0.55, 0.0, 0.55}

\definecolor{magenta}{rgb}{0.8, 0.0, 0.8}

\newcommand{\1}{\mathbbm{1}}

\newcommand{\pt}{\operatorname{pt}}
\newcommand{\one}{{\1}}
\newcommand{\eps}{\varepsilon}
\usepackage[titles]{tocloft}
\setlength{\cftbeforesecskip}{0.5ex}


\newcommand{\bulk}{\varnothing}
\newcommand{\0}{0}
\renewcommand{\o}{o}
\newcommand{\x}{x}
\newcommand{\y}{y}
\newcommand{\z}{z}

\newcommand{\ox}{{ox}}
\newcommand{\G}{G_{\nu_c,\infty}}
\newcommand{\vv}{\alpha_\infty}

\makeatletter
\newsavebox\myboxA
\newsavebox\myboxB
\newlength\mylenA

\newcounter{savedenum} 

\newcommand{\resumeenumerate}{%
    \setcounter{enumi}{\value{savedenum}} 
}

\newcommand*\xoverline[2][0.60]{%
    \sbox{\myboxA}{$\m@th#2$}%
    \setbox\myboxB\null
    \ht\myboxB=\ht\myboxA%
    \dp\myboxB=\dp\myboxA%
    \wd\myboxB=#1\wd\myboxA
    \sbox\myboxB{$\m@th\overline{\copy\myboxB}$}
    \setlength\mylenA{\the\wd\myboxA}
    \addtolength\mylenA{-\the\wd\myboxB}%
    \ifdim\wd\myboxB<\wd\myboxA%
       \rlap{\hskip 0.5\mylenA\usebox\myboxB}{\usebox\myboxA}%
    \else
        \hskip -0.5\mylenA\rlap{\usebox\myboxA}{\hskip 0.5\mylenA\usebox\myboxB}%
    \fi}
\makeatother

\newcommand{\Tay}{\operatorname{Tay}}
\newcommand{\cst}{c^{\rm st}}
\renewcommand{\bar}[1]{\xoverline{#1}}

\renewcommand{\geq}{\ge}
\renewcommand{\leq}{\le}

\newcommand{\rg}{\operatorname{RG}}

\newcommand{\Loc}{\operatorname{Loc}}
\newcommand{\Eplus}{\E_+}

\newcommand{\uf}{f}

\newcommand{\bbb}{B}
\newcommand{\hh}{\mathpzc{h}}
\newcommand{\lp}{\mathpzc{l}}

\newcommand{\domRG}{\mathbb{D}}
\newcommand{\Wkappa}{\cW^{\kappa}}

\newcommand{\blocks}{\cB}

\newcommand{\free}{{\rm F}}
\newcommand{\crit}{{\rm crit}}
\newcommand{\per}{{\rm P}}
\newcommand{\II}{\mathbb{I}}

\newcommand{\ka}{a}
\newcommand{\kaa}{\mathfrak{a}}
\newcommand{\kb}{\mathfrak{b}}

\newcommand{\kp}{\mathfrak{p}}
\newcommand{\kh}{\mathfrak{h}}

\newcommand{\qLap}{q}
\newcommand{\scale}{\rho}

\newcommand{\Npp}{N}

\newcommand{\joxm}{j_{{\ox_ -}}}

\newcommand{\HLap}{\mathbb{C}} 


\newcommand{\constF}{c^{\rm F}}
\newcommand{\const}{{\rm const}}

\newcommand{\assumPhi}{(\textnormal{A}_{\Phi})}
\makeatletter
\newcommand{\customlabel}[2]{%
   \protected@write \@auxout {}{\string \newlabel {#1}{{#2}{\thepage}{#2}{#1}{}} }%
   \hypertarget{#1}{#2}
}
\makeatother

\title{Boundary conditions and the two-point function plateau for
\\
the hierarchical $|\varphi|^4$ model in dimensions 4 and higher}

\author{
  Jiwoon Park\,\orcidlink{0000-0002-1159-2676}%
  \thanks{Department of Mathematics,
  Republic of Korea Air Force Academy,
  635, Danjae-ro, Cheongju-si, Chungcheongbuk-do, Republic of Korea.
  {\tt jp711@cantab.ac.uk}
  }
  \and
  Gordon Slade\,\orcidlink{0000-0001-9389-9497}%
  \thanks{Department of Mathematics,
     University of British Columbia,
     Vancouver BC, Canada V6T 1Z2.
       {\tt slade@math.ubc.ca}}
}

\date{}
\vspace{-5ex}

\begin{document}

\maketitle

\begin{abstract}
We obtain precise plateau estimates for the two-point function of the finite-volume weakly-coupled hierarchical
$|\varphi|^4$ model in dimensions $d \ge 4$, for both free and periodic boundary conditions,
and for any number $n \ge 1$ of components of the field $\varphi$. We prove that, within a critical window around their respective effective critical points,
the two-point functions for both free and periodic boundary conditions have a plateau, in the sense that they decay as $|x|^{-(d-2)}$ until reaching a constant plateau value of order $V^{-1/2}$ (with a logarithmic correction
for $d=4$), where $V$ is the size of the finite volume.
The two critical windows for free and periodic boundary conditions do not overlap.
The dependence of the plateau height on the location within the critical window
is governed by an explicit $n$-dependent universal profile which is independent
of the dimension.
The proof is based on a rigorous renormalisation group method
and extends the method used by Michta, Park and Slade (arXiv:2306.00896)
to study  the
finite-volume susceptibility
and related quantities.
Our results lead to precise conjectures concerning Euclidean
(non-hierarchical) models
of spin systems and self-avoiding walk in dimensions $d \ge 4$.
\end{abstract}

%

\section{Introduction and main results}
\label{sec:intro_main_results}

\subsection{Introduction}
\label{sec:intro}

The $|\varphi|^4$ model on the Euclidean lattice $\Z^d$ is
one of the most fundamental
spin models in statistical mechanics and Euclidean quantum field theory (see, e.g.,\cite{GJ87,FFS92}).
For $n \in \N$, $g>0$, $\nu \in \R$, a finite subset $\Lambda \subset \Z^d$,
and a spin field $\varphi : \Lambda \to \R^n$, the model's \emph{Hamiltonian}
is defined by
\begin{align}
\label{eq:Hamiltonian}
    H (\varphi) =
    \sum_{x\in \Lambda}\Big(
    \frac 12 \varphi_x \cdot (-\Delta \varphi)_x
    +
    \frac{1}{4} g |\varphi_x|^4
    + \frac{1}{2} \nu  |\varphi_x|^2   \Big),
\end{align}
with $\Delta$ the discrete Laplace operator.
The theory's critical finite-size scaling for $d\ge 4$ and its dependence on the boundary condition have
been extensively studied in the physics literature.
The two common choices of boundary condition are periodic (PBC) and free (FBC).
PBC corresponds to taking $\Lambda$ to be a discrete torus,
while FBC means that spin-spin interactions occur
only inside $\Lambda$.

In particular, it has been observed that
at and above the upper critical dimension
$4$, with PBC the finite-volume critical
two-point function $\langle \varphi_0\cdot \varphi_x  \rangle$
has a \emph{plateau}: it has the Gaussian $|x|^{-(d-2)}$ decay of the infinite-volume
two-point function for small $x$, but for larger $x$ it levels off to a constant plateau
value.
Finite-size scaling and the plateau phenomenon are discussed, e.g., in
\cite{BEHK22,DGGZ24,F-SBKW16,Kenna04,LB97,LM16,WY14,ZGFDG18,FDZ21},
and there has been some debate about the plateau.  Early numerical evidence for
the existence of a plateau in spin systems can be seen in \cite[Figure~4]{LB97}.

Recently, the plateau phenomenon has also been studied in the
mathematical literature.
For the Ising model on $\Z^d$ in dimensions $d>4$ at the infinite-volume critical
point, with FBC the absence of the plateau has been proved
in \cite{CJN21} (with \cite{DP24,Saka07} for verification of
a hypothesis), whereas the presence of a plateau with PBC is
proved in \cite{LPS25-Ising}.
For (spread-out) percolation in dimensions $d>6$ with PBC, a plateau
is proven to exist
throughout the critical window in \cite{HMS23}, whereas with
FBC
the absence of the plateau at the infinite-volume critical point is proved in \cite{CH20}.
For PBC, a plateau is proven to exist for self-avoiding walk in dimensions $d>4$ in
\cite{Slad23_wsaw,Liu24}, for simple random walk in dimensions $d>2$ in \cite{Slad23_wsaw,DGGZ24},
and for (spread-out) lattice trees and lattice animals in dimensions $d>8$
in \cite{LS25a}.  An overview is given in \cite{LPS25-universal}.

\emph{Hierarchical} models provide a simplified context for studying critical
phenomena, and there is an extensive literature on statistical mechanical
models on hierarchical lattices,
e.g., \cite{ACG13,BM87,CE78,GK82,HHW01,BBS-brief,Hutc23,BI03d}.
A detailed  account of the finite-size scaling
of the weakly-coupled $|\varphi|^4$ model  on the hierarchical lattice
in dimensions $d \ge 4$
was presented in \cite{MPS23}, via an extension and improvement of
the rigorous renormalisation group method of  \cite{BBS-brief}.
The results of \cite{MPS23} focussed on the universal scaling of the total field and its
moments, in the vicinity of effective finite-volume critical points which depend on the
choice of boundary condition.

The results of \cite{MPS23} do not include the critical finite-size scaling of the
hierarchical two-point function $\langle \varphi_0^{(1)}\varphi_x^{(1)} \rangle$,
where $\varphi_x^{(1)}$ denotes the first component of $\varphi_x \in \R^n$.
Our purpose in this paper is to elucidate the plateau behaviour
of the two-point function in dimensions $d \ge 4$.
The main technical innovation is to augment the renormalisation group
flow used in \cite{MPS23} with \emph{observable fields} in order to study
the two-point function.  Observable fields have been used in our context
multiple times in the past, and we are inspired by the approach used in \cite{BBS-saw4}.
(Different observable fields were used to obtain all multi-point correlation
functions for a certain hierarchical model in \cite{ACG13}.)
Our results for the hierarchical model (defined in Section~\ref{sec:model})
have the following features:
\begin{itemize}
\item
We include $n$-component models for all $n \ge 1$, and include the upper critical
dimension $d=4$ (with its logarithmic corrections) as well as dimensions $d>4$.
\item
We prove that the amplitude of the plateau has a universal profile in
non-overlapping critical windows
around distinct effective critical points for FBC and PBC, with the
\emph{same} profile for both FBC and PBC.  The profile depends on $n$ but not on $d$.
\item
We explain why it is the case that, at the infinite-volume critical point, there
is a plateau with PBC but not for FBC:
the free boundary condition generates an effective
mass in the Hamiltonian.
\end{itemize}

We now summarise our results in more detail.
Given
$n \ge 1$ and $s \in \R$, we define the \emph{universal profile}
$\uf_n:\R \to (0,\infty)$ by
\begin{equation}
    \uf_n(s)
    =
    \frac{\int_{\R^n} |x|^2 e^{-\frac 14 |x|^4 - \frac s2 |x|^2} dx}
    {n\int_{\R^n} e^{-\frac 14 |x|^4 - \frac s2 |x|^2} dx}
    ,
\end{equation}
where $|x|$ denotes the $\ell_2$ norm of $x\in \R^n$.
We consider the weakly-coupled hierarchical model
in a finite volume of size $L^{dN}$ in dimensions $d \ge 4$, with $L$ fixed
and in the limit of large $N$.
We denote expectation with fixed $g$ and variable $\nu$ as $\langle \cdot\rangle_{\nu,N}$ and the infinite-volume
critical point as $\nu_c$ (it depends on $d$, $g$ and $L$).
In our formulation, the infinite hierarchical lattice is an orthant of $\Z^d$.
We denote points in the orthant as $x$, and $|x|$ (again) denotes the
$\ell_2$ norm.

\medskip\noindent
We prove that for $d>4$ and for all $n \ge 1$:
\begin{itemize}
\item
For PBC,
with a window scale $w_N$ of order $L^{-Nd/2}$,
\begin{equation}
\label{eq:PBC-intro}
    \langle \varphi_\o^{(1)}\varphi_\x^{(1)} \rangle_{\nu_c+sw_N,N}
    = \Big( \HLap_{0,\infty}(\x) + c_d\uf_n(s) L^{-Nd/2} \Big)[1+o(1)]
\end{equation}
as $N\rightarrow \infty$,
where the massless hierarchical Green function
$\HLap_{0,\infty}(\x)$
is comparable to  $|x|^{-(d-2)}$,
and $o(1)$ denotes an error term which is small when $N$ and $x$ are large (depending on $s$).
This is valid for all $s \in (-\infty,+\infty)$, so both above and below
the infinite-volume critical point within a critical window whose width
is the reciprocal of the square root of the volume.  The Green function
$\HLap_{0,\infty}(\x)$ dominates the right-hand
side of \eqref{eq:PBC-intro} when $|x|^{d-2}$ is smaller than the square root
of the volume, but for larger $x$ it is the plateau term $c_d\uf_n(s) L^{-Nd/2}$
that is dominant.
\item
For FBC, \eqref{eq:PBC-intro} also holds for FBC after $\nu_c$ is replaced by
an effective critical point $\nu_{c,N}^\free$ with $\nu_{c,N}^\free = \nu_c - O(L^{-2N})$.
In particular, the \emph{same} profile $\uf_n$ occurs.
The windows of width $L^{-Nd/2}$ around the PBC and FBC effective
critical points $\nu_c$ and $\nu_{c,N}^\free$ are separated by $L^{-2N}$ and do not overlap.
\item
For FBC there is no plateau for the two-point function at the infinite-volume critical
point $\nu_c$.  Instead, only the $\HLap_{0,\infty}(\x)$ term occurs.
\end{itemize}

\medskip\noindent
For $d=4$, a similar picture holds with logarithmic corrections (polynomial in $N$),
with the same profile $\uf_n$.
The logarithmic corrections involve exponents defined by
\begin{equation}
\label{eq:hatexponents}
    \hat\gamma     = \frac{n+2}{n+8},
    \qquad
    \hat\theta  = \frac 12 -\hat\gamma  = \frac{4-n}{2(n+8)}.
\end{equation}
We prove that for $d=4$ and for all $n \ge 1$:
\begin{itemize}
\item
For PBC,
with a window
scale $w_N$ of order $N^{-\hat\theta}L^{-2N}$,
and for all $s \in (-\infty,+\infty)$,
\begin{equation}
\label{eq:4PBC-intro}
    \langle \varphi_\o^{(1)}\varphi_\x^{(1)} \rangle_{\nu_c+sw_N,N}
    = \Big( \HLap_{0,\infty}(\x) + c_4\uf_n(s) N^{1/2}L^{-2N} \Big)[1+o(1)].
\end{equation}
\item
For FBC, \eqref{eq:4PBC-intro} also holds for FBC after $\nu_c$ is replaced by
an effective critical point $\nu_{c,N}^\free$ with $\nu_{c,N}^\free = \nu_c - O(N^{\hat\gamma}L^{-2N})$.
Since $\hat{\gamma} > - \hat{\theta}$,
the windows of width $N^{-\hat\theta}L^{-2N}$ around the PBC and FBC effective
critical points $\nu_c$ and $\nu_{c,N}^\free$ do not overlap.
\item
For FBC there is again no plateau for the two-point function at the infinite-volume critical
point $\nu_c$.
\end{itemize}

Although our results are proved only for the hierarchical model,
we will explain below that
they lead to precise conjectures for the behaviour on $\Z^d$
(the Euclidean, non-hierarchical model) for all $d \ge 4$.
In particular, we believe that versions of the above statements hold also for
$\Z^d$ with the same scaling for $w_N$, with the same scaling for the
shift in the effective critical value
$\nu_{c,N}^\free$, and with the same universal profile $\uf_n(s)$.
Moreover, we
believe that all these statements hold for weakly or strictly self-avoiding walk,
when we set $n=0$.
See also \cite[Section~1.6]{MPS23} for further conjectures.

There are extensive results for $\Z^4$ obtained by a renormalisation group method
closely related to the one we employ for the hierarchical lattice, e.g.,
\cite{ST-phi4,BBS-phi4-log,BSTW-clp,BBS-saw4,BBS-saw4-log}.
These methods have the potential to be extended to obtain
our results also for finite volumes in $\Z^d$ in dimensions $d \ge 4$,
with sufficient effort.

\medskip\noindent
Notation:  We write $f_n \sim g_n$ to denote $\lim_n f_n/g_n =1$,
and $f_n \asymp g_n$ to denote the existence of $C>0$ such that $C^{-1}g_n \le f_n \le Cg_n$.

\subsection{The model}
\label{sec:model}

To state our results precisely, we now introduce the hierarchical model,
its Laplacian and Green function, following the presentation in \cite{MPS23}.
We also recall from \cite{BBS-brief,MPS23} a theorem
proving existence of the hierarchical model's critical point in dimensions
$d \ge 4$.

\subsubsection{The hierarchial group}

Given integers $d\ge 1$, $L>1$, and $N \ge 1$, we work in a finite volume defined by
\begin{equation}
    \Lambda_N =\{x \in \Z^d: 0 \le x_i < L^N \; \text{for}\; i=1,\ldots,d\}.
\end{equation}
The \emph{volume} of $\Lambda_N$ is its cardinality, $L^{dN}$.
The infinite-volume case is
\begin{equation}
    \Lambda_\infty = \{x \in \Z^d: 0 \le x_i \; \text{for}\; i=1,\ldots,d\}.
\end{equation}
For any integer $0 \le j\le N$, $\Lambda_N$ can be partitioned into $L^{d(N-j)}$
disjoint $j$-\emph{blocks}, which are each translates of $\Lambda_j$, and which each contain
$L^{dj}$ vertices.  We denote the set of these
$j$-blocks by $\cB_j$.

There is an equivalent representation of $\Lambda_\infty$, as follows.
Let $\Z_L = \Z/L\Z$ denote the cyclic group, and let
\begin{equation}
    \mathbb{H}_\infty
    =
    \{\tilde x = (\tilde x_1,\tilde x_2,\ldots):
    \text{$\tilde x_i \in \Z_L^d$ with only finitely many nonzero $\tilde x_i$}\}.
\end{equation}
This is an abelian group with coordinatewise addition mod-$L$, and we denote
the group addition as $+$.
Let $\mathbb{H}_N$ denote the subgroup of $\mathbb{H}_\infty$
with $\tilde x_i=0$ for $i>N$.
Let $x_j$ denote the representative of $\tilde x_j$ in $\Lambda_1 \subset \Z^d$.
The map $\sigma : \mathbb{H}_\infty \to \Lambda_\infty$ defined
by
\begin{equation}
\label{eq:hier-bijection}
    \sigma(\tilde x) = \sum_{j=1}^\infty L^{j-1} x_j
\end{equation}
is a bijection, where the
addition on the right-hand side of \eqref{eq:hier-bijection} is in $\Z^d$.
By definition, $\sigma$ restricts to a bijection $\sigma_N:\mathbb{H}_N \to \Lambda_N$.
The bijection induces an addition and a group structure on $\Lambda_N$ via
$x \oplus y =
\sigma_N (\sigma_N^{-1} x + \sigma_N^{-1} y)$ (with $+$ the addition on $\mathbb{H}_N$).
This makes $\sigma_N$ and $\sigma$ into group isomorphisms.

For $x,y\in \Lambda_N$,
we define the \emph{coalescence scale} $j_{xy}$ to be the
smallest $j$ such that $x$ and $y$ lie in the same $j$-block.
In terms of $\mathbb{H}_N$, $j_{xy}$ is the largest coordinate $i$ such
that $(\sigma_N^{-1}x)_i$ differs from $(\sigma_N^{-1}y)_i$.
From this last observation, we see that the coalescence scale is translation invariant
in the sense that
\begin{equation}
\label{eq:joplus}
    j_{x\oplus z,y\oplus z} = j_{xy} \qquad (x,y,z \in \Lambda_N).
\end{equation}
In particular, with $\o\in\Lambda_N$ the point with all coordinates zero, and
with subtraction in $\Lambda_N$ denoted by $\ominus$,
\begin{equation}
\label{eq:jx-y}
    j_{xy} = j_{\o,y\ominus x} \qquad (x,y \in \Lambda_N).
\end{equation}

\subsubsection{The hierarchial Laplacian}
\label{sec:Lapdef}

We define a probability matrix
$J:\Lambda_\infty \times \Lambda_\infty \to [0,1]$ by $J(\x,\x)=0$ and
\begin{equation}
\label{eq:Jxy}
    J(\x,\y)
    = \frac 1z \frac{1}{L^{(d+2)j_{\x\y}}}
    \qquad
    (\x,\y\in\Lambda_\infty, \; \x \neq \y),
\end{equation}
where $z=\frac{1-L^{-d}}{L^2-1}$ is chosen so that $\sum_{\x \neq \o}J(\o,\x)=1$.  By definition,
$J(\x,\y)=J(\o,\x\ominus \y)$.
For finite volume, we define FBC and PBC versions of $J$ by
\begin{alignat}{3}
    J^\free(\x,\y)
    &=
    \frac{1}{z} L^{-(d+2)j_{xy}}
    & \qquad &
    (\x,\y\in\Lambda_N,\; \x \neq \y),
    \\
    J^\per(\x,\y)
	&= \frac{1}{z} L^{-(d+2)j_{\x\y}}
    + {L^{-(d+2)N}}
    & \qquad &
    (\x,\y\in\Lambda_N,\;\x \neq \y).
\end{alignat}
As shown in \cite[Section~1.2.2]{MPS23}, $J^\per(x,y)$ is equal to the sum of $J(x,y')$ over periodic copies $y'$ of $y$, in $N$-blocks other than $\Lambda_N$.
For FBC, the quadratic form
$\sum_{x,y\in \Lambda_\infty} \varphi_x \cdot (J^\free (x,y)\varphi_y)$
equals $\sum_{x,y\in \Lambda_N} \varphi_x \cdot (J (x,y)\varphi_y)$,
so has the effect of setting the field outside $\Lambda_N$ to equal zero.
Thus, FBC here is the same as zero Dirichlet BC.

We set
\begin{equation}
\label{eq:qLapdef}
    \qLap = \frac{1-L^{-d}}{1-L^{-(d+2)}}
\end{equation}
and define the Laplacians $\Delta^*$
on $\Lambda_N$, with PBC and FBC, by
\begin{alignat}{2}
    -\Delta^\per(\x,\y) & = q(\delta_{\ox}-J^\per(\x,\y))
    \qquad
    & (\x,\y\in\Lambda_N),
    \\
    -\Delta^\free(\x,\y) & = q(\delta_{\ox}-J^\free(\x,\y))
    =  -\Delta^\per(\x,\y) + qL^{-(d+2)N}
    \qquad
    & (\x,\y\in\Lambda_N).
\label{eq:PBCLapdef}
\end{alignat}
The factor $\qLap$ has been introduced so that
the PBC Laplacian $\Delta_N^\per$ is identical to the hierarchical Laplacian defined in \cite[(4.1.8)]{BBS-brief}.  It allows for importation of results from \cite{BBS-brief}, but otherwise its precise value is not significant.
When  $L$ is large, $q$ is close to $1$.

Given $x\in \Lambda_N$ and $0\le j \le N$,
let $B_j (x)$ denote the unique $j$-block that contains $x$.
We then define the matrices of symmetric operators $Q_j$ and $P_j$ on $\ell^{2} (\Lambda_N)$ by
\begin{align}
    \label{eq:Qj-def}
    Q_{j}(\x,\y)
    &=
    \begin{cases}
    L^{-dj} & B_j(x)=B_j(y) \\
    0 & B_j(x) \neq B_j(y)
    \end{cases}
    \qquad
    (j=0,1,\ldots,N),
\end{align}
and
\begin{align}
    \label{eq:Pj-def}
    P_{j}
    &=
    Q_{j-1} - Q_{j}
    \qquad
    (j = 1,\dots ,N).
\end{align}
By definition,
\begin{equation}
\label{eq:Psum0}
    \sum_{x\in\Lambda_N}P_{j}(\o,\x)=0 \qquad (j = 1,\dots ,N).
\end{equation}
It is shown in \cite[Lemma~4.1.5]{BBS-brief} that the operators $P_{1},\dots ,P_{N}, Q_{N}$ are
orthogonal projections whose ranges are disjoint and provide a direct
sum decomposition of $\ell^2(\Lambda_N)$.
It follows by definition that $P_{j}(\x,\y)=0$ if $x$ and $y$ are in different $j$-blocks.
By \cite[(4.1.7)]{BBS-brief},  we obtain
\begin{equation}
    -\Delta_{N}^{\per} = \sum_{j=1}^N L^{-2(j-1)}P_j
\end{equation}
and
\begin{equation}
    -\Delta_{N}^{\rm F} =
    \sum_{j=1}^N L^{-2(j-1)}P_j+ qL^{-2 N}Q_N
	.
\end{equation}

\subsubsection{The hierarchical Green function}
\label{sec:C}

For $a>-L^{-2(N-1)}$ and $j \le N$, we define
\begin{align}
\label{eq:gamma_j}
	\gamma_j(\ka)
	&=\frac{L^{2 (j-1)}}{1+\ka L^{2(j-1)}}
\end{align}
and the symmetric, real, and positive semi-definite matrices
\begin{align}
\label{eq:Cjdef}
	C_{\ka,j} &= \gamma_j(\ka)P_j, \qquad
	C_{\ka,\leq N}(\ka) = \sum_{j=1}^N C_{\ka,j} .
\end{align}
For $a\neq 0$ and $a \neq -\qLap L^{-2N}$,
respectively,
we also define
\begin{align}
\label{eq:ChatN}
	C^*_{\ka,\hat N}  &=
	Q_N \times
	\begin{cases}
	\ka^{-1}  \quad & (* = \per) \\
	(\ka + qL^{-2 N})^{-1} &(* = \free).
	\end{cases}
\end{align}
With $\HLap_{\ka,N}^*$ denoting the resolvent of $-\Delta_N^*$,
the decompositions
\begin{equation}
\label{eq:Csum}
    \HLap_{\ka,N}^*  =
 (-\Delta_N^{*}+\ka)^{-1}
    = C_{\ka,\le N} + C^*_{\ka,\hat N} \quad (* \in \{\per,\free\})
\end{equation}
follow as in \cite[Proposition~4.1.9]{BBS-brief}.
We refer to the variable $\ka$ as the mass (squared), even though we permit it to be negative.
The shifted mass in the $Q_N$ term for FBC, compared to PBC,
is ultimately the source of the shift away from $\nu_c$ for
the finite-volume effective critical point for FBC.
Since $C_{\ka,j}(\x,\y)=C_{\ka,j}(o,y\ominus x)$, we write simply
$C_{\ka,j}(\x)$ instead of $C_{\ka,j}(\o,\x)$,
and similarly for $\HLap_{\ka,N}^*$.

For infinite volume,
the massless Laplacian is $-\Delta_\infty =\sum_{j=1}^\infty L^{-2(j-1)} P_j$.
The hierarchical Green function $\HLap_{\ka,\infty}$, defined to be the inverse of
$-\Delta_\infty +\ka$, is  given by
\begin{equation}
    \HLap_{\ka,\infty} =
    \sum_{j=1}^\infty \gamma_j(\ka) P_j
    =
    \sum_{j=1}^\infty C_{\ka,j}
    \qquad ( a \ge 0).
\end{equation}
We restrict now to dimensions $d>2$.
According to \cite[(4.1.29)]{BBS-brief}, the hierarchical and Euclidean
massless Green functions
are comparable in the sense that
\begin{equation}
\label{eq:HLap0-asy}
    \HLap_{0,\infty}(\x) \asymp |x|^{-(d-2)},
\end{equation}
where the norm on the right-hand side is the Euclidean norm of $x\in\Lambda_\infty
\subset \Z^d$.

By \eqref{eq:Psum0},
\begin{equation}
\label{eq:Csum0}
    \sum_{\x\in\Lambda_N}C_{\ka,j}(\x) =0 \quad (j = 1,\dots ,N).
\end{equation}
Therefore, for $\ka >0$, the finite-volume susceptibility is given by
\begin{equation}
    \chi_N^{\per,0}(\ka) = \sum_{\x \in \Lambda_N} \HLap^\per_{\ka,N}(\x) = \frac 1a.
\end{equation}
Thus, by definition,
\begin{equation}
\label{eq:CPa}
    \HLap^\per_{\ka,N}(\x) =  C_{\ka,\le N}(\x) + \frac{1}{\ka L^{dN}}
    = C_{\ka,\le N}(\x) + \frac{\chi^{\per,0} (\ka)}{L^{dN}}.
\end{equation}
If $\ka$ is small enough, e.g., $\ka=tL^{-(2+\delta)N}$
with $t,\delta >0$, then
(as we show in Lemma~\ref{lem:Capp}),
\begin{equation}
\label{eq:Cpa-plateau}
    \HLap^\per_{\ka,N}(\x) \asymp \frac{1}{|x|^{d-2}} + \frac{L^{\delta N}}{tL^{(d-2)N}}.
\end{equation}
As soon as $|x/L^N|^{d-2} > tL^{-\delta N}$, the constant term dominates.
This \emph{plateau} is a harbinger
of the plateau that we exhibit for the $|\varphi|^4$ model.
The plateau term $\chi^{\per,0} (\ka)/L^{dN}$ for the hierarchical Green function is
explicit and straightforward.
Its counterpart for the (Euclidean) lattice
Green function on $\Z^d$ is shown in \cite[Proposition~3.3(ii)]{DGGZ24}
to be asymptotically equivalent to the susceptibility divided by the volume,
rather than equality as in the simpler hierarchical setting.

\subsubsection{The hierarchical $|\varphi|^4$ model}

Given a choice of boundary condition $* \in \{\per,\free\}$,
$n \in \N$, $g>0$, $\nu \in \R$, and a spin field
$\varphi : \Lambda_N \to \R^n$, we define the hierarchical \emph{Hamiltonian} by
\begin{align}
\label{eq:Hamiltonian_def}
    H_{\nu,N}^* (\varphi)
    = \sum_{x\in \Lambda_N}
    \Big(
    \frac 12 \varphi_x \cdot (-\Delta_N^* \varphi)_x
    +
     \frac{1}{4} g  |\varphi_x|^4
     +
     \frac{1}{2} \nu  |\varphi_x|^2  \Big).
\end{align}
The hierarchical Laplacian acts coordinatewise for functions (such as $\varphi$)
taking values in $\R^n$,
and $|\varphi_x|=(\varphi_x \cdot \varphi_x)^{1/2}$ is the Euclidean norm of $\varphi_x$.
The expectation of a function $F$ of the spin field is defined by
\begin{equation}
\label{eq:expectation}
    \langle F \rangle_{\nu,N}^*
    =
    \frac{1}{Z_{\nu,N}^*}
    \int_{(\R^{n})^{\Lambda_N}}F(\varphi) e^{-H_{\nu,N}^* (\varphi)} d\varphi
    ,
\end{equation}
where the \emph{partition function} $Z_{\nu,N}^*$ is the normalisation constant
\begin{equation}
\label{eq:partition_function}
    Z_{\nu,N}^* = \int_{(\R^{n})^{\Lambda_N}}  e^{-H_{\nu,N}^* (\varphi)} d\varphi.
\end{equation}
The definitions \eqref{eq:Hamiltonian_def}--\eqref{eq:partition_function}
depend on $g$ but since we regard $g$ as fixed
(and small) in the following, we do not make this dependence explicit.
On the other hand, $\nu$ is variable, and the dependence on $\nu$ is our primary interest.

Let $\varphi_x^{(j)}$ denote the $j^{\rm th}$ component of $\varphi_x \in \R^n$.
For $\x,\y\in\Lambda_N$, the finite-volume \emph{two-point function} is
defined to be $\langle \varphi_{\x}^{(1)}\varphi_{\y}^{(1)} \rangle_{\nu,N}^*$.
The invariance of the Hamiltonian under the translation
$\varphi_{\boldsymbol{\cdot}} \mapsto \varphi_{\boldsymbol{\cdot}\oplus\z}$ shows that it suffices to consider the special case defined by
\begin{equation}
    G_{\nu, N}^*(\x) = \langle \varphi_{\o}^{(1)}\varphi_{\x}^{(1)} \rangle_{\nu,N}^*.
\end{equation}
Since the distribution of the field $\varphi$ under \eqref{eq:expectation}
is invariant under the map
$(\varphi_x)_{x\in\Lambda_N} \mapsto (M\varphi_x )_{x\in\Lambda_N}$ for every
orthogonal transformation $M \in O(n)$, the two-point function can
equivalently be written as
\begin{equation}
    G_{\nu, N}^*(\x)
    =
    \frac{1}{n} \langle \varphi_{\o}\cdot\varphi_{\x} \rangle_{\nu,N}^*.
\end{equation}
The finite-volume \emph{susceptibility} is defined by
\begin{align}
\label{eq:chiPhi}
    \chi_N^* (\nu )
    =
    \sum_{\x\in \Lambda_N}G_{\nu,N}^*(\x)
    .
\end{align}

\subsubsection{The critical point}

The following theorem is the $p=1$ special case of the more general
\cite[Theorem~1.1(ii)]{MPS23}.
It identifies the \emph{critical point} $\nu_c$: the point
at which the infinite-volume susceptibility is singular.
(The PBC case $d=4$ of Theorem~\ref{thm:book_main_theorem_bis}
was proved earlier in \cite{BBS-brief}.)
For $d=4$, recall from \eqref{eq:hatexponents} the definition
\begin{equation}
\label{eq:hatexponents-bis}
    \hat\gamma     = \frac{n+2}{n+8}.
\end{equation}

\begin{theorem}
\label{thm:book_main_theorem_bis}
Let $d\ge 4$,
let $n \in \N$, let $L$ be sufficiently large, and
let $g>0$ be sufficiently small (depending on $L$).
There exist a critical value $\nu_c \in \R$ and
a constant $A_d>0$ (both depending on $d,g,n,L$)
such that the infinite-volume limit of the susceptibility exists
for $\nu=\nu_c+\eps$ for all $\eps >0$, and the limit is independent of the
boundary condition $*=\free$ or $* =\per$.  Moreover, as $\eps \to 0$,
the limit diverges according to
\begin{equation}
\label{eq:subcrit_gaussian_moment}
	\chi_\infty(\nu_c+\eps)
    =
    \lim_{N \to \infty} \chi_N^*(\nu_c+\eps)
    \sim
	A_d \times
    \begin{cases}
    (\log \eps^{-1})^{\hat\gamma} \eps^{-1} & (d=4)
    \\
    \eps^{-1} & (d>4).
    \end{cases}
\end{equation}
\end{theorem}

It is also proved in \cite[Theorem~1.1(ii)]{MPS23} that,
as $g \downarrow 0$, the amplitudes and critical value obey
\begin{align}
    A_{4} &\sim
    \Big( \frac{B g}{\log L^{2}} \Big)^{\hat\gamma},
    \quad A_{d} =1+O(g) \;\;\;(d>4),
    \quad
    \nu_c \sim - (n+2) g \HLap_{0,\infty}  (\o) ,
    \label{eq:A,B_definition_bis}
\end{align}
with
\begin{equation}
\label{eq:B-def}
    B=(n+8) (1- L^{-d}) .
\end{equation}
For $d=4$, the constant $B$ will play an important role.
With PBC,
Theorem~\ref{thm:book_main_theorem_bis} is proved for the Euclidean model on $\Z^4$
for all $n \ge 1$ in \cite{BBS-phi4-log}.  A version of the theorem is also
proved for $n=1$ on $\Z^d$ for all $d \ge 4$ in \cite{HT87}.

\subsubsection{The effective critical points and critical windows}

We now recall some definitions from \cite[Section~1.4]{MPS23}.
Let $\hat\theta =  \frac 12 -\hat\gamma$ as in \eqref{eq:hatexponents}.
The \emph{window scale} (or \emph{rounding} scale) is defined by
\begin{equation}
\label{eq:window_choice}
	w_N =
    \begin{cases}
        A_{4} (\log L^2)^{\hat\gamma}\bbb^{-1/2}N^{-\hat\theta} L^{-2N} & (d=4)
        \\
        A_{d}\, g_\infty^{1/2} L^{-Nd/2}
        & (d>4),
    \end{cases}
\end{equation}
where $g_\infty = g+O(g^2)$ is determined in \cite{MPS23}
and $B$ is given by \eqref{eq:B-def}.

For PBC, we define the \emph{effective critical point} simply
to equal the infinite-volume critical point:
\begin{equation}
    \nu^{\rm P}_{c,N} = \nu_c .
\end{equation}
For FBC, we first define
\begin{align}
\label{eq:vNdef}
    v_N=
    \begin{cases}
        A_{4} (\log L^2)^{\hat{\gamma}}N^{\hat\gamma}L^{-2N}
                    & (d=4)
        \\
        A_{d}L^{-2N} & (d>4),
    \end{cases}
\end{align}
and then specify a volume-dependent \emph{effective critical point} $\nu_{c,N}^{\rm F}$ by
\begin{align}
\label{eq:nu_c,N_def}
    \nu_{c,N}^{\free}
    &=
    \begin{cases}
    \nu_{c}-\qLap v_N (1+\constF N^{-\hat \gamma}) & (d=4)
    \\
    \nu_{c}-\qLap v_N & (d=5)\\
	\nu_{c}-\qLap v_N(1+ O(L^{-N}) ) &  (d>5) ,
    \end{cases}
\end{align}
with $\qLap$ the constant from \eqref{eq:qLapdef},
and $\constF  = O(g)$.
Thus $\nu_{c,N}^{\free}$ is
specified exactly when $d=4,5$ and approximately when $d>5$.
This is discussed further
in \cite[Section~1.4.2]{MPS23}.
In particular, although we do not compute $\nu_{c,N}^{\free}$ explicitly for $d>5$, it has a definite specific value.

\subsection{Main results}

It is proved in \cite{MPS23} that the average field has a non-Gaussian limit
in the critical window $\nu=\nu_{c,N}^* + sw_N$ for $s \in \R$,
and has a Gaussian limit when $\nu=\nu_{c,N}^* + sv_N$ for $s>0$,
 for either boundary condition $* \in \{ \free, \per \}$.
We prove that in the non-Gaussian case the two-point function has a plateau,
whereas the plateau is absent in the Gaussian case.

\subsubsection{The plateau in the non-Gaussian regime}

For $k \in ( -1,\infty)$ let
\begin{align}
    I_k (s) = \int_{0}^{\infty} x^{k} e^{-\frac 14 x^4  -   \frac 12 s x^2}dx
    \qquad
    ( s \in \R )
    .
\label{eq:Ik_definition}
\end{align}
For $n \ge 1$,
we define the \emph{universal profile} $\uf_n:\R \to (0,\infty)$ by
\begin{equation}
\label{eq:profile-def}
    \uf_n(s)
    =
    \frac{\int_{\R^n} |x|^2 e^{-\frac 14 |x|^4 - \frac 12 s|x|^2} dx}
    {n\int_{\R^n} e^{-\frac 14 |x|^4 - \frac 12 s|x|^2} dx}
    =   \frac{ I_{n+1}  (s)}{n I_{n-1}  (s)}
    \qquad
    ( s \in \R )
    .
\end{equation}
A plot of $\uf_n$ for various values of $n$ can be found in \cite[Appendix~A]{MPS23}
(including its extension to $n=0$),
as well as a proof that $\uf_n(s)$ is strictly decreasing both as a function of $n$ and
as a function of $s$.
The \emph{large-field scale} is defined by
\begin{equation}
\label{eq:hNdef}
    \hh_N  =
    \begin{cases}
        (BN)^{1/4} L^{-N} & (d=4)
        \\
        g_\infty^{-1/4} L^{-dN/4} & (d>4),
    \end{cases}
\end{equation}
where $B$ and $g_{\infty}$ are as in \eqref{eq:window_choice}.
For error terms, we use the following convention throughout the paper.
The maximum of two real numbers is written as
$a\vee b = \max\{a,b\}$.

\begin{definition} \label{def:errorconv}
Given functions $F_1 (N,\x,s)$ and $F_2 (N,s)$, for $|\x| \le L^N$ and $s\in\R$,  we write:
\begin{enumerate}
\item $F_1 (N,\x,s) =  \cO_{N,\x}$ if $\lim_{N \vee |\x| \rightarrow \infty} F_1 (N,\x,s) = 0$,
\item $F_2 (N,s) = \cO_{N}$ if $\lim_{N\rightarrow \infty} F_2 (N,s) = 0$.
\end{enumerate}
The limits defining $\cO_{N,\x}$ and $\cO_{N}$ do \emph{not} claim uniformity in $s$.
However, $F_1 = \cO_{N,\x}$ does imply, by definition, that
$\lim_{N\rightarrow \infty} F_1 (N, \x, s) = 0$ uniformly in $\x$.
\end{definition}

Recall from \eqref{eq:HLap0-asy} that the massless hierarchical Green function obeys $\HLap_{0,\infty}(\x)\asymp |x|^{-(d-2)}$.

\begin{theorem}
\label{thm:mr-plateau}
Let $d\ge 4$,
let $n \in \N$, let $L$ be sufficiently large,
let $g>0$ be sufficiently small (depending on $L$), and consider
the $n$-component hierarchical
model with boundary conditions $*={\rm F}$ or $*={\rm P}$.
There is a function $\vv : \Lambda_\infty \rightarrow \R$ with
$\vv (\x) = O(g)$ (uniformly in $x$) and
$\lim_{|x| \rightarrow \infty} |\x|^{d-2} \vv (\x) = 0$, such that,
with
\begin{equation}
    \G (\x) = \HLap_{0,\infty} (\x) + \vv (\x)
\end{equation}
and for $s \in \R$ and $\x \in \Lambda_N$,
\begin{align}
\label{eq:mr-plateau}
    G^*_{\nu^*_{c,N}+sw_N,N}(\x)
    & =
    \G (\x) \Big( 1+ \cO_{N,\x} \Big)
    +
    f_n(s) \hh_N^{2} \Big( 1+ \cO_{N} \Big)
    .
\end{align}
The error terms in \eqref{eq:mr-plateau} are not uniform in $s$.
\end{theorem}

The function $\G(\x)$ is the critical two-point function of the infinite volume hierarchical $|\varphi|^4$ model.  Indeed, since $\lim_{N\to\infty}\hh_N=0$
by definition,
it follows from \eqref{eq:mr-plateau} that for fixed $\x\in\Lambda_\infty$
and fixed $s\in \R$
the infinite-volume limit exists independently of the boundary condition and
	\begin{equation}
        \label{eq:G-infvol}
		\G (\x) = \lim_{N\to \infty}G^*_{\nu^*_{c,N}+sw_N,N}(\x) .
	\end{equation}	
The bounds on $\vv(\x)$,  together with the asymptotic
behaviour of $\HLap_{0,\infty} (\x)$ in \eqref{eq:HLap0-asy},  imply that
\begin{align}
	\G (\x) = \HLap_{0,\infty} (\x) + \vv (\x) \asymp (|\x| \vee 1)^{-(d-2)}    \label{eq:Gnucx}
\end{align}
and that
\begin{align}
	\G (\x) \sim \HLap_{0,\infty} (\x) \quad \text{as} \quad |\x| \rightarrow \infty .
\end{align}

Theorem~\ref{thm:mr-plateau} shows that the
finite-volume two-point function has a plateau in
the critical windows centred around both effective critical points, with a universal
profile $\uf_n(s)$ for the $s$-dependence of the plateau term.
The two critical windows around $\nu^\free_{c,N}$ for FBC and $\nu^\per_{c,N}$ for PBC do not overlap, as their width
$w_N$ is smaller than their separation of order $v_N$.
The plateau
is made more explicit in the following corollary.
In its statement, the notation $a_N \ll b_N$ means $\lim_{N\to \infty} a_N/b_N = 0$.
The corollary is proved simply by substituting
for $\hh_N$ in \eqref{eq:mr-plateau} and comparing the size of the two terms
on the  right-hand side of \eqref{eq:mr-plateau}.

\begin{corollary}
	\label{cor:two_point_function_plateau}
	Under the
hypotheses of Theorem~\ref{thm:mr-plateau},
the following statements hold
(without uniformity in $s$).
\begin{enumerate}
\item
	Let $\x\in\Lambda_N$ possibly
    depend on $N$ and suppose that $N\rightarrow \infty$ in a manner that
    $|x| \ll L^{\frac{d}{2(d-2)} N}$
    when $d>4$,
     or
     $|\x| \ll N^{-1/4}L^{N}$
     when $d=4$.  In this limit, with $\G (\x)$ as in \eqref{eq:Gnucx}, we have
	\begin{equation}
		\label{eq:G-smallx}
		G^*_{\nu^*_{c,N}+sw_N,N}(\x) \sim
		\G (\x)
    \qquad (d\ge 4) .
	\end{equation}
	
\item
	Let $\x\in\Lambda_N$ depend on $N$ and suppose that
    $|x|,N\rightarrow \infty$ in a manner that
    $|x| \gg L^{\frac{d}{2(d-2)} N}$
    when $d>4$, or
     $|\x|   \gg N^{-1/4}L^{N}$
     when $d=4$.  Then
	\begin{equation}
        \label{eq:G-largex}
		G^*_{\nu^*_{c,N}+sw_N,N}(\x)
    \sim
    \begin{cases}
        (BN)^{1/2}f_n(s)L^{-2N} & (d=4)
        \\
        g_\infty^{-1/2} f_n(s)L^{-dN/2} & (d>4).
    \end{cases}
	\end{equation}

\end{enumerate}
\end{corollary}

The following
corollary for the finite-volume susceptibility
recovers \cite[Corollary~1.4]{MPS23},
with a different proof.  The proof shows that the susceptibility's leading volume dependence arises solely from the plateau.

\begin{corollary}
\label{cor:chi}
Under the hypotheses of Theorem~\ref{thm:mr-plateau},
as $N \to \infty$,
\begin{equation}
\label{eq:chi-cor}
    \chi_N^*(\nu^*_{c,N}+sw_N) \sim L^{dN} \hh_N^{2}  f_n(s)
    =
    f_n(s) \times
    \begin{cases}
    (BN)^{1/2}L^{2N} & (d=4)
    \\
    g_\infty^{-1/2} L^{dN/2} & (d>4).
    \end{cases}
\end{equation}
\end{corollary}

\begin{proof}
We sum \eqref{eq:mr-plateau} over $\Lambda_N$.  The leading part of
the plateau term gives exactly $f_n(s)\hh_N^2 L^{dN}$, which is the right-hand
side of \eqref{eq:chi-cor}.  The remainder term from the plateau is relatively small.
By \eqref{eq:Gnucx},  the decaying term satisfies
\begin{align}
	\sum_{x \in \Lambda_N} \G (\x)
    \left( 1 + \cO_{N,x}
    \right)
    \asymp \sum_{|x| \le L^{N}} \frac{1}{|x|^{d-2}} \asymp L^{2N} \ll  L^{dN} f_n (s) \hh_N^2
	.
\end{align}
This proves \eqref{eq:chi-cor} and shows that the plateau term is
dominant for the susceptibility.
\end{proof}

Corollary~\ref{cor:chi} is a special case of the following more general theorem
proved in \cite[Theorem~1.2(ii)]{MPS23}, which states that
for all integers $p \ge 1$
the moments of the average field
$\Phi_N = |\Lambda_N|^{-1}\sum_{x\in\Lambda_N}\varphi_x$
have the asymptotic behaviour
\begin{align}
    \big\langle
    |\Phi_{N} |^{2p}
    \big\rangle_{g, \nu_{c,N}^* +sw_N,N}^{*}
    & =
    \hh_N^{2p} \,
    \Sigma_{n,2p}(s)
    \big( 1+ o(1)  \big),
\label{eq:mrPBC-nongaussian_moments}
\end{align}
where
\begin{equation}
\label{eq:Signks}
    \Sigma_{n,k}(s)
    =
    \frac{\int_{\R^n}|x|^k e^{-\frac 14 |x|^4 - \frac s2 |x|^2} dx}{\int_{\R^n} e^{-\frac 14 |x|^4 - \frac s2 |x|^2} dx}.
\end{equation}

\subsubsection{Absence of plateau in the Gaussian regime}

The Gaussian regime concerns values of $\nu$ of the form
$\nu=\nu_{c,N}^* + sv_N$ with $s>0$, where as usual the asterisk
corresponds to the choice
of boundary condition, PBC or FBC.
The two-point function in the Gaussian regime is given by the following theorem.

\begin{theorem}
\label{thm:mr-plateau-Gaussian}
Let $d\ge 4$,
let $n \in \N$, let $L$ be sufficiently large,
let $g>0$ be sufficiently small (depending on $L$), and consider
the $n$-component hierarchical
model with boundary conditions $*={\rm F}$ or $*={\rm P}$.
For $s >0$, for any strictly positive sequence $s_N$ with $s_N \to s$, and
for $\x \in \Lambda_N$,
we have
\begin{align}
\label{eq:mr-Gaussian-PBC}
    G^\per_{\nu^\per_{c,N}+s_Nv_N,N}(\x)
    & =
    \left( \HLap_{sL^{-2N},N}^\per(\x) + \vv (\x) \right)
	\Big( 1+ \cO_{N,\x} \Big)
    ,
\\
\label{eq:mr-Gaussian-FBC}
    G^{\free}_{\nu^\free_{c,N}+s_Nv_N,N}(\x)
    & =
    \left( \HLap_{(s-q) L^{-2N},N}^\free (\x) + \vv (\x) \right)
	\Big( 1+ \cO_{N,\x} \Big),
\end{align}
with the error $\cO_{N,\x}$ as defined in Definition~\ref{def:errorconv} and $\vv (\x)$ the same as in Theorem~\ref{thm:mr-plateau}.
\end{theorem}

The covariances appearing on the right-hand sides of
\eqref{eq:mr-Gaussian-PBC}--\eqref{eq:mr-Gaussian-FBC}
are defined in \eqref{eq:Csum},
and complete information about them can be extracted.  In particular,
we prove in Lemma~\ref{lemma:corGaussianLemma}
that both covariances are bounded
above and below by multiples of $|x|^{-(d-2)}$, and both have limit
$\HLap_{0,\infty}(x)$ as $N \to \infty$.  These two facts immediately lead
to the following corollary.
It shows that the critical infinite-volume two-point function
is the limiting two-point function in both the Gaussian and
(with \eqref{eq:G-infvol}) the non-Gaussian regime,
independently of the boundary condition, and that the plateau
is absent in the Gaussian regime.

\begin{corollary}
\label{cor:Gaussian}
Under the assumptions of Theorem~\ref{thm:mr-plateau-Gaussian},
for fixed $\x\in\Lambda_\infty$, the infinite-volume limit exists independently of the boundary condition and, with $\G (\x)$ as in \eqref{eq:Gnucx},
	\begin{equation}
        \label{eq:G-infvol-Gaussian}
		\G (\x) = \lim_{N\to \infty}G^*_{\nu^*_{c,N}+sv_N,N}(\x) .
	\end{equation}
Also, for sufficiently large $N$,
with $s$-dependent (but $N,x$-independent) constants, for all $x \in \Lambda_N$
we have
\begin{align}
\label{eq:mr-Gaussian-x}
    G^*_{\nu^*_{c,N}+sv_N,N}(\x)
    & \asymp
    \big( |\x | \vee 1 \big)^{-(d-2)}
    .
\end{align}
\end{corollary}

It follows from \eqref{eq:nu_c,N_def} that
the equation $\nu_{c,N}^\free + \tilde s_N v_N = \nu_c$
has a solution $\tilde s_N \sim \qLap$ for all $d \ge 4$.
Therefore \eqref{eq:mr-Gaussian-x} shows that
the two-point function with FBC has no plateau at the infinite-volume
critical point $\nu_c$, unlike for PBC where Theorem~\ref{thm:mr-plateau} (with $s=0$)
guarantees that there is a plateau at $\nu_c$.
The next corollary recovers the result of
\cite[(1.51)]{MPS23} that with FBC the susceptibility at $\nu_c$ grows as $L^{2N}$,
which is smaller than the PBC susceptibility given by Corollary~\ref{cor:chi}.

\begin{corollary}
\label{cor:chiv}
Under the
hypotheses of Theorem~\ref{thm:mr-plateau-Gaussian},  as $N \to \infty$,
\begin{equation}
\label{eq:chiv}
    \chi_N^*(\nu^*_{c,N}+s_Nv_N) \sim  s^{-1} L^{2N}.
\end{equation}
In particular,
\begin{align}
\label{eq:chiNnuc}
	\chi_N^{\free} (\nu_c ) \sim q^{-1} L^{2N}	
	.
\end{align}
\end{corollary}
\begin{proof}
It follows from \eqref{eq:Csum} and \eqref{eq:Csum0} that
\begin{align}
    \sum_{x\in \Lambda_N} \HLap_{sL^{-2N},N}^\per(\x)
    & =
    \sum_{x\in \Lambda_N}  \HLap_{(s-q) L^{-2N},N}^\free (\x)
    =
    \frac{1}{s}L^{2N}.
\end{align}
This implies \eqref{eq:chiv}, since the terms $\alpha_\infty(x)$ and $\cO_{N,x}$
are relatively small and do not contribute to the leading behaviour of the sum
over $x$.
Then \eqref{eq:chiNnuc} follows from the existence of a sequence
$\tilde{s}_N \rightarrow q$ satisfying $\nu_{c,N}^\free + \tilde s_N v_N = \nu_c$,
as indicated above the statement of the corollary.
\end{proof}

Corollary~\ref{cor:chiv} is a special case of the following more general theorem
proved in \cite[Theorem~1.3(ii)]{MPS23}, which states that
under the hypotheses of Theorem~\ref{thm:mr-plateau-Gaussian},
for all integers $p \ge 1$ the moments
of the average field
$\Phi_N$
have the asymptotic behaviour
\begin{align}
    \big\langle
    |\Phi_{N} |^{2p}
    \big\rangle_{g, \nu_{c,N}^* +sv_N,N}^{*}
    & =
    L^{-p(d-2)N}
    M_{n,2p}(s)
    \big( 1+ o(1)  \big),
\label{eq:mrPBC-gaussian_moments}
\end{align}
where
\begin{equation}
\label{eq:Mnks}
    M_{n,2p}(s)
    =
    \frac{\int_{\R^n} |x|^{2p} e^{-\frac s2 |x|^2} dx}{\int_{\R^n} e^{- \frac s2 |x|^2} dx}
    =
    \left(\frac{2}{s}\right)^p
    \frac{\Gamma(\frac{n+2p}{2})}{\Gamma(\frac{n}{2})}
    .
\end{equation}

The asymptotic formula
\eqref{eq:mr-Gaussian-PBC}
is related to
\cite[Theorem~1.3]{BBS-phi4-log}, which proves that
when $\nu-\nu_c$ is of order $L^{-2N}$ the scaling limit of the Euclidean
(usual, non-hierarchical) spin field on the
discrete $4$-dimensional torus is a massive Gaussian Free Field on the unit continuum torus.
The result of \cite[Theorem~1.3]{BBS-phi4-log} is for the scaling limit of the spin field
smeared against a smooth test function.  Theorem~\ref{thm:mr-plateau-Gaussian} handles the
more singular situation without smearing, in the hierarchical setting
and in all dimensions $d \ge 4$.
It is complementary  to
\cite[Theorem~1.3]{BBS-phi4-log} since it shows
that the two-point function is asymptotically that of a
Gaussian field.
For the Euclidean model on $\Z^4$, the fact that the critical two-point function has
the Gaussian $|x|^{-(d-2)}$ decay is proved in \cite{ST-phi4,GK85}.
Gaussian limits at the critical point are also studied in
\cite{Aize82,Froh82,AD21}.

By \eqref{eq:Csum}, the covariances on the right-hand sides of \eqref{eq:mr-Gaussian-PBC}--\eqref{eq:mr-Gaussian-FBC} each contain a constant term $s^{-1}L^{-(d-2)N}$.
The domination of this constant term
over the $|x|^{-(d-2)}$ decay
as $s \downarrow 0$ reflects the onset of non-Gaussian behaviour at the
effective critical point $\nu_{c,N}^*$.
In this regard, if we formally
replace $s v_N$ by $s'w_N$,
corresponding to the replacement of the Gaussian by
the non-Gaussian regime, then by the definitions of $w_N$, $v_N$, and $\hh_N$ in
\eqref{eq:window_choice}, \eqref{eq:vNdef}, and \eqref{eq:hNdef},
the constant term $s^{-1}L^{-(d-2)N}$
becomes
\begin{equation}
\label{eq:crossover}
    \frac{v_N L^{-(d-2)N}}{s' w_N}
    =
    \frac{1}{s'} \hh_N^2.
\end{equation}
Since $f_n(s) \sim s^{-1}$ as $s \to \infty$ (see \cite[(1.72)]{MPS23}),
this is comparable to \eqref{eq:mr-plateau},
illustrating a crossover to the non-Gaussian plateau.

\subsection{Open problems}
\label{sec:open-problems}

Although our results are proved for the weakly-coupled
hierarchical $|\varphi|^4$ model, our proof uses properties of the renormalisation
group flow which are predicted to be universal.  This points to
 the following open problems.

\subsubsection{Euclidean (non-hierarchial) models}

\smallskip\noindent{\bf Problem 1.}
Prove the results of Theorems~\ref{thm:mr-plateau} and \ref{thm:mr-plateau-Gaussian}
for the Euclidean (defined with the usual nearest-neighbour Laplacian)
$|\varphi|^4$ model on $\Z^d$ for dimensions $d \ge 4$ and for $n \ge 1$.
Do the same for the Ising, XY, and Heisenberg models,
and more generally for all $N$-vector models.

\smallskip
We expect that the conclusions of Theorems~\ref{thm:mr-plateau} and \ref{thm:mr-plateau-Gaussian}
hold in the Euclidean setting for all these models, with the following modifications:
\begin{enumerate}
\item
Due to wave-function renormalisation (not present in the hierarchical model),
the free covariance $\HLap$
will occur with a non-universal
constant factor different from $1$.
\item
Non-universal constant prefactors will also occur in the large-field scale $\hh_N$, the window
width $w_N$, and the FBC shift $v_N$ of the effective critical point.
\item
For FBC, it may be necessary to restrict $\x$ to lie away from the boundary of the box.
\end{enumerate}

The challenges for extending our methods and results
to the weakly-coupled Euclidean $|\varphi|^4$ model are serious.
The easier case is PBC, for which an extension
would require new ideas even though a large part of our analysis
has already been extended \cite{BS-rg-IE,BS-rg-step}.  One
challenge in the Euclidean setting would be to
improve the large-field regulator used in \cite{BS-rg-IE,BS-rg-step}, which bounds
the non-perturbative RG coordinate by an exponentially \emph{growing}
factor (see  \cite[(1.38)]{BS-rg-IE}) rather than the exponentially \emph{decaying} factor
that we use
(see \eqref{eq:Gj}).  Also,  the hierarchical covariance decomposition we use
has a constant covariance at the final scale, whereas the decomposition used
in \cite{BS-rg-IE,BS-rg-step} does not have this useful feature.
This issue could likely be addressed using the decomposition in \cite[Section~3]{BPR24},
which does have a constant covariance at scale $N$, and which obeys similar estimates
to the decomposition used in \cite{BS-rg-IE,BS-rg-step}
(see \cite[Corollary~4.1]{BPR24} and \cite[Proposition~3.4]{BPR24}).

For FBC, an additional issue is that a Euclidean finite volume lacks translation
invariance, unlike in our hierarchical setting,
and its genuine boundary adds to the difficulty.
Nevertheless, a lesson from the hierarchical model is that
for the plateau and the susceptibility (which are bulk quantities) the
behaviour for FBC at the effective critical point $\nu_{c,N}^\free$
is due to a shift of the PBC behaviour at $\nu_{c,N}^\per =\nu_c$ caused by
what is essentially a mass shift of order $L^{-2N}$ in the FBC
Laplacian \eqref{eq:ChatN}.
The Euclidean Laplacian
with FBC also has a mass of the same order $L^{-2N}$, compared to the PBC Laplacian
which has a zero mode.  It remains a challenging open problem to turn this analogy
into a proof for Euclidean FBC.

\subsubsection{Self-avoiding walk}

A lower bound proving the existence of a plateau for the critical
weakly self-avoiding walk in dimensions $d>4$ with PBC
is given in \cite{Slad23_wsaw},
and for strictly self-avoiding walk in $d>4$ in \cite{Liu24}.

As was first suggested in \cite{Genn72}, the self-avoiding
walk can be considered as the $n=0$ version of an $n$-component spin model.
A rigorous version of this statement is that self-avoiding walk models can
be exactly represented by certain supersymmetric spin models (see
\cite[Chapter~11]{BBS-brief} for details and history).
This has allowed, e.g., for proof that the susceptibility of
the $4$-dimensional weakly self-avoiding walk obeys the $4$-dimensional
case of \eqref{eq:subcrit_gaussian_moment} (and more)
with $n=0$ \cite{BBS-saw4-log,BBS-saw4}. It is therefore natural to expect
that our results for $n \ge 1$ have counterparts for self-avoiding walk, when
$n$ is set equal zero.

As discussed in \cite[Section~1.5.2]{MPS23}, the universal profile $\uf_n$
defined in \eqref{eq:profile-def} extends
naturally to real values $n \ge -2$,
In particular,
\begin{equation}
	\uf_0(s) =  \int_{0}^{\infty} x  e^{-\frac 14 x^4  -   \frac 12 s x^2}dx .
\end{equation}

\noindent{\bf Problem 2.}
Prove the statements of both Theorems~\ref{thm:mr-plateau} and \ref{thm:mr-plateau-Gaussian}
with the parameter $n$ set equal to $n=0$ in
the definitions of $w_N$, $v_N$, and $f_n$ (in particular, with
$\hat\gamma = \hat\theta = \frac 14$),
for the
weakly self-avoiding walk in dimension $d \ge 4$, both on the hierarchical lattice
and on $\Z^d$.
A much more ambitious problem is
to prove this for the strictly self-avoiding walk on $\Z^d$
in dimensions $d \ge 4$.

\smallskip
For the hierarchical continuous-time weakly self-avoiding walk, the asymptotic
formula \eqref{eq:G-infvol} has been proven already in \cite{BI03d}, and this
was used to prove a logarithmic correction to the end-to-end distance in
\cite{BI03c}.
We believe that our method could be extended in a straightforward way to
prove our results for the hierarchical continuous-time weakly self-avoiding walk
in dimensions $d \ge 4$.  This would involve including fermions in the analysis,
and it is well-understood how to do so \cite{BI03d,BBS-saw4}.
The challenges for an extension to the Euclidean model are similar to those
mentioned below Problem~1.

Evidence of the universality of the profile $\uf_0$ is provided by
the fact that it appears for both weakly and strictly
self-avoiding walk on the complete graph.  The occurrence of $\uf_0$ in the profile for the susceptibility is discussed in \cite[Section~1.5.4]{MPS23}, on the basis of results of
\cite{BS20,Slad20}.  Those results show that the susceptibility on the complete
graph on $V$ vertices,  in a critical window of width $V^{-1/2}$,
is given for both models as  $\lambda_1 \uf_0(\lambda_2s) V^{1/2}$
with non-universal constants $\lambda_1,\lambda_2$.  Since the two-point
function takes the same value at any pair of distinct points, the two-point
function at distinct points
is therefore asymptotic to $\lambda_1 \uf_0(\lambda_2s) V^{-1/2}$
in the window.  This is consistent with the $d>4$ case of \eqref{eq:G-largex}.
A related conjecture for the
universality of the profile for the expected length of self-avoiding walks in dimensions $d>4$
is investigated numerically in \cite{DGGZ22}.

\subsection{Guide to paper}

Our analysis extends the renormalisation group method
which was used in \cite{MPS23} to study
the average field and its moments, with the new feature that we include
\emph{observable fields} which enable the computation of the two-point
function.  The observable fields are introduced in Section~\ref{sec:2}.
In Section~\ref{sec:2}, we reduce the computation of the two-point function
to a problem of Gaussian integration with covariance $\HLap_{N}^*$.
By \eqref{eq:Csum}, $\HLap_{N}^*$ is a sum of two covariances
$C_{\le N} + C_{\hat N}^*$, and the Gaussian integral can be performed
in two steps.  The result of the integration with covariance $C_{\le N}$
is stated in Theorem~\ref{thm:RG-obs}.
The integration over the final covariance $C_{\hat N}^*$ is simply an integral over $\R^n$,
and we analyse that integral in Section~\ref{sec:2} and thereby prove our main results.

Our main work is to prove Theorem~\ref{thm:RG-obs}.
Its proof occupies
Sections~\ref{sec:3}--\ref{sec:4}.
In Section~\ref{sec:3}, we define the renormalisation group map (RG map)
in the presence
of the observable fields.
This RG map is what we use to perform integration
with covariance $C_{\le N}$ scale by scale, in $N$ steps.
The observable fields necessitate modifications to the norms that we use
to control the RG map.

In Section~\ref{sec:RGflow}, we provide
the statement that permits a generic RG step to be carried
out in order to construct the RG flow, in
Theorem~\ref{thm:Phi^K_q0}.
Theorem~\ref{thm:Phi^K_q0} is an extension of \cite[Theorem~5.7]{MPS23}
to include the observables.  It is the most important ingredient
in our analysis as it
provides a rigorous control of all non-perturbative corrections to perturbation
theory.
In Section~\ref{sec:RGflow},
we show that the critical points constructed
in \cite{MPS23} serve as initial conditions for an RG flow also in the presence
of the observables.  Moreover, the \emph{bulk} RG flow constructed in \cite{MPS23}
remains unchanged in the presence of the observables---the observables extend the
bulk flow to include additional coordinates.
Although the observable fields have been employed in previous work exactly at the
upper critical dimension \cite{BBS-saw4,BLS20,BSTW-clp,LSW17,ST-phi4}, this has
not been done previously above the upper critical dimension.  Innovation is required
to include the observables here for $d>4$, due to the fact that
$\varphi^4$ is a so-called \emph{dangerous irrelevant variable}
\cite{Fish83}.
The ``danger'' refers to the fact that although $\varphi^4$ is
irrelevant for $d>4$, setting its coefficient
equal to zero gives a qualitatively different result, and its RG flow has to be carefully controlled.
Indeed, in the RG flow, $g_j\sum_{x\in \Lambda_j}|\varphi_x|^j$ scales as
$g_\infty L^{-(d-4)j}$, and this $g_\infty$ appears in our main results
via \eqref{eq:window_choice} and \eqref{eq:hNdef}.

In Section~\ref{sec:RGflow},
we reduce the proof of the important
Theorem~\ref{thm:Phi^K_q0} to two propositions:
Proposition~\ref{prop:Phi+0} and
Proposition~\ref{prop:crucial-short-3}.
Proposition~\ref{prop:Phi+0} provides bounds on a single RG step.
It is proved in Section~\ref{sec:RGextendednorm}.
Proposition~\ref{prop:crucial-short-3} is a crucial contraction theorem
which extends the $(p,q)=(0,1)$
case of
\cite[Theorem~5.7]{MPS23} to include the observables.  It is proved in
Section~\ref{sec:contractions}.
Several estimates which are needed in the proof of
Propositions~\ref{prop:Phi+0}--\ref{prop:crucial-short-3}
are deferred to Appendix~\ref{sec:pfSProps}.
Appendix~\ref{app:covariance} provides bounds on the hierarchical covariances introduced in Section~\ref{sec:C}.

\section{Reduction of proof of main results}
\label{sec:2}

In this section, we state the important Theorem~\ref{thm:RG-obs}
and show how it implies our main results Theorems~\ref{thm:mr-plateau} and
\ref{thm:mr-plateau-Gaussian}.  In preparation for the statement of
Theorem~\ref{thm:RG-obs}, we introduce the observable fields, and recall
results concerning the bulk RG flow from \cite{MPS23}.

\subsection{Observable fields and Gaussian integration}

We follow the approach of \cite{BBS-saw4,ST-phi4,BLS20,LSW17} to use observable fields
to generate the two-point function.
This is related to but simpler than the
observable field used in \cite{BI03d}.
Let $\sigma_\o$ and $\sigma_\x$ generate a commutative ring with
\begin{equation}
\label{eq:sigmaox}
    1 = \sigma_\bulk, \qquad
    \sigma_\o^2 = \sigma_\x^2 = 0, \qquad
    \sigma_\o \sigma_\x = \sigma_{\o\x}.
\end{equation}
The ring element $\sigma_\x$ should \emph{not} be thought of as a function of $\x$, it is simply a name for the ring element and we will take $\sigma_\x \neq \sigma_\o$ even when $\x = \o$.
We refer to $\sigma_{\o}$ and $\sigma_{\x}$ as the \emph{observable fields},
or, more compactly, as the \emph{observables}.
We consider
functions of $\varphi$ and the observables that are defined by
Taylor polynomials in $\sigma_\o$ and $\sigma_\x$ and have the form
\begin{equation}
    F(\varphi,\sigma) = F_\bulk(\varphi) + \sigma_\o F_\o(\varphi)  + \sigma_\x F_\x(\varphi)
    + \sigma_{\ox} F_{\ox}(\varphi).
\end{equation}
When the observables are absent, we are left with the \emph{bulk},
denoted by the symbol $\varnothing$.

Given a spin field $\varphi :\Lambda_N \to \R^n$, $g>0$, and $\nu\in \R$, let
\begin{align}
\label{eq:V0bulkdef}
    V_{0,\bulk}(\varphi)
    &=
    \sum_{z \in \Lambda_N}
    \Big(\frac{1}{4} g  |\varphi_z|^4
    +
    \frac{1}{2} \nu |\varphi_z|^2
    \Big)
	,
	\\
	\label{eq:V0def}
	V_0 ( \varphi)
	&	=   V_{0, \bulk} (\varphi)
    - \sigma_{\o} \varphi_\o^{(1)}
    - \sigma_{\x} \varphi_\x^{(1)}
    ,
    \\
	Z_0 (\varphi) &= e^{- V_0(\varphi)}
	\label{eq:Z_0_definition}	
	.
\end{align}
Then
\begin{equation}
\label{eq:eVOxy}
    (Z_0 (\varphi))_{\ox}
    =
    e^{-V_{0, \bulk} (\varphi)}
    \varphi_\o^{(1)}  \varphi_\x^{(1)}
    .
\end{equation}

The Gaussian measure with mean zero and covariance $C$ is the probability measure
on $(\R^n)^{\Lambda_N}$ which is proportional to
$\exp[-\sum_{x,y\in\Lambda_N}\varphi_x \cdot (C^{-1}\varphi)_y]\prod_{z\in\Lambda_N}
d\varphi_z$, where $d\varphi_z$ denotes Lebesgue measure on $\R^n$.
We write $\E_C$ for the expectation with respect to this measure.
Then, with the expectation on the left-hand side defined
by \eqref{eq:expectation}, and with the covariance $\HLap_{\ka,N}^* = (-\Delta^* + \ka)^{-1}$, we have
\begin{equation}
\label{eq:EF}
    \langle e^{\sigma_{\o} \varphi_\o^{(1)}+ \sigma_{\x} \varphi_\x^{(1)}} \rangle_{\nu+ \ka,N}^*
    =
    \frac{\E_{\HLap_{\ka,N}^*}  Z_0}{\E_{\HLap_{\ka,N}^*} Z_{0, \bulk}}
    =
    \frac{\E_{\HLap_{\ka,N}^*}   e^{-{V_{0}}}}{\E_{\HLap_{\ka,N}^*} e^{-{V_{0,\bulk}}}}
    .
\end{equation}
Since
$\HLap_{\ka,N}^* =  C_{\ka,\le N}  + C_{\ka,\hat N}^*$ (recall \eqref{eq:Csum}),
it follows from a standard fact (see \cite[Corollary~2.1.11]{BBS-brief}) about Gaussian integration that
the expectations in
\eqref{eq:EF} can be computed in two steps as
\begin{equation}
\label{eq:2int}
    \E_{\HLap_{\ka,N}^*} F = \E_{C_{\ka,\hat N}^*} \E_{C_{\ka,\le N}} F(\varphi + \zeta),
\end{equation}
where $\E_{C_{\ka,\le N}}$ is independent of the boundary condition
and involves integration over $\zeta$, and $\E_{C_{\ka,\hat N}^*}$ does depend on the boundary
condition and involves integration over $\varphi$.
Moreover, $\E_{C_{\hat N}^*}$ is supported on constant fields $\varphi$,
 so it is merely an integral over $\R^n$
 (degenerate Gaussian fields with $C$ not invertible are
discussed in \cite[Section~2.1]{BBS-brief}).
We define
\begin{equation}
\label{eq:Z_N_definition}
    Z_N(\varphi ; \ka) =  Z_N(\varphi) = \E_{C_{\ka,\le N}}Z_0(\varphi + \zeta ),
\end{equation}
which is independent of the choice of boundary condition.  Then \eqref{eq:2int} gives
\begin{equation}
\label{eq:ECh}
    \E_{\HLap_{\ka,N}^*}   Z_0    =
    \E_{C_{\ka,\hat N}^*}
    Z_N .
\end{equation}

Since $Z_N$ depends also on the observables, it can be written as
\begin{equation}
    Z_N = Z_{N,\bulk} + \sigma_\o Z_{N,\o} + \sigma_\x Z_{N,\x} + \sigma_\ox Z_{N,\ox}
    .
\end{equation}
Each of the components of $Z_N$ is a function of a constant field on $\Lambda_N$.
The observables are absent in the analysis of the average field in \cite{MPS23},
and there $Z_N$ is purely given by $Z_{N,\bulk}$, which is well understood in \cite{MPS23}.
It is the observable components that we must study here in order to analyse the two-point
function.

The two-point function is given by the formula in the following lemma.
The denominator on the right-hand side of \eqref{eq:Gobs}
is independent of the observable field and
has already been well understood in \cite{MPS23}.  Our task now will be to understand the numerator of \eqref{eq:Gobs}.

\begin{lemma}
\label{lem:Gox}
For $g>0$, for $\nu \in \R$, for $\ka > -\frac 12 L^{-2(N-1)}$,
and for $\x\in\Lambda_N$,
\begin{equation}
\label{eq:Gobs}
    G_{\nu+\ka,N}^*(\x)
    =
    \frac{\E_{\HLap_{\ka,N}^*} (e^{-{V_{0}}})_{\ox}}{\E_{\HLap_{\ka,N}^*} e^{-{V_{0,\bulk}}}}
    =
    \frac{\E_{C_{\ka,\hat N}^*}Z_{N,\ox}(\ka)}{\E_{C_{\ka,\hat N}^*}Z_{N,\bulk}(\ka)}
    .
\end{equation}
\end{lemma}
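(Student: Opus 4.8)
The plan is to prove Lemma~\ref{lem:Gox} by reading off the $\sigma_{\ox}$-component of the identity \eqref{eq:EF} and matching it against the definition of the two-point function. The first step is to recall that, by definition of the expectation \eqref{eq:expectation} and of $G_{\nu+\ka,N}^*$, together with the algebraic identity \eqref{eq:eVOxy}, we have
\begin{equation}
    G_{\nu+\ka,N}^*(\x)
    =
    \frac{\int_{(\R^n)^{\Lambda_N}} \varphi_\o^{(1)}\varphi_\x^{(1)} e^{-H_{\nu+\ka,N}^*(\varphi)}\, d\varphi}
         {\int_{(\R^n)^{\Lambda_N}} e^{-H_{\nu+\ka,N}^*(\varphi)}\, d\varphi}
    =
    \frac{\E_{\HLap_{\ka,N}^*} \big( e^{-V_0(\varphi)} \big)_{\ox}}
         {\E_{\HLap_{\ka,N}^*} e^{-V_{0,\bulk}(\varphi)}},
\end{equation}
where in the last step I split the Hamiltonian \eqref{eq:Hamiltonian_def} as $H_{\nu+\ka,N}^* = \tfrac12 \varphi\cdot(-\Delta_N^*+\ka)\varphi + V_{0,\bulk}(\varphi)$, using that $V_{0,\bulk}$ carries the quartic term and the $\tfrac12\nu|\varphi|^2$ term while the $\tfrac12\ka|\varphi|^2$ term is absorbed into the Gaussian weight with covariance $\HLap_{\ka,N}^* = (-\Delta_N^*+\ka)^{-1}$. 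This gives the first equality in \eqref{eq:Gobs}, and is exactly the $\sigma_{\ox}$-component of \eqref{eq:EF}.

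The second step is to push this expression through the two-step Gaussian integration. Applying \eqref{eq:2int} with $F = e^{-V_0}$ and using the definition \eqref{eq:Z_N_definition} of $Z_N$ gives $\E_{\HLap_{\ka,N}^*} e^{-V_0} = \E_{C_{\ka,\hat N}^*} Z_N$, which is \eqref{eq:ECh}; extracting the $\sigma_{\ox}$-component, and noting that $\E_{C_{\ka,\hat N}^*}$ acts componentwise on the polynomial in $\sigma_\o,\sigma_\x$ (it is just an integral over $\R^n$ against a Gaussian weight, not involving the observables), yields $\E_{\HLap_{\ka,N}^*}(e^{-V_0})_{\ox} = \E_{C_{\ka,\hat N}^*} Z_{N,\ox}(\ka)$. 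For the denominator, the bulk part of \eqref{eq:2int} applied to $F = e^{-V_{0,\bulk}}$ gives $\E_{\HLap_{\ka,N}^*} e^{-V_{0,\bulk}} = \E_{C_{\ka,\hat N}^*} Z_{N,\bulk}(\ka)$, since $Z_{N,\bulk} = \E_{C_{\ka,\le N}} e^{-V_{0,\bulk}(\varphi+\zeta)}$ is precisely the bulk (observable-free) component of $Z_N$. Dividing these two identities gives the second equality in \eqref{eq:Gobs}.

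There is essentially one point requiring a small amount of care: the hypothesis $\ka > -\tfrac12 L^{-2(N-1)}$ must be enough to guarantee that $\HLap_{\ka,N}^*$ is a genuine (possibly degenerate) positive-semidefinite covariance for both boundary conditions, so that the Gaussian integrals and the splitting \eqref{eq:2int} are legitimate; this follows from the spectral decomposition $-\Delta_N^\per = \sum_{j=1}^N L^{-2(j-1)}P_j$ and $-\Delta_N^\free = \sum_{j=1}^N L^{-2(j-1)}P_j + qL^{-2N}Q_N$, since the smallest eigenvalue of $-\Delta_N^*$ is $L^{-2(N-1)}$ for PBC (attained on the range of $P_N$) and $qL^{-2N} \le L^{-2(N-1)}$ for FBC, so $-\Delta_N^*+\ka$ is positive definite whenever $\ka > -\tfrac12 L^{-2(N-1)}$, and $C_{\ka,\le N}$, $C_{\ka,\hat N}^*$ are then both well-defined and positive semidefinite by \eqref{eq:Cjdef}--\eqref{eq:ChatN}, validating the use of \cite[Corollary~2.1.11]{BBS-brief}. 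The only genuine ``obstacle'' is bookkeeping—tracking which term of the Hamiltonian lands in the Gaussian weight versus in $V_{0,\bulk}$, and confirming that $\E_{C_{\ka,\hat N}^*}$ commutes with extraction of the $\sigma_{\ox}$-coefficient—so the proof is short and essentially formal once the setup of Section~\ref{sec:2} is in place.
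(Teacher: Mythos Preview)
Your proof is correct and follows exactly the same approach as the paper: extract the $\sigma_{\ox}$-component of \eqref{eq:EF} using \eqref{eq:eVOxy} for the first equality, then apply \eqref{eq:ECh} for the second. The paper's proof is just a two-sentence version of your first two paragraphs.

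One small correction in your third paragraph: your claim that the smallest eigenvalue of $-\Delta_N^\per$ is $L^{-2(N-1)}$ is wrong---for PBC the Laplacian annihilates constant fields (the range of $Q_N$), so its smallest eigenvalue is $0$, and hence $-\Delta_N^\per+\ka$ is \emph{not} positive definite for negative $\ka$. The hypothesis $\ka>-\tfrac12 L^{-2(N-1)}$ ensures only that each $\gamma_j(\ka)>0$ and hence that $C_{\ka,\le N}$ is positive semidefinite; the final covariance $C_{\ka,\hat N}^\per=\ka^{-1}Q_N$ can be negative, but the integral over $\R^n$ still converges in the ratio because the quartic term in $V_{N,\bulk}$ (inherited from $V_{0,\bulk}$) provides the decay. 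This paragraph is in any case inessential---the paper omits it entirely---so the slip does not affect your proof.
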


\begin{proof}
The two-point function is the $\ox$ component of the left-hand side
of \eqref{eq:EF}.  With this observation,
the first equality follows from \eqref{eq:EF} and \eqref{eq:eVOxy},
and the second follows from \eqref{eq:ECh}.
\end{proof}

\subsection{Integration to the final scale: the bulk}

\subsubsection{Result of the bulk RG}

In \cite{MPS23}, the observables are absent,
and a renormalisation group analysis is used to obtain good control
of the bulk part $Z_{N,\varnothing}$ of $Z_N$.  Since we build on that control in order
to incorporate the observables, we gather some facts about
$Z_{N,\varnothing}$ in Theorem~\ref{thm:rg_main_theorem_new} below.

To state Theorem~\ref{thm:rg_main_theorem_new}, we need the following definitions.
Given $g_N>0$ and $\nu_N\in\R$, we define
\begin{align}
V_{N,\bulk} (y)
=
|\Lambda_N|
\big(  \textstyle{\frac{1}{4}} g_N |y|^4  + \textstyle{\frac12} \nu_N |y|^2 \big)
\qquad  (y \in \R^n).
\label{eq:V_N_form}
\end{align}
We further define the Gaussian scale
\begin{equation}
\label{eq:lpNdef}
    \lp_N = L^{-N(d-2)/2},
\end{equation}
and, for error terms,
\begin{equation}
	\label{eq:eNdef}
	e_N =
	\begin{cases}
		N^{-3/4} & (d=4)\\
		g_N^{3/4} L^{-3(d-4)N/4} & (4 < d \le 12)
		\\
		 g_N^{3/4} L^{-Nd/2} & (d>12).
	\end{cases}
\end{equation}
Finally, we define an $N$-dependent mass domain for $\ka$,
as in \cite[(3.71)]{MPS23}, by
\begin{align}
	\label{eq:I_crit_asymp}	
	\II_\crit
	&= \begin{cases}
		(-\frac12 L^{-2(N-1)}, L^{-2N(1-N^{-1/2})} ) &(d=4)\\
		(-\frac12 L^{-2(N-1)},2 L^{-3N/2} ) &(d > 4).
	\end{cases}
\end{align}
Theorem~\ref{thm:rg_main_theorem_new} combines a restricted version of
\cite[Theorem~3.1]{MPS23} (for positive $\ka\in \II_{\rm crit}$) with
\cite[Theorem~3.3]{MPS23} (for negative $\ka\in \II_{\rm crit}$), as well
as \cite[Lemma~3.10]{MPS23} (for all $\ka \in \II_{\rm crit}$).

Bounds are also known for the \emph{vacuum energy} $\bar u_{N,\varnothing}$
which appears on \eqref{eq:Z_N_form_ter},
but since it cancels when computing the two-point function, we do not
state those bounds here.
As mentioned above \eqref{eq:Z_N_definition}, we only need to evaluate $Z_N$
on constant fields.  Such fields are identified with $y\in\R^n$ in the theorem.
The choice of the (small)
parameter $\kappa=\kappa(n)>0$ in \eqref{eq:K_bound_at_nu_c_ter}
is indicated at \eqref{eq:kappa-def}.

\begin{theorem}
\label{thm:rg_main_theorem_new}
Let $d \ge 4$, let $n \ge 1$, let $L$ be sufficiently large,
and let $g >0$ be sufficiently small.
Let $N \in \N$.
There exists a continuous strictly increasing function
$\nu_{1,N} (\ka)$ of $\ka \in \II_{\rm crit}$,
and there exist
$(\bar u_{N,\varnothing}, g_N, \nu_N, K_{N,\varnothing})$  (all depending on $\ka$), such that
$Z_{N,\varnothing}$ defined by
\eqref{eq:Z_N_definition} with the choice $\ka$ for the mass in the covariance $C_{\ka,\le N}$
and with the choice $\nu = \nu_{1,N} (\ka)$ for the $\nu$ in $V_{0,\varnothing}$, satisfies
\begin{align}
Z_{N,\varnothing} (y)
=
e^{\bar u_{N,\varnothing}} \big( e^{-V_{N,\varnothing} (y) } + K_{N,\varnothing} (y) \big)
\label{eq:Z_N_form_ter}
\end{align}
for all $y\in\R^n$ with, uniformly in $\ka$:
\begin{align}
	g_N &  =
    \begin{cases}
        (BN)^{-1} \big( 1+O(N^{-1}\log N) \big) & (d=4)
        \\
        g_\infty  + O(g^2 L^{-N / 2}) & (d>4),
    \end{cases}
	\label{eq:gNB_ter}
	\\
	|\nu_N| & \le  O(g_N\lp_N^2)  ,
	\label{eq:nu_j_bound_ter}  \\
	|K_{N,\varnothing} (y)| & \le  O( e_N )
    e^{-\kappa g_N |\Lambda_N| |y|^4},
	\label{eq:K_bound_at_nu_c_ter}
\end{align}
with $e_N$ defined in \eqref{eq:eNdef}, and with $g_\infty = g+O(g^2)$
in \eqref{eq:gNB_ter}.
\end{theorem}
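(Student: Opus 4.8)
The plan is to obtain Theorem~\ref{thm:rg_main_theorem_new} by assembling, in the form we shall need, the renormalisation group analysis already carried out in \cite{MPS23}; no new estimate is required, but one must check that the cited results fit together over the whole mass domain $\II_\crit$.

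First, set up the scale-by-scale integration. Since the observable terms in $Z_0$ are linear in $\sigma_\o,\sigma_\x$ (recall \eqref{eq:Z_0_definition}--\eqref{eq:sigmaox}), the bulk component $Z_{N,\varnothing}$ of $Z_N$ equals $\E_{C_{\ka,\le N}} e^{-V_{0,\varnothing}(\varphi+\zeta)}$, so the observables play no role here and we are exactly in the setting of \cite[Section~3]{MPS23}. By \eqref{eq:Cjdef} we have $C_{\ka,\le N}=\sum_{j=1}^N C_{\ka,j}$ with $C_{\ka,j}=\gamma_j(\ka)P_j$, and since the $P_j$ are the orthogonal projections giving a direct-sum decomposition of $\ell^2(\Lambda_N)$, the standard factorisation of Gaussian convolution (\cite[Corollary~2.1.11]{BBS-brief}) lets us compute $\E_{C_{\ka,\le N}}$ as the $N$-fold composition $\E_{C_{\ka,N}}\circ\cdots\circ\E_{C_{\ka,1}}$. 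Starting from $Z_0=e^{-V_{0,\varnothing}}$, the renormalisation group map then produces a sequence $Z_j$, each retaining the form $e^{\bar u_{j,\varnothing}}(e^{-V_{j,\varnothing}}+K_{j,\varnothing})$ with flowing coordinates $(\bar u_{j,\varnothing},g_j,\nu_j,K_{j,\varnothing})$ and with $V_{j,\varnothing}$ of the shape \eqref{eq:V_N_form} on constant fields.

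Second, invoke the flow. For $\ka\ge0$ in $\II_\crit$, \cite[Theorem~3.1]{MPS23} supplies an initial condition $\nu=\nu_{1,N}(\ka)$, continuous and strictly increasing in $\ka$, for which the flowed coordinates at scale $N$ obey the bounds claimed in \eqref{eq:gNB_ter}--\eqref{eq:K_bound_at_nu_c_ter}: the asymptotics for $g_N$ and the size $e_N$ of $K_{N,\varnothing}$ are read off directly, and $|\nu_N|\le O(g_N\lp_N^2)$ expresses that criticality has driven the mass coordinate essentially to zero. For $\ka<0$ in $\II_\crit$ the same conclusions are provided by \cite[Theorem~3.3]{MPS23}; since the two regimes are governed by theorems with matching conclusions, $\nu_{1,N}$ extends continuously and monotonically across $\ka=0$. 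Finally, at the last scale $K_{N,\varnothing}$ is a function of a single constant field $y\in\R^n$, and \cite[Lemma~3.10]{MPS23} yields exactly \eqref{eq:Z_N_form_ter} together with the Gaussian decay $e^{-\kappa g_N|\Lambda_N||y|^4}$ in \eqref{eq:K_bound_at_nu_c_ter}, with $\kappa=\kappa(n)>0$ the constant fixed at \eqref{eq:kappa-def}. The vacuum energy $\bar u_{N,\varnothing}$ inherits bounds from the same sources but is not recorded, since it cancels in the ratio \eqref{eq:Gobs}.

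The only genuine work, and hence the point to treat with care, is verifying that the hypotheses of \cite[Theorems~3.1 and~3.3]{MPS23} hold uniformly over all of $\II_\crit$ and that the several scales involved---$\lp_N$, $e_N$, and the $\ka$-dependence in $\gamma_j(\ka)$---combine to give bounds uniform in $\ka$. There is no new analytic obstacle here: the present theorem is a repackaging of results from \cite{MPS23}, stated in the form in which they will be applied in Sections~\ref{sec:3}--\ref{sec:4}.
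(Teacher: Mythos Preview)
Your proposal is correct and matches the paper's own treatment: the paper does not give a separate proof of this theorem but states directly that it combines a restricted version of \cite[Theorem~3.1]{MPS23} (for $\ka\ge 0$ in $\II_\crit$) with \cite[Theorem~3.3]{MPS23} (for $\ka<0$ in $\II_\crit$) and \cite[Lemma~3.10]{MPS23} (for all $\ka\in\II_\crit$), exactly as you have organised it.
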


In the notation of \cite{MPS23},
the critical value $\nu_{1,N}$ is the continuous function on $\II_{\rm crit}$
given by
\begin{equation}
\label{eq:nu_1_N_def}
	\nu_{1,N}(\ka) =
	\begin{cases}
		\nu_c(\ka) &  ( \ka \in \II_\crit \cap [0,1) )   \\
		\nu_{0,N}(\ka) & ( \ka \in (-\frac12 L^{-2(N-1)} ,  0) ).
	\end{cases}
\end{equation}

\subsubsection{The renormalised mass}

In \cite[Definition~3.6]{MPS23},
the \emph{renormalised masses} are defined to be the functions
$\ka_N^*$
and $\tilde\ka_N^*$
given by the unique solutions to
\begin{alignat}{2}
\label{ren-mass}
	\nu_{c,N}^*+sw_N &= \nu_{1,N}(\ka^*_N(s))+ \ka^*_N(s)
    \qquad & (s\in \R),
    \\
\label{eq:ren-mass-tilde}
    \nu_{c,N}^*+sv_N &= \nu_{1,N}(\tilde\ka^*_N(s))+ \tilde\ka^*_N(s)
    \qquad & (s>0).
\end{alignat}
We use the term ``mass'' for brevity, even though ``mass squared'' would be more accurate, and even though $\ka^*_N(s)$
and $\tilde \ka^*_N(s)$ can be negative.
Recall the definition of the constant $\qLap$ from \eqref{eq:qLapdef},
and of $\lp_N$ in \eqref{eq:lpNdef}.
With an unimportant caveat (see \cite[(3.47)]{MPS23}), by
\cite[Proposition~3.9]{MPS23} the renormalised masses satisfy
\begin{equation}
    \ka_N^\per(s) = s\hh_N^{-2}L^{-dN}(1+o(1)),	
    \qquad
    \ka_N^\free(s) = s\hh_N^{-2}L^{-dN}(1+o(1))  -\qLap L^{-2N},
    	\label{eq:aN1}
\end{equation}
\vspace{-20pt}

\begin{equation}
    \tilde\ka_N^\per(s) = s\lp_N^{-2}L^{-dN}(1+o(1)),
    \qquad
    \tilde\ka_N^\free(s) = s\lp_N^{-2}L^{-dN}(1+o(1))  -\qLap L^{-2N}.
		\label{eq:aN2}
\end{equation}
It follows from \eqref{eq:aN1}
(together with the fact that $\qLap$ is as close as desired to $1$ for large $L$)
that for $s \in \R$ and for $N$ large enough depending on $s$,
the mass $\ka_N^*(s)$ is in the domain $\II_{\rm crit}$
treated by
Theorem~\ref{thm:rg_main_theorem_new}.
The same is true for $s>0$ and $\tilde \ka_N^*(s)$.

\subsubsection{Scaling of integrals}

The following is a special case of \cite[Lemma~4.1]{MPS23},
which has been simplified by imposing stricter hypotheses that are nevertheless adequate for our needs.

\begin{lemma}
\label{lem:integral-scaling}
Let $d\geq 4$.  Let
$\lambda_N$ be a positive real sequence with $\lim_{N\to\infty}\lambda_N = \lambda >0$,
and let $V_{N,\bulk}$ and $K_{N,\bulk}$ obey the estimates of
Theorem~\ref{thm:rg_main_theorem_new}.
Let $h:\R^n \to [0,\infty)$ be bounded by $C |y|^C$ for some $C>0$.
Then
\begin{align}
\label{eq:t_N_limit_l_scaling}
	\lim_{N\to\infty}
    \int_{\R^n}h(y)e^{-V_{N,\bulk}(\lp_Ny)}
    e^{-\frac12 \lambda_N   |y|^2}   dy
	&=
	\int_{\R^n}h(x)e^{- \frac12 \lambda |x|^2}dx
        ,
\\
\label{eq:t_N_limit_h_scaling}
	\lim_{N\to\infty}
    \int_{\R^n}h(y)e^{-V_{N,\bulk}(\hh_Ny)}
    e^{-\frac12 \lambda_N   |y|^2}   dy
	&=
	 \int_{\R^n}h(x)e^{- \frac14 |x|^4- \frac12 \lambda |x|^2}dx
        ,
\\
\label{eq:K_N_limit_l_scaling}
	\lim_{N\to\infty}
    \int_{\R^n}h(y)K_{N,\bulk}(\lp_Ny)
    e^{-\frac12 \lambda_N   |y|^2}  dy
	&=
	 0,
\\
\label{eq:K_N_limit_h_scaling}
	\lim_{N\to\infty}
    \int_{\R^n}h(y)K_{N,\bulk}(\hh_Ny)
    e^{-\frac12 \lambda_N   |y|^2}  dy
	&=
	 0
        .
\end{align}
\end{lemma}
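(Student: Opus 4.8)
The plan is to prove all four limits by dominated convergence, after a change of variables that absorbs the $N$-dependent scale into the argument of $V_{N,\bulk}$ and $K_{N,\bulk}$. I begin with \eqref{eq:t_N_limit_h_scaling}, which is the prototypical case. Using the explicit form \eqref{eq:V_N_form}, one has
\begin{equation}
    V_{N,\bulk}(\hh_N y) = L^{dN}\big( \tfrac14 g_N \hh_N^4 |y|^4 + \tfrac12 \nu_N \hh_N^2 |y|^2\big).
\end{equation}
By the definition \eqref{eq:hNdef} of $\hh_N$, the combination $L^{dN}g_N\hh_N^4$ equals $g_N \cdot (BN)$ when $d=4$ and $g_N/g_\infty$ when $d>4$; in either case \eqref{eq:gNB_ter} gives $L^{dN}g_N\hh_N^4 \to 1$. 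Likewise $L^{dN}|\nu_N|\hh_N^2 \le O(L^{dN} g_N \lp_N^2 \hh_N^2)$ by \eqref{eq:nu_j_bound_ter}, and substituting the definitions of $\lp_N$ and $\hh_N$ shows this is $O(N^{-1/2})\to 0$ for $d=4$ and $O(g^{1/2}L^{-(d-4)N/2})\to 0$ for $d>4$. Hence $V_{N,\bulk}(\hh_N y)\to \tfrac14|y|^4$ pointwise, and combined with $\lambda_N\to\lambda$ the integrand converges pointwise to $h(x)e^{-\frac14|x|^4-\frac12\lambda|x|^2}$. For the dominating function: for $N$ large the coefficient of $|y|^4$ exceeds $\tfrac18$ and the coefficient of $|y|^2$ is bounded, so $V_{N,\bulk}(\hh_N y)\ge \tfrac18|y|^4 - C|y|^2$ uniformly in $N$, and $\lambda_N \ge \lambda/2 > 0$; since $h(y)\le C|y|^C$, the integrand is dominated by $C|y|^C e^{-\frac18|y|^4 + C|y|^2}$, which is integrable. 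Dominated convergence then yields \eqref{eq:t_N_limit_h_scaling}.

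For \eqref{eq:t_N_limit_l_scaling} the argument is analogous but simpler: with the Gaussian scale $\lp_N = L^{-N(d-2)/2}$ one computes $V_{N,\bulk}(\lp_N y) = L^{dN}(\tfrac14 g_N \lp_N^4 |y|^4 + \tfrac12\nu_N\lp_N^2|y|^2)$, and now $L^{dN}g_N\lp_N^4 = g_N L^{dN}L^{-2N(d-2)} = g_N L^{(4-d)N}$, which tends to $0$ for $d\ge 4$ (it equals $g_N$ for $d=4$, which still vanishes by \eqref{eq:gNB_ter}), while $L^{dN}|\nu_N|\lp_N^2 \le O(g_N L^{dN}\lp_N^4) \to 0$ similarly. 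Thus $V_{N,\bulk}(\lp_N y)\to 0$ pointwise, the integrand converges to $h(x)e^{-\frac12\lambda|x|^2}$, and since $V_{N,\bulk}\ge 0$ the integrand is dominated by $C|y|^C e^{-\frac14\lambda|y|^2}$ for $N$ large; dominated convergence gives \eqref{eq:t_N_limit_l_scaling}.

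The remaining two limits \eqref{eq:K_N_limit_l_scaling}--\eqref{eq:K_N_limit_h_scaling} follow from the bound \eqref{eq:K_bound_at_nu_c_ter}, namely $|K_{N,\bulk}(y)|\le O(e_N)e^{-\kappa g_N |\Lambda_N||y|^4}$ with $e_N\to 0$. Evaluating at $y = \hh_N z$ gives $|K_{N,\bulk}(\hh_N z)| \le O(e_N) e^{-\kappa L^{dN}g_N\hh_N^4 |z|^4} \le O(e_N) e^{-\frac{\kappa}{2}|z|^4}$ for $N$ large (using $L^{dN}g_N\hh_N^4\to 1$), so the integral is bounded by $O(e_N)\int |z|^C e^{-\frac{\kappa}{2}|z|^4}e^{-\frac14\lambda|z|^2}dz = O(e_N)\to 0$; the case with $\lp_N$ is even easier since $L^{dN}g_N\lp_N^4 = g_N L^{(4-d)N}$ need only be bounded below by a positive constant times $g_N$ times a sub-exponential factor — in fact one can simply use $K_{N,\bulk}(\lp_N z)\le O(e_N)$ on the region $|z|\le L^{\epsilon N}$ and the Gaussian weight $e^{-\frac12\lambda_N|z|^2}$ to control the tail. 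The main obstacle is purely bookkeeping: keeping straight the several regimes ($d=4$ versus $d>4$, and within $d>4$ the sub-cases in \eqref{eq:eNdef}) and checking in each that the rescaled quartic coefficient converges to $1$ while the rescaled quadratic coefficient vanishes. Once those scale computations are recorded, each limit is a one-line application of dominated convergence, and since this is a simplification of \cite[Lemma~4.1]{MPS23} under stronger hypotheses, no genuinely new difficulty arises.
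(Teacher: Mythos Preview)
The paper does not prove this lemma; it simply states it as a special case of \cite[Lemma~4.1]{MPS23} and gives no argument. Your proof by dominated convergence is the natural one and is essentially correct, with two small quibbles. First, in the $\lp_N$ case you write ``since $V_{N,\bulk}\ge 0$'' to justify domination, but $\nu_N$ can be negative so this is not literally true; the correct observation is that the quartic coefficient is positive while the quadratic coefficient $L^{dN}\nu_N\lp_N^2$ is $o(1)$, so $V_{N,\bulk}(\lp_N y)\ge -o(1)|y|^2$ and the Gaussian weight absorbs this. Second, your treatment of $K_{N,\bulk}(\lp_N z)$ is unnecessarily complicated: since $|K_{N,\bulk}|\le O(e_N)$ uniformly, the Gaussian factor $e^{-\frac12\lambda_N|z|^2}$ alone makes $h(z)K_{N,\bulk}(\lp_N z)e^{-\frac12\lambda_N|z|^2}$ integrable with total mass $O(e_N)\to 0$, no region-splitting needed. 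With these cosmetic fixes your argument is complete.
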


Lemma~\ref{lem:integral-scaling} is an elementary consequence of
Theorem~\ref{thm:rg_main_theorem_new}.  For example, the $d=4$ case of
\eqref{eq:t_N_limit_h_scaling} holds because the formulas for
$V_N$ and $\hh_N$ in \eqref{eq:V_N_form} and \eqref{eq:hNdef} imply that
\begin{equation}
    V_{N,\bulk}(\hh_Ny)
    =
    L^{4N}
    \Big(  \textstyle{\frac{1}{4}} g_N(BN)L^{-4N} |y|^4
    + \textstyle{\frac12} \nu_N (BN)^{1/2}L^{-2N}|y|^2 \Big),
\end{equation}
and the bounds of Theorem~\ref{thm:rg_main_theorem_new} then show that
$V_{N,\bulk}(\hh_Ny) \to \frac 14 |y|^4$.
Similarly, with the scaling by $\lp_N$ in \eqref{eq:t_N_limit_l_scaling},
$V_{N,\bulk}(\lp_Ny) \to 0$.
The zero limits in \eqref{eq:K_N_limit_l_scaling}--\eqref{eq:K_N_limit_h_scaling}
simply follow from the bound \eqref{eq:K_bound_at_nu_c_ter} on $K_{N,\bulk}$.

\subsubsection{A short summary}

The proof of Theorem~\ref{thm:rg_main_theorem_new} is a substantial endeavour
which occupies much of \cite{MPS23}.  In turn, \cite{MPS23} depends in a significant
way on results from \cite{BBS-brief}.  The mass restriction $\ka \in \II_{\rm crit}$
in Theorem~\ref{thm:rg_main_theorem_new} reflects the fact that our
main results Theorems~\ref{thm:mr-plateau} and \ref{thm:mr-plateau-Gaussian}
concern near-critical behaviour (including a peek into the
low-temperature regime $\nu<\nu_c$ in Theorem~\ref{thm:mr-plateau}).
In \eqref{eq:Z_N_definition}, the term $|\varphi|^2$ in the action contains a contribution both
from the covariance (coefficient $\ka$) and from $V_\bulk$ (coefficient $\nu$).  In order
to construct an RG flow to scale $N$, it is essential that the initial value
of $\nu$ be carefully tuned depending on the value of $\ka$.
This tuning is achieved by the choice of $\nu_{1,N}(\ka)$
in Theorem~\ref{thm:rg_main_theorem_new}, which continues to serve its
purpose for our present needs.

The proof of Theorem~\ref{thm:rg_main_theorem_new}
involves a progressive integration in which the field $\varphi$ is decomposed over constant fields on blocks of side $L^j$, which we refer to as scale $j$.  The integration over
fields is then done scale by scale.  A polynomial $V_{j,\bulk}$ captures the leading
part of the integration up to scale $j$, with $K_{j,\bulk}$ comprising error terms.
The coefficients $g_j,\nu_j$
of the effective potential $V_{j,\bulk}$ need to be accurately tracked, and
\eqref{eq:gNB_ter}--\eqref{eq:nu_j_bound_ter} give the result at the final
scale $N$.  A specially constructed norm is used to control the error terms $K_{j,\bulk}$,
and we will extend this norm in Section~\ref{sec:norms}.

An inverse problem must be solved to apply Theorem~\ref{thm:rg_main_theorem_new}.
The theorem tells us how to integrate to scale $N$ if we are given $\ka$
and choose $\nu=\nu_{1,N}(\ka)$.  However, our main results
Theorems~\ref{thm:mr-plateau} and \ref{thm:mr-plateau-Gaussian}
concern $\nu=\nu_c+sw_N$ or $\nu=\nu_c+sv_N$, so in order to apply
\eqref{eq:Gobs} at these values of $\nu$ we must write these values
in the form $\nu_{1,N}(\ka)+\ka$, which entails solving the
equations \eqref{ren-mass}--\eqref{eq:ren-mass-tilde} for the renormalised
masses.  This task is accomplished in \cite{MPS23} and we have no further work to
do here in this regard.

What we do need to do here is to extend Theorem~\ref{thm:rg_main_theorem_new}
to incorporate the observables.  We state the extension below in
Theorem~\ref{thm:RG-obs}, and show how the extended theorem suffices
to prove our main results.  Our main work will then be to prove
Theorem~\ref{thm:RG-obs}.

\subsection{Integration to the final scale: observables}
\label{sec:itgfso}

\subsubsection{Result of the RG flow with observables}

The following theorem extends Theorem~\ref{thm:rg_main_theorem_new} to include the observables and with good bounds.
We defer discussion of its proof to Section~\ref{sec:3}.
Note that there is no renormalisation of the observable terms in $V_N$:
they maintain exactly the same form as in $V_0$.
In \eqref{eq:KNoxbd}, $\kappa$
 is the positive constant of Theorem~\ref{thm:rg_main_theorem_new};
 it depends only on $n$.
The $\cO_{N,\x}$ is from Definition~\ref{def:errorconv} and
is not uniform in $s$.

\begin{theorem}
\label{thm:RG-obs}
Let $x\in\Lambda_N$ and $V_0( \varphi)= V_{0, \bulk} (\varphi)
- \sigma_{\o} \varphi_\o^{(1)} - \sigma_{\x} \varphi_\x^{(1)}$.
Under the hypotheses of Theorem~\ref{thm:rg_main_theorem_new}, $Z_N$ can be written as
\begin{equation}
\label{eq:ZNy}
    Z_{N} (y)
    =
    e^{\bar u_N   }
    (e^{-V_{N}(y)} +K_{N}(y) ),
\end{equation}
with $V_{N,\bulk}$ and $K_{N,\bulk}$ as in Theorem~\ref{thm:rg_main_theorem_new},
with $V_N( y)= V_{N, \bulk} (y)
- \sigma_{\o} y^{(1)} - \sigma_{\x} y^{(1)}$, and
with $\bar u_N  = \bar u_{N,\varnothing} + \sigma_{\ox}u_{N,\ox}$.
For the observable part of $\bar u_N$,
there is a function $\vv : \Lambda_\infty \rightarrow \R$ with
$\vv (\x) = O(g)$ (uniformly in $x$) and
$\lim_{|x| \rightarrow \infty} |\x|^{d-2} \vv (\x) = 0$, such that,
\begin{align}
	\label{eq:qN-thm}
	u_{N,\ox} & = C_{\ka,\le N}(\x)
    	+ \vv (\x ) +  \frac{1}{|x|^{d-2}} \cO_{N,\x} .
\end{align}
The observable part of $K_N$ obeys
\begin{align}
	\label{eq:KNoxbd}
    |K_{N,\o\x}(y)|
    & \le
    \const \; g_N^{1/4} 4^{-(N - j_{\o\x})}\frac{1}{|x|^{d-2}}
    e^{-  \kappa |y/\hh_N|^{4}}
    \times
    	\begin{cases}
    		1 & (d=4) \\
    		|x|^{-1/4} & (d \ge 5) .
    \end{cases}
\end{align}
\end{theorem}

The term $K_{N,\ox}$ in \eqref{eq:KNoxbd} is an error term; it decays in $\x$ more rapidly than $|\x|^{-(d-2)}$.  As we will see in Section~\ref{sec:pf-mr},
the term $C_{\ka,\le N}(\x)$ in \eqref{eq:qN-thm} contributes
to the covariance term implicit in \eqref{eq:mr-plateau}.

The proof of Theorem~\ref{thm:RG-obs} also
provides bounds on $K_{N,\o}$ and $K_{N,\x}$, but we do not need them at this point.
To analyse the $u_{N,\ox}$ term, we use the following lemma concerning
the hierarchical covariances. We defer its proof to Appendix~\ref{app:covariance}.
There are four cases, corresponding to the dependence of $u_{N,\ox}$
on $a_N^\per(s)$, $a_N^\free(s)$ for the non-Gaussian case, and on
$\tilde a_N^\per(s)$, $\tilde a_N^\free(s)$
for the Gaussian case.

\begin{lemma}
\label{lem:uNox}
Let $s\in\R$ and recall $\cO_{N,\x}$ from Definition~\ref{def:errorconv}.
Then
\begin{align}
	\label{eq:uNox-PBC}
    u_{N;\ox}(a_N^\per(s)) & =
	\mathbb{C}_{0,\infty} (\x) (1+\cO_{N,\x} ) + \vv (\x)
    + O( |x|^{-(d-3)} L^{-N})
    ,
    \\
	\label{eq:uNox-FBC}
    u_{N;\ox}(a_N^\free(s)) & =
	\mathbb{C}_{0,\infty} (\x) (1+\cO_{N,\x} ) + \vv (\x)
    + O( |x|^{-(d-3)} L^{-N} )
   	.
\end{align}
For $s_N>0$ such that $s_N \rightarrow s >0$ and for all $x \in \Lambda_N$,
\begin{align}
\label{eq:uNox-PBC-tilde}
    u_{N;\ox}(\tilde a_N^\per(s_N))+ \frac {1}{s_{N}} L^{-(d-2)N}
    & =
    \HLap^{\per}_{sL^{-2N},   N} (\x) \big(1+ {\cO}_{N,\x} \big) + \vv (\x)
    \asymp |x|^{-(d-2)}
    ,
    \\
\label{eq:uNox-FBC-tilde}
    u_{N;\ox}(\tilde a_N^\free(s_N))+ \frac {1}{s_{N}} L^{-(d-2)N}
    & =
	\HLap^{\free}_{(s - q)L^{-2N} , N } (\x) \big(1+ {\cO}_{N,\x} \big) + \vv (\x)
    \asymp |x|^{-(d-2)}
    .
\end{align}
\end{lemma}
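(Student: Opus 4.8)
The plan is to start from \eqref{eq:qN-thm}, which reads $u_{N,\o\x}(\ka) = C_{\ka,\le N}(\x) + o(|x|^{-(d-2)})$, so it suffices to analyse the covariance $C_{\ka,\le N}(\x)$ for each of the four masses $\ka \in \{a_N^\per(s), a_N^\free(s), \tilde a_N^\per(s), \tilde a_N^\free(s)\}$. The first step is to record the explicit spectral form of this covariance. By \eqref{eq:Qj-def}--\eqref{eq:Pj-def}, $P_j(\o,\x) = Q_{j-1}(\o,\x) - Q_j(\o,\x)$ vanishes for $j < j_{\o\x}$, equals $-L^{-dj_{\o\x}}$ for $j = j_{\o\x}$, and equals $(1-L^{-d})L^{-d(j-1)}$ for $j > j_{\o\x}$, so (for $\x\neq\o$) the sum in \eqref{eq:Cjdef} collapses to
\[
  C_{\ka,\le N}(\x) = -\gamma_{j_{\o\x}}(\ka)L^{-dj_{\o\x}} + (1-L^{-d})\sum_{j=j_{\o\x}+1}^{N} \gamma_j(\ka)L^{-d(j-1)},
\]
and $\HLap_{0,\infty}(\x)$ is obtained from this by replacing $\gamma_j(\ka)$ with $\gamma_j(0)=L^{2(j-1)}$ and taking $N\to\infty$. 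The elementary identity $\gamma_j(\ka)-\gamma_j(\ka') = (\ka'-\ka)L^{4(j-1)}\big[(1+\ka L^{2(j-1)})(1+\ka' L^{2(j-1)})\big]^{-1}$ is the tool for comparing covariances at nearby masses.

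For the two non-tilde identities \eqref{eq:uNox-PBC}--\eqref{eq:uNox-FBC}: by \eqref{eq:aN1} and the definition \eqref{eq:hNdef} of $\hh_N$, the mass $\ka = a_N^*(s)$ satisfies $|\ka|L^{2(j-1)} \le \const\, L^{2(j-N)}$ uniformly for $1\le j\le N$ (the PBC mass is of order $L^{-dN/2}$ when $d>4$ and of order $N^{-1/2}L^{-2N}$ when $d=4$, and the FBC shift $-\qLap L^{-2N}$ only adds a term of the same shape). Hence $\gamma_j(\ka) = L^{2(j-1)}(1 + O(L^{2(j-N)}))$ throughout the range of the collapsed sum, and, using the geometric decay of $L^{-d(j-1)}$ for $d>2$, replacing $\gamma_j(\ka)$ by $\gamma_j(0)$ in the displayed formula costs an error of order $L^{-(d-2)N}|x|^{-(d-4)}$ when $d>4$ and of order $L^{-2N}(N-j_{\o\x})$ when $d=4$; since $|x|\asymp L^{j_{\o\x}}$ and $\sup_{k\ge0}kL^{-k}<\infty$, both are $O(|x|^{-(d-3)}L^{-N})$ for $\x\in\Lambda_N$. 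Truncating the corresponding infinite sum at scale $N$ removes a tail of order $L^{-(d-2)N}$, which is also $O(|x|^{-(d-3)}L^{-N})$. Combining these and using $\HLap_{0,\infty}(\x)\asymp|x|^{-(d-2)}$ from \eqref{eq:HLap0-asy} gives $C_{\ka,\le N}(\x) = \HLap_{0,\infty}(\x)(1+o(1)) + O(|x|^{-(d-3)}L^{-N})$, and \eqref{eq:qN-thm} then yields \eqref{eq:uNox-PBC}--\eqref{eq:uNox-FBC}.

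For the two tilde identities \eqref{eq:uNox-PBC-tilde}--\eqref{eq:uNox-FBC-tilde} with $s>0$: by \eqref{eq:aN2} the mass $\ka=\tilde a_N^*(s)$ is of the critical order $L^{-2N}$, so $|\ka|L^{2(j-1)}\asymp L^{2(j-N)}$ is not small near $j=N$ and the comparison with $\HLap_{0,\infty}$ is no longer available. Instead I would use \eqref{eq:Csum} in the form $C_{\ka,\le N}(\x) = \HLap^*_{\ka,N}(\x) - C^*_{\ka,\hat N}(\x)$. For PBC, $C^\per_{\ka,\hat N}(\x)=L^{-dN}/\ka$ and $\tilde a_N^\per(s) = sL^{-2N}(1+o(1))$ (since $\lp_N = L^{-N(d-2)/2}$), so $C^\per_{\tilde a_N^\per(s),\hat N}(\x) = s^{-1}L^{-(d-2)N}(1+o(1))$, giving $u_{N,\o\x}(\tilde a_N^\per(s)) + s^{-1}L^{-(d-2)N} = \HLap^\per_{\tilde a_N^\per(s),N}(\x) + o(L^{-(d-2)N}) + o(|x|^{-(d-2)})$; for FBC, $C^\free_{\ka,\hat N}(\x) = L^{-dN}/(\ka+\qLap L^{-2N})$ and $\tilde a_N^\free(s)+\qLap L^{-2N} = sL^{-2N}(1+o(1))$, which gives the same statement with $\HLap^\free$ in place of $\HLap^\per$. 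It remains to replace the mass $\tilde a_N^\per(s)$ by $sL^{-2N}$ (resp.\ $\tilde a_N^\free(s)$ by $(s-\qLap)L^{-2N}$): writing the difference of masses as $o(L^{-2N})$, the comparison identity for $\gamma_j$ --- whose denominators $1+\ka L^{2(j-1)}$ stay bounded away from $0$ for $j\le N$ and $L$ large, since $|\ka|L^{2(j-1)}\le|\ka|L^{2(N-1)}$ is bounded --- gives term by term that $C_{\ka,\le N}(\x)$ equals the same object at the target mass up to $o(\HLap^*_{\cdot,N}(\x))$, while the $Q_N$-coefficients differ by $o(L^{-(d-2)N})$. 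Since the target covariances satisfy $\HLap^\per_{sL^{-2N},N}(\x) \asymp \HLap^\free_{(s-\qLap)L^{-2N},N}(\x) \asymp |x|^{-(d-2)}$ (the finite-volume analogue of \eqref{eq:Cpa-plateau}, established in Appendix~\ref{app:covariance}) and $|x|^{-(d-2)}\ge cL^{-(d-2)N}$ for $\x\in\Lambda_N$, all the $o(L^{-(d-2)N})$ errors and the $\gamma$-perturbation errors are absorbed into a multiplicative $(1+o(1))$, which yields \eqref{eq:uNox-PBC-tilde}--\eqref{eq:uNox-FBC-tilde}.

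I expect the main obstacle to be the uniformity of the error bounds over all $\x\in\Lambda_N$, and in particular the regime where $j_{\o\x}$ is comparable to $N$, i.e.\ $|x|\asymp L^N$: there the truncation error, the mass-perturbation error, and (for FBC) the effective-mass shift are each of the same order $L^{-(d-2)N}$ as the main term $\HLap_{0,\infty}(\x)$ itself, and one must check that they are cleanly captured by the additive term $O(|x|^{-(d-3)}L^{-N})$ rather than corrupting the leading coefficient. A secondary subtlety is the FBC tilde case with $s$ near $\qLap$, where $(s-\qLap)L^{-2N}$ degenerates toward $0$; there one relies on the exact form of $\tilde a_N^\free(s)$ from \eqref{eq:aN2} and on the joint continuity and positivity of $C_{\ka,\le N}$ and $\HLap^\free_{\ka,N}$ in $\ka$ across the admissible range $\ka>-\frac12 L^{-2(N-1)}$.
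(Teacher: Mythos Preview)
Your proposal is correct and follows the same overall route as the paper: both start from \eqref{eq:qN-thm} and reduce everything to elementary comparisons of $C_{\ka,\le N}(\x)$ at nearby masses together with the truncation estimate $|C_{0,\infty}(\x)-C_{0,\le N}(\x)|=O(L^{-(d-2)N})$. The paper organises these as Lemmas~\ref{lem:Capp-prep}--\ref{lem:HLapN-app}.

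The one substantive technical difference is the comparison tool. You use the linear identity $\gamma_j(\ka)-\gamma_j(\ka')=(\ka'-\ka)L^{4(j-1)}\big[(1+\ka L^{2(j-1)})(1+\ka'L^{2(j-1)})\big]^{-1}$, which after summing over $j$ produces errors like $|\ka-\ka'|\sum_{j\ge j_{\o\x}}L^{-(d-4)(j-1)}$; you then absorb this into $O(|x|^{-(d-3)}L^{-N})$ by invoking the finite-volume constraint $|x|\le C L^{N}$. The paper instead proves the square-root bound $|\gamma_j(\ka_1)-\gamma_j(\ka_2)|\le O(L^{3(j-1)})\,d_{\rm sq}(\ka_1,\ka_2)$ (Lemma~\ref{lem:Capp-prep}), which yields $|C_{\ka_1,\le N}(\x)-C_{\ka_2,\le N}(\x)|\le O(d_{\rm sq}(\ka_1,\ka_2)/|x|^{d-3})$ directly, without appeal to $|x|\lesssim L^{N}$. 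Both devices give the required bound here; the square-root route is marginally cleaner because it delivers the target $|x|^{-(d-3)}$ power immediately, while your route is more elementary. A small slip: your non-tilde $d>4$ error should read $L^{-2N}|x|^{-(d-4)}$ rather than $L^{-(d-2)N}|x|^{-(d-4)}$, but the conclusion $O(|x|^{-(d-3)}L^{-N})$ still follows since $|x|\le CL^{N}$.
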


\subsubsection{Proof of main results: the two-point function}
\label{sec:pf-mr}

According to \eqref{ren-mass}--\eqref{eq:ren-mass-tilde},
\begin{align}
\label{eq:Gsa}
    G_{\nu_{c,N}^*+sw_N}^*(\x) &= G_{\nu_{1,N}(\ka_N^*(s))+\ka_N^*(s),N}^*(\x),
    \\
    G_{\nu_{c,N}^*+sv_N}^*(\x) &= G_{\nu_{1,N}(\tilde\ka_N^*(s))+\tilde\ka_N^*(s),N}^*(\x)  .
\end{align}
For the right-hand sides, we use
Lemma~\ref{lem:Gox}, which asserts that if
$\ka > \frac 12 L^{-2(N-1)}$, then for $\x\in\Lambda_N$ the two-point function obeys
\begin{equation}
\label{eq:Gobs-bis}
    G_{\nu+\ka,N}^*(\x)
    =
    \frac{\E_{C_{\ka,\hat N}^*}Z_{N,\ox}(\ka)}{\E_{C_{\ka,\hat N}^*}Z_{N,\bulk}(\ka)}
    .
\end{equation}
With Theorem~\ref{thm:RG-obs} and \eqref{eq:Gsa}, this gives
\begin{equation}
\label{Gox-ratio}
    G_{\nu_{c,N}^*+sw_N}^*(\x)
    =
    \frac{\E^{*,s}_{\hat N} Z_{N,\ox}}
    {\E^{*,s}_{\hat N}Z_{N,\bulk}}
    ,
\end{equation}
where,
with a test function $f$, we have made the abbreviation
\begin{equation}
\label{eq:Estar}
    \E^{*,s}_{\hat N} [f] =
    \E_{C_{\ka_N^*(s),\hat N}^*} [f]
    \propto
    \begin{cases}
    	\int_{\R^n} f(y) e^{-\frac 12 |\Lambda_N|\ka_N^\per(s)|y|^2} dy & (*=\per)
    \\
    	\int_{\R^n} f(y) e^{-\frac 12 |\Lambda_N|(\ka_N^\free(s)+\qLap L^{-2N})|y|^2} dy & (*=\free).
    \end{cases}
\end{equation}
We have left implicit the dependence of $Z_{N,\ox}$ and $Z_{N,\bulk}$
on the mass.
Similar formulas hold when $w_N$ is replaced by $v_N$, with the replacement
of the renormalised mass $\ka_N^*(s)$ by its counterpart $\tilde\ka_N^*(s)$.

It follows from \eqref{eq:ZNy} that
\begin{equation}
    Z_{N,\ox}(y)
    =
    u_{N,\ox} Z_{N,\varnothing} (y)
    +
    e^{\bar u_{N,\varnothing} - V_{N,\varnothing} (y) }
    |y^{(1)}|^2
    +
    e^{\bar u_{N,\varnothing}}
     K_{N,\ox}(y)
    .
\end{equation}
After cancellation of $e^{\bar u_{N,\varnothing}}$ in the numerator and denominator of
\eqref{eq:Gobs-bis}, this gives
\begin{align}
\label{eq:Gdecomp}
    G_{\nu_{c,N}^*+sw_N,N}  (\x)
    =
    u_{N,\ox} & +
    \frac{\E^{*,s}_{\hat N} |y^{(1)}|^2 e^{-{V_{N,\bulk}(y)}}}
    {\E^{*,s}_{\hat N}[e^{-V_{N,\bulk}(y)} + K_{N,\bulk}(y)] }
    +
    \frac{\E^{*,s}_{\hat N} K_{N, \ox}}
    {\E^{*,s}_{\hat N}[e^{-V_{N,\bulk}(y)} + K_{N,\bulk}(y)] }
    .
\end{align}
Everything on the right-hand side of \eqref{eq:Gdecomp}
depends on $\ka_N^*(s)$.
The same formulas hold when the two-point function is
in the Gaussian regime, i.e., evaluated at $\nu_{c,N}^*+sv_N$, after replacement
of $\ka_N^*(s)$ by $\tilde\ka_N^*(s)$.

In brief,
the three terms on the right-hand side of \eqref{eq:Gdecomp} play three different roles:
\begin{enumerate}
\item
Via Lemma~\ref{lem:uNox},
the first term gives rise to the to the appropriate hierarchical Green function
$\HLap$,
with the mass $\ka$  equal to $\ka_N^*(s)$ in the non-Gaussian
regime and equal to $\tilde \ka_N^*(s)$ in the Gaussian
regime.  In the Gaussian case, the term $s^{-1}L^{-(d-2)N}$ in
\eqref{eq:uNox-PBC-tilde}--\eqref{eq:uNox-FBC-tilde} arises from
the second term on the right-hand side of \eqref{eq:Gdecomp}.
\item
The second term gives a constant term
(independent of $\x$), which is the plateau in the non-Gaussian regime,
and $s^{-1}L^{-(d-2)N}$ in the Gaussian regime.
This is true for both FBC and PBC.
The difference between the two regimes comes from the fact that in the Gaussian regime
the $|y|^4$ term in $V_{N,\bulk}$
disappears in the limit in \eqref{eq:t_N_limit_l_scaling}, but it
survives in the limit in
\eqref{eq:t_N_limit_h_scaling}.
\item
The third term is an error term.
\end{enumerate}
The existence of a plateau is due to a competition between the first two terms,
with the plateau arising when the second term dominates the first term.
We now prove our two main theorems, subject to Theorem~\ref{thm:RG-obs}.

\begin{proof}[Proof of Theorem~\ref{thm:mr-plateau}]

The claim is that for $s \in \R$ and $\x \in \Lambda_N$, we have
\begin{align}
\label{eq:mr-plateau-pf}
    G^*_{\nu^*_{c,N}+sw_N,N}(\x)
    & =
    \big( \HLap_{0,\infty} (\x) + \vv (\x) \big) \Big( 1+ \cO_{N,\x} \Big)
    +
    f_n(s) \hh_N^{2} \Big( 1+ \cO_N  \Big)
\end{align}
with $\vv (\x)$ as in Theorem~\ref{thm:RG-obs}.
We analyse the three terms in the formula for $G_{\nu_{c,N}^*+sw_N,N} (\x)$
on the right-hand side of \eqref{eq:Gdecomp}.

By Lemma~\ref{lem:uNox},
the first term satisfies
\begin{align}
	u_{N,\ox}(\ka_N^*(s)) &=
	\HLap_{0,\infty} (\x) (1 + \cO_{N,\x} )+ \vv (\x)
	+ O(|x|^{-(d-3)} L^{-N} )   .
\end{align}
By Young's inequality,
\begin{align}
	\frac{1}{ |x|^{d-3} L^N} \le O\Big( \frac{N^{-\alpha p_1}}{|x|^{(d-3) p_1}} + \frac{N^{\alpha p_2}}{L^{N p_2} }\Big)
\end{align}
for $\alpha \in \R$ and $p_1^{-1} + p_2^{-1} =1$.
For $d=4$, we choose
$\alpha = \frac{1}{8}$ and $p_1 = p_2 =2$ to obtain
\begin{align}
	\frac{1}{|x|^{d-3} L^N} \le O\Big( \frac{N^{-1/4}}{|x|^{2}} + \frac{N^{1/4}}{L^{2N}} \Big) \le
    \frac{1}{|x|^2} \cO_{N}
    +
    \hh_N^{2} \cO_{N} 	\label{eq:young1}
    .
\end{align}
For $d >4$, we choose $\alpha = 0$ and $p_2 = \frac{d}{2} (1 + \eta)$ with $\eta >0$ to obtain
\begin{align}
	\frac{1}{|x|^{d-3} L^N } \le O\Big( \frac{1}{|x|^{(d-3) p_1}} + \frac{1}{L^{\frac{d}{2} N (1+ \eta)}} \Big)
	 \le
    \frac{1}{|x|^{d-2}} \cO_{N,\x}
    +
    \hh_N^{2} \cO_{N}, 	\label{eq:young2}
\end{align}
where the final inequality holds whenever $\eta \in (0, \frac{d-4}{d})$
(corresponding to $p_1>\frac{d-2}{d-3}$).

The second term on the right-hand side of \eqref{eq:Gdecomp} is independent of $\x$ and involves
purely bulk quantities. It can be handled exactly as in \cite[Section~4.2]{MPS23}, as
follows.
After the change of variables
$y \mapsto \hh_N y$, the second term becomes
\begin{align}
    \hh_N^2 \frac
    {\int_{\R^n}
	|y^{(1)}|^2 e^{-V_{N,\bulk}(\hh_N y)}
    e^{-\frac12 \lambda_N |y|^2}dy}
    {\int_{\R^n}\big( e^{-V_{N,\bulk}(\hh_N y)}+K_{N,\bulk}(\hh_N y) \big)
    e^{-\frac12 \lambda_N  |y|^2}dy}
    \quad\text{with}\quad
    \lambda_N = |\Lambda_N| \ka_N^\per(s)\hh_N^2 \sim s .
\end{align}
By \eqref{eq:aN1}, $\lim_{N\to\infty}\lambda_N =  s$.
The same $\lambda_N$ occurs for both FBC and PBC, since for FBC there is a cancellation
of terms $\qLap L^{-2N}$ in \eqref{eq:aN1} and \eqref{eq:Estar}.
Recall the definition of the profile $\uf_n$ in \eqref{eq:profile-def}.
By Lemma~\ref{lem:integral-scaling} and by symmetry,
the above ratio is asymptotic to $\hh_N^2 f_n(s)$
as $N \to \infty$ (which is independent of $\x$), as desired.

The last term
on the right-hand side of \eqref{eq:Gdecomp} is handled similarly as the second term.
By \eqref{eq:KNoxbd},
it can be absorbed into the error term in \eqref{eq:mr-plateau-pf}, since it is small relative to $\HLap_{0,\infty} (\x) \asymp |\x|^{-(d-2)}$
(for $d=4$, recall from \eqref{eq:gNB_ter} that $g_N \asymp N^{-1}$).
This completes the proof.
\end{proof}

\begin{proof}[Proof of Theorem~\ref{thm:mr-plateau-Gaussian}]
The proof follows the steps in the proof of Theorem~\ref{thm:mr-plateau},
with the Gaussian rather than non-Gaussian alternatives.
The second term in \eqref{eq:Gdecomp} now has limit $s^{-1}\lp_N^2=s^{-1}L^{-(d-2)N}$,
which arises from the Gaussian integrals in
\eqref{eq:t_N_limit_l_scaling}--\eqref{eq:t_N_limit_h_scaling}.
Its combination with $u_{N,\ox}$ satisfies
\eqref{eq:uNox-PBC-tilde}--\eqref{eq:uNox-FBC-tilde}.
The third term in \eqref{eq:Gdecomp} is again an error term, and we obtain
the desired formulas for the two-point function.
\end{proof}

\section{RG map with observable fields}
\label{sec:3}

The proof of Theorem~\ref{thm:RG-obs} is achieved via the extension of the RG map used in \cite{MPS23} to include the observable fields.
In this section, we define the extended RG map $\Phi_+$ and introduce the structure
that the definition requires.  The analysis of the RG map
is given in Sections~\ref{sec:RGflow}--\ref{sec:4}.

\subsection{Integration to the coalescence scale}

We perform the Gaussian integral with covariance $C_{\le N}$
progressively.
That is, we set $Z_0=e^{-V_0}$ with $V_0$ given by \eqref{eq:V0def},
and for $j=0,\ldots, \Npp-1$ we write the scale-$(j+1)$ partition function $Z_{j+1}$
as
\begin{equation}
\label{eq:Z_j_definition}
	Z_{j+1}(\varphi) = \E_{C_{j+1}}Z_{j}(\varphi +\zeta )	,  \qquad  j \in \{0, \cdots,  \Npp - 1\} .
\end{equation}
The Gaussian integral on the right-hand side integrates over $\zeta$ with $\varphi$
held fixed, with the covariance $C_{j+1}$ given by \eqref{eq:Cjdef}.
To lighten the notation,  we write \eqref{eq:Z_j_definition} as
\begin{equation}
\label{eq:Z_j_definition+}
	Z_{+}(\varphi) = \E_{C_{+}}Z(\varphi +\zeta )  =
    \Eplus Z(\varphi +\zeta ).
\end{equation}
This is consistent with the notation in \cite{BBS-brief,MPS23}:
when two generic scales $j$ and $j+1$
are under discussion,
we drop the subscript $j$ and replace $j+1$ simply by $+$.

\begin{definition}
A \emph{polymer activity} at scale $j$ is a smooth function
$F : \cB \times (\R^n)^{\cB} \rightarrow \R$ such that $F(b, \varphi)$ only depends on $\varphi(b)$ (and not otherwise on the block $b\in \cB$).
The space of polymer activities is denoted $\cN_\varnothing$.
We also define $\cN_{*} = \sigma_* \cN_{\bulk}$ for $* \in \{\bulk, \o,\x,\ox \}$ and
\begin{align}
	\cN = \cN_{\bulk} + \cN_{\o} + \cN_{\x} + \cN_{\ox}.
\end{align}
\end{definition}

This differs from \cite[Definition~5.1]{MPS23} because we give up some symmetries,
but we will restore them later in Definition~\ref{def:SK}.
The space $\cN$
is a graded commutative algebra, and any $F \in \cN$ can be written uniquely as
\begin{align}
	F
		= F_{\bulk} + \sigma_\o F_{\o} + \sigma_\x F_{\x} + \sigma_\ox F_{\ox}
	.
	\label{eq:Fdecomp}
\end{align}
We define projections by $\pi_* F = \sigma_* F_*$
for each $* \in \{ \bulk, \o,\x,\ox \}$.
Since an element $F(b, \varphi)$ of $\cN_\varnothing$
depends on $\varphi \in (\R^n)^{\cB}$ only via $\varphi(b)$,
we sometimes write its argument as $\varphi_b$ instead of $\varphi$.  For a block $B$ at the next scale, we define $\cB (B)$ to be the set of scale-$j$ blocks in $B$, and
\begin{align}
	F(B,  \varphi) = \sum_{b \in \cB (B)} F(b, \varphi_b) \qquad (B \in \cB_+)
	.
\end{align}

Given $\zeta : \cB \to \R^n$ and $F : (\R^n)^{\cB} \to \R$, we define
$\theta F : \R^n \to \R$ by $(\theta F)(\varphi) = F(\varphi + \zeta)$.
We do not indicate $\zeta$ explicitly in the notation $\theta F$, but it will always
occur inside a convolution integral of the form
\begin{equation}
    (\E_{+}\theta F)(\varphi) = \E_+ F(\varphi + \zeta),
\end{equation}
where the integration on the right-hand side
is over $\zeta$ in $(\R^n)^{\cB}$
with $\varphi\in \R^n$ held constant on blocks in $\cB_+$.
In particular, we can rewrite \eqref{eq:Z_j_definition+} as
\begin{equation}
\label{eq:Ztheta}
	Z_{+}  =
    	\Eplus \theta Z .
\end{equation}
Also,
for $\varphi \in (\R^n)^{\cB}$ and with the integration
over $\zeta \in (\R^n)^{\Lambda}$ with covariance $C_{\ka,\le j}$,
\begin{equation}
    Z_{j} (\varphi)  = \E_{C_{\ka, \le j}} Z_0 (\varphi + \zeta).
    \label{eq:Zjtheta}
\end{equation}

Recall from Section~\ref{sec:model} that the \emph{coalescence scale}
$j_{\ox}$ is the smallest scale such that $\o$ and $\x$ lie
in the same block.
The following lemma shows that the observables play a trivial role
below the coalescence scale.
Since $Z_{j_\ox-1,\bulk}(\varphi)$ has been well understood already
in \cite{MPS23}, we will be left to concentrate on scales above the coalescence
scale, where the observables do play an important role.

\begin{lemma}
\label{lem:initialbulkflow}
For a field $\varphi$ which is constant on
blocks $B$ of scale $( j_\ox-1 ) \vee 0$,
\begin{align}
    Z_{( j_\ox-1 ) \vee 0}(B, \varphi)
    =
    e^{\sigma_\o \1_{\o \in B}   \varphi^{(1)}_\o
    + \sigma_\x \1_{\x \in B}  \varphi^{(1)}_\x  }
    Z_{( j_\ox-1 ) \vee 0 ,\bulk}(B,\varphi).
\end{align}
\end{lemma}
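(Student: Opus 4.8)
\textbf{Proof plan for Lemma~\ref{lem:initialbulkflow}.}
The plan is to start from the progressive representation \eqref{eq:Zjtheta}, which at scale $j=j_\ox-1$ reads $Z_{j_\ox-1}(\varphi)=\E_{C_{\ka,\le j_\ox-1}}Z_0(\varphi+\zeta)$, and to expand $Z_0$ in the observables using $\sigma_\o^2=\sigma_\x^2=0$ and $\sigma_\o\sigma_\x=\sigma_\ox$ (see \eqref{eq:sigmaox}--\eqref{eq:V0def}), so that
\[
Z_0(\varphi+\zeta)=e^{-V_{0,\bulk}(\varphi+\zeta)}\bigl(1+\sigma_\o(\varphi_\o^{(1)}+\zeta_\o^{(1)})\bigr)\bigl(1+\sigma_\x(\varphi_\x^{(1)}+\zeta_\x^{(1)})\bigr).
\]
Applying $\E_{C_{\ka,\le j_\ox-1}}$ and reading off the homogeneous components in $\sigma_\o,\sigma_\x$, the bulk component is $Z_{j_\ox-1,\bulk}(\varphi)$ by definition, so the lemma reduces to the three identities
\[
\E_{C_{\ka,\le j_\ox-1}}\bigl[\zeta_\o^{(1)}\,e^{-V_{0,\bulk}(\varphi+\zeta)}\bigr]=0,
\qquad
\E_{C_{\ka,\le j_\ox-1}}\bigl[\zeta_\o^{(1)}\zeta_\x^{(1)}\,e^{-V_{0,\bulk}(\varphi+\zeta)}\bigr]=0,
\]
together with the $\x$-analogue of the first; granting these, the $\sigma_\o,\sigma_\x,\sigma_\ox$ components of $Z_{j_\ox-1}(\varphi)$ become $\varphi_\o^{(1)}Z_{j_\ox-1,\bulk}(\varphi)$, $\varphi_\x^{(1)}Z_{j_\ox-1,\bulk}(\varphi)$, $\varphi_\o^{(1)}\varphi_\x^{(1)}Z_{j_\ox-1,\bulk}(\varphi)$, and re-summing gives $Z_{j_\ox-1}(\varphi)=(1+\sigma_\o\varphi_\o^{(1)})(1+\sigma_\x\varphi_\x^{(1)})Z_{j_\ox-1,\bulk}(\varphi)=e^{\sigma_\o\varphi_\o^{(1)}+\sigma_\x\varphi_\x^{(1)}}Z_{j_\ox-1,\bulk}(\varphi)$.

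Two structural features of $C_{\ka,\le j_\ox-1}=\sum_{k=1}^{j_\ox-1}\gamma_k(\ka)P_k$ will do the work. First, each $P_k$ with $k\le j_\ox-1$ vanishes between points in different $k$-blocks, and distinct $(j_\ox-1)$-blocks lie in distinct $k$-blocks for every such $k$; hence $C_{\ka,\le j_\ox-1}$ is block-diagonal for the partition of $\Lambda_N$ into $(j_\ox-1)$-blocks, so under $\E_{C_{\ka,\le j_\ox-1}}$ the restrictions $\zeta|_B$, $B\in\cB_{j_\ox-1}$, are independent. Since $\o$ and $\x$ lie in distinct $(j_\ox-1)$-blocks $B_\o\neq B_\x$ (this is the definition of the coalescence scale $j_\ox$), and since $e^{-V_{0,\bulk}(\varphi+\zeta)}=\prod_{B\in\cB_{j_\ox-1}}R_B(\zeta|_B)$ with $R_B(\zeta|_B)=\prod_{z\in B}e^{-W(\varphi_z+\zeta_z)}$ and $W(t)=\frac14 g|t|^4+\frac12\nu|t|^2$ (recall \eqref{eq:V0bulkdef}), each of the three expectations factors over $B\in\cB_{j_\ox-1}$; it therefore suffices to show $\E[\zeta_\o^{(1)}R_{B_\o}(\zeta|_{B_\o})]=0$ and its $B_\x$-analogue, because for the cross term the independence of $\zeta|_{B_\o}$ and $\zeta|_{B_\x}$ splits the expectation into a product of these two vanishing factors. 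Second, by \cite[Lemma~4.1.5]{BBS-brief} the range of $C_{\ka,\le j_\ox-1}$ is $\bigoplus_{k=1}^{j_\ox-1}\mathrm{range}(P_k)$, which is orthogonal to $\mathrm{range}(Q_{j_\ox-1})$ and in particular to the indicator of every $(j_\ox-1)$-block; equivalently, by \eqref{eq:Psum0}, $\sum_{z\in B}\zeta_z=0$ almost surely for each $(j_\ox-1)$-block $B$.

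To finish I would prove $\E[\zeta_\o^{(1)}R_{B_\o}(\zeta|_{B_\o})]=0$ by symmetrization. For $v$ in the base block $\Lambda_{j_\ox-1}\subset\Lambda_N$, the map $z\mapsto z\oplus v$ fixes every $(j_\ox-1)$-block setwise and acts on each one simply transitively, and it preserves $C_{\ka,\le j_\ox-1}$ by translation invariance of the hierarchical covariances (cf.\ \eqref{eq:joplus}), hence preserves the law of $\zeta|_{B_\o}$. This is the one place the hypothesis is used: since $\varphi$ is constant on $(j_\ox-1)$-blocks it is constant on $B_\o$, so $R_{B_\o}$ is invariant under all these translations; consequently $\E[\zeta_y^{(1)}R_{B_\o}(\zeta|_{B_\o})]$ is the same for every $y\in B_\o$, and averaging over $y\in B_\o$ while invoking $\sum_{y\in B_\o}\zeta_y^{(1)}=0$ a.s.\ yields $0$. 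The $B_\x$-analogue is identical. Feeding this back through the factorizations of the previous paragraph gives the three displayed identities, which completes the proof. I expect no genuine analytic obstacle here: this is a structural lemma, and the only points needing care are the block-diagonality and the almost-sure zero-block-sum property of the covariance (both immediate from Section~\ref{sec:Lapdef}) and the fact that the symmetrization step truly needs both the translation invariance of $C_{\ka,\le j_\ox-1}$ and the constancy of $\varphi$ on $(j_\ox-1)$-blocks; the degenerate case $j_\ox=1$ (where $C_{\ka,\le 0}=0$, $\zeta\equiv0$, and $Z_{j_\ox-1}=Z_0=e^{-V_0}$) is covered directly.
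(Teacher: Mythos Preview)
Your proposal is correct and follows essentially the same approach as the paper: expand $Z_0$ in the observables, reduce to showing $\E_{C_{\ka,\le j_\ox-1}}[\zeta_\o^{(1)}\,e^{-V_{0,\bulk}(\varphi+\zeta)}]=0$ (and its $\x$- and cross-variants), use block-diagonality of $C_{\ka,\le j_\ox-1}$ to factor over $(j_\ox-1)$-blocks, and then symmetrize to replace $\zeta_\o^{(1)}$ by its block average, which vanishes almost surely. The only cosmetic difference is that the paper phrases the symmetrization as invariance under ``interchange of $\zeta_y$ and $\zeta_z$'' while you invoke the (more precisely stated) transitive action of hierarchical translations $z\mapsto z\oplus v$ on each block; both yield the same averaging step and your version is arguably the cleaner justification.
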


\begin{proof}
 When $j_\ox$ is equal to $0$ or $1$,
there is nothing to prove, so we only consider $j_{\ox} -1 \ge 1$.
Let $\varphi$ be constant on blocks $B\in\cB_{j_\ox -1}$.
By \eqref{eq:Zjtheta},
$Z_{j_\ox - 1} (\varphi) = \E_{< j_\ox} \prod_{B \in \cB_{j_{\ox}-1}} Z_0 (B,\varphi+\zeta )$.
It follows from the discussion in \cite[Section~2.1]{MPS23} that the expectation factors
over blocks, i.e.,
\begin{align}
	\E_{< j_\ox}
    \!\!\prod_{B \in \cB_{j_{\ox}-1}} Z_0 (B,\varphi+\zeta )
    = \prod_{B \in \cB_{j_{\ox}-1}} \!\! \E_{< j_\ox}
    Z_0(B,\varphi+\zeta)
    ,
\end{align}
where $\zeta = \sum_{j < j_\ox}\zeta_j$ with independent Gaussian
random variables $\zeta_j$
of covariance $C_{j}$, and
\begin{equation}
    Z_0(B,\eta) =
    \exp\Big[- \sum_{x\in B}
    \big(\frac 14 g|\eta_x|^4
    + \frac 12 \nu |\eta_x|^2 - \sigma_\o \eta_\o^{(1)}\1_{\o\in B}
    - \sigma_\x \eta_\x^{(1)}\1_{\x\in B}\big)
    \Big].
\end{equation}
Since $j_\ox$ is the first scale for which $\o$ and $\x$ are in the same block,
they are in distinct blocks $B_\o$ and $B_\x$ at scales strictly less than $j_\ox$.
For notational
simplicity, let us fix $B=B_\o$.
Then, by \eqref{eq:sigmaox},
\begin{align}
    Z_0(B_\o,\varphi+\zeta) & =
    e^{\sigma_\o \varphi_\o^{(1)}}
    \left( 1 + \sigma_\o \zeta_\o^{(1)}   \right)
    Z_{0,\bulk}(B_\o,\varphi+\zeta)
    .
\end{align}
Application of $\E_{< j_\ox}$
to the term with $1$ gives the desired result, so our task is
to prove that the remaining term has zero expectation, i.e., that
\begin{align}
\label{eq:zetabo}
    \E_{< j_\ox} \zeta_\o
    Z_{0,\bulk}(B_\o,\varphi+\zeta) & = 0.
\end{align}
Both $Z_{0,\bulk}(B_\o,\varphi+\zeta)$ and
the Gaussian measure with covariance $C_{<j_\ox}$
are invariant under the interchange of
$\zeta_y$ and $\zeta_{z}$ for points $y,z\in \cB(B_o)$.
Therefore we can replace $\zeta_{\o}$
in \eqref{eq:zetabo} by its average over the block $B_\o$.
But this average is almost surely zero
because each random variable
$\zeta_j$ almost surely sums to zero on every $j$-block
(see \cite[(2.14)]{MPS23}).
This completes the proof.
\end{proof}

\subsection{The renormalisation group map}
\label{eq:TheRGmap}

\subsubsection{Effective potentials}

Effective potentials are important examples of polymer activities,
defined for $b \in \cB = \cB_j$ by
\begin{align}
	V_{j,\bulk} (b, \varphi)
		&= L^{jd} \Big( \frac{1}{4} g_j |\varphi|^4 + \frac{1}{2} \nu_j |\varphi|^2 \Big) ,
		\label{eq:V_bulk_form} \\
	V_j (b, \varphi)
		&= V_{j,\bulk} (b, \varphi) - \big( \sigma_\o \1_{\o \in b} + \sigma_\x \1_{\x \in b} \big) \varphi^{(1)}
		,
	\label{eq:V_form} \\
	U_{j,\bulk} (b,\cdot)
        &= V_{j,\bulk} (b,\cdot)
        -
        \delta u_{j, \bulk}
        ,
    \\
    U_j (b,\cdot)
    	& =  U_{j,\bulk} (b,\cdot) - \sigma_\ox \delta u_{j,\ox} \1_{\o,\x \in b}
     \label{eq:U_form}
	.
\end{align}
The real-valued
parameters $(g_j,\nu_j,  \delta u_{j, \bulk},\delta u_{j,\ox})$ are called
\emph{coupling constants}.
We write $\cV_j$ and $\cU_j$ for the vector spaces of polymer activities of the form
\eqref{eq:V_form} and \eqref{eq:U_form}, respectively.

\subsubsection{Localisation}

Given $F \in \cN_\bulk$, the degree $k$ Taylor polynomial of $F$ is defined by
\begin{align}
	\Tay_k F (b, \varphi)
		= \sum_{l=0}^k \frac{1}{l !} D^l F (b, 0) (\varphi^{\otimes l}),
\end{align}
and
the \emph{localisation operator} $\Loc : \cN \to \cN$ is defined
by
\begin{align}
	\Loc F_{\bulk} =  \Tay_4 F_{\bulk}, \qquad
	\Loc  \sigma_* F_{*} = \sigma_{*}  \Tay_0 F_{*}
    \quad (* \in \{ \o,\x,\ox  \})
	.
	\label{eq:Loc_defi}
\end{align}
This is the same as the localisation operator used in \cite{BBS-brief,MPS23} for the bulk,
and it adapts the localisation
operator used in \cite{BBS-saw4,ST-phi4} to incorporate the observables.

The localisation operator serves to extract from a polymer activity its
relevant and marginal parts.
For the bulk, these are the terms in the potential $V_{j,\bulk}$ as well as constants which constitute the vacuum energy $\bar u_{N,\bulk}$.
This is why $\Tay_4$ is used for the bulk: it extracts monomials up to fourth degree.
For the observables, it is only necessary to extract the constants
(the justification for this occurs in Lemma~\ref{lem:1-Loc}).
These constitute the term $u_{N,\ox}$, which
as seen in \eqref{eq:Gdecomp} give rise to the Gaussian part of the two-point function.

\subsubsection{The perturbative RG map}

The \emph{perturbative map} $\Phi_{\pt}:\cV \to\cU_+$ is defined by
\begin{align}
	\Phi_{\pt} (V) = \E_+ \theta V (B) - \frac{1}{2} \Loc \cov_+ ( \theta V (B) ; \theta V (B) ),
	\label{eq:phi_pt_defi}
\end{align}
where $\cov_+(X,Y)=\E_+(XY)-(\E_+X)(\E_+Y)$.
The  formula \eqref{eq:phi_pt_defi} looks the same as in \cite[Definition~5.2.5]{BBS-brief},
but $\Phi_{\pt}$ now acts on a larger space including observables.
The action of $\Phi_{\pt}$ on $\cV_\varnothing$ has already been detailed
in \cite[Proposition~5.3.1]{BBS-brief}.
The next lemma extends the action to include the observables.

\begin{lemma}
\label{lemma:Uplus_pt}
For $V (b, \varphi) = V_{\bulk} - \big( \sigma_\o \1_{\o \in b} + \sigma_\x \1_{\x \in b} \big) \varphi^{(1)}$,
we have
\begin{align}
	\Phi_{\pt} (V)
		= \Phi_{\pt} (V_\bulk) -
    \big( \sigma_\o \1_{\o \in B} + \sigma_\x \1_{\x \in B} \big) \varphi^{(1)}
    - \sigma_{\ox}  \delta u_{\pt,\ox} \1_{\o,\x \in B},
\end{align}
with
\begin{align}
	\delta u_{\pt,\ox}
		=  C_{+} (\x)
		    \qquad (j_+ \ge j_\ox) .
	\label{eq:deltq_pt}
\end{align}
In particular, the coefficient of $\varphi^{(1)}$ does not evolve.
\end{lemma}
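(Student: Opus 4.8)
The plan is to use the bilinearity of $\cov_+$ together with the linearity of $\E_+$ and $\Loc$ to peel off the bulk computation---already carried out in \cite[Proposition~5.3.1]{BBS-brief}---and be left with a handful of explicit observable terms. First I would write $V=V_\bulk + V_\o + V_\x$ with $V_\o(b,\varphi)=-\sigma_\o\1_{\o\in b}\varphi^{(1)}$ and $V_\x(b,\varphi)=-\sigma_\x\1_{\x\in b}\varphi^{(1)}$. Summing over the scale-$j$ blocks in a block $B\in\cB_+$ and using that the destination field $\varphi$ is constant on $B$, one gets $\theta V_\o(B)(\varphi)=-\sigma_\o\1_{\o\in B}(\varphi^{(1)}+\zeta_{b_\o}^{(1)})$, where $b_\o$ is the scale-$j$ block inside $B$ containing $\o$; here $\zeta_{b_\o}$ is well defined because the covariance $C_+=C_{j+1}$ is constant on scale-$j$ blocks. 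The analogous statement holds for $\x$.

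For the linear part of \eqref{eq:phi_pt_defi}, the identity $\E_+\zeta=0$ gives $\E_+\theta V_\o(B)=-\sigma_\o\1_{\o\in B}\varphi^{(1)}$ (and similarly for $\x$), so $\E_+\theta V(B)=\E_+\theta V_\bulk(B)-(\sigma_\o\1_{\o\in B}+\sigma_\x\1_{\x\in B})\varphi^{(1)}$; in particular the coefficient of $\varphi^{(1)}$ is untouched. For the quadratic part I would expand $\cov_+(\theta V(B);\theta V(B))$ bilinearly into the six pairings. The pairings $\cov_+(\theta V_\o;\theta V_\o)$ and $\cov_+(\theta V_\x;\theta V_\x)$ vanish, as they carry the factor $\sigma_\o^2=0$, resp.\ $\sigma_\x^2=0$. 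The mixed bulk--observable pairings equal $-\sigma_\o\1_{\o\in B}\cov_+(\theta V_\bulk(B);\zeta_{b_\o}^{(1)})$ and its $\x$-analogue; applying $\Loc$, which on the $\sigma_\o$-component acts as $\Tay_0$, i.e.\ evaluation at $\varphi=0$, produces $\E_+\big[V_\bulk(B,\zeta)\,\zeta_{b_\o}^{(1)}\big]$, the expectation of an odd function of the mean-zero Gaussian field $\zeta$, hence $0$ (this also shows that no constant $\sigma_\o$ or $\sigma_\x$ term is created and no linear term in $\varphi^{(1)}$ is generated). Finally the pairing $\cov_+(\theta V_\o(B);\theta V_\x(B))=\sigma_\ox\1_{\o,\x\in B}\cov_+(\zeta_{b_\o}^{(1)};\zeta_{b_\x}^{(1)})$ is already $\varphi$-independent, so $\Loc$ leaves it alone, and $\cov_+(\zeta_{b_\o}^{(1)};\zeta_{b_\x}^{(1)})=C_+(\o,\x)=C_+(\x)$. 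Accounting for the overall $\frac{1}{2}$ in \eqref{eq:phi_pt_defi} and the combinatorial factor $2$ from the bilinear expansion, this pairing contributes exactly $-\sigma_\ox\1_{\o,\x\in B}\,C_+(\x)$, which is $-\sigma_\ox\delta u_{\pt,\ox}\1_{\o,\x\in B}$ with $\delta u_{\pt,\ox}=C_+(\x)$; the indicator $\1_{\o,\x\in B}$ forces this term to be present only when $j_+\ge j_\ox$, consistent with Lemma~\ref{lem:initialbulkflow}. Combining the linear and quadratic contributions with the bulk identity for $\Phi_{\pt}(V_\bulk)$ from \cite[Proposition~5.3.1]{BBS-brief} then yields the claimed formula and \eqref{eq:deltq_pt}.

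The only genuine obstacle is the bookkeeping around $\Loc$---keeping straight its two localisation degrees, $\Tay_4$ on the bulk component and $\Tay_0$ on each observable component---together with the parity argument that annihilates the mixed bulk--observable covariances. It is the latter that is responsible for the statement that the coefficient of $\varphi^{(1)}$ does not evolve and picks up no constant companion, so I would treat that step as the heart of the lemma; the remaining manipulations are routine Gaussian bookkeeping identical in spirit to the observable-free case in \cite{BBS-brief}.
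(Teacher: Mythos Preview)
Your proposal is correct and follows essentially the same approach as the paper's proof: both expand $\Phi_{\pt}$ componentwise, use $\E_+\zeta=0$ for the linear part, and compute the observable covariance pairings explicitly. The only cosmetic difference is that the paper kills the mixed bulk--observable covariance at $\varphi=0$ by invoking the zero-average-on-blocks property (as in \eqref{eq:zetabo}), whereas you use the equally valid parity argument that $V_\bulk(B,\zeta)\zeta_{b_\o}^{(1)}$ is odd in $\zeta$.
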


\begin{proof}
We only need to consider $(1-\pi_\bulk ) \Phi_{\pt}$, and for this we compute
\eqref{eq:phi_pt_defi} directly.
For $z =o,x$, we have simply $\E_+\theta V_z(B) = V_z(B)$,
since $V_z (B) = - \sigma_\z \one_{z \in B} \varphi^{(1)}$ by \eqref{eq:Fdecomp} and \eqref{eq:V_form}.
The covariance term is
\begin{align}
	\pi_{\z} \cov_+ ( \theta V (B)  ; \theta V (B) )
		= - 2 \sigma_\z  \cov_+ ( \theta \varphi^{(1)} \1_{\z \in B} ; \theta V_{\bulk} (B) )
		.
		\label{eq:Vx}
\end{align}
Therefore, $\Tay_0 \pi_\z \cov_+ ( \theta V (B)  ; \theta V (B) ) = 0$,
since the integration variable has zero average on $B$ (similarly to \eqref{eq:zetabo}).
For the coefficient of $\sigma_\ox$, for $j_+ \ge j_\ox$ and $\o,\x\in B$,
we have
\begin{align}
	\frac{1}{2} \pi_{\ox} \cov_+ (\theta V(B) ; \theta V (B))
		= \sigma_\ox  \E_+ [ \zeta_{b_\o}^{(1)} \zeta_{b_\x}^{(1)} ]
		= \sigma_\ox C_{+} (\x).
\end{align}
This completes the proof.
\end{proof}

\subsubsection{The full RG map}

In \cite[Definition~5.2.8]{BBS-brief},
the bulk renormalisation group map $\Phi_{+, \bulk}$ is defined
as a map
\begin{equation}
    \Phi_{+,\bulk} (V_{\bulk}, K_\bulk) =  (U_{+,\bulk},  K_{+, \bulk} )
    = (\delta u_{+, \bulk},V_{+,\bulk},K_{+,\bulk}),
\end{equation}
with explicit formulas for $(U_{+,\bulk},K_{+, \bulk})$.
In particular,
\begin{align}
    U_{+,\bulk} = \Phi_{\pt,\bulk}(V_\bulk-  \Loc \,(e^{V_\bulk} K_\bulk)).
\end{align}
There is a suitable domain for the bulk effective potential $V_\bulk$
and the bulk non-perturbative coordinate $K_\bulk$, which we will recall and extend
below.
According to Lemma~\ref{lem:initialbulkflow}, it is only the bulk flow that
is needed below the coalescence scale.  We can therefore use the results of \cite{MPS23}
for those scales, based on the renormalisation group map
\begin{align}
\label{eq:Phibulk+def}
    \Phi_+(V_\bulk,K_\bulk) & =
    \Phi_{+,\bulk} (V_{\bulk}, K_\bulk) =  (U_{+,\bulk},K_{+\bulk})
    \quad (j_+<j_\ox) .
\end{align}

Above the coalescence scale, we must include
the observable fields.  For this, we define the \emph{renormalisation group map} (RG map)
\begin{align}
\label{eq:Phi+def}
    \Phi_+  (V,K)
    & =   (U_+,K_+) = (\delta u_+,V_+,K_+)
    			\quad ( j_+ \ge j_\ox ) ,
\end{align}
with the explicit formulas (for $j_+ \ge j_\ox$ and $B \in \cB_+$)
\begin{align}
\label{eq:U+def}
    U_+ & = \Phi_{\pt}(V-  \Loc \,(e^V K)),
    \\
\label{eq:K+ext_field}
	K_{+}(B) &=
e^{-\delta \bar u_+(B)}
\E_{+}\theta
    \bigg(\prod_{b \in \cB(B)} (e^{-V(b)} + K(b)) \bigg) - e^{-V_{+}(B)}  .
\end{align}
We compute $U_+$ explicitly in Lemma~\ref{lemma:Uplus}.  It includes
the polymer activity
$\delta \bar{u}_+ (B)$ defined by
\begin{align}
    \delta \bar{u}_+ (B) = \delta \bar{u}_{+,\bulk} (B) + \sigma_\ox \delta \bar{u}_{+,\ox} (B)
    =
	\delta u_{+,\bulk} |B| + \sigma_\ox  \delta u_{+,\ox} \one_{\o,\x \in B}.
\end{align}
We write the components of $\Phi_+$ as
\begin{equation}
\label{eq:Phi-components}
    \Phi_+
    =
	(\Phi_+^u,\Phi_+^V,\Phi_+^K)
    .
\end{equation}

The formulas \eqref{eq:U+def}--\eqref{eq:K+ext_field}
appear \emph{exactly} the same as those used for the bulk
in \cite[Definition~5.5]{MPS23}, but they strictly extend those bulk formulas
because now $(V,K)$ is an element of a larger space containing components
for the observables.  In Section~\ref{sec:norms}, we equip the larger space with a carefully defined norm which bundles together both the bulk and observable components.
Norm estimates then proceed in much the same way with observables as without,
with the role of the observables primarily noticed in the estimates of
Appendix~\ref{sec:pfSProps}; see also Lemma~\ref{lemma:bulk_to_obs_bound}.

We discuss the domain of $\Phi_+$ below; it requires in particular that
$(V,K)\in \cV \times \cF$ is
such that the expectation in \eqref{eq:K+ext_field} is convergent.
Assuming the convergence of the expectations,
we know from \cite[(5.2.4)]{BBS-brief} that
\begin{align}
	\Eplus \theta  \prod_{b \in \cB(B)} ( e^{-V_{\bulk}(b)} + K_\bulk(b) )
	 	= e^{\delta \bar{u}_{+, \bulk}(B)} ( e^{-V_{+,\bulk}(B)} + K_{+,\bulk}(B) )
 ,
\label{eq:Z+bulk}
\end{align}
and it similarly follows
from \eqref{eq:K+ext_field} and simple algebra that
\begin{align}
    \E_{+}\theta
     \prod_{b \in \cB(B)} (e^{-V(b)} + K(b))
     &=
     e^{\delta \bar u_+(B)}( e^{-V_{+}(B)}+ K_{+}(B))
     \qquad (j_+ \ge j_\ox).
\end{align}
If we are able to iterate the expectation over all scales,
starting from $(V_0,K_0=0)$, then the above gives
\begin{equation}
\label{eq:ZN}
    Z_{\Npp}  = e^{\bar u_N}( e^{-V_{\Npp}(\Lambda_{\Npp} )}+ K_{\Npp} (\Lambda_{\Npp} ))
\end{equation}
with $\bar{u}_{\Npp} = \sum_{j=1 }^{\Npp} \delta \bar{u}_j$,
as claimed in \eqref{eq:ZNy}.
We will justify the iteration in Section~\ref{sec:RGflowexists}.

The subspace $\cS$ of $\cN$ given by the
following definition incorporates some natural symmetries into our formalism.
In particular, each of $\cV$ and $\cU$ is a subspace of $\cS$.

\begin{definition}
\label{def:SK}
The subspace $\cS$ of $\cN$ consists of those $F\in \cN$ for which:
\begin{enumerate}
\item
$F_\bulk (b, \varphi)$ does not depend on $b$, and $F_\bulk (b, M \varphi) = F_\bulk (b, \varphi)$ for any orthogonal transformation $M \in O(n)$.

\item $F(b, -\varphi) = ( F_{\bulk} - F_{\o} -  F_{\x} + F_{\ox} ) (b,\varphi)$.

\item $F_{\ox} (b) =0$ unless $\o,\x \in b$.  For $* \in \{\o,\x \}$, $F_* (b) = 0$ unless $* \in b$.
\end{enumerate}
\end{definition}

The effective potential $U_+$ is given explicitly in the next lemma.
In \eqref{eq:u+def},
$b_{\ox}$ is the unique $j$-block containing both $\o$ and $\x$
(if it exists).
Note that the coefficient of the $\varphi^{(1)}$ term remains constant and does
not flow---there is no non-perturbative contribution.
This is a simplification compared to \cite{BBS-saw4,ST-phi4,BLS20,LSW17}
where the analogous coefficient does flow.

\begin{lemma}
\label{lemma:Uplus}
If $K \in \cS$ then
\begin{align}
	U_{+} (B, \varphi)
		= U_{+,\bulk} (B, \varphi) -
    \big( \sigma_\o \1_{\o \in B} + \sigma_\x \1_{\x \in B} \big) \varphi^{(1)}
    - \sigma_\ox \delta u_{+,\ox}  \1_{\o,\x \in B}
    ,
\end{align}
where
\begin{align}
\label{eq:u+def}
\delta u_{+,\ox}
		&=
		\delta u_{\pt,\ox} + K_{\ox} (b_\ox ,0)
=
		C_{+} (\x) + K_{\ox} (b_\ox ,0).
\end{align}
\end{lemma}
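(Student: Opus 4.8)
The plan is to compute $U_+ = \Phi_{\pt}(V - \Loc(e^V K))$ directly from the definition \eqref{eq:U+def}, first simplifying $\Loc(e^V K)$ using the observable grading together with the two symmetries built into $\cS$, and then reducing the remaining perturbative computation to Lemma~\ref{lemma:Uplus_pt}. First I would expand $e^V$ using the nilpotency relations $\sigma_\o^2 = \sigma_\x^2 = 0$ from \eqref{eq:sigmaox}: since $V(b,\varphi) = V_\bulk(b,\varphi) - (\sigma_\o\1_{\o\in b} + \sigma_\x\1_{\x\in b})\varphi^{(1)}$,
\begin{equation}
    e^{V(b,\varphi)}
    =
    e^{V_\bulk(b,\varphi)}\big(1 - \sigma_\o\1_{\o\in b}\varphi^{(1)}\big)\big(1 - \sigma_\x\1_{\x\in b}\varphi^{(1)}\big),
\end{equation}
so multiplying by $K = K_\bulk + \sigma_\o K_\o + \sigma_\x K_\x + \sigma_\ox K_\ox$ and collecting the coefficients of $1,\sigma_\o,\sigma_\x,\sigma_\ox$ gives explicit expressions for the four components of $e^V K$; for instance $(e^V K)_\o = e^{V_\bulk}\big(K_\o - \1_{\o\in b}\varphi^{(1)}K_\bulk\big)$.

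Next I would apply $\Loc$ as defined in \eqref{eq:Loc_defi}. On the bulk component one gets $\Loc\big((e^V K)_\bulk\big) = \Tay_4(e^{V_\bulk}K_\bulk) = \Loc(e^{V_\bulk}K_\bulk)$, exactly as in the observable-free theory. On the $\sigma_\o$ and $\sigma_\x$ components $\Loc$ extracts the value at $\varphi=0$; since $V_\bulk(b,0)=0$ and the explicit $\varphi^{(1)}$ factors vanish there, this value is $K_\o(b,0)$, respectively $K_\x(b,0)$. The crucial observation is that condition (ii) of Definition~\ref{def:SK} forces $K_\o(b,\cdot)$ and $K_\x(b,\cdot)$ to be odd functions of $\varphi$, whence $K_\o(b,0)=K_\x(b,0)=0$; this is precisely why the coefficient of $\varphi^{(1)}$ receives no non-perturbative correction. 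On the $\sigma_\ox$ component $\Loc$ again extracts the value at $\varphi=0$, namely $K_\ox(b,0)$, which by condition (iii) is supported on blocks with $\o,\x\in b$. Thus
\begin{equation}
    V - \Loc(e^V K)
    =
    \big( V_\bulk - \Loc(e^{V_\bulk}K_\bulk) \big)
    - \big( \sigma_\o\1_{\o\in b} + \sigma_\x\1_{\x\in b} \big)\varphi^{(1)}
    - \sigma_\ox K_\ox(b,0)\,\1_{\o,\x\in b}.
\end{equation}

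Finally I would apply $\Phi_{\pt}$. The right-hand side above is not literally of the form \eqref{eq:V_form} to which Lemma~\ref{lemma:Uplus_pt} applies, because of the $\varphi$-independent pieces (the constant $\Loc(e^{V_\bulk}K_\bulk)(b,0)=K_\bulk(b,0)$ in the bulk, and the $\sigma_\ox$-constant $-K_\ox(b,0)$). I would therefore write $V - \Loc(e^V K) = W + W^{\mathrm c}$, where $W\in\cV$ carries the genuinely $\varphi$-dependent parts — a renormalised bulk potential of the required form, since $e^{V_\bulk}K_\bulk$ is $O(n)$-invariant by condition (i) and hence has vanishing degree-$1$ and degree-$3$ Taylor coefficients, together with the untouched monomials $-(\sigma_\o\1_{\o\in b}+\sigma_\x\1_{\x\in b})\varphi^{(1)}$ and vanishing $\sigma_\ox$-part — while $W^{\mathrm c}$ collects the constants. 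Because $\Phi_{\pt}$ is the sum of a term linear in its argument and a term bilinear in it, because $\theta$ fixes constants, and because $\cov_+$ vanishes as soon as one argument is constant, $\Phi_{\pt}(W+W^{\mathrm c}) = \Phi_{\pt}(W) + W^{\mathrm c}(B)$. Applying Lemma~\ref{lemma:Uplus_pt} to $W$ (legitimate since $j_+\ge j_\ox$) gives $\Phi_{\pt}(W) = \Phi_{\pt,\bulk}(W_\bulk) - (\sigma_\o\1_{\o\in B}+\sigma_\x\1_{\x\in B})\varphi^{(1)} - \sigma_\ox C_+(\x)\1_{\o,\x\in B}$, and adding $W^{\mathrm c}(B)=\sum_{b\in\cB(B)}W^{\mathrm c}(b)$, whose $\sigma_\ox$-component is $-\sigma_\ox K_\ox(b_\ox,0)\1_{\o,\x\in B}$ because only $b_\ox$ can contribute, yields the claimed identity with $\delta u_{+,\ox}=C_+(\x)+K_\ox(b_\ox,0)$; the bulk pieces $\Phi_{\pt,\bulk}(W_\bulk)+W^{\mathrm c}_\bulk(B)$ reassemble into $U_{+,\bulk}$ exactly as in \cite{BBS-brief,MPS23}.

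The one place requiring care is the last step: since $V - \Loc(e^V K)$ is not in $\cV$, Lemma~\ref{lemma:Uplus_pt} cannot be applied verbatim and the $\varphi$-independent contributions must be separated off, with a check that they pass through $\Phi_{\pt}$ additively and do not feed into the covariance term. Everything else follows mechanically from the grading and from the oddness of $K_\o, K_\x$ and the support property of $K_\ox$ encoded in $\cS$.
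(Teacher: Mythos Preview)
Your proof is correct and follows essentially the same approach as the paper: compute $\Loc(e^V K)$ componentwise using the grading, the oddness of $K_\o,K_\x$ from condition~(ii), and the support property~(iii), then feed the result into Lemma~\ref{lemma:Uplus_pt}. The paper is terser at the final step---it simply says ``by Lemma~\ref{lemma:Uplus_pt} we only need to compute $\Loc(e^V K)$''---whereas you explicitly separate off the $\varphi$-independent constants and check that they pass through $\Phi_{\pt}$ additively (since $\cov_+$ vanishes on constants); this extra care is justified and clarifies a point the paper leaves implicit.
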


\begin{proof}
By \eqref{eq:U+def} and Lemma~\ref{lemma:Uplus_pt}, we only need to compute $\Loc (e^{V} K)$.
First,
for $\z \in \{\o,\x\}$,
\begin{align}
	{\pi_\z} e^{V} K = \sigma_\z e^{V_{\bulk}} ( K_\z + V_\z K_{\bulk} )	
	.
\end{align}
Since $V,K \in \cS$, each of
$K_\z$ and $V_z$ is zero when $\varphi=0$,
and hence $\Loc {\pi_\z} e^{V} K=0$.
Also,
\begin{align}
	\pi_{\o \x} e^{V} K
		= \sigma_\ox e^{V_{\bulk}} ( V_{\o} V_{\x} K_{\bulk} + K_{\ox} + V_{\o} K_{\x} + V_{\x} K_{\o} )
\end{align}
and thus
\begin{align}
	\Loc \pi_{\o \x} e^{V} K = \sigma_\ox  K_{\ox} (b, 0)
	.
\end{align}
Since $K \in \cS$, by condition~(iii) in Definition~\ref{def:SK}
the right-hand side is zero unless $\o,\x \in b$.
This leads to the desired result, with the second equality in \eqref{eq:u+def}
due to \eqref{eq:deltq_pt}.
\end{proof}

Lemma~\ref{lemma:Uplus} shows that if $K \in \cS$ then $U_+\in \cS$.
The next lemma shows that also $K_+\in \cS$.

\begin{lemma}
\label{lem:KinS}
If $K \in \cS$, and if the integral in \eqref{eq:K+ext_field} converges,
then $K_+ \in \cS$.
\end{lemma}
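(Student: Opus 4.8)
The plan is to verify directly that each of the three defining conditions of $\cS$ in Definition~\ref{def:SK} is preserved by the formula \eqref{eq:K+ext_field} for $K_+$. Throughout, we use that $V \in \cV \subset \cS$ and $K \in \cS$ by hypothesis, that $U_+ = \Phi_+^u \in \cS$ by Lemma~\ref{lemma:Uplus}, and that the Gaussian expectation $\E_+$ with covariance $C_+ = \gamma_{j+1}(a)P_{j+1}$ is invariant under the interchange of the field values $\zeta_y, \zeta_z$ for points $y,z$ in the same scale-$(j+1)$ block, as well as under the global sign flip $\zeta \mapsto -\zeta$ (since $C_+$ is a covariance). These are exactly the symmetries used in the proof of Lemma~\ref{lem:initialbulkflow} and Lemma~\ref{lemma:Uplus_pt}.

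First I would check condition~(i). The bulk component $K_{+,\bulk}(B)$ is obtained by applying $\pi_\bulk$ to \eqref{eq:K+ext_field}; since $\pi_\bulk$ kills all $\sigma_*$-terms, this is exactly the bulk formula \eqref{eq:Z+bulk} rearranged, namely $K_{+,\bulk}(B) = e^{-\delta\bar u_{+,\bulk}(B)}\E_+\theta \prod_{b\in\cB(B)}(e^{-V_\bulk(b)}+K_\bulk(b)) - e^{-V_{+,\bulk}(B)}$, and this is precisely the bulk RG map of \cite[Definition~5.2.8]{BBS-brief}, which is already known to preserve the $O(n)$-invariance and $b$-independence of the bulk component. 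Next, condition~(iii): the product $\prod_{b\in\cB(B)}(e^{-V(b)}+K(b))$ expanded in the observables produces a $\sigma_\o$-term only from a factor $-\sigma_\o \1_{\o\in b}\varphi^{(1)}$ in some $-V(b)$ or from $\sigma_\o K_\o(b)$, both of which vanish unless $\o\in b$ (using condition~(iii) for $V$ and $K$), hence unless $\o\in B$; likewise for $\sigma_\x$ and, combining the two possibilities, for $\sigma_\ox$. The subtracted term $e^{-V_+(B)}$ has the same support property by Lemma~\ref{lemma:Uplus}. Thus $\pi_\o K_+(B)=0$ unless $\o\in B$, and similarly for $\x$ and $\ox$.

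The remaining point, condition~(ii), is the one requiring a small computation but no real obstacle. Write $W(b,\varphi) = e^{-V(b,\varphi)}+K(b,\varphi)$. Conditions~(i)--(ii) for $V$ and $K$ give, componentwise, $W_\bulk(b,-\varphi)=W_\bulk(b,\varphi)$, $W_*(b,-\varphi)=-W_*(b,\varphi)$ for $*\in\{\o,\x\}$, and $W_\ox(b,-\varphi)=W_\ox(b,\varphi)$; equivalently $W(b,-\varphi) = (W_\bulk - W_\o - W_\x + W_\ox)(b,\varphi)$, which can be written compactly as the statement that $W(b,\varphi)$ is even in the pair $(\varphi,\sigma_\o)$ jointly and even in $(\varphi,\sigma_\x)$ jointly after the substitution $\sigma_\o\mapsto -\sigma_\o$. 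The product over $b\in\cB(B)$, the expectation $\E_+\theta$ (which commutes with $\varphi\mapsto-\varphi$ after the change of variables $\zeta\mapsto-\zeta$ using evenness of the Gaussian measure), and multiplication by the scalar $e^{-\delta\bar u_+(B)}$ all preserve this parity structure; and $e^{-V_+(B)}$ has the same parity by Lemma~\ref{lemma:Uplus}. Reading off the four observable components then gives $K_+(B,-\varphi) = (K_{+,\bulk} - K_{+,\o} - K_{+,\x} + K_{+,\ox})(B,\varphi)$, which is condition~(ii). Hence $K_+\in\cS$. The only thing to be careful about is the bookkeeping of signs when expanding the finite product of the $W(b,\cdot)$ over the finitely many $b\in\cB(B)$ and matching observable degrees, but since $\sigma_\o^2=\sigma_\x^2=0$ this expansion has only four nonzero observable sectors and the parity in each is inherited term by term.
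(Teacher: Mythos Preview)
Your proposal is correct and follows essentially the same approach as the paper: verify conditions (i)--(iii) of Definition~\ref{def:SK} directly, using that $V,K\in\cS$ implies the product $F=\prod_b(e^{-V(b)}+K(b))$ has the right parity structure, and that the Gaussian expectation commutes with $\varphi\mapsto-\varphi$ via $\zeta\mapsto-\zeta$. One minor imprecision: you call $e^{-\delta\bar u_+(B)}$ a ``scalar,'' but it has a $\sigma_\ox$ component; this does not affect the argument since that component is $\varphi$-independent and hence trivially even, so the parity structure is still preserved under multiplication.
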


\begin{proof}
We verify that $K_+$ obeys the requirements of Definition~\ref{def:SK}.
For condition~(i), we observe that $K_{+,\bulk}$ is the same function of $\varphi$
for each block $B$.  Also, the required $O(n)$-invariance
follows from \eqref{eq:K+ext_field} and the $O(n)$-invariance of $V_\bulk$, $K_\bulk$ and the covariance $C_+$.
For condition~(ii),
by Lemma~\ref{lemma:Uplus} and \eqref{eq:K+ext_field}, it suffices to show that
$\E_{+}\theta F \in \cS$, where
$F= \prod_{b \in \cB(B)} (e^{-V(b)} + K(b))$.  But it follows from the symmetry
of the Gaussian integral, together with the fact that $V,K\in \cS$
and hence $F \in \cS$, that
\begin{align}
    (\E_{+}\theta F)(-\varphi) & = \E_{+}F(-\varphi+\zeta)
    = \E_+F(-\varphi-\zeta)
    \nnb & =
    \E_+ \big(F_\bulk(\varphi+\zeta) -F_\o(\varphi+\zeta) -F_\x(\varphi+\zeta) + F_\ox(\varphi+\zeta) \big).
\end{align}
Finally, condition~(iii) can be seen from the explicit formula
\eqref{eq:K+ext_field}.
This completes the proof.
\end{proof}

\subsection{Norms and domains}
\label{sec:norms}

We now extend the norms and domains from \cite{MPS23}
to include the observables above the coalescence scale.

\subsubsection{Bulk norms and domains}

For each $\tilde{a} \in [0,1)$ and $ j \ge 1$,
we define the mass domain
\begin{align}
\label{eq:massdomain}
	I_{j} (\tilde{\ka} )
		= \begin{cases}
			(\frac{1}{2} \tilde{\ka} , 2 \tilde{\ka} ) & (\tilde{\ka}  \in (0,1) ) \\
			(-\frac{1}{2} L^{-2(j-1)} , \frac{1}{2} L^{-2(j-1)} ) & (\tilde{\ka} = 0).
		\end{cases}
\end{align}
We also define the mass scale $j_{\tilde{\ka}}$ to be the greatest integer $j$ such that $L^{2j} \tilde{\ka} \le 1$, and we use a sequence
\begin{align}
	\tilde{\vartheta}_j = 2^{-(j- j_{\tilde{\ka}})_+}
\end{align}
to capture the decay of the covariance after the mass scale.

Let
\begin{align}
        \scale_j = L^{-(d-4) j},
	\label{eq:rhog}
\end{align}
and
let $\beta_j = B (1 + \tilde{\ka} L^{2j})^{-2} \scale_j$
be the sequence from \cite[(5.10)]{MPS23}.
Given $\tilde{\ka}$,
let $\tilde{g}_j$ be the sequence defined recursively by
\begin{align}
	\tilde{g}_{j+1} = \tilde{g}_j - \beta_j \tilde{g}_j^2,
    \qquad \tilde{g}_0 = g .
    \label{eq:tldgj}
\end{align}
For sufficiently small $g$, the sequence $\tilde{g}_j$ satisfies
(cf.\ \cite[Proposition~6.1.3]{BBS-brief})
\begin{align}
	\tilde{g}_{j+1} \le \tilde{g}_j \le 2 \tilde{g}_{j+1}
	\label{eq:tgc}
\end{align}
for all $j$.
We use the same domain for $V$ as was used for the bulk polynomial
$V_\bulk$ in \cite{MPS23}, i.e.,
\begin{align}
\label{eq:cD-def}
	\cD_j
		= \Big\{ (g_j , \nu_j) \in \R^2 : 2k_0 \tilde{g}_j < g_j < \frac{1}{2k_0} \tilde{g}_j ,  \;\; |\nu_j | < \frac{1}{2k_0} \tilde{g}_j \scale_j L^{2j} \Big\}
	,
\end{align}
where $k_0>0$ is a constant that is chosen sufficiently small depending on $n$ in \cite{MPS23}; in particular $k_0 \le \frac{1}{2}$.
This is as in \cite[Section~5.2]{MPS23}.
We say that $V_j \in \cD_j$ whenever the coupling constants defining
$V_j$ lie in $\cD_j$.   The coupling constants for the observable terms in
$V_j$ are always equal to $1$, so they do not enter into the definition of the domain.

The size of a polynomial $V_j$ is measured with
the $\norm{\cdot}_{T_{\varphi} (\kh_\bulk)}$ semi-norm on $\cN_\bulk$, which is
defined (exactly as in \cite[Section~5.2]{MPS23}) as follows.
For $F : \R^n \to \R$, with the norm of an element $\varphi$ of $\R^n$
given by $|\varphi|/\kh_\bulk$ and with the absolute value norm on $\R$,
the semi-norm of $F$ is defined by
\begin{align}
	\norm{F}_{T_{\varphi} (\kh_\bulk)}
    = \sum_{p=0}^{\infty} \frac{1}{p!} \norm{F^{(p)}(\varphi)}
    ,
	\label{eq:T_z_seminorm_definition}
\end{align}
where $F^{(p)}$ denotes the $p^{\rm th}$ Fr\'echet derivative.
Properties of the semi-norm are presented in \cite[Chapter~7]{BBS-brief}.
There are two parameters $\kh_\bulk$ of interest.
As in \cite[Section~5.2]{MPS23}, we use the two choices $\ell_{\bulk,j}$ and $h_{\bulk,j}$ defined by
\begin{align}
\label{eq:hbulk-def}
	\kh_{\bulk, j}
		= \begin{cases}
		\ell_0 L^{-\frac{d-2}{2} j} & (\kh = \ell) \\
		k_0 \tilde{g}_j^{-1/4} L^{-\frac{d}{4} j} & (\kh = h)	,
		\end{cases}
\end{align}
where $\ell_0 = L^{1+ d/2}$ and $k_0$ is as in \eqref{eq:cD-def}.
For example,
\begin{align}
\label{eq:phi4norm}
    L^{dj}\norm{\tilde g_j |\varphi|^4}_{T_0 (\kh_{\varnothing,j})}
    &=
    \begin{cases}
    \ell_0^4 \tilde g_j \scale_j & (\kh_\bulk = \ell_\bulk)
    \\
    k_0^4 & (\kh_\bulk = h_\bulk).
    \end{cases}
\end{align}

We use a norm for $K_\bulk$ that combines the $T_0(\ell)$ norm (to control $K$ near
zero) and the $T_\varphi (h)$ norm (to control $K$ for large fields).
For the latter, we define the \emph{large-field regulator}
\begin{align}
	G_j (b, \varphi) = \exp (- \kappa |\varphi / h_{\bulk,j} |^4)
    \qquad (b \in \cB_j,  \; \varphi \in \R^n),
\label{eq:Gj}
\end{align}
where $\kappa > 0$ is chosen in \eqref{eq:kappa-def} to be small
depending only on the number $n$ of field components.
In particular, $\kappa < 1$.
The field-regulated semi-norm is defined by
\begin{align}
\label{eq:bulk-norm}
	\norm{K_\bulk (b)}_{T^G_{\bulk} (h)} = \sup_{\varphi \in \R^n} G^{-1} (b, \varphi) \norm{K_\bulk (b)}_{T_\varphi (h)}.
\end{align}
Then we define a norm on $\cN_\varnothing \cap \cS$ by
\begin{align}
	\norm{K_\bulk }_{\Wkappa_{\bulk, j}}
		&=
\norm{K_\bulk (b)}_{T_0 (\ell)} + \tilde{g}^{9/4}_j  \scale_{j}^{{\kb}_1} \norm{K_\bulk (b)}_{T^G_{\bulk} (h)}  ,
		\label{eq:Wbulk}
\end{align}
where $\kb_1 (d) = \kaa_1(d) - \kp_1 (d) >0$ with
\begin{align}
\label{eq:kbd1}
	\kaa_1 (d) =&
	\begin{cases}
	3 &  (d=5) \\
	\frac{2d- 6}{d-4} (1- \epsilon_1(d))  & (d \ge 6),
	\end{cases}
	\qquad \quad 	
    \kp_1 (d) =
	\begin{cases}
	\frac{3}{4} &  (5 \le d < 12)  \\
	\frac{d}{2(d-4)}  (1- \epsilon_1(d))  & (d \ge 12) .
	\end{cases}
\end{align}
The parameter $\epsilon_1(d)> 0$ is chosen to ensure that $\kaa_1(d) > 2$ and  $\kp_1(d) > \frac{1}{2}$,
which holds if and only if $\epsilon_1 (d) \in (0, \frac{1}{d-3})$.
The exponents $\kaa_1,\kb_1,\kp_1$ only occur as powers of $\scale_j=L^{-(d-4)j}$,
so they are immaterial for $d=4$ and we therefore do not specify their values for $d=4$.
For $d >4$, the power $\kb_1$ in \eqref{eq:Wbulk} reflects decay of $K_\bulk$ that we do
not believe to be sharp, though it is sufficient for our needs.
Exactly as in \cite[(5.38)]{MPS23}, and with the same $L$-dependent
constant $C_{\rg}$ specified in \cite[Remark~9.3]{MPS23},
we define the bulk domain
\begin{align}
	\cK_{\bulk,j} &= \{ K \in \cN_\varnothing \cap \cS : \norm{K}_{\cW^{\kappa}_{\bulk,j}} \leq C_{\rg} \tilde{\vartheta}_j^3 \tilde{g}_j^3 \scale_j^{\kaa_1}  \}
	\label{eq:bulk_domain}
	.	
\end{align}

\subsubsection{Observable norms and domains}

For $j \ge j_{\ox}-1$,
we use additional parameters for the observable fields.
For the large-field parameter, we define
\begin{align}
	\label{eq:h-sig-alt}
	h_{\sigma,j}
		= \tilde{g}_j^{1/4} L^{\frac{d}{4} ( j_{\ox} - 1) } 2^{ j - j_{\ox} + 1}
		,
\end{align}
which essentially is a constant multiple of \cite[(1.78)]{BS-rg-IE} for $d=4$
and provides an extension to higher dimensions.
For the observable counterpart of the parameter $\ell_\bulk$, we adapt
\cite[(1.78)]{BS-rg-IE} (which is for $d=4$), and define
\begin{align}
	\label{eq:ell-sig-2}
	\ell_{\sigma,j} =
    \begin{cases}
		\tilde{g}_j L^{\frac{6-d}{2} (j_\ox-1)} 2^{ (j - j_\ox +1 )}
    & (d=4,5)
    \\
		\tilde{g}_j L^{ - \frac{d - 6}{2} j} L^{ - \tau (j - j_\ox +1 )}
    & (d\ge 6)
		,
    \end{cases}
\end{align}
where we fix a choice of $\tau \in (0,  1)$ for $d \ge 6$.
In order to obtain a unified formula for $\ell_{\sigma,j}$, for $d=4,5$ we solve
$L^{-\frac{d-6}{2}} L^{-\tau} = 2$ to obtain $\tau = -\frac{d-6}{2} - \log_L 2 \in (0,1)$ for $d=4,5$.  This permits us to write
\begin{align}
	\label{eq:ell-sig-1}
	\ell_{\sigma,j} =
		\tilde{g}_j L^{ - \frac{d - 6}{2} j} L^{ - \tau (j - j_\ox +1 )}
    \qquad (d \ge 4)
		,
\end{align}
with $\tau\in (0,1)$ for $d \ge 6$ and with
$\tau = \frac{6-d}{2} - \log_L 2 \in (0,1)$ for $d=4,5$.
The fact that $\tau$ depends on $L$ for $d=4,5$ will not be harmful:
for large $L$,
$\tau$ is close to $1$ for $d=4$ and is close to $\frac 12$ for $d=5$.
By definition,
\begin{align}
	\label{eq:exbd7}
\ell_{\sigma, j} \le h_{\sigma, j}
	.
\end{align}
Indeed, for $d=4,5$ this is immediate from \eqref{eq:h-sig-alt}--\eqref{eq:ell-sig-2} because $\tilde g_j$ is small
and $\frac{6-d}{2} \le \frac d4$ for all $d\ge 4$, and for
$d \ge 6$ it is evident from the fact that $\ell_{\sigma,j}$ is decaying
with the scale whereas $h_{\sigma,j}$ is growing.

We use
$\kh_\sigma$ to denote either of $\ell_\sigma$ or $h_\sigma$, and we use
$\kh$ to denote the pair $(\kh_{\bulk}, \kh_{\sigma})$.
We define norms on the observable fields by
\begin{align}
	\norm{\sigma_{\0}}_{T_{\varphi} (\kh)}
    = \norm{\sigma_{\x}}_{T_{\varphi} (\kh)} = \kh_{\sigma} , \qquad
	\norm{\sigma_{\ox}}_{T_{\varphi} (\kh)} = \kh_{\sigma}^2.
\end{align}
Then we define semi-norms on $\cN$ by
\begin{align}
	\norm{F}_{T_{\varphi} (\kh)}
		&=
    \sum_{* \in \{ \bulk,\o,\x,\ox \}}  \norm{F_*}_{T_{\varphi} (\kh)}
    \norm{\sigma_*}_{T_{\varphi} (\kh)}
    \nnb & =
    \norm{F_\varnothing}_{T_{\varphi} (\kh)}
    +
    \norm{F_\o}_{T_{\varphi} (\kh)}\kh_{\sigma}
    +
    \norm{F_\x}_{T_{\varphi} (\kh)}\kh_{\sigma}
    +
    \norm{F_\ox}_{T_{\varphi} (\kh)}\kh_{\sigma}^2
		.
		\label{eq:T_norm}
\end{align}
Since $\ell_{\bulk,j} \le h_{\bulk,j}$ (by definition)
and $\ell_{\sigma,j} \le h_{\sigma,j}$ (by \eqref{eq:exbd7}), it follows from the
definitions \eqref{eq:T_z_seminorm_definition} and \eqref{eq:T_norm} that for all $\varphi$ and for all $F \in \mathcal N$,
\begin{equation}
\label{eq:Tnorm-ell-h}
    \|F\|_{T_\varphi(\ell)} \le \|F\|_{T_\varphi(h)}
    .
\end{equation}

The important product property
\begin{align}
	\norm{F_1 F_2}_{T_{\varphi} (\kh)} \le \norm{F_1}_{T_{\varphi} (\kh)}  \norm{F_2}_{T_{\varphi} (\kh)}
	\label{eq:subx1}
\end{align}
holds for the norm \eqref{eq:T_norm}.
This product property is an instance of the much more general product property \cite[Proposition~3.7]{BS-rg-norm}
(as described in \cite[Section~1.1.6]{BS-rg-IE}).
But it can also be seen directly from the product property for the
norm \eqref{eq:T_z_seminorm_definition}
(see \cite[Lemma~7.1.3]{BBS-brief}) by applying the latter term by term in
the analogue of \eqref{eq:T_norm} for $F=F_1F_2$.

In particular,
\begin{align}
\label{eq:Vznorm-ell}
	\norm{\sigma_\z  \varphi_\z^{(1)}}_{T_0 (\ell_{j})} &=
		\ell_{\bulk,j}\ell_{\sigma,j} =
		\ell_0\tilde{g}_j \scale_j L^{-\tau (j - j_\ox +1 )}
		,
\\
\label{eq:Vznorm-h}
	\norm{\sigma_\z  \varphi_\z^{(1)}}_{T_0 (h_{j})}&  =
		h_{\bulk,j}h_{\sigma,j}
		= k_0 (2L^{-d/4})^{ j - j_{\ox} + 1}
		.
\end{align}
Note that in both cases, the norms of $\sigma_\z \varphi_z^{(1)}$
are relatively small compared to the norm of the quartic term
(the main term in $V_\bulk$) in \eqref{eq:phi4norm}.
The size of $\kh_\sigma$ has been limited intentionally for this to happen.
Similarly, for the perturbative value $\delta u_{\pt,\ox}$ given in \eqref{eq:deltq_pt},
since $L^{j_{\ox}} \asymp |x|$,
\begin{align}
    \norm{\sigma_\ox C_{\ka, j+1}(\x)}_{T_0 (\ell_{j})}
    &
    \le
    O(L^{-(d-2)j}) \tilde{g}_j^2 L^{ - (d - 6)j} L^{ - 2\tau (j - j_\ox +1 )}
    \asymp
    \tilde{g}_j^2  \frac{1}{|x|^{2(d-4)}} \Big( \frac{2}{L^{2(d-4+\tau)}}\Big)^{j-j_\ox}
\end{align}
(with $L$ dependent constants for $\asymp$),
which is much smaller than \eqref{eq:phi4norm}.

Now we extend the bulk norm \eqref{eq:bulk-norm} by using the $T_\varphi(\kh)$
semi-norm from \eqref{eq:T_norm} to include observables, i.e.,
\begin{align}
\label{eq:norm-with-half}
	\norm{K (b)}_{T^G (h)} = \sup_{\varphi \in \R^n} G^{-\frac{1}{2}} (b, \varphi) \norm{K (b)}_{T_\varphi (h)}
	.
\end{align}
The power $-1$ on the regulator for the bulk which appears in \eqref{eq:bulk-norm} has been
reduced to the power $-\frac{1}{2}$ in \eqref{eq:norm-with-half}, in order to
provide room to multiply
a bulk polymer activity by
an observable polynomial in $\varphi$ while keeping the norm finite.
We define $\kb_2 = \kaa_2 - \kp_2$ with
\begin{align}
\label{eq:kbd2} 	
	{\kaa}_2 (d) =
		\frac{\tau + d-4}{d-4} (1- \epsilon(d))  \quad (d \ge 5),
    \qquad
	{\kp}_2(d) =
	\begin{cases}
		\frac{3}{4} & (d=5) \\
		\frac{d}{4(d-4)}  (1- \epsilon(d))  & (d \ge 6)
	\end{cases}
\end{align}
for some fixed $\epsilon (d) \in (0, \frac{\tau}{d-4 + \tau})$.
Again we do not specify values of $\kaa_2,\kb_2,\kp_2$
for $d=4$ since they are exponents of $\scale_j=L^{-(d-4)j} = 1$.
An important role is played by the scale immediately preceding the
coalescence scale $j_\ox$, and in order to accommodate the degenerate case
 $\x = \o$ (for which $j_\ox=0$),  we define
\begin{align}
	\joxm = \begin{cases}
		j_{\ox} - 1 & (\x \neq \o)  \\
		0 & (\x = \o) .
	\end{cases}	\label{eq:joxm}
\end{align}
Then we define
\begin{align}
\label{eq:rhotil-def}
	\tilde{\scale}_j ( {\kaa})
		= \scale_{\joxm}^{{\kaa}_1} \scale_{j -\joxm}^{ {\kaa}_2},
		\qquad
	\tilde{\scale}_j ( {\kb})
		= \scale_{\joxm}^{ {\kb}_1} \scale_{j-\joxm}^{ {\kb}_2} ,
\end{align}
which are interpreted as equal to $1$ for $d=4$.
Then we define a norm on $\cN$ by
\begin{align}
\label{eq:Wkappa-def}
	\norm{K}_{\Wkappa_j}
		&=
\max_{b \in \cB} \Big(
\norm{K (b)}_{T_0 (\ell)}
+ \tilde{g}^{9/4}_j  \tilde{\scale}_j ({\kb})  \norm{K (b) }_{T^G (h)} \Big),
\end{align}
and, given $\tilde{C}_{\rg}>0$,
the domain
\begin{align}
	\cK_j &= \{ K \in \cS  : \norm{K}_{\Wkappa_j} \leq \tilde{C}_{\rg}  \tilde{\vartheta}_j^3  \tilde{g}_j^3 \tilde{\scale}_j ({\kaa}) , \; K_\bulk \in \cK_{\bulk,j}\}.
		\label{eq:tilde_cK_def}
\end{align}
The domain of the RG map is then defined to be
\begin{align}
\label{eq:domRG-def}
	\domRG_j = {\cD}_j \times  {\cK}_j .
\end{align}
We make the choice
\begin{align}
\label{eq:CRGtilde}
    \tilde C_{\rg}= \max\{ 2 C_{\pt} ,  C_{\rg}' \},
    \qquad
    C_{\rg}' =
	(4\kappa^{-1})^2
    C_{\rg},
\end{align}
where $C_{\rg}$ is from \eqref{eq:bulk_domain},
and $C_{\pt}$ is the specific $L$-dependent constant determined in Proposition~\ref{prop:Phi+0}.
This specific choice plays a role in
Lemma~\ref{lemma:bulk_to_obs_bound} and
Proposition~\ref{prop:stable_manifold}.
Since $K$ now also contains the observables, it is natural that its norm
increases compared to $K_\bulk$, and this is the reason that we need the larger
constant $\tilde C_{\rg}$.

The definition of the norm in \eqref{eq:Wkappa-def} is elaborate, but its sophistication
encodes many estimates simultaneously.  As an example, if $K \in \cK$ then
at scale $j\ge j_\ox$ for $d=4$ (for which $\tilde\scale =1$), and using $\tilde \vartheta \le 1$,
\begin{align}
    |K_{j,\ox} (b,0)| \le \norm{K_{j,\ox} (b)}_{T_0 (\ell) }
    & \le \ell_{\sigma,j}^{-2} \norm{K_j}_{\Wkappa}
    \le
    \ell_{\sigma,j}^{-2} \tilde{C}_{\rg}  \tilde{g}_j^3
    =
    \tilde{C}_{\rg}  \tilde{g}_j L^{-2j}
    L^{2\tau(j-j_\ox+1)}.
\end{align}
Since $\tau \in (0,1)$, the right-hand side is summable over $j \ge j_\ox$.
This fact
is an essential
ingredient in the control of the Gaussian
contribution $u_{N,\ox}$ to the two-point function.
This will become apparent in Section~\ref{sec:pf_thm:RG-obs}, see in particular
\eqref{eq:KjoxAbsconv}.
In a similar way, as discussed further at \eqref{eq:KNoxy}, for $d=4$
and for $K_N \in \cK_N$,
\begin{align}
\label{eq:KNybd-preliminary}
	|K_{N,\ox} (\Lambda_N,  y)|
    & \le h_{\sigma,N}^{-2} \tilde{g}_N^{-9/4}  \norm{ K_{N} }_{\Wkappa_N}
		 G_N^{\frac{1}{2}}   (\Lambda_N, y)
    \nnb & \le
  \tilde  C_{\rg} \tilde{g}_N^{1/4} L^{-2(j_\ox-1)} 4^{-(N-j_\ox +1)}
     G_N^{\frac{1}{2}}   (\Lambda_N, y)
	\nnb &	\le \const \,  \tilde{g}_N^{1/4} |\x |^{-2} 4^{-(N-j_{\ox})}
		 e^{- \kappa \tilde g_N |\Lambda_N| |y|^4}
    .
\end{align}
Here the bound $G_N^{\frac{1}{2}}   (\Lambda_N, y) \le e^{- \kappa \tilde g_N |\Lambda_N| |y|^4}$
follows from $h_{N,\bulk}^{-4} \le L^{dN} \tilde{g}_N$ since $k_0 \le 1/2$.

The specific choices we make for $\kaa,\kb$ in $\tilde \scale$
are non-canonical devices that work for $d>4$.

\subsubsection{Parameter restrictions}
\label{sec:prrt}

For later reference,
we summarise a list of restrictions that the parameters $\tau, \epsilon_1$ and $\epsilon$ entail.
Recall that $\tau\in (0,1)$ for $d \ge 6$ and
$\tau = \frac{6-d}{2} - \log_L 2 \in (0,1)$ for $d=4,5$.
The previously stated conditions $\epsilon_1 \in (0, \frac{1}{d-3})$
and  $\epsilon \in (0, \frac{\tau}{d-4 + \tau})$ imply the following statements.
For $d \ge 5$, the exponents $\kaa_1,\kb_1,\kp_1$ are defined at \eqref{eq:kbd1}, and
$\kaa_2,\kb_2,\kp_2$ are defined at \eqref{eq:kbd2}.
For $d=4$, they play no role and are therefore not defined.

The first three restrictions are purely on the exponents, so we only consider $d \ge 5$.

\begin{enumerate}
\item
Since $\epsilon_1 < \frac{1}{d - 3}$, we have
\begin{align}
	 \kaa_1 > 2 ,\qquad \kp_1 > \frac 12
	\label{eq:exbd5}
	.
\end{align}
By \eqref{eq:kbd1} and \eqref{eq:kbd2},
\begin{equation}
\label{eq:kb12}
    0 \le  \kb_2 \le \kb_1 < \frac 52.
\end{equation}

\item We have
\begin{align}
	(d-4) \kp_1 + \frac{d}{2} \ge
    d- \frac{7}{4}
		.
	\label{eq:exbd6}
\end{align}
For $d =5$,  this follows
from $(d-4) \kp_1 + \frac{d}{2} = \frac{5}{4} d -3$.
For $d \ge 6$,
we have $(d-4) \kp_1 + \frac{d}{2} = d - \frac{d}{2} \epsilon_1 (d)$,
and the desired bound is equivalent to $\epsilon_1 < \frac{7}{2d}$.
Since $\frac{1}{d-3} < \frac{7}{2d}$ for $d > \frac{21}{5}$,
we have the desired bound.

\item For both $i=1,2$, we have
\begin{align}
	3 \ge \kaa_i
    > \max\{ 1, \kb_i \}
	\label{eq:exbd1}
	.
\end{align}
Since $3 \ge \kaa_i$ by definition, we only need to check
that $\kaa_i > \max\{1,\kb_i\}$.
The bound $\kaa_i > \kb_i$ holds because $\kaa_i - \kb_i = \kp_i > 0$.
Also, by (i), $\kaa_1 > 2 >1$.
Finally,
by definition,
$\kaa_2 (d) = (1 + \frac{\tau}{d-4}) (1-\epsilon) > 1$ whenever $\epsilon < \frac{\tau}{d-4 + \tau}$.

\setcounter{savedenum}{\value{enumi}}
\end{enumerate}

The next three restrictions involve $\tau$,
and are for all $d \ge 4$.
The exponents $\kaa_i$ and $\kb_i$ always occur in a product with $d-4$,
so their values are not needed for $d=4$.

\begin{enumerate}
\resumeenumerate

\item There is a $d$-dependent choice of $t = t(d) >0$ such that
\begin{align}
	\label{eq:exbd2}
	\min \Big\{ \tau + d - 4  ,  \frac{d}{4} + (d-4) \kb_2 \Big\}
    \ge  (d-4) \kaa_2 + t .
\end{align}
 For $d=4$, the left-hand side of \eqref{eq:exbd2}
is $\min\{\tau,1\}$ and the right-hand side is $t$.
For $d=5$,
$(d-4)\kb_2 = \frac94$ and $(d-4) \kaa_2 = (\tau+1)(1-\eps)$.
Finally, for $d \ge 6$, $(d-4)\kb_2 = (\tau + \frac{3d}{4} -4) (1-\eps)$ and $(d-4) \kaa_2 = (\tau+d-4)(1-\eps)$.
In all cases, \eqref{eq:exbd2} is satisfied.

\item
We have $(d-4) \kaa_2 - (2\tau + d-6) > 0$, since
\begin{align}
	(d-4) \kaa_2 - (2\tau + d-6) =
		\begin{cases}
			2 - 2\tau & (d=4) \\
			(2- \tau) - \epsilon (\tau + d -4)	& (d>4)	
		\end{cases}
		\label{eq:exbd3}
\end{align}
is positive whenever $\tau < 1$ and $\epsilon < \frac{\tau}{\tau + d-4} < \frac{2-\tau}{\tau + d -4}$.

\item The inequality
\begin{align}
	0<  d-4 + \tau \le 2d - 6
	\label{eq:exbd4}
\end{align}
holds with room to spare when $d\ge 4$ since $\tau <1$.
\end{enumerate}

\section{RG flow with observables}
\label{sec:RGflow}

We now establish conditions which allow the RG map to be iterated over scales
and thereby prove Theorem~\ref{thm:RG-obs}.
The main ingredient in the proof of Theorem~\ref{thm:RG-obs} is Theorem~\ref{thm:Phi^K_q0} for the RG map, which
extends a related theorem without observables in \cite{BBS-brief,MPS23}.
Theorem~\ref{thm:Phi^K_q0} is proved in Section~\ref{sec:conditional-proof} subject to Propositions~\ref{prop:Phi+0}--\ref{prop:crucial-short-3},
whose proofs are deferred to Section~\ref{sec:4}.
The proof of Theorem~\ref{thm:RG-obs} is given in Section~\ref{sec:pf_thm:RG-obs}
subject to Theorem~\ref{thm:Phi^K_q0}.

\subsection{Existence of the RG flow}
\label{sec:RGflowexists}

The repeated iteration of the RG map is called an RG flow.  We formalise this notion in the following definition.
The mass parameter $\ka$ mentioned in the definition is the mass parameter defining the covariance used in the RG map,
and the mass domain $\II_{\Npp} (\tilde{\ka})$ is defined in \eqref{eq:massdomain}.
Also, recall $\joxm$ from \eqref{eq:joxm}.

\begin{definition}
\label{def:RGflow}
Let $\tilde{\ka} \in [0,1)$ and $\ka \in \II_{\Npp} (\tilde{\ka})$.
Given initial coupling constants $g_0>0$, $\nu_0\in \R$ defining $V_0$
and $K_{0,\bulk} =0$,
we say that
$(V_{j,\bulk}, K_{j,\bulk})_{j < j_{\ox}}$ and
$(V_j,K_j)_{j_\ox \le j \le k}$ is an \emph{RG flow to scale} $k\in\N$ \emph{with initial condition} $(g_0,\nu_0)$
and mass parameter $\ka$,
if the sequence is determined by the recursion
\begin{alignat}{3}
	(V_{j+1,\bulk},K_{j+1,\bulk})
		& =  \big(\Phi_{j+1,\bulk}^V (V_{j,\bulk},K_{j,\bulk}) , \Phi_{j+1,\bulk}^K(V_{j,\bulk},K_{j,\bulk}) \big)
    	\qquad & ( & j < \joxm ), \\
    (V_{j+1},K_{j+1})
    	& = \big(\Phi_{j+1}^V(V_j,K_j) , \Phi_{j+1}^K(V_j,K_j ) \big)
    	\qquad & ( & \joxm \le j \le k  - 1 ),
\end{alignat}
with
\begin{alignat}{3}
	( V_{j,\bulk}, K_{j,\bulk} )
		& \in \cD_j \times \cK_{j,\bulk}
	\qquad & ( & j < \joxm) , \\
    (V_j,K_j)
    	& \in \cD_j \times \cK_j
	\qquad & ( & \joxm \le j \le k-1) .
\label{eq:RGmap_jox}
\end{alignat}
When an RG flow exists to scale $k$ for all $k \in \N$ then we refer to it
as a \emph{global RG flow}.
A \emph{bulk} RG flow is an RG flow for the bulk coordinates $(V_{j,\bulk},K_{j,\bulk})$
only, i.e., an RG flow with $\sigma_\o$ and $\sigma_\x$ set equal to zero.
For the transition at the coalescence scale
in \eqref{eq:RGmap_jox}, consistent
with Lemma~\ref{lem:initialbulkflow}, we use
\begin{align}
    V_{\joxm}( B,  \varphi) &= V_{\joxm,\bulk}(B, \varphi)
    - \sigma_\o \one_{\o \in B} \varphi^{(1)}  - \sigma_\x \one_{\x \in B} \varphi^{(1)}
    \\
\label{eq:Kjox-1}
    K_{\joxm}(B,\varphi) &=
    e^{\sigma_\o \1_{\o \in B}\varphi^{(1)}  + \sigma_\x \1_{\x \in B}\varphi^{(1)}}
    K_{\joxm,\bulk}(B,\varphi),
\end{align}
for a block $B$ at scale $j_{\ox ,-}$.
\end{definition}

When an RG flow exists to scale $\Npp$, it follows that
\begin{equation}
    Z_j(\Lambda_{\Npp}) = e^{u_{j,\bulk}|\Lambda_{\Npp}| + \sigma_\ox u_{j,\ox}}
    \prod_{B \in \cB_j} \big( e^{-V_j(B)}+K_j(B) \big)
    \qquad (j \le \Npp),
\end{equation}
where
\begin{equation}
    u_{j,*} = \sum_{i=1}^j \delta u_{i,*}
    \qquad
    (* \in \{\bulk, \ox\}).
\end{equation}
In particular, since at scale $\Npp$ there is a single block,
with $\bar u_{\Npp} (\Lambda_{\Npp}) = u_{\Npp,\bulk}
|\Lambda_{\Npp}|+\sigma_\ox u_{\Npp,\ox}$,
as in \eqref{eq:ZN} we have
\begin{equation}
\label{eq:RGflowN}
    Z_{\Npp} (\Lambda_{\Npp}) = e^{\bar u_{\Npp} (\Lambda_{\Npp})} \big( e^{-V_{\Npp}(\Lambda_{\Npp})} + K_{\Npp} (\Lambda_{\Npp}) \big)
    .
\end{equation}

The following proposition unites special cases of the statements of \cite[Proposition~6.2 and 6.4]{MPS23}.
It provides the existence of bulk RG flows.

\begin{proposition} \label{prop:nu_c}
Let $d \geq 4$,
let $L$ be sufficiently large (depending on $d,n$), and let $g >0$ be sufficiently small
(depending on $L$).
Then there exists $C_{\rg}$ (in the definition \eqref{eq:bulk_domain}
of $\cK_{\bulk,j}$) such that the following hold:

\begin{enumerate}
\item For $\ka \in [0,1)$ and $\tilde\ka=\ka$, there exists
    a continuous function $\nu_c : [0,   \infty ) \rightarrow \R$
    such that
$(V_{j, \bulk}, K_{j,\bulk})$ is a global bulk RG flow with mass $\ka$ and initial condition $(g, \nu_c (\ka))$.

\item For $N \in \N$, $\ka \in ( - \frac{1}{2} L^{-2(\Npp-1)} , 0 ]$,
and $\tilde{\ka} = 0$, there exists
a continuous function
$\nu_{0,\Npp} : ( -\frac{1}{2} L^{-2(\Npp -1)} , 0 ] \rightarrow \R$ such that
$(V_{j,\bulk}, K_{j,\bulk})$ is a bulk RG flow to scale $\Npp$ with mass $\ka$ and initial condition $(g, \nu_{0, \Npp} (\ka))$.
Moreover, $\nu_{0, \Npp} (0) = \nu_c (0)$.
\end{enumerate}
\end{proposition}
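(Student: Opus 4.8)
The plan is to obtain Proposition~\ref{prop:nu_c} as a direct specialisation of \cite[Proposition~6.2 and 6.4]{MPS23}. The key observation is that the statement concerns only the bulk coordinates $(V_{j,\bulk},K_{j,\bulk})$, and that every object entering the notion of a bulk RG flow in Definition~\ref{def:RGflow} is literally the one analysed in \cite{MPS23}. Concretely, the bulk map $\Phi_{+,\bulk}$---namely $U_{+,\bulk}=\Phi_{\pt,\bulk}(V_\bulk-\Loc(e^{V_\bulk}K_\bulk))$ together with the $K_{+,\bulk}$ formula inherited from \cite[Definition~5.2.8]{BBS-brief}---is recovered from \eqref{eq:U+def}--\eqref{eq:K+ext_field} by setting $\sigma_\o=\sigma_\x=0$; the bulk domain $\cD_j$ of \eqref{eq:cD-def} and the bulk domain $\cK_{\bulk,j}$ of \eqref{eq:bulk_domain} use the same coupling-constant sequence $\tilde g_j$, the same regulator $G_j$, and the same constants $k_0$ and $C_{\rg}$ as in \cite[Section~5.2]{MPS23}. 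So the first step I would carry out is simply to record these identifications, from which it follows that a bulk RG flow in the sense of Definition~\ref{def:RGflow} is precisely a bulk RG flow in the sense of \cite{MPS23}, with the same hypotheses ($L$ large, $g$ small).

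With that in hand, part~(i)---the existence, for $\ka\in[0,1)$ with $\tilde\ka=\ka$, of an initial value $\nu_c(\ka)$ for which the bulk recursion stays in $\cD_j\times\cK_{\bulk,j}$ for \emph{all} $j$---is exactly the construction of the critical value $\nu_c$ in \cite[Proposition~6.2]{MPS23}. Part~(ii)---the existence, for $\ka\in(-\frac12 L^{-2(N-1)},0]$ with $\tilde\ka=0$, of $\nu_{0,N}(\ka)$ for which the bulk recursion stays admissible up to the finite scale $N$ (the largest scale a mass of this size permits)---is \cite[Proposition~6.4]{MPS23}, which in addition supplies the continuity of $\nu_{0,N}$ and the matching identity $\nu_{0,N}(0)=\nu_c(0)$. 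No new estimates are needed here; the work is the bookkeeping of aligning the present notation with that of \cite{MPS23}, together with a check that the hypotheses quoted in Theorem~\ref{thm:rg_main_theorem_new} are compatible with those needed to invoke these propositions.

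I do not expect a genuine obstacle, but one point deserves a line of justification: Definition~\ref{def:RGflow} prescribes the transition \eqref{eq:VKjox-1} at the coalescence scale $j_\ox$, which a priori could disrupt the flow. However, \eqref{eq:VKjox-1} alters only the observable components, leaving the bulk pair $(V_{j,\bulk},K_{j,\bulk})$ unchanged across scale $j_\ox$---this is precisely the content of Lemma~\ref{lem:initialbulkflow}. Hence the bulk flow is a single uninterrupted flow of exactly the type built in \cite{MPS23}, and the cited results apply verbatim.
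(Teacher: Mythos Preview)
Your proposal is correct and takes essentially the same approach as the paper: the paper does not give an independent proof but simply states that the proposition ``unites special cases of the statements of \cite[Proposition~6.2 and 6.4]{MPS23}'', which is precisely the citation-and-identification argument you outline. Your additional remark about the coalescence-scale transition is harmless but unnecessary here, since a \emph{bulk} RG flow in Definition~\ref{def:RGflow} has $\sigma_\o=\sigma_\x=0$ by fiat, so the transition \eqref{eq:VKjox-1} is vacuous for the bulk coordinates.
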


The two critical values $\nu_c(a)$ and $\nu_{0,\Npp}(a)$
have appeared already in \eqref{eq:nu_1_N_def}, where they were joined to
define (part of) the continuous function
\begin{equation}
\label{eq:nu_1_N_def-bis}
	\nu_{1,\Npp}(\ka) =
	\begin{cases}
		\nu_c(\ka) & (\ka \in [0,1) )  \\
		\nu_{0,\Npp}(\ka) & ( \ka \in (-\frac12 L^{-2(\Npp - 1)} ,  0) ).
	\end{cases}
\end{equation}
It is also proved in \cite{MPS23} that
 $\nu_{1,\Npp}$ is bounded by $O(g)$ uniformly in $\ka$.
For our purpose,
we will only need $\ka$ in the mass interval $\II_{\rm crit}$ defined
in \eqref{eq:I_crit_asymp}.

As a consequence of Proposition~\ref{prop:nu_c} and Lemma~\ref{lem:initialbulkflow},
we know that for $\ka \in \II_{\rm crit}$
and $\nu=\nu_{1,\Npp}(a)$, we have
\begin{align}
\label{eq:bulkflow}
    Z_{\joxm}(\varphi)
    & =
    e^{\sigma_\o \varphi^{(1)}_\o  + \sigma_\x \varphi^{(1)}_\x }
    e^{u_{\joxm}|\Lambda_N|}
    \prod_{B \in \cB_{\joxm}}
    (e^{-V_{\joxm},\bulk} +K_{\joxm,\bulk})(B),
\end{align}
with
\begin{equation}
\label{eq:bulkflowdomain}
    (V_{\joxm,\bulk},K_{\joxm,\bulk} ) \in \domRG_{\joxm}.
\end{equation}
The sequence $\tilde g$ in the next lemma is given by \eqref{eq:tldgj};
it is an essential ingredient in the definition of the domain $\domRG$.

\begin{lemma} \label{lemma:bulk_to_obs_bound}
 If $\tilde{g}_{\joxm}$ is sufficiently small, and if $K_{\joxm,\bulk} \in \cK_{\joxm,\bulk}$, then  $K_{\joxm} \in \cK_{\joxm}$.
\end{lemma}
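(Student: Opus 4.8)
The plan is: first make $K_{j_\ox-1}$ explicit; then verify the symmetry requirements of Definition~\ref{def:SK}; then bound $\norm{K_{j_\ox-1}}_{\Wkappa_{j_\ox-1}}$.

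By \eqref{eq:VKjox-1}, together with the factorisation $e^{-V_{j_\ox-1}(b)}=e^{-V_{j_\ox-1,\bulk}(b)}e^{(\sigma_\o\1_{\o\in b}+\sigma_\x\1_{\x\in b})\varphi^{(1)}}$, matching $\prod_b(e^{-V_{j_\ox-1}(b)}+K_{j_\ox-1}(b))$ against $e^{\sigma_\o\varphi_\o^{(1)}+\sigma_\x\varphi_\x^{(1)}}\prod_b(e^{-V_{j_\ox-1,\bulk}(b)}+K_{j_\ox-1,\bulk}(b))$ (as in Lemma~\ref{lem:initialbulkflow}) forces $K_{j_\ox-1}(b)=e^{(\sigma_\o\1_{\o\in b}+\sigma_\x\1_{\x\in b})\varphi^{(1)}}K_{j_\ox-1,\bulk}(b)$. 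Since $\o$ and $\x$ lie in distinct scale-$(j_\ox-1)$ blocks $b_\o,b_\x$ and $\sigma_\o^2=\sigma_\x^2=0$, this reads $K_{j_\ox-1}(b)=K_{j_\ox-1,\bulk}(b)$ for $b\notin\{b_\o,b_\x\}$ while $K_{j_\ox-1}(b_\o)=(1+\sigma_\o\varphi^{(1)})K_{j_\ox-1,\bulk}(b_\o)$ and $K_{j_\ox-1}(b_\x)=(1+\sigma_\x\varphi^{(1)})K_{j_\ox-1,\bulk}(b_\x)$; in particular the $\sigma_\ox$-component vanishes. From this explicit form, $K_{j_\ox-1}\in\cS$ is immediate: conditions (i) and (iii) of Definition~\ref{def:SK} hold by inspection, and (ii) holds because $K_{j_\ox-1,\bulk}$ is even in $\varphi$ (being $O(n)$-invariant, by (i) for the bulk) while the $\o$- and $\x$-components $\varphi^{(1)}K_{j_\ox-1,\bulk}(b_\o)$, $\varphi^{(1)}K_{j_\ox-1,\bulk}(b_\x)$ are odd, which is exactly the required sign pattern $F_\bulk-F_\o-F_\x+F_\ox$. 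Since moreover $(K_{j_\ox-1})_\bulk=K_{j_\ox-1,\bulk}\in\cK_{\bulk,j_\ox-1}$ by hypothesis, only the norm bound remains.

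For that, observe first that at scale $j_\ox-1$ one has $j-j_\ox+1=0$ and $\scale_0=1$, so $\tilde\scale_{j_\ox-1}(\kaa)=\scale_{j_\ox-1}^{\kaa_1}$ and $\tilde\scale_{j_\ox-1}(\kb)=\scale_{j_\ox-1}^{\kb_1}$; hence the target $\tilde C_{\rg}\tilde\vartheta_{j_\ox-1}^3\tilde g_{j_\ox-1}^3\tilde\scale_{j_\ox-1}(\kaa)$ exceeds the known bound on $\norm{K_{j_\ox-1,\bulk}}_{\cW^\kappa_{\bulk,j_\ox-1}}$ by the factor $\tilde C_{\rg}/C_{\rg}\ge 2$ — this headroom (the reason for the larger constant $\tilde C_{\rg}$) is what absorbs the observable contributions. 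Evaluating \eqref{eq:Vznorm-ell}--\eqref{eq:Vznorm-h} at $j_\ox-1$ gives the two elementary facts $\ell_{\bulk,j_\ox-1}\ell_{\sigma,j_\ox-1}=\ell_0\tilde g_{j_\ox-1}\scale_{j_\ox-1}$ and $h_{\bulk,j_\ox-1}h_{\sigma,j_\ox-1}=k_0$. Since $\norm{\cdot}_{\Wkappa_{j_\ox-1}}$ is a maximum over blocks, the only blocks to check are $b_\o$ and $b_\x$, which are symmetric; taking $b=b_\o$ and splitting into bulk and $\o$-components via \eqref{eq:T_norm}, the product property of the $T_\varphi$-seminorm bounds the $\o$-component's contribution to the $T_0(\ell)$ part by $\ell_{\bulk,j_\ox-1}\ell_{\sigma,j_\ox-1}\norm{K_{j_\ox-1,\bulk}(b_\o)}_{T_0(\ell_\bulk)}=\ell_0\tilde g_{j_\ox-1}\scale_{j_\ox-1}\norm{K_{j_\ox-1,\bulk}(b_\o)}_{T_0(\ell_\bulk)}$, i.e.\ by a relative factor $o(1)$, and its contribution to the $T^G(h)$ part by $h_{\sigma,j_\ox-1}\sup_\varphi G^{-1}(b_\o,\varphi)\norm{\varphi^{(1)}}_{T_\varphi(h_\bulk)}\norm{K_{j_\ox-1,\bulk}(b_\o)}_{T_\varphi(h_\bulk)}$, which by $h_{\bulk,j_\ox-1}h_{\sigma,j_\ox-1}=k_0$ and the regulator estimates of \cite[Chapter~7]{BBS-brief} (the quartic decay of $G_{j_\ox-1}$ absorbing the linear growth of $\varphi^{(1)}$) is $O(k_0)$ times $\norm{K_{j_\ox-1,\bulk}(b_\o)}_{T^G(h_\bulk)}$. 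Choosing $\tilde g$ small enough, and using that $k_0$ is a small constant, so that these two relative factors sum to at most $1$, the $b_\o$-contribution to $\norm{K_{j_\ox-1}}_{\Wkappa_{j_\ox-1}}$ is at most $2C_{\rg}\tilde\vartheta_{j_\ox-1}^3\tilde g_{j_\ox-1}^3\scale_{j_\ox-1}^{\kaa_1}\le\tilde C_{\rg}\tilde\vartheta_{j_\ox-1}^3\tilde g_{j_\ox-1}^3\tilde\scale_{j_\ox-1}(\kaa)$, and likewise for $b_\x$; hence $K_{j_\ox-1}\in\cK_{j_\ox-1}$.

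I expect the main obstacle to be precisely the large-field ($T^G(h)$) estimate just described: multiplying the bulk non-perturbative coordinate by the unbounded factor $\varphi^{(1)}$ and then dividing by the regulator $G_{j_\ox-1}$ must still produce a bounded quantity carrying the correct powers of $\tilde g_{j_\ox-1}$ and $\scale_{j_\ox-1}$. This is exactly what the deliberately small observable scales $h_\sigma,\ell_\sigma$ — chosen so that $h_\bulk h_\sigma$ and $\ell_\bulk\ell_\sigma$ remain no larger than the corresponding bulk quantities, cf.\ the remark following \eqref{eq:Vznorm-h} — are designed to permit, in combination with the regulator machinery of \cite[Chapter~7]{BBS-brief}; the remainder of the argument is routine bookkeeping.
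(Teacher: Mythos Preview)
Your approach coincides with the paper's: both recognise $K_{j_\ox-1}=e^{\sigma_\o\varphi_\o^{(1)}+\sigma_\x\varphi_\x^{(1)}}K_{j_\ox-1,\bulk}$, bound the exponential prefactor, and absorb the resulting factor using the headroom $\tilde C_{\rg}\ge 2C_{\rg}$. The paper's proof is a three-line sketch; your block-by-block expansion, verification of the $\cS$ conditions, and separate $T_0(\ell)$ estimate are all correct elaborations of it.

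The large-field step you flag as ``the main obstacle'' is, however, not closed by your appeal to ``the regulator machinery of \cite[Chapter~7]{BBS-brief}''. After the product property gives $\|K(b_\o)\|_{T_\varphi(h)}\le(1+k_0 P_{h_\bulk}(\varphi))\,\|K_\bulk\|_{T_\varphi(h_\bulk)}$ and one uses $\|K_\bulk\|_{T_\varphi(h_\bulk)}\le G(\varphi)\|K_\bulk\|_{T^G(h)}$, the sole regulator in the definition of $T^G$ has been consumed; the remaining factor $1+k_0 P_{h_\bulk}(\varphi)$ grows linearly and there is no second copy of $G$ left to absorb it. Under the bare hypothesis $K_\bulk\in\cK_{\bulk,j_\ox-1}$ one can take $K_\bulk$ whose $T_\varphi$-norm nearly saturates $G(\varphi)\|K_\bulk\|_{T^G(h)}$ for large $|\varphi|$, and then $\|K(b_\o)\|_{T^G(h)}$ is not bounded by any fixed multiple of $\|K_\bulk\|_{T^G(h)}$. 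The paper's own proof is equally terse on this point---the references \eqref{eq:Vznorm-ell}--\eqref{eq:Vznorm-h} are $T_0$-norms only. The repair is to use that the specific $K_\bulk$ produced by the bulk RG flow of Proposition~\ref{prop:nu_c} carries spare decay: the stability bounds underlying that flow give exponents like $3\cst$ or $5\cst$ with $\cst\ge\kappa$ (cf.\ the bulk analogues of \eqref{eq:V-Q_bound1-h}, \eqref{eq:hat_K_bound1}), so one extra regulator factor is available to absorb the polynomial $P_{h_\bulk}$.
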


\begin{proof}
Suppose first that $j_{\ox}$ is not $0$ (the case of primary interest
is large $j_\ox$).
In this case,
a block $B$ at scale $\joxm = j_\ox -1$ can contain at most one of the points $\o,\x$.
If it contains neither, there is nothing to prove, so let us assume for notational
simplicity that $\o \in B$.
In this case, by \eqref{eq:Kjox-1},
\begin{align}
    K_{j_\ox -1}(B,\varphi) =
    e^{\sigma_\o  \varphi^{(1)}}
    K_{j_\ox -1,\bulk}(B,  \varphi)
    =
    (1+ \sigma_\o  \varphi^{(1)}  )
    K_{j_\ox -1,\bulk}(B,  \varphi).
\end{align}
We take the $T_\varphi(\kh)$ norm at scale $j_\ox -1$ and use the product property
\eqref{eq:subx1}.  With Lemma~\ref{lemma:poly_bound} to bound the $T_\varphi$ norm by the $T_0$ norm,
and with \eqref{eq:Vznorm-ell}--\eqref{eq:Vznorm-h}
to evaluate the $T_0$ norm, this gives
\begin{align}
    \norm{K_{j_\ox -1}(B)}_{T_\varphi(\kh)}
    \le
    \big( 1+ \kh_\bulk \kh_\sigma (1+|\varphi|/\kh_\bulk) \big)
    \norm{K_{j_\ox -1,\bulk}(B)}_{T_\varphi(\kh)} .
\end{align}
Since $1 + x \le 2+x^4 \le \lambda^{-1} e^{\lambda x^4 }$ for $\lambda \in (0,  \frac{1}{2} ]$,
we can elevate the polynomial factor into the exponent,  for both $\kh_\bulk \in \{ \ell_\bulk, h_\bulk  \}$, via
\begin{align}
	1+ \frac{|\varphi|}{\kh_\bulk}
	\le
	\frac{1}{\lambda}
	\exp \Big( \lambda \Big|\frac{\varphi}{ h_{\bulk}}\Big|^4\Big)
		=  \frac{1}{\lambda}  G^{-\frac{\lambda}{\kappa}} (B, \varphi) .
\end{align}
Since  $\kh_\bulk\kh_\sigma \le  \frac{1}{2}$
by \eqref{eq:Vznorm-ell}--\eqref{eq:Vznorm-h},
\begin{align}
    1+ \kh_\bulk \kh_\sigma (1+|\varphi|/\kh_\bulk) )
    \le
    1 + \frac{1}{2} \frac{1}{\lambda} G^{-\frac{\lambda}{\kappa}} (B, \varphi)
    \le
    \frac{1}{\lambda} G^{-\frac{\lambda}{\kappa}} (B, \varphi).
\end{align}
If $j_\ox=0$
then we instead have two such factors.  Therefore, in any case,
\begin{align}
    \norm{K_{\joxm}(B)}_{T_\varphi(\kh)}
    \le
    \frac{1}{\lambda^2} G^{-2\frac{\lambda}{\kappa}} (B, \varphi)
     \norm{K_{\joxm,\bulk}(B)}_{T_\varphi(\kh)} .
\end{align}
We choose $\lambda = \kappa/4$ so that the bad sign in the
exponent of the regulator on the right-hand side becomes $-1/2$.
This is compensated by the change in regulator
from \eqref{eq:bulk-norm} to \eqref{eq:norm-with-half}.
The factor $\lambda^{-2} = (4/\kappa)^2$
is responsible for the appearance of the same factor
in the definition of $C_{\rg}'$ in \eqref{eq:CRGtilde}.
This completes the proof.
\end{proof}

The next theorem is what permits
the RG flow with observables to be extended beyond the coalescence scale.  It
verifies that the $\norm{\cdot}_{\Wkappa_+}$-norm of $K_+$ stays controlled whenever $(V,K) \in \domRG$ and the parameters $L, \kappa$ and $\tilde{g}$ satisfy the following hypotheses.
\begin{quote}
\begin{itemize}
\item[\customlabel{quote:assumPhi}{$\assumPhi$}]

Let $\tilde{\ka} \in [0,1)$,
let $L$ be sufficiently large, and let $\kappa$
be sufficiently small
(the latter depending only on $n$).
At scale $j$, let $\tilde{g} = \tilde{g}_j$ be sufficiently small depending on $L$
(and not depending on $j$),
$\ka \in \II_+ (\tilde{\ka})$,
$(V, K ) \in \domRG = \domRG_j$.

\end{itemize}
\end{quote}

\begin{theorem}
\label{thm:Phi^K_q0}
Let $d \ge 4$.  Let $j \in \{ \joxm  ,\ldots,  \Npp - 1\}$.  Assume
\ref{quote:assumPhi} holds at scale $j$.
There exists a positive constant $C_{\pt}$ such that for $\tilde{C}_{\rg} \ge 2 C_{\pt}$,
 $\Phi_{+}^K$ is a well-defined continuous map $\domRG \rightarrow \Wkappa$ (i.e., the integral \eqref{eq:K+ext_field} converges) that is also continuous in $\ka$,
and for all $(V,K) \in \domRG$ we have
\begin{align}
	\Phi^K_+(V,K) \in \cK_{+}   .  \label{eq:Phi^K_q0}
\end{align}
\end{theorem}

The following proposition shows that the initial condition $\nu_{1,N}(\ka)$ defining the
bulk RG flow also defines an RG flow when the observables are included.
Since the interval $\II_{\rm crit}$ depends on $N$, so does the choice of $\ka$ in the proposition.

\begin{proposition} \label{prop:stable_manifold}
Let $d \geq 4$,
let $L$ be sufficiently large (depending on $d,n$), and let $g >0$ be sufficiently small
(depending on $L$).
Let $C_{\rg}$ and $C_{\pt}$ be as in Proposition~\ref{prop:nu_c} and Theorem~\ref{thm:Phi^K_q0}, respectively and let
$\tilde{C}_{\rg} = \max\{ C_{\rg}'  ,2 C_{\pt} \}$
be chosen as in \eqref{eq:CRGtilde} (for the definition \eqref{eq:tilde_cK_def} of $\cK_j$).
Let $N \in \N$, let $\ka \in \II_{\rm crit}$, and set
$\tilde\ka=\max\{\ka,0\}$.  Then
$(V_{j}, K_{j})$ is an RG flow to scale $\Npp$ with mass $\ka$ and initial condition $(g, \nu_{1,\Npp} (\ka))$.
\end{proposition}

\begin{proof}
By the choice of the parameters,  we can utilise the previous results in this section.
With $\nu_{1,N}(\ka)$ given by \eqref{eq:nu_1_N_def-bis}, we
use Proposition~\ref{prop:nu_c} to produce a bulk flow to the coalescence
scale, and then we continue the flow beyond the coalescence scale using
\eqref{eq:bulkflow}.
The flow of $V$ is simply given by the bulk flow, as the observable terms
in $V$ do not evolve.
Proposition~\ref{prop:nu_c} tells us that the bulk flow remains in the RG domain.
By \eqref{eq:bulkflowdomain} and Lemma~\ref{lemma:bulk_to_obs_bound},  the coordinates with the observable terms also remain in the RG domain at scale $j = \joxm$.
Then, by Theorem~\ref{thm:Phi^K_q0},
we can iterate the non-perturbative RG map for $j \ge \joxm$
and $K$ will also remain in its domain.
\end{proof}

\subsection{Proof of Theorem~\ref{thm:Phi^K_q0} subject to
Propositions~\ref{prop:Phi+0}--\ref{prop:crucial-short-3}}
\label{sec:conditional-proof}

Since $K=0$ in the next proposition, the constant $C_{\pt}$ does not
depend on the constant $\tilde C_{\rg}$ in the definition of the domain of $K$.

\begin{proposition} \label{prop:Phi+0}
Assume \ref{quote:assumPhi} and $j_+ \ge j_\ox$.
There is an $L$-dependent constant $C_{\pt}$ such that
\begin{align}
	\norm{\Phi^K_+(V,0)}_{\Wkappa_{+}}\le C_{\pt}
	\tilde{\vartheta}^3_+	
	 \tilde{g}_{+}^3 \tilde{\scale}_+ (\kaa)
	.
\end{align}
\end{proposition}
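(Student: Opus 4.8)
The plan is to evaluate $\Phi_+^K(V,0)=K_+$ directly from \eqref{eq:K+ext_field} with $K=0$ and to bound each of its components $K_{+,*}$, $*\in\{\bulk,\o,\x,\ox\}$, separately. Writing $P(B)=\prod_{b\in\cB(B)}e^{-V_\bulk(b)}$ and expanding $e^{-V(b)}=e^{-V_\bulk(b)}\big(1+\1_{\o\in b}\varphi_b^{(1)}\sigma_\o+\1_{\x\in b}\varphi_b^{(1)}\sigma_\x+\1_{\o,\x\in b}(\varphi_b^{(1)})^2\sigma_\ox\big)$ via \eqref{eq:sigmaox}, and using Lemmas~\ref{lemma:Uplus_pt}--\ref{lemma:Uplus} (so that, for $K=0$, $V_+=V_{+,\bulk}-(\sigma_\o\1_{\o\in B}+\sigma_\x\1_{\x\in B})\varphi^{(1)}$ and $\delta\bar u_+=\delta\bar u_{+,\bulk}+\sigma_\ox C_+(\x)\1_{\o,\x\in B}$), one finds that $K_{+,\bulk}$ is exactly the bulk map $\Phi_{+,\bulk}^K(V_\bulk,0)$, and that
\begin{align}
K_{+,\o}(B) &= \varphi^{(1)}\1_{\o\in B}K_{+,\bulk}(B)+e^{-\delta\bar u_{+,\bulk}(B)}R_\o(B),\nnb
K_{+,\ox}(B) &= (\varphi^{(1)})^2\1_{\o,\x\in B}K_{+,\bulk}(B)+\varphi^{(1)}e^{-\delta\bar u_{+,\bulk}(B)}\big(R_\o(B)+R_\x(B)\big)\nnb
&\qquad {}+e^{-\delta\bar u_{+,\bulk}(B)}\big(S(B)-C_+(\x)\,\E_+\theta P(B)\big),
\end{align}
with $R_\z(B)=\E_+[\zeta_{b_\z}^{(1)}\theta P(B)]$, $S(B)=\E_+[\zeta_{b_\o}^{(1)}\zeta_{b_\x}^{(1)}\theta P(B)]$ (and $b_\o=b_\x=b_\ox$, $S(B)=\E_+[(\zeta_{b_\ox}^{(1)})^2\theta P(B)]$, when $j\ge j_\ox$), and $K_{+,\x}$ given by $K_{+,\o}$ with $\o\leftrightarrow\x$. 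The bulk piece is controlled by the $K=0$ case of the bulk single-step estimate of \cite{MPS23}, which gives $K_{+,\bulk}\in\cK_{\bulk,+}$ together with $\norm{K_{+,\bulk}(b)}_{T_0(\ell)}+\tilde g_+^{9/4}\scale_+^{\kaa_1}\norm{K_{+,\bulk}(b)}_{T^G(h)}\le C_{\rg}\tilde\vartheta_+^3\tilde g_+^3\scale_+^{\kaa_1}$; its contribution to $\norm{K_+}_{\Wkappa_+}$ is then handled once one checks $\tilde\scale_+(\kb)\,\scale_+^{\kaa_1}\le\tilde\scale_+(\kaa)$, which amounts to $(j+1)\kaa_1\ge(j_\ox-1)\kp_1+(j+2-j_\ox)\kp_2$ and follows from $\kaa_1>\max\{\kp_1,\kp_2\}$ in \eqref{eq:exbd1}.

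The first step of the argument proper is to show that $R_\o(B)=R_\x(B)=0$ identically, which immediately trivialises the $\sigma_\o$ and $\sigma_\x$ components. This is the observable analogue of the computation in the proof of Lemma~\ref{lem:initialbulkflow}: $\theta P(B)$ is a symmetric function of $(\zeta_b)_{b\in\cB(B)}$ and the law of $(\zeta_b)_{b\in\cB(B)}$ under $\E_+$ is exchangeable, so $\E_+[\zeta_{b_\o}^{(1)}\theta P(B)]$ equals $\E_+\big[\big(L^{-d}\sum_{b\in\cB(B)}\zeta_b^{(1)}\big)\theta P(B)\big]=0$ because each fluctuation sums to zero on $B$ --- equivalently $\sum_{b\in\cB(B)}C_+(b_\o,b)=0$, a special case of \eqref{eq:Csum0}. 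Hence $K_{+,\o}(B)=\varphi^{(1)}\1_{\o\in B}K_{+,\bulk}(B)$ and likewise for $\x$; by the product property of the $T_\varphi$-seminorm, $\norm{\varphi^{(1)}}_{T_0(\ell)}=\ell_{\bulk,+}$, the standard inequality absorbing the linear $\varphi^{(1)}$-prefactor into a fraction of the regulator, and $\norm{\sigma_\z}_{T_\varphi(\kh)}=\kh_{\sigma,+}$, together with \eqref{eq:Vznorm-ell}--\eqref{eq:Vznorm-h} (which show $\ell_{\bulk,+}\ell_{\sigma,+}$ and $h_{\bulk,+}h_{\sigma,+}$ are small), the $\sigma_\o$- and $\sigma_\x$-contributions to $\norm{K_+}_{\Wkappa_+}$ are dominated by $\norm{K_{+,\bulk}}_{\cW_{\bulk,+}}$ and hence lie well within $C_{\pt}\tilde\vartheta_+^3\tilde g_+^3\tilde\scale_+(\kaa)$.

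The substance of the proof, and the main obstacle, is the $\sigma_\ox$ component. With $R_\o=R_\x=0$, the remaining term in $K_{+,\ox}(B)$ is $e^{-\delta\bar u_{+,\bulk}(B)}\big(S(B)-C_+(\x)\E_+\theta P(B)\big)$. Gaussian integration by parts rewrites $S(B)=C_+(b_\o,b_\x)\E_+\theta P(B)-\sum_{w\in\cB(B)}C_+(b_\o,w)\E_+\big[\zeta_{b_\x}^{(1)}\,(\partial_1 V_\bulk)(\varphi+\zeta_w)\,\theta P(B)\big]$, and since $C_+(b_\o,b_\x)=C_+(\x)$ (the two points share the same coalescence structure) the leading term cancels against $-C_+(\x)\E_+\theta P(B)$ --- this cancellation is precisely the perturbative identity $\delta u_{\pt,\ox}=C_+(\x)$ of Lemma~\ref{lemma:Uplus_pt}. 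Splitting the remaining sum into $w=b_\x$ and $w\ne b_\x$, using exchangeability once more and $\sum_{w\in\cB(B)}C_+(b_\o,w)=0$, collapses it to $S(B)-C_+(\x)\E_+\theta P(B)=-C_+(\x)\,D(B)$ with $D(B)=\E_+\big[\zeta_{b_\x}^{(1)}\big((\partial_1 V_\bulk)(\varphi+\zeta_{b_\x})-(\partial_1 V_\bulk)(\varphi+\zeta_{w_0})\big)\theta P(B)\big]$, $w_0\in\cB(B)$ any block $\ne b_\x$ (and the analogue with $b_\o=b_\x=b_\ox$ when $j\ge j_\ox$); here $\partial_1 V_\bulk$ denotes the derivative of $V_\bulk$ in its first field component. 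One then estimates $D(B)$ in the $T_0(\ell)$- and $T^G(h)$-seminorms by Taylor-expanding $\theta P(B)$, applying Wick's rule, and invoking the covariance bounds of Appendix~\ref{app:covariance}, with the routine parts deferred to Appendix~\ref{sec:pfSProps}: since $\partial_1 V_\bulk$ carries a factor $g_j$ (plus the subleading $\nu_j$-term) and field growth $O(\kh^3)$, and since the block difference $C_+(b_\x,b_\x)-C_+(b_\x,w_0)$ supplies an additional factor of order $\scale_j$, one obtains $\norm{D(B)}_{T_0(\ell)}=O(\tilde g_j\scale_j)$ and a matching large-field bound. Multiplying by $|C_+(\x)|\asymp L^{-(d-2)j}$ --- which through $L^{j_\ox}\asymp|x|$ furnishes the $|x|^{-(d-2)}$ decay and, together with large $L$ and the factors $2^{j-j_\ox}$ in $\kh_{\sigma,+}$, the geometric factor $4^{-(j-j_\ox)}$ --- and by $\kh_{\sigma,+}^2$, one arrives at the asserted bound $C_{\pt}\tilde\vartheta_+^3\tilde g_+^3\tilde\scale_+(\kaa)$ for the $\sigma_\ox$-part. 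The hard part is exactly this last step of power counting: verifying that the $\scale_j$-decay freed up by the perturbative cancellation and the block difference survives being multiplied by the observable norm parameters $h_{\sigma,j},\ell_{\sigma,j}$ (which for $d>4$ are deliberately kept relatively small above the coalescence scale in order to cope with $g_j$ being a dangerous irrelevant variable) and by the $\tilde\scale$-weights, and that the resulting inequality holds uniformly in $N$, in $j_\ox$ and in $\x$ --- which is precisely what the exponent choices of Section~\ref{sec:prrt}, especially \eqref{eq:exbd2}--\eqref{eq:exbd4}, are calibrated to guarantee. I expect the large-field $T^G(h)$ estimates, and the $d>4$ case where several competing powers of $L$ must be balanced, to require the most care.
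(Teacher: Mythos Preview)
Your approach is correct in principle but takes a genuinely different and considerably more laborious route than the paper. The paper's proof is a single sentence: since $K=0$ implies $\hat K=0$ and hence $S_1=0$ in the decomposition $K_+=S_0+S_1$ of \eqref{eq:S0S1_decomp}--\eqref{eq:S1_defi}, the bound follows immediately from Proposition~\ref{prop:S0_obs}. That proposition, in turn, is proved in Appendix~\ref{sec:pfSProps} by repeating the bulk argument of \cite[Proposition~9.2]{MPS23} verbatim, now in the full $T_\varphi(\kh)$-norm on $\cN$ that includes the observables. The point is that the observable norm parameters $\ell_\sigma,h_\sigma$ are calibrated (Lemma~\ref{lemma:V_bounds}, \eqref{eq:Vznorm-ell}--\eqref{eq:Vznorm-h}) so that the observable terms in $V$ are small perturbations of $V_\bulk$ in norm; once this is established, the stability estimates (Lemmas~\ref{lemma:hat_K_bound}--\ref{lemma:stability_estimate_obs}) go through with only cosmetic changes, and all four observable components are bounded simultaneously without ever being isolated.

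Your route---decomposing $K_+$ by observable grading, exploiting the exchangeability argument to kill $R_\o,R_\x$, and reducing the $\sigma_\ox$ part to $C_+(\x)D(B)$ via integration by parts---is valid and gives more explicit insight into where each piece of the bound comes from. The $R_\z=0$ step and the cancellation of $C_+(\x)\E_+\theta P(B)$ are correct. But you then have to carry out by hand exactly the power counting that the paper's norm encodes once and for all: verifying that $\ell_{\sigma,+}^2|C_+(\x)|\,\|D(B)\|_{T_0(\ell)}$ and its $T^G(h)$ counterpart land inside $\tilde g_+^3\tilde\scale_+(\kaa)$ for every $d\ge 4$ and every $j\ge j_\ox-1$. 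You acknowledge this is the hard part, and your sketch of the $D(B)$ estimate (``Taylor-expanding $\theta P(B)$, applying Wick's rule\ldots'') is not yet at the level of a proof; in particular the claimed extra factor $\scale_j$ from a ``block difference'' $C_+(b_\x,b_\x)-C_+(b_\x,w_0)$ does not arise in $D(B)$ as written---$D(B)$ is a difference of field values, not covariances, so extracting that gain needs a further integration by parts you have not spelled out. The paper's approach sidesteps all of this by doing the exponent bookkeeping once, at the level of the norm design in Section~\ref{sec:prrt}.
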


The next proposition is a crucial contraction estimate.

\begin{proposition} \label{prop:crucial-short-3}
Assume \ref{quote:assumPhi} and $j_+ \ge j_\ox$.
There are $L$-independent constants $C=C(d)>0$ and $t=t(d)>0$ such that,
for $\norm{\dot{K}}_{\Wkappa} < \infty$,
\begin{align}
	\norm{D_K \Phi_+^K (V,  K; \dot K) }_{\Wkappa_+}
    & \le
    	C  \norm{ \dot{K}}_{\Wkappa}
        L^{- (d-4) \kaa_2 - t}.
    \label{eq:crucial-short}
\end{align}
\end{proposition}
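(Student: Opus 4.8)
The plan is to differentiate the explicit formula \eqref{eq:K+ext_field} for $\Phi_+^K$, organise the derivative into its bulk part and its three observable parts, invoke the already-established bulk contraction for the former, and for the latter extract the required smallness from the interplay of the order-one vanishing produced by $1-\Loc$ on the observable components, the scale behaviour of the observable norm parameters $h_{\sigma,j},\ell_{\sigma,j}$, and the reblocking and $\scale_j$-bookkeeping inherited from \cite{MPS23}.

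Differentiating \eqref{eq:K+ext_field}, using $\E_+\theta\prod_{b\in\cB(B)}(e^{-V(b)}+K(b)) = e^{\delta\bar{u}_+(B)}(e^{-V_+(B)}+K_+(B))$, and writing $U_+=\Phi_{\pt}(V-\Loc(e^VK))$, one finds
\begin{align*}
D_K\Phi_+^K(V,K;\dot K)(B)
&= e^{-\delta\bar{u}_+(B)}\,\E_+\theta\sum_{b\in\cB(B)}\dot K(b)\prod_{b'\neq b}(e^{-V(b')}+K(b'))
\\
&\quad + D_K U_+(\dot K)\,e^{-V_+(B)} - D_K\delta\bar{u}_+(\dot K)\,K_+(B),
\end{align*}
where $D_K U_+(\dot K)=-D\Phi_{\pt}\big(V-\Loc(e^VK);\Loc(e^V\dot K)\big)$ is linear in $\Loc(e^V\dot K)$, whose only observable coordinate is the $\sigma_\ox$ constant $-\dot K_\ox(b_\ox,0)$ by Lemma~\ref{lemma:Uplus}. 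Applying $\pi_\bulk$ and using Definition~\ref{def:SK}, the bulk component equals $D_{K_\bulk}\Phi_{+,\bulk}^K(V_\bulk,K_\bulk;\dot K_\bulk)$, which the $(p,q)=(0,1)$ case of \cite[Theorem~5.7]{MPS23} controls with contraction factor $L^{-(d-4)\kaa_1-t'}$; since $\kaa_1\ge\kaa_2$ (compare \eqref{eq:kbd1}--\eqref{eq:kbd2}, using $\tau<1$) and since the discrepancy between the bulk weighting $\scale_j^{\kb_1}$ and the weighting $\tilde\scale_j(\kb)$ used in \eqref{eq:Wkappa-def} changes by only $L^{(d-4)(\kb_1-\kb_2)}$ per scale (with $\kp_2\le\kp_1$), the bulk contribution satisfies \eqref{eq:crucial-short} after absorbing a harmless power of $L$ into $t$.

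For the observable parts the key point is a cancellation: the $\Tay_0$ part of the observable activities cancels between the main term and $D_KU_+(\dot K)e^{-V_+(B)}$, so that, after expanding $e^{-V(b')}$ in the observables and collecting, $\pi_\ox D_K\Phi_+^K(V,K;\dot K)$ (and likewise $\pi_\o,\pi_\x$) is a sum of terms, each $\E_+\theta$ of a product of one observable factor of $1-\Loc$ type with bulk factors from the remaining blocks, plus genuinely higher-order corrections (through $K_+$ in the last term and through the covariance in $\Phi_{\pt}$). Since $1-\Loc$ replaces $\Tay_0$ on the observable components, the surviving activity vanishes to order one at $\varphi=0$ and hence gains one power of $\kh_{\bulk,+}/\kh_{\bulk,j}$ ($=L^{-(d-2)/2}$ for $\ell$, $\asymp L^{-d/4}$ for $h$) when re-expressed at scale $j+1$. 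Combining this with the change of observable parameters, $\ell_{\sigma,+}/\ell_{\sigma,j}=L^{-(d-6)/2}L^{-\tau}(1+o(1))$ and $h_{\sigma,+}/h_{\sigma,j}=2(1+o(1))$ (first power for $\pi_\o,\pi_\x$, second power for $\pi_\ox$), together with the reblocking factor and the bulk $\scale_j$-decay of \cite{MPS23}, the net per-step factor is bounded up to $L$-independent constants by $L^{-(\tau+d-4)}$ for the $T_0(\ell)$ sub-norm and by $L^{-(d-4)\kp_2}$ for the $\tilde g^{9/4}\tilde\scale_j(\kb)$-weighted $T^G(h)$ sub-norm (the latter after cancelling the weighting ratio $L^{-(d-4)\kb_2}$). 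Both are $\le L^{-(d-4)\kaa_2-t}$ for a fixed $t=t(d)>0$ precisely by the exponent inequality \eqref{eq:exbd2}, with \eqref{eq:exbd3}--\eqref{eq:exbd4} handling the $\pi_\o,\pi_\x$ pieces and the $j=j_\ox-1$ transition \eqref{eq:VKjox-1}; $C$ and $t$ are $L$-independent since $L$ enters only through the constants just absorbed.

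The routine analytic inputs — boundedness of $\Phi_{\pt}$ and $D\Phi_{\pt}$, the product and $1-\Loc$ bounds for the $T_\varphi(\kh)$ semi-norm with observables, and the convergence and regulator bounds for $\E_+\theta$ — are observable extensions of the corresponding statements of \cite{BBS-brief,MPS23} and are collected in Appendix~\ref{sec:pfSProps}. I expect the main obstacle to be the $\sigma_\ox$ component: unlike the $d=4$ analyses \cite{BBS-saw4,ST-phi4}, only a single power of the bulk field rescaling is available to beat the growth of $h_{\sigma,j}^2$ while still leaving a power of $L^{-1}$ of slack, and it is the requirement that this margin be positive that forces the modified definitions \eqref{eq:h-sig-alt}--\eqref{eq:ell-sig-2} of the observable parameters for $d\ge6$ and the exponent restrictions of Section~\ref{sec:prrt} — the manifestation here of $\varphi^4$ being a dangerous irrelevant variable for $d>4$.
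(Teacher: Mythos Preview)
Your proposal shares the essential mechanism with the paper's proof --- the componentwise $(1-\Loc)$ bound on the observable sectors, the gain of one power of $\kh_{\bulk,+}/\kh_{\bulk}$ from the order-one vanishing there, the scale ratios \eqref{eq:h-ratios}--\eqref{eq:h-ratios-2}, and the reduction to the exponent inequality \eqref{eq:exbd2} --- so the heart of the argument is right. The organisation, however, differs from the paper's in two respects, and one of them creates a genuine gap.

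First, the black-box invocation of \cite[Theorem~5.7]{MPS23} for the bulk sector does not transfer to the $\Wkappa$-norm as you claim. The bulk result bounds $\|\cdot\|_{\cW_{\bulk,+}}$ (weight $\scale_{j+1}^{\kb_1}$), but $\|\cdot\|_{\Wkappa_{+}}$ uses weight $\tilde\scale_{j+1}(\kb)$, and the conversion factor $\tilde\scale_{j+1}(\kb)/\scale_{j+1}^{\kb_1} = L^{(d-4)(\kb_1-\kb_2)(j+2-j_\ox)}$ is not a single ``harmless power of $L$'' but grows with $j-j_\ox$. What is true is that the \emph{proof} of \cite[Theorem~5.7]{MPS23} produces, before the weighting, the raw per-block contractions $L^{-3(d-2)}$ (for $\ell$) and $L^{-3d/2}$ (for $h$); combining these with the \emph{new} weight ratio $L^{-(d-4)\kb_2}$ gives what is needed. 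The paper does not attempt this norm conversion at all: instead it proves a single unified Lemma~\ref{lem:1-Loc} on the full space $\cN$ with the full $T_\varphi(\kh)$-norm, from which the bulk contribution ($L^{-(2d-6)}$, $L^{-d/2}$ after reblocking) is automatically subdominant to the observable contribution ($L^{-(\tau+d-4)}$, $L^{-d/4}$) by \eqref{eq:exbd4}. This is both simpler and avoids the norm-mismatch issue entirely.

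Second, you differentiate \eqref{eq:K+ext_field} at a general $K\in\domRG$, which produces the cross-term $-D_K\delta\bar u_+(\dot K)\,K_+(B)$ that must then be bounded using an a~priori estimate on $K_+$ (available from Propositions~\ref{prop:S0_obs}--\ref{prop:S1_obs}, but with an $L$-dependent constant that must be beaten by $\tilde g$-smallness). The paper sidesteps this: it first extracts the exact linear part at $K=0$ via the decomposition $K_+=S_0+S_1$, $S_1=e^{\delta\bar u_{\pt}(B)}\E_+\theta\cL K+\cE$ with $D_K\cE(V,0)=0$ (Lemma~\ref{lemma:crucial_contraction_pre_obs}), bounds $\E_+\theta\cL\dot K$ by Proposition~\ref{prop:cL_contract} (which is where Lemma~\ref{lem:1-Loc} enters), and only then passes from $K=0$ to general $K$ by the Mean-Value Theorem together with the second-derivative bound of Proposition~\ref{prop:Phi^K_estimate-extended}. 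This two-step route keeps the linear analysis clean and confines all ``higher-order'' terms to a single MVT remainder controlled by $\tilde g$-smallness.
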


Propositions~\ref{prop:Phi+0} and \ref{prop:crucial-short-3} are proved
in Section~\ref{sec:4}.
Given Propositions~\ref{prop:Phi+0}--\ref{prop:crucial-short-3}, the proof of
Theorem~\ref{thm:Phi^K_q0} is straightforward.

\begin{proof}[Proof of Theorem~\ref{thm:Phi^K_q0}]
By the triangle inequality,
\begin{equation}
\label{eq:DKPhih}
    \norm{\Phi_+^K (V,K)}_{\Wkappa_+}
    \le
    \norm{\Phi_+^K (V,0)}_{\Wkappa_+}
    +
    \norm{ \Phi_+^K (V,K) - \Phi_+^K (V,0) }_{\Wkappa_+}
    .
\end{equation}
As in \eqref{eq:CRGtilde}, we take $\tilde{C}_{\rg} \ge 2 C_{\pt}$,
with $C_{\pt}$ the constant of Proposition~\ref{prop:Phi+0}.
Then,  by Proposition~\ref{prop:Phi+0},
\begin{align}
	\label{eq:Phi_K_bd_at_zero_pf}
	\norm{ \Phi_+^K (V,0)}_{\Wkappa_+}
		& \le
			\frac12 \tilde{C}_{\rg}
			\tilde{\vartheta}^3_+				
			\tilde{g}_+^3 \tilde{\scale}_+ ({\kaa}) .
\end{align}
For the second term in \eqref{eq:DKPhih},  by
writing the difference as the integral of a derivative, and by
Proposition~\ref{prop:crucial-short-3},
\begin{align}
	\norm{\Phi_+^K (V,K) - \Phi_+^K (V,0) }_{\Wkappa_+}
		& \leq \norm{K}_{\Wkappa} \sup_{t\in [0,1]} \norm{D_K \Phi_+^K (V, tK)}_{\Wkappa \rightarrow  \Wkappa_+}
	\nnb
		& \leq \tilde{C}_{\rg}
		\tilde{\vartheta}^3_+		
		\tilde{g}^3 \tilde{\scale} ({\kaa})
		C	L^{-  (d-4) \kaa_2 - t}.
\end{align}
Since $\tilde{g} \le 2 \tilde{g}_+$ and
$\tilde{\scale}_+ ({\kaa}) = \tilde{\scale} ({\kaa})L^{- (d-4) \kaa_2}$
 (by \eqref{eq:rhotil-def}), this gives
\begin{align}
	\norm{\Phi_+^K (V,K) - \Phi_+^K (V,0) }_{\Wkappa_+} & \leq
    (8CL^{-t})\tilde{C}_{\rg}
	\tilde{\vartheta}^3_+
     \tilde{g}_+^3 \tilde{\scale}_+ ({\kaa}).
\end{align}
The proof of \eqref{eq:Phi^K_q0} is completed by taking $L$ large enough that $8CL^{-t} \le \frac 12$.

Finally,  we verify the continuity of $\Phi^K_+$ in $V,K$ and $\ka$.  Continuity in $V,K$ is evident from \eqref{eq:U+def}--\eqref{eq:K+ext_field}.
For continuity in $\ka$, it is shown in \cite[Lemma~8.1]{MPS23} that
 for any polymer activity $F$ the expectation $\E_+ \theta F^{B}$ is
differentiable and therefore
continuous in $\ka$ (where $\E_+$ depends on $\ka$ via the covariance $C_+$).  Thus we see from the definitions \eqref{eq:U+def}--\eqref{eq:K+ext_field} that $\Phi_+$ is also continuous in $\ka$.
\end{proof}

\subsection{Proof of Theorem~\ref{thm:RG-obs} subject to
Theorem~\ref{thm:Phi^K_q0}}
\label{sec:pf_thm:RG-obs}

Fix $N \in \N$ and $\ka \in \II_{\rm crit}$ (recall the definition
\eqref{eq:I_crit_asymp}).
Let $\tilde\ka=\max\{ \ka, 0\}$.
Then Proposition~\ref{prop:stable_manifold} gives an RG flow to scale $\Npp$
with mass $\ka$ and initial condition $(g,\nu_{1,N}(\ka))$.

\begin{proof}[Proof of \eqref{eq:ZNy}]
Since Proposition~\ref{prop:stable_manifold} gives an RG flow to scale $\Npp$,
\eqref{eq:ZNy} follows from \eqref{eq:RGflowN}.
\end{proof}

\begin{proof}[Proof of \eqref{eq:qN-thm}]
The claim \eqref{eq:qN-thm} is that
there is a function $\vv : \Lambda_\infty \to \R$,
with $\vv (\x) = O(g)$ (uniformly in $x$) and $\lim_{|x| \rightarrow \infty} |\x|^{d-2} \vv (\x) = 0$, such that
\begin{align}
    & u_{N,\ox} = C_{\ka,\le N} (\x) + \vv (\x ) +  \frac{1}{|x|^{d-2}} \cO_{N,\x}  .
    \label{eq:uNoxAndCv}
\end{align}
To prove this, we start by recalling that, as indicated below \eqref{eq:ZN},
$\bar{u}_{\Npp} = \sum_{j=1}^{\Npp} \delta \bar{u}_j$.
The formula for $\delta u_{+,\ox}$ in \eqref{eq:u+def} implies that
\begin{align}
\label{eq:uNoxform}
    u_{N,\ox}   &=    \sum_{j= j_\ox}^N \big(  C_{\ka,j}(\x)  + K_{j,\ox}(b_{j,\ox},0 ; \ka) \big) ,
\end{align}
where we now make the dependence of $K_j$ on $\ka$ explicit.
Since $C_{\ka, j} (\x)=0$ for $j<j_\ox$, the first term on the right-hand side
of \eqref{eq:uNoxform} sums
to $C_{\ka, \le N}(\x)$.
To focus on what is important, we abbreviate the notation to
$A_j(\ka)=K_{j,\ox}(b_{j,\ox},0 ; \ka)$.

We define
\begin{align}
	\vv (\x) &=  \sum_{j = j_\ox }^{\infty} A_j(0 )
\label{eq:vvdefi}
\end{align}
The sum \eqref{eq:vvdefi} will be shown below to converge.  Then
\begin{align}
\label{eq:vv-decomp}
     \sum_{j=j_\ox}^N A_j( \ka)
     &=
     \vv(\x) - \sum_{j=N+1}^\infty A_j( 0)
     +
     \sum_{j=j_\ox}^N ( A_j( \ka) - A_j(0)).
\end{align}
It suffices now to prove that the sum \eqref{eq:vvdefi} converges,
that $\vv (\x) = O(g)$ and
 $\lim_{|\x| \rightarrow \infty} |\x|^{d-2} \vv (\x )= 0$,
and that the last two terms in \eqref{eq:vv-decomp} are
$|x|^{-(d-2)}\cO_{N,\x}$.

We begin with the convergence and properties of $\vv$.
By the definition of the semi-norm in \eqref{eq:T_norm},
\begin{align}
	\norm{K_{j, \ox} (b_{j,\ox} ; \ka)}_{T_0 (\ell)} &\leq
			\ell_{\sigma,j}^{-2} \norm{K_{j} (b_{j,\ox})}_{T_0 (\ell)}  ,
	\nnb
		& \le \tilde{g}_j^{-2} L^{(d-6) j} L^{2 \tau (j - j_\ox +1 )}  \times
      \tilde{C}_{\rg}
      \tilde{g}_j^3 L^{-(d-4) {\kaa}_1 \joxm }
      L^{-(d-4) {\kaa}_2  (j - \joxm) }
		\nnb
		&
		\le O_L(1)
    \tilde{g}_j L^{- y_d (j_\ox ) } L^{ -z_d  (j - j_\ox )}
\end{align}
where in the last line we introduced
the exponents $y_d = (d-4) \kaa_1 - (d-6)$ and $z_d = (d-4) \kaa_2 - (2\tau + d-6)$,
and used the fact that $\tilde{C}_{\rg}$ is an $L$-dependent constant.
Since $z_d$ is positive due to \eqref{eq:exbd3},
for $j' \ge j_\ox$ we have
\begin{align}
	\sum_{j= j' }^\infty | A_j (\ka) |
	= \sum_{j= j' }^\infty | K_{j, \ox} (b_{j,\ox} ,  0  ; \ka ) |
		\le O_L (1) \tilde{g}_{j'}  L^{- y_d j_{\ox}  }   L^{ -z_d  (j' - j_\ox )}
    .
    \label{eq:KjoxAbsconv}
\end{align}
By \eqref{eq:KjoxAbsconv},
the sum \eqref{eq:vvdefi}
defining $\vv (\x)$ is absolutely convergent and $\vv(\x) = O(g)$.
Also, it follows from \cite[Lemma~6.5]{MPS23}
and the fact that $j \wedge j_{\ka} \sim j$
for $j \le N$ (since $\ka \in \II_{\rm crit}$),
that $\tilde g_j \asymp j^{-1}$ for $d=4$ and $\tilde g_j \asymp g$ for $d>4$.
For $d=4$,  this implies that $\tilde{g}_{j_\ox } \le O_L (\log |x|)^{-1}$
and $y_d = 2$,
which implies that
\begin{align}
	|\vv (\x)| \le O_L(1)  \tilde{g}_{j_\ox}  L^{- 2 j_{\ox}}
		\le | \x|^{-2} O_L \big( ( \log | \x|  )^{-1} \big)
		.
\end{align}
For $d >4$,
we instead have $\tilde g_{j_{\ox}} = O(g)$, so
\begin{align}
\label{eq:K0x-d5}
	|\vv (\x )| \le
	O_L(1)
    \tilde{g}_{j_\ox} L^{- y_d  j_{\ox} } \le O_L (g) L^{- y_d  j_{\ox} }	
	.
\end{align}
Since $y_d = (d-2) + (d-4) (\ka_1 -2) >  d-2$ due to \eqref{eq:exbd5},
$\lim_{|\x|\rightarrow \infty} |\x|^{d-2} \vv (\x ) = 0$.
Thus we have proved that $\vv$ converges and satisfies all the required properties.

For the term $\sum_{j=N+1}^\infty A_j( 0)$ in \eqref{eq:vv-decomp}, we
see from \eqref{eq:KjoxAbsconv} that its absolute value is bounded
above by an $L$-dependent multiple of
$\tilde{g}_{N}  L^{- y_d j_{\ox}  }   L^{ -z_d  (N - j_{\ox} )}$.
As in the previous paragraph, this is $|x|^{-(d-2)}\cO_{N,x}$, as required.
Finally, it remains to prove that
\begin{align}
	|x|^{d-2} \sum_{j=j_\ox }^{N} \Big[ A_{j}( \ka) -   A_{j}( 0) \Big]
    =\cO_{N,x} .
\end{align}
The  bounds in the previous paragraph
prove that the left-hand side goes to zero as $|x|\to\infty$
even without any cancellation from the subtraction.  On the other hand,
as $N \to \infty$ with bounded $x$, the left-hand side goes to zero
by dominated convergence since $A_j$ is continuous in the mass
and since $\ka \to 0$ as $N \to\infty$ because
$\ka  \in \II_{\crit}$.
This completes the proof of \eqref{eq:qN-thm}.
\end{proof}

\begin{proof}[Proof of \eqref{eq:KNoxbd}]
By definition of the norms,
\begin{align}
	|K_{N,\ox} (\Lambda_N,  y)|
		& \le h_{\sigma,N}^{-2} \norm{ K_{N,\ox} }_{T^{G_N} (h_N)}
	 G_N^{\frac{1}{2}}   (\Lambda_N, y) \nnb
		& \le h_{\sigma,N}^{-2} \tilde{g}_N^{-9/4}  \tilde{\scale}_N ( \kb )^{-1}  \norm{ K_{N} }_{\Wkappa_N}
		 G_N^{\frac{1}{2}}   (\Lambda_N, y)
	.
\end{align}
By Proposition~\ref{prop:stable_manifold}, we have $K_N \in \cK_N$
and hence
$
	\norm{ K_{N} }_{\Wkappa_N}
		\le \tilde{C}_{\rg}   \tilde{g}_N^3 \tilde\scale_{N} ({\kaa})
$,
so
\begin{align}
	h_{\sigma,N}^{-2} \tilde{g}_N^{-9/4}  \tilde{\scale}_N ( {\kb} )^{-1}  \norm{ K_{N} }_{\Wkappa_N}
		& \le h_{\sigma,N}^{-2}  \tilde{C}_{\rg} \tilde{g}_N^{3/4} \scale_{\joxm}^{\kp_1} \scale_{N- \joxm}^{\kp_2}	\nnb
		& \le
		O_L(1) \tilde{g}_N^{1/4}
    L^{ - \left( (d-4) \kp_1 + \frac{d}{2} \right) j_\ox }
    \big( 4 L^{(d-4) \kp_2} \big)^{-(N-j_\ox)}.
\end{align}
For $d=4$, we obtain
\begin{align}
\label{eq:KNoxy}
	|K_{N,\ox} (\Lambda_N,  y)|
		\le
		O_L(1)
    \tilde{g}_N^{1/4}
|\x |^{-2} 4^{-(N-j_{\ox})}
		 G_N^{\frac{1}{2}}   (\Lambda_N, y)
	,
\end{align}
while for $d \ge 5$, \eqref{eq:exbd6} gives
\begin{align}
\label{eq:KNd6}
	|K_{N,\ox} (\Lambda_N,  y)|
		\le
		O_L(1)
\tilde{g}_N^{1/4} |\x |^{- (d- \frac{7}{4}) } \big( 4 L^{\frac{d}{4}  (1-\epsilon_1 (d)) } \big)^{-(N-j_{\ox})}
		 G_N^{\frac{1}{2}}   (\Lambda_N, y)
	.
\end{align}
Since $k_0 \le 1/2$,
by the definition \eqref{eq:Gj}
we have $G_N^{\frac{1}{2}} (\Lambda_N, y)  \le  e^{- \kappa |y / \hh_N|^4}$,
so the bounds \eqref{eq:KNoxy}--\eqref{eq:KNd6}
are both smaller than or equal to the claimed error.
\end{proof}

\section{A non-perturbative RG step}
\label{sec:4}

It remains to prove Propositions~\ref{prop:Phi+0}--\ref{prop:crucial-short-3}.
We prove Proposition~\ref{prop:Phi+0} in
Section~\ref{sec:RGextendednorm}, with the help of estimates from
Appendix~\ref{sec:pfSProps}.
To handle derivatives of the RG map with respect to $K$, we use an
extended norm which incorporates bounds on $K$-derivatives, and we therefore
formulate our proof of Proposition~\ref{prop:Phi+0} in terms of the extended norm.
Proposition~\ref{prop:Phi+0} is a special case of the more
general Proposition~\ref{prop:S0_obs}.
The crucial contraction Proposition~\ref{prop:crucial-short-3}
in proved in Section~\ref{sec:contractions}.

We follow the same approach used to prove the restricted versions of
Propositions~\ref{prop:Phi+0}--\ref{prop:crucial-short-3} for the bulk,
in \cite[Section~9]{MPS23}.  The new feature here is that we also have the
observables, and a new norm that incorporates estimates on the observable components
of $\cN$.  However the norm on the larger space $\cN$ is very efficient, in the sense that
it is possible to prove analogous statements to those used for the bulk
proofs (such as stability and related estimates) with the new norm.
The analogous statements are so analogous that the proofs for the bulk can
then be followed in the new norm with only minor adjustments.

\subsection{Extended norm}
\label{sec:ext_norm}

We extend the semi-norms to include $K$-derivatives, as in \cite[Section~8.1]{MPS23}
(see also \cite[Section~9.2]{BBS-brief}).
It is simpler here since we do not include derivatives with respect to
$V$ or $\ka$ as was done in \cite{MPS23}.

Given $\lambda_K > 0$,
let $\cZ$ be the normed space $\cZ =\R^n \times \cY(\lambda_K)$, where
the norm of $\varphi\in\R^n$ is $|\varphi|/\hh$,
$\cY (\lambda_K)$ is the space $\cK$ with norm
$\norm{K}_{\cW^\kappa}/\lambda_K$, and
$\|(\varphi, y)\|_{\cZ} = \max\{ \frac{|\varphi|}{\mathfrak{h}},  \norm{y}_{\cY} \}$.
We write $y$ in place of $K$ when it is regarded
as an element of $\cY(\lambda_K)$.
For a smooth function $F:\cZ \to \R$,
and for $b \in \cB$ and $\varphi \in \R^n$, we define extended norms as
\begin{align}
\label{eq:ext-norm}
	\norm{F (b, \varphi)}_{T_{\varphi, y} (\kh, \lambda_K)}  & = \sum_{p=0}^{\infty} \frac{1}{p !}  \norm{D_y^p F (b, \varphi) }_{T_{\varphi} (\kh)},
	\\
	\norm{F (b)}_{T_{y}^G (h, \lambda_K)}	&= \sup_{\varphi \in \R^n} G(b, \varphi)^{-1} \norm{F (b, \varphi)}_{T_{\varphi, y} (h, \lambda_K)}, \\
	\norm{F}_{\Wkappa_y (\lambda_K)}	
&= \max_{b \in \cB} \; \Big( \norm{F (b)}_{T_{0,y} (\ell, \lambda_K)} +  \tilde{g}^{9/4} \tilde{\scale} (\tilde{\kb}) \norm{F (b)}_{T_{y}^G (h, \lambda_K)} \Big).
\end{align}
It then follows (a special case
of \cite[(9.2.8)]{BBS-brief}) from the definition  that
\begin{align}
	\norm{D_K^q F (b)}_{T_\varphi (\kh)} & \le \frac{q !}{\lambda_K^q} \norm{F (b)}_{T_{\varphi,y} (\kh,\lambda_K)}
	.
	\label{eq:DKF}
\end{align}
We also include the case $\lambda_K=0$ by interpreting this
as involving only the term $p=0$ in \eqref{eq:ext-norm} (with $K$ held fixed);
in this case the extended norms do not involve the $K$-derivative
and are the same as the norms defined in Section~\ref{sec:3}.

General facts about the extended norm can be found in
\cite[Section~7]{BBS-brief}.
In particular, as in \cite[Lemma~7.1.3]{BBS-brief},
the product property holds:
\begin{align}
	\norm{F_1 F_2}_{T_{\varphi, y} (\kh, \lambda_K)} \le \norm{F_1}_{T_{\varphi, y} (\kh, \lambda_K)}  \norm{F_2}_{T_{\varphi, y} (\kh, \lambda_K)}.
	\label{eq:subx}
\end{align}

\subsection{Extended norm of the RG map: proof of Proposition~\ref{prop:Phi+0}}
\label{sec:RGextendednorm}

For $b \in \cB$, we define
\begin{align}
\label{eq:hatV_hatK_definition}
	\hat{V} = V - Q
    ,
    \qquad Q (b) = \Loc (e^V K) (b),
	\qquad
	\hat{K} = K - (e^{-\hat{V}} - e^{-V}).
\end{align}
For $B \in \cB_+$ and for a set $X \subset \cB(B)$,
we write $|X|$ for the number of blocks in $X$ and define
\begin{align}
	\hat{V} (B \backslash X) = \sum_{b\in \cB (B\backslash X)} \hat{V} (b), \qquad \hat{K}^X = \prod_{b\in X} \hat{K} (b).
\end{align}
With the above definitions,
it follows from the definition of $K_+$ in \eqref{eq:K+ext_field} that
\begin{align}
	& K_+ = \Phi_+^K (V,K) = S_0 + S_1,
	\label{eq:S0S1_decomp}
\end{align}
where
\begin{align}
    S_0 &= e^{- \delta \bar u_{+}  (B)}  \Eplus \Big( e^{- \theta \hat{V}(B)} - e^{-U_{+} (B)} \Big) 	,
    \label{eq:S0_defi}	\\
    S_1 &= e^{- \delta \bar u_{+}  (B)} \sum_{X\subset \cB (B), |X| \geq 1} \Eplus \theta \Big( e^{-\hat{V} (B \backslash X)} \hat{K}^X \Big).
	\label{eq:S1_defi}
\end{align}
The next two propositions provide bounds on $S_0$ and $S_1$.
Since $S_1=0$ when $K=0$,
Proposition~\ref{prop:S0_obs}
immediately gives Proposition~\ref{prop:Phi+0}.
Since $K$ plays no role in the bound \eqref{eq:S0_K0},
the constant $C_{\pt}$ does not depend on $\tilde C_{\rg}$ (it does depend on $L$).

\begin{proposition} \label{prop:S0_obs}
Assume \ref{quote:assumPhi}, $j_+ \ge j_\ox$, and suppose that
$\lambda_K \leq  \tilde{g} \scale$.
Then
\begin{align}
	\norm{S_0}_{\Wkappa_{y,+} (\lambda_K)}
		\le C_L \tilde{\vartheta}^3_+ \tilde{g}_{+}^3 \tilde{\scale}_+ (\kaa)
	.
	\label{eq:S0_bound}
\end{align}
In particular, there is a constant $C_{\pt}$, which is independent of $\tilde C_{\rg}$,
such that
\begin{align}
	\norm{S_0(V,0)}_{\Wkappa_{+}}
		\le C_{\pt} \tilde{\vartheta}^3_+ \tilde{g}_{+}^3 \tilde{\scale}_+ (\kaa)
	.
	\label{eq:S0_K0}
\end{align}
\end{proposition}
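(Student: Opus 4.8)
The plan is to derive Proposition~\ref{prop:S0_obs} from the corresponding bulk estimate in \cite[Section~9]{MPS23}, exploiting the graded algebra structure of $\cN$. The bulk component $\pi_\bulk S_0$ is literally the quantity bounded in \cite{MPS23}, so it remains to treat the three observable components $\pi_\o S_0$, $\pi_\x S_0$, $\pi_\ox S_0$ and to check that the observable weights $\ell_{\sigma,j},h_{\sigma,j}$ entering the seminorm \eqref{eq:T_norm} assemble into the factor $\tilde\scale_+(\kaa)$. The first step is to record, via Lemma~\ref{lemma:Uplus_pt} and Lemma~\ref{lemma:Uplus}, the exact observable structure of $U_+$: the coefficient of $\varphi^{(1)}$ does not flow, and the coefficient of $\sigma_\ox$ is precisely $C_+(\x) + K_\ox(b_\ox,0)$, which is exactly what the first two orders of the cumulant expansion of $\Eplus e^{-\theta\hat V(B)}$ contribute in the $\sigma_\ox$ sector. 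It follows that $S_0$ has, in every sector, the same structure as in the bulk: expanding $e^{-\theta\hat V(B)} - e^{-U_+(B)}$ along the segment joining the two exponents, it is the sum of the $(1-\Loc)$ part of the second-order cumulant (the \emph{irrelevant} part, in RG language) and of third- and higher-order remainders. In the pure $\sigma_\ox$ sector the second-order cumulant is the constant $C_+(\x)$, which $1-\Loc$ annihilates, so that sector is purely of third order.

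Next I would run the argument of \cite[Section~9]{MPS23} in the extended norm $\norm{\cdot}_{\Wkappa_{y,+}(\lambda_K)}$. The new point is that each observable component of $S_0$ is assembled from bulk $S_0$-type quantities multiplied by field monomials $\varphi^{(1)}$ or by the observable constants $C_+(\x)$ and $K_\ox(b,0)$, together with cumulant-type integrals such as $\cov_+(\zeta_{b_\o}^{(1)}, e^{-\theta\hat V_\bulk(B)})$. Since the large-field regulator $G$ depends only on $\varphi$ through $h_\bulk$, the observable weights enter the bounds purely multiplicatively, via the product property \eqref{eq:subx}. The hypotheses $j_+\ge j_\ox$ (so that $b_\ox$ lies inside $B$) and $\lambda_K\le\tilde g\scale$, together with \eqref{eq:Vznorm-ell}--\eqref{eq:Vznorm-h} (which make each observable insertion smaller than the quartic term \eqref{eq:phi4norm}) and the stability estimates deferred to Appendix~\ref{sec:pfSProps}, then allow the bulk bounds to be carried over, each observable factor costing a power of $\ell_{\sigma,j}$ or $h_{\sigma,j}$ and each cross-cumulant a power of $C_+(\x)$. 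Using $L^{j_\ox}\asymp|x|$ and the parameter restrictions of Section~\ref{sec:prrt}---especially \eqref{eq:exbd2}, \eqref{eq:exbd3} and \eqref{eq:exbd4}---one checks that the accumulated powers of $\scale_j = L^{-(d-4)j}$ produce exactly $\tilde\scale_+(\kaa)$, while the covariance contraction supplies $\tilde\vartheta_+^3$; this gives \eqref{eq:S0_bound}. Finally, \eqref{eq:S0_K0} is immediate, because $S_0(V,0)$ is built entirely from $V$---with $Q = \Loc(e^V\!\cdot 0) = 0$ and hence $\hat V = V$---so the constant $C_\pt$ in its bound depends on $L$ but not on the domain constant $\tilde C_{\rg}$, which enters only through the domain $\cK$ of $K$.

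The step I expect to be the main obstacle is the bookkeeping in the previous paragraph: one must verify that the number and type of observable and covariance factors attached to each term of the bulk expansion is exactly enough to upgrade the bulk estimate to one carrying $\tilde\scale_+(\kaa)$, and never demands more covariance decay than Appendix~\ref{app:covariance} supplies. This is where the somewhat delicate choice of exponents $\kaa_1,\kaa_2,\kb_1,\kb_2$ and the list of constraints in Section~\ref{sec:prrt} are used; confirming their compatibility with the worst term---a cross-cumulant in the $\sigma_\ox$ sector, where two observable insertions and one cross-covariance are attached at once---is where the real work lies.
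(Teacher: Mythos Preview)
Your approach---decompose $S_0$ into its $\bulk,\o,\x,\ox$ components and bound each sector by attaching observable weights to bulk-type estimates---is different from what the paper does, and while it is in principle workable, it is more laborious and the bookkeeping concern you raise at the end is real. The paper's route is to avoid the sector decomposition entirely: it upgrades the \emph{input} lemmas (bounds on $V$, on $Q=\Loc(e^VK)$, on $\hat K$, and the stability estimates) from the bulk norm to the full $T_\varphi(\kh)$ seminorm \eqref{eq:T_norm}, observing that this seminorm inherits the product property and all the structural features used in \cite[Section~9]{MPS23}. Once Lemmas~\ref{lemma:V_bounds}--\ref{lemma:stability_estimate_obs} are in place, the proof of \cite[Proposition~9.2]{MPS23} is rerun verbatim in the full norm, and the observable components are carried along automatically without ever being isolated. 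The only visible cost is the pair of extra constants in \eqref{eq:V-Q_bound1-h}--\eqref{eq:V-Q_bound1-hplus} for blocks containing $\o$ or $\x$, which are harmless since at most two such blocks occur.

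Two smaller points. First, the parameter restrictions you invoke---\eqref{eq:exbd2}, \eqref{eq:exbd3}, \eqref{eq:exbd4}---belong to the contraction estimate (Proposition~\ref{prop:crucial-short-3}), not to the $S_0$ bound; the only restriction that enters here is $\kaa_i\le 3$ from \eqref{eq:exbd1}, which guarantees that the third-order remainder $(\tilde g\scale)^3$ is dominated by $\tilde g_+^3\tilde\scale_+(\kaa)$. Second, the factor $\tilde\scale_+(\kaa)$ does not emerge from explicit tracking of observable weights as you suggest; rather, since the observable terms in $V$ are designed (via \eqref{eq:Vznorm-ell}--\eqref{eq:Vznorm-h}) to be subdominant to the quartic term in the full norm, the line-by-line proof yields a bound at least as strong as the bulk one, and $\tilde\scale_+(\kaa)\ge\scale_+^{\kaa_1}$ is simply the weaker bound matched to the domain $\cK_+$. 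Your reasoning for \eqref{eq:S0_K0} is correct.
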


\begin{proposition} \label{prop:S1_obs}
Assume \ref{quote:assumPhi}, $j_+ \ge j_\ox$, and suppose that
$\lambda_K \leq \tilde{g}^{9/4} \tilde{\scale} ({\kb} )$.  Then
\begin{align}
	& \norm{S_1}_{\Wkappa_{y,+} (\lambda_K)}
   		\leq C_L
    	(\tilde{\vartheta}^3_+  \tilde{g}_+^3 \tilde{\scale}_+ ({\kaa}) + \lambda_K ) .
\end{align}
\end{proposition}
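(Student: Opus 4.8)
The plan is to adapt to the extended norm $\Wkappa_{y}(\lambda_K)$ the bulk argument for the analogous $S_1$ estimate in \cite[Section~9]{MPS23} (itself modelled on \cite[Chapter~9]{BBS-brief}), the only genuinely new feature being that $\cN$, $\hat V$ and $\hat K$ now carry observable components. Starting from \eqref{eq:S1_defi}, and writing $|\cB(B)|=L^{d}$ for the number of scale-$j$ blocks inside a block $B\in\cB_+$, the task is (i) to bound a single summand $\Eplus\theta\big(e^{-\hat V(B\setminus X)}\hat K^{X}\big)$ in the extended norm, and (ii) to sum over the nonempty subsets $X\subset\cB(B)$.

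First I would record the two structural ingredients in the extended norm: stability of $e^{-\hat V}$ and a bound on $\hat K$. By \eqref{eq:hatV_hatK_definition} and the computation in the proof of Lemma~\ref{lemma:Uplus}, $Q(b)=\Loc(e^{V}K)(b)$ differs from its bulk counterpart only by the field-independent term $\sigma_\ox K_\ox(b,0)\1_{\o,\x\in b}$; hence $\hat V(b)$ equals the bulk $\hat V_\bulk(b)$, plus the \emph{unchanged} linear observable term $-\big(\sigma_\o\1_{\o\in b}+\sigma_\x\1_{\x\in b}\big)\varphi^{(1)}$, plus the small constant $-\sigma_\ox K_\ox(b,0)$. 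Consequently the bulk stability estimates from \cite{MPS23}, whose coercivity comes from the $\tfrac14 g_j|\varphi|^{4}$ term, carry over, and the only new point is that the observable pieces of $\hat V$ are subordinate in the $T_\varphi(\kh)$ geometry: $\norm{\sigma_\z\varphi^{(1)}}_{T_0(\kh)}$ is small relative to $L^{dj}\norm{\tilde g_j|\varphi|^{4}}_{T_0(\kh)}$ by \eqref{eq:Vznorm-ell}--\eqref{eq:phi4norm} and the smallness of $h_{\sigma,j},\ell_{\sigma,j}$, while $|K_\ox(b,0)|\le\ell_{\sigma,j}^{-2}\norm{K(b)}_{T_0(\ell)}$ is small by the domain bound $\norm{K}_{\Wkappa}\le\tilde C_{\rg}\tilde\vartheta^{3}\tilde g^{3}\tilde\scale(\kaa)$. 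Similarly $\hat K=K-e^{-V}(e^{Q}-1)$ is at least linear in $K$, so the product property \eqref{eq:subx}, the stability of $e^{-V}$, and the smallness of $Q$ give, in the regulated extended norm, bounds of the form $\norm{\hat K(b)}\le C_L(\norm{K}_{\Wkappa}+\lambda_K)$ for the $T_{0,y}(\ell,\lambda_K)$ seminorm and $\norm{\hat K(b)}\le C_L(\norm{K}_{\Wkappa}+\lambda_K)/(\tilde g^{9/4}\tilde\scale(\kb))$ for the $T_{y}^{G}(h,\lambda_K)$ seminorm; the $\lambda_K$ contributions come from the $y$-derivative part of the extended norm (using $D_y=\lambda_K D_K$), and the hypothesis $\lambda_K\le\tilde g^{9/4}\tilde\scale(\kb)$ keeps them aligned with the weights in \eqref{eq:Wkappa-def}. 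These are the observable-extended versions of the $\hat K$-estimates of \cite[Section~9]{MPS23}, and I would collect the required auxiliary inequalities in Appendix~\ref{sec:pfSProps} alongside the other estimates there.

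Next I would apply the Gaussian integration (reblocking) estimate, whose statement is unaffected by the observables because the large-field regulator $G$ depends only on the bulk field $\varphi$: together with \eqref{eq:subx} it bounds $\Eplus\theta(e^{-\hat V(B\setminus X)}\hat K^{X})$ by a product of stability factors (one per block of $B\setminus X$) times $\prod_{b\in X}\norm{\hat K(b)}$, with the scale-$j$ regulator converted into $G_+$ at the scale-change cost of powers of $\scale_j=L^{-(d-4)j}$. Summing over $X$, each $X$ with $|X|\ge2$ carries an extra factor $\norm{\hat K(b)}\ll1$ per additional block, so after summing the $\binom{L^{d}}{|X|}$ choices the terms with $|X|\ge2$ are dominated by a constant multiple of the $|X|=1$ term $\sum_{b\in\cB(B)}\Eplus\theta(e^{-\hat V(B\setminus b)}\hat K(b))$, which is at most $L^{d}$ times a stability constant times $\max_b\norm{\hat K(b)}$. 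Using the identities $\tilde\scale_+(\kaa)=\tilde\scale(\kaa)L^{-(d-4)\kaa_2}$, $\tilde\scale_+(\kb)=\tilde\scale(\kb)L^{-(d-4)\kb_2}$, $\tilde g\le2\tilde g_+$, $\tilde\vartheta_+\le\tilde\vartheta$, together with the parameter restriction \eqref{eq:exbd2} (which guarantees a $d$-dependent $t>0$ of room), the reblocking powers of $\scale_j$ compensate the gap between the weights at scales $j$ and $j+1$, so the $p=0$ part of the resulting bound is $\le C_L\,\norm{K}_{\Wkappa}\,L^{-(d-4)\kaa_2}\le C_L\,\tilde\vartheta_+^{3}\tilde g_+^{3}\tilde\scale_+(\kaa)$, while the $y$-derivative ($p\ge1$) parts contribute $\le C_L\lambda_K$; together these give the claimed bound.

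The step I expect to be the main obstacle is the first one: confirming that stability of $e^{-\hat V}$ and the $\hat K$ bounds genuinely survive the presence of the observables, i.e.\ that the $\sigma_\o,\sigma_\x,\sigma_\ox$ components of $V$ and $K$ remain subordinate to the bulk quartic in the $T_\varphi(\kh)$ seminorm uniformly over $\domRG$. This is precisely where the deliberately small choices of $h_{\sigma,j}$ and $\ell_{\sigma,j}$ in \eqref{eq:h-sig-alt}--\eqref{eq:ell-sig-1} (so that $\norm{\sigma_\z\varphi^{(1)}}$ and $\norm{\sigma_\ox C_+(\x)}$ are small compared to \eqref{eq:phi4norm}) and the parameter restrictions of Section~\ref{sec:prrt} (notably \eqref{eq:exbd2}--\eqref{eq:exbd3}) are used; once these subordination bounds are secured, the remainder is a routine transcription of \cite[Section~9]{MPS23} to the graded algebra $\cN$.
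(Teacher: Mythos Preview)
Your approach is essentially the same as the paper's: both establish that the stability of $e^{-\hat V}$ and the $\hat K$ bounds survive in the presence of observables (this is the content of Lemmas~\ref{lemma:V_bounds}--\ref{lemma:stability_estimate_obs} in Appendix~\ref{sec:pfSProps}), and then defer to the line-by-line argument of \cite[Proposition~9.4]{MPS23}. Your structural observation that $Q-Q_\bulk=\sigma_\ox K_\ox(b,0)\1_{\o,\x\in b}$ is field-independent is correct and is what makes the observable pieces subordinate to the bulk quartic.

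One small correction: the inequality $\tilde\vartheta_+\le\tilde\vartheta$ you quote goes the wrong way for the purpose at hand; what is actually used is $\tilde\vartheta\le 2\tilde\vartheta_+$ (both hold since $\tilde\vartheta_j=2^{-(j-j_{\tilde\ka})_+}$). Also, the parameter restriction \eqref{eq:exbd2} is not needed for Proposition~\ref{prop:S1_obs} itself---the constant $C_L$ is allowed to depend on $L$, and the $L$-independent contraction that \eqref{eq:exbd2} provides enters only later in Proposition~\ref{prop:crucial-short-3}. These are cosmetic; the substance of your plan matches the paper.
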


Propositions~\ref{prop:S0_obs}--\ref{prop:S1_obs} incorporate the
observables into the bulk
statements in \cite[Propositions~9.2, 9.4]{MPS23}.
The introduction of our norm on $\cN$ streamlines the adaptation
of the proofs of \cite[Propositions~9.2, 9.4]{MPS23} to include the
observables.  This requires
revisiting some estimates from \cite{MPS23} but does not introduce new ideas.
We therefore defer the proofs of Propositions~\ref{prop:S0_obs}--\ref{prop:S1_obs}
to Appendix~\ref{sec:pfSProps}.

To prove Proposition~\ref{prop:crucial-short-3}, we will also need control of
the second $K$-derivative of the non-perturbative RG map.
This control is included in the next proposition.
Note that Proposition~\ref{prop:Phi^K_estimate-extended} does not prove
Proposition~\ref{prop:crucial-short-3}: the case $q=1$
of \eqref{eq:Phi^K_derivatives_estimates-mass_derivative} does not
show that $M_1$ decays in $L$.
In addition, in the proof of Proposition~\ref{prop:Phi^K_estimate-extended},
\eqref{eq:S01-pf} (with $\lambda_K =0$) gives the bound
\begin{align}
	\norm{\Phi_+^K }_{\Wkappa_{+}}
		& \le 3 C_L
		\tilde{\vartheta}^3_+
		\tilde{g}_+^{9/4} \tilde{\scale}_+ ({\kb} \big)
		,
\end{align}
which is not as strong as Theorem~\ref{thm:Phi^K_q0} because the constant on
the right-hand side has not been shown to be less that $\tilde C_{\rg}$.
In order to obtain a bound $\tilde C_{\rg} \tilde{\vartheta}^3_+ \tilde{g}_+^{9/4} \tilde{\scale}_+ ({\kb} \big)$
as required by Theorem~\ref{thm:Phi^K_q0},
we will need to apply the crucial contraction
Proposition~\ref{prop:crucial-short-3}.
We have dropped $\tilde{\vartheta}^3_+\le 1$ from \eqref{eq:Phi^K_derivatives_estimates-mass_derivative}
since it is not needed when we apply
\eqref{eq:Phi^K_derivatives_estimates-mass_derivative}.

\begin{proposition}
\label{prop:Phi^K_estimate-extended}
Let $d \ge 4$.
At scale  $j \ge j_\ox - 1$,
assume \ref{quote:assumPhi}.
There exists a positive constant $\tilde{C}_{\rg}$ such that
$\Phi_{+}^K$ is well-defined as a map $\domRG \rightarrow \Wkappa_{+}$
(in particular,
the integral \eqref{eq:K+ext_field} converges), and there exist $M_{q} > 0$ for $q \in \N$ such that
\begin{align}
	\norm{D^q_K \Phi_{+}^K}_{\domRG  \rightarrow \Wkappa_{+}}
		\leq M_{q} 	
		\tilde{g}_{+}^{ - \frac{9}{4}(q-1)}
    	 \tilde{\scale}_+ ({\kb})^{- (q-1)} \qquad (q\geq 1).
	\label{eq:Phi^K_derivatives_estimates-mass_derivative}
\end{align}
\end{proposition}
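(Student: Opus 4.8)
The plan is to run the same argument used for the bulk estimates in \cite[Section~9]{MPS23}, but now phrased in terms of the extended norm $\norm{\cdot}_{\Wkappa_y(\lambda_K)}$ of Section~\ref{sec:ext_norm}, exploiting the product property \eqref{eq:subx} and the observation below \eqref{eq:ext-norm} that the $\lambda_K=0$ specialisation recovers the plain $\Wkappa$-norms. First I would record the decomposition $\Phi_+^K = S_0 + S_1$ from \eqref{eq:S0S1_decomp}--\eqref{eq:S1_defi} and invoke Propositions~\ref{prop:S0_obs}--\ref{prop:S1_obs} to obtain, for $\lambda_K \le \tilde g^{9/4}\tilde\scale(\kb)$,
\begin{equation}
\label{eq:S01-pf}
    \norm{\Phi_+^K}_{\Wkappa_{y,+}(\lambda_K)}
    \le \norm{S_0}_{\Wkappa_{y,+}(\lambda_K)} + \norm{S_1}_{\Wkappa_{y,+}(\lambda_K)}
    \le C_L\big( \tilde\vartheta_+^3 \tilde g_+^3 \tilde\scale_+(\kaa) + \lambda_K \big).
\end{equation}
The well-definedness of $\Phi_+^K$ as a map into $\Wkappa_+$ (convergence of the integral \eqref{eq:K+ext_field}) is part of the conclusion of Proposition~\ref{prop:S0_obs}--\ref{prop:S1_obs}, so that is already in hand; here $\tilde C_{\rg}$ is whatever constant those two propositions produce, multiplied by a fixed factor.

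Next I would extract the $q$-th $K$-derivative bound from \eqref{eq:S01-pf} by the standard Cauchy-type estimate \eqref{eq:DKF}: choosing $\lambda_K = \tilde g^{9/4}\tilde\scale(\kb)$ (the largest value permitted by the hypotheses of Proposition~\ref{prop:S1_obs}), \eqref{eq:DKF} gives, for every $q \ge 1$,
\begin{equation}
    \norm{D_K^q \Phi_+^K(b)}_{T_\varphi(\kh)}
    \le \frac{q!}{\lambda_K^{q}} \norm{\Phi_+^K}_{\Wkappa_{y,+}(\lambda_K)}
    \le \frac{q! \, C_L\big(\tilde\vartheta_+^3 \tilde g_+^3 \tilde\scale_+(\kaa) + \lambda_K\big)}{\big(\tilde g^{9/4}\tilde\scale(\kb)\big)^{q}}.
\end{equation}
Using $\tilde g \asymp \tilde g_+$ and $\tilde\scale_+(\kaa) \le \tilde\scale(\kaa)$, together with $\tilde g_+^3 \tilde\scale_+(\kaa) \le \tilde g_+^{9/4}\tilde\scale(\kb)\cdot(\text{bounded factor})$ — which follows because $\kaa_i \ge \kb_i$ and $\tilde g_+^{3/4}\le 1$ — the bracket is $O(\tilde g^{9/4}\tilde\scale(\kb))$, so after one power of $\lambda_K$ cancels we are left with $M_q\, \tilde g_+^{-\frac94(q-1)}\tilde\scale(\kb)^{-(q-1)}$, which is \eqref{eq:Phi^K_derivatives_estimates-mass_derivative} (note $\tilde\scale(\kb) = \tilde\scale_+(\kb)\,L^{(d-4)\kb_2}$, so the two forms of $\tilde\scale$ differ only by an $L$-dependent constant that can be absorbed into $M_q$). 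The remaining constant $\tilde C_{\rg}$ in the statement is then the one from \eqref{eq:S01-pf}; that it is not $\le$ the $\tilde C_{\rg}$ demanded by Theorem~\ref{thm:Phi^K_q0} is precisely the gap that the contraction Proposition~\ref{prop:crucial-short-3} will later close, as already flagged in the discussion preceding the proposition.

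The main obstacle is not in this proof at all but is displaced into Propositions~\ref{prop:S0_obs}--\ref{prop:S1_obs}: one must verify that the stability estimates, the $\Loc$ bounds, and the Gaussian moment/convergence estimates from \cite[Section~9]{MPS23} all survive verbatim when the bulk $T_\varphi$ norms are replaced by the $\cN$-norm of \eqref{eq:T_norm} that includes the observable components $\sigma_\o,\sigma_\x,\sigma_\ox$. The key point that makes this routine rather than delicate is that the observable parameters $\kh_\sigma = \ell_\sigma, h_\sigma$ have been tuned (see \eqref{eq:Vznorm-ell}--\eqref{eq:Vznorm-h} and the computation just after) so that the observable contributions to every relevant norm are \emph{small relative to} the quartic bulk term \eqref{eq:phi4norm}; thus the observables never degrade the powers of $\tilde g$ and $\scale$ that drive the estimates, and the product property \eqref{eq:subx} on the graded algebra $\cN$ lets each bulk inequality be re-run line by line. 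Consequently the content of the present proposition reduces, modulo Propositions~\ref{prop:S0_obs}--\ref{prop:S1_obs}, to the two mechanical steps above: sum $S_0+S_1$, then Cauchy-estimate in $\lambda_K$.
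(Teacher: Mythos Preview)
Your proposal is correct and follows essentially the same route as the paper: decompose $\Phi_+^K = S_0 + S_1$, apply Propositions~\ref{prop:S0_obs}--\ref{prop:S1_obs} with the specific choice $\lambda_K = \tilde g^{9/4}\tilde\scale(\kb)$ to obtain the extended-norm bound \eqref{eq:S01-pf}, and then extract the $K$-derivative bounds via the Cauchy-type estimate \eqref{eq:DKF}. Your write-up is somewhat more explicit than the paper's in tracking the cancellation of one power of $\lambda_K$ and in noting that $\tilde\scale(\kb)$ and $\tilde\scale_+(\kb)$ differ only by an $L$-dependent factor absorbed into $M_q$, but the argument is the same.
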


\begin{proof}
We apply Propositions~\ref{prop:S0_obs}--\ref{prop:S1_obs}
with $\lambda_K =  \tilde{g}^{9/4} \tilde{\scale} (\kb)$
(and use $\kb_i \le \kaa_i$ for $i=1,2$) to obtain
\begin{align}
	\norm{\Phi_+^K }_{\Wkappa_{y,+} (\lambda_K)}
		&\le \|S_0\|_{\Wkappa_{y,+} (\lambda_K)}  + \|S_1 \|_{\Wkappa_{y,+} (\lambda_K)}
			\nnb
		& \le C_L\tilde{\vartheta}^3_+
		\big(\tilde{g}_+^3
        \tilde\scale_+
            (\kaa)
+ \tilde{g}_+^3 \tilde{\scale}_+ (\kaa) + \tilde{g}^{9/4} \tilde{\scale} ({\kb})  )
		\nnb
		& \le 3C_L\tilde{\vartheta}^3_+
		\tilde{g}_+^{9/4} \tilde{\scale}_+ ({\kb} \big)
		.
\label{eq:S01-pf}
\end{align}
Let $q \ge 1$.
By \eqref{eq:DKF},
$\norm{D_K^q \Phi_+^K}_{\Wkappa} \leq \frac{q !}{\lambda_K^q} \norm{\Phi_+^K}_{\Wkappa_{y,+} (\lambda_K)}$,
and the proof is complete.
\end{proof}

\subsection{The crucial contraction: proof of Proposition~\ref{prop:crucial-short-3}}
\label{sec:contractions}

This section is dedicated to the proof of
Proposition~\ref{prop:crucial-short-3}.
We only need to consider scales $j \ge j_\ox - 1$,
and
the statement of Proposition~\ref{prop:crucial-short-3} does not
need the extended norm.
It suffices to prove the following proposition.

\begin{proposition} \label{prop:crucial}
Assume \ref{quote:assumPhi} and let $j_+ \ge j_\ox$.
There is an $L$-independent constant $C=C(d)>0$ such that,
for $\norm{\dot{K}}_{\Wkappa} < \infty$,
\begin{align}
	\norm{D_K \Phi_+^K (V,  K; \dot K) }_{\Wkappa_+}
    & \le
    	C  \norm{ \dot{K}}_{\Wkappa}
    	\max \big\{ L^{ - ( \tau + d-4 )  } ,  L^{-\frac{d}{4} - (d-4)\kb_2 } \big\}
    	.
    \label{eq:crucial}
\end{align}
\end{proposition}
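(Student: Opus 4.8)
The plan is to differentiate the decomposition $\Phi_+^K(V,K) = S_0 + S_1$ of \eqref{eq:S0S1_decomp}--\eqref{eq:S1_defi} with respect to $K$ and to bound each resulting term in $\norm{\cdot}_{\Wkappa_+}$ by the right-hand side of \eqref{eq:crucial}, with constants independent of $L$. This is the observable-decorated counterpart of the bulk crucial contraction in \cite[Section~9]{MPS23}: the $\Loc$-extraction of the relevant and marginal directions has already been built into the $S_0$/$S_1$ form (recall $\hat V = V - \Loc(e^V K)$, $\hat K = K - (e^{-\hat V}-e^{-V})$, $U_+ = \Phi_{\pt}(\hat V)$), so the structural contraction mechanism is the same; what is new is that the underlying stability and integration estimates --- the observable versions of \cite[Propositions~9.2, 9.4]{MPS23} --- must be run in the norm $\norm{\cdot}_{\Wkappa}$, which also records the observable components $\o,\x,\ox$. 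The product property \eqref{eq:subx} of the (extended) $T$-seminorm makes this adaptation routine, though tedious.

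First I would set up the differentiation. Holding $V$ (hence $e^V$) fixed, write $\dot Q := D_K Q(\dot K) = \Loc(e^V\dot K)$, so that $D_K\hat V(\dot K) = -\dot Q$, $D_K U_+(\dot K) = -D\Phi_{\pt}(\hat V;\dot Q)$, and $D_K\hat K(\dot K) = \dot K - e^{-\hat V}\dot Q$. Applying $D_K$ to \eqref{eq:S0_defi}--\eqref{eq:S1_defi} and expanding by the product and chain rules produces a finite sum of terms $e^{-\delta\bar{u}_+(B)}\E_+\theta(R\cdot P)$ in which $R$ is a product of a bounded number of factors among $e^{-\hat V(b)}$, $\hat K(b')$, $e^{-U_+(B)}$, $\hat V$, $U_+$, and $P$ is linear in $\dot K$ and equals, in a single ``active'' block, either $D_K\hat K(\dot K)$ or a component of $\dot Q = \Loc(e^V\dot K)$, up to bounded factors. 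On the RG domain $\domRG$ every factor of $R$ is bounded, uniformly in $L$, in both the $T_0(\ell)$ and $T^G(h)$ norms by stability and the boundedness of $\E_+\theta$, so the entire contraction must come from the active factor $P$.

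Next I would extract the contraction. For the terms whose active factor involves $D_K\hat K(\dot K) = \dot K - e^{-\hat V}\dot Q$: by construction of $\hat V$ and $\hat K$ this has (up to terms of higher order in $V$, themselves small on $\domRG$) vanishing $\Loc$-part, so it lies effectively in $(1-\Loc)\cN$, and $\E_+\theta$ applied to a $(1-\Loc)$-activity contracts. For the bulk component this is the usual gain coming from $\Loc = \Tay_4$ (the surviving monomials have degree $\ge 5$ or carry extra derivatives, which the Gaussian integral turns into powers of $\norm{C_+}$ times ratios of bulk field scales, giving $\scale_+/\scale = L^{-(d-4)}$, equal to $1$ when $d=4$). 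For the observable components $\dot K_\o,\dot K_\x,\dot K_\ox$, $\Loc$ removes exactly the $\varphi$-independent part, so the survivor vanishes at $\varphi=0$; expanding in the fluctuation field under $\E_+\theta$ produces at least one factor of $\zeta$, and combining this gain with the change from $\kh$ to $\kh_+$ in the field scales (controlled by \eqref{eq:h-sig-alt}--\eqref{eq:ell-sig-1}) and with all powers of $L^{j_\ox}\asymp|x|$, $\scale_{j_\ox-1}$, $\scale_{j-j_\ox+1}$ yields, for the worst surviving term, a contraction by $L^{-(\tau+d-4)}$ in the $T_0(\ell)$ part of $\norm{\cdot}_{\Wkappa_+}$ and by $L^{-\frac d4 - (d-4)\kb_2}$ in the $G$-regulated $T^G(h)$ part --- the two entries of the maximum in \eqref{eq:crucial}. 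The remaining terms, whose active factor is a component of $\dot Q = \Loc(e^V\dot K)$, feed a $\cV$-type polynomial of $\Wkappa$-norm at most $\const\,\norm{\dot K}_{\Wkappa}$ (its observable components $\dot Q_\o,\dot Q_\x$ and $\dot Q_\ox = \sigma_\ox\dot K_\ox(b_\ox,0)$ carry weights $\ell_\sigma^{-1},h_\sigma^{-1}$ and $\ell_\sigma^{-2},h_\sigma^{-2}$, whence the exponents $2\tau+d-6$ appearing in \eqref{eq:exbd3}) into $\hat V$ and $U_+=\Phi_{\pt}(\hat V)$; in $S_1$ this polynomial enters only through the merely bounded factor $e^{-\hat V(B\backslash X)}$, the contraction coming from the linear-in-$K$ factor $\hat K^X$, while in $S_0 = \E_+(e^{-\theta\hat V}-e^{-\Phi_{\pt}(\hat V)})$ it enters the non-perturbative remainder, whose smallness is again governed by the $\Loc$-subtracted structure exactly as in the bulk \cite[Section~9]{MPS23}. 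Summing the finitely many terms yields \eqref{eq:crucial}.

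The step I expect to be the main obstacle is the bookkeeping in the third paragraph: verifying that, after the term-by-term expansion, every surviving combination of $h_{\sigma,j}$, $\ell_{\sigma,j}$, $\tilde g_j$ and $\scale_j$ reassembles into $\tilde{\scale}_+(\kaa)$ (in the $\ell$-norm) or $\tilde{\scale}_+(\kb)$ (in the $G$-regulated $h$-norm) times a strictly positive power of $L$ equal to $\tau+d-4$, resp.\ $\frac d4 + (d-4)\kb_2$ --- equivalently, that the deliberately small choices \eqref{eq:h-sig-alt}--\eqref{eq:ell-sig-1} of the observable scales genuinely beat the fluctuation integral by the required margin. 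This is precisely what the inequalities of Section~\ref{sec:prrt} (notably \eqref{eq:exbd2}--\eqref{eq:exbd4}) and the accompanying choices of $\tau,\epsilon,\epsilon_1$ are arranged to guarantee, so in practice the proof reduces to identifying the worst term in each of the two norms and checking it against those inequalities; the remaining stability and integration estimates are the routine observable analogues of \cite[Propositions~9.2, 9.4]{MPS23}, derived from their bulk proofs by systematic use of \eqref{eq:subx}. Combined with \eqref{eq:exbd2}, \eqref{eq:crucial} then yields Proposition~\ref{prop:crucial-short-3} with contraction factor $L^{-(d-4)\kaa_2-t}$ for a suitable $d$-dependent $t>0$.
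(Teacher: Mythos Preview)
Your approach is correct in spirit and would work, but it is organised differently from the paper's and leaves the decisive step implicit. The paper does \emph{not} differentiate $S_0,S_1$ at a general $K$ and track every term. Instead it writes
\[
D_K\Phi_+^K(V,K)=D_K\Phi_+^K(V,0)+\big[D_K\Phi_+^K(V,K)-D_K\Phi_+^K(V,0)\big],
\]
bounds the bracket by the Mean Value Theorem together with the second-derivative estimate of Proposition~\ref{prop:Phi^K_estimate-extended} (so this piece is $\le\eta$ once $\tilde g$ is small depending on $L$), and then proves the genuine $L$-independent contraction only at $K=0$. At $K=0$ the derivative of $S_0$ is $O_L(\tilde g_+^2\scale_+^2)\le\eta$ directly from Proposition~\ref{prop:S0_obs}, while $D_KS_1(V,0)(\dot K)=e^{\delta\bar u_{\pt}(B)}\E_+\theta\cL\dot K$ with the explicit linear operator
\[
\cL\dot K(B)=\sum_{b\in\cB(B)}e^{-V(B)}(1-\Loc)\big(e^{V(b)}\dot K(b)\big),
\]
whose $(1-\Loc)$ structure is manifest; this is exactly the combination your $D_K\hat K$ produces after multiplication by $e^{V(b)}$, but isolating it at $K=0$ avoids the ``up to higher order in $V$'' caveats you invoke at general $K$.

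What your proposal leaves unstated, and what really carries the proof, is the componentwise $(1-\Loc)$ estimate (Lemma~\ref{lem:1-Loc} in the paper): for $E,F\in\cS$,
\[
\|E(1-\Loc)F(b)\|_{T_\varphi(\kh_+)}\le c\,\gamma(\kh,b)\,P_{\kh_{\bulk,+}}^6(\varphi)\,\|E\|_{T_\varphi(\kh)}\sup_{t\in[0,1]}\|F\|_{T_{t\varphi}(\kh)},
\]
with $\gamma(\ell,b)=L^{-3(d-2)}+L^{-(\tau+d-4)}\1_{\{\o,\x\}\cap b\ne\emptyset}$ and $\gamma(h,b)=L^{-3d/2}+L^{-d/4}\1_{\{\o,\x\}\cap b\ne\emptyset}$. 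These $\gamma$-factors come directly from the scale ratios \eqref{eq:h-ratios}--\eqref{eq:h-ratios-2} applied in each graded sector ($p_\bulk=6$, $p_\o=p_\x=p_\ox=1$), and after summing over $b\in\cB(B)$ they give precisely $L^{-(\tau+d-4)}$ in the $T_0(\ell)$-norm and $L^{-d/4}$ in the $T^G(h)$-norm; the extra $L^{-(d-4)\kb_2}$ then enters only through the reweighting $\tilde\scale(\kb)\to\tilde\scale_+(\kb)$ in the $\Wkappa$-norm. Your third paragraph gestures at this mechanism (``survivor vanishes at $\varphi=0$'', ``at least one factor of $\zeta$'') but never pins down the estimate, which is why you yourself flag the bookkeeping as the main obstacle. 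If you rewrite your argument around this lemma and the $K=0$ reduction, the ``worst term'' identification becomes a two-line computation rather than a case analysis, and the parameter inequalities of Section~\ref{sec:prrt} are used only once, in \eqref{eq:exbd4}, to compare the bulk and observable entries of $\gamma(\ell,b)$.
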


\begin{proof}[Proof of Proposition~\ref{prop:crucial-short-3}]
According to \eqref{eq:exbd2}, there is a positive $t=t(d)$
such that the maximum on the right-hand
side of \eqref{eq:crucial} is bounded above by $L^{-(d-4) \kaa_2 - t}$.
\end{proof}

\subsubsection{Proof of Proposition~\ref{prop:crucial} given Proposition~\ref{prop:cL_contract}}

We now reduce the proof of Proposition~\ref{prop:crucial}
to Proposition~\ref{prop:cL_contract}.
For brevity, we denote the maximum appearing on the
right-hand side of \eqref{eq:crucial} as
\begin{align}
	\eta = \eta(d,L)
		=
			\max \big\{ L^{ - ( \tau + d-4 )  } ,  L^{-\frac{d}{4} - (d-4)\kb_2 } \big\}
			.
\end{align}
To begin, we consider the derivative in \eqref{eq:crucial} at $K=0$.
Since $K_+=S_0+S_1$, we need the derivatives of $S_0$ and $S_1$ at $K=0$.
For $S_0$, we know from Proposition~\ref{prop:S0_obs},
together with \eqref{eq:DKF} applied with $\lambda_K = \tilde{g} \scale$, that
\begin{align}
\label{eq:DKS0}
	\norm{D_K S_0 (V, 0)}_{\Wkappa \rightarrow \Wkappa_+}
		\le \frac{1}{\lambda_K} \norm{S_0 (V, 0)}_{\Wkappa_+ (\lambda_K)}
		\le C_L  \tilde{g}_+^2 \scale_+^2
		\le \eta
	,
\end{align}
where the final bound holds for $\tilde{g}$ sufficiently small.
Our main effort is to bound $D_K  S_1$.
We extract the linear (in $K$) part of $S_1$ by writing
\begin{align}
	S_1 =   e^{\delta \bar u_{\pt} (B)} \Eplus \theta \cL K + \cE ,
	\label{eq:cE_defi}
\end{align}
with
\begin{align}
	\cL K (B) &= \sum_{b\in \blocks (B)} e^{-V (B)} (1- \Loc) (e^{V(b)} K(b))
	\label{eq:LK0}
	,
\end{align}
and with $\cE$ implicitly defined by \eqref{eq:cE_defi}.
Since the linear part (in $K$) has been extracted from $\cE$, the following lemma is not at all surprising.

\begin{lemma} \label{lemma:crucial_contraction_pre_obs}
Assume \ref{quote:assumPhi}.
The Fr{\'e}chet derivative of $\cE (V, \cdot):\Wkappa\to\Wkappa_+$ vanishes at $K =0$.
\end{lemma}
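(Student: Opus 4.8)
The plan is to prove the statement by a direct first-order Taylor expansion of $S_1$ in $K$ about $K=0$, matched term by term against the linear piece extracted into $\cL K$. Recall from \eqref{eq:cE_defi} that $\cE = S_1 - e^{\delta\bar u_\pt(B)}\Eplus\theta\cL K$. Since $\cL$ is linear in $K$ and the prefactor is the \emph{perturbative} value $\delta\bar u_\pt(B)$ (defined through $\Phi_\pt(V)$, hence independent of $K$), the map $K\mapsto e^{\delta\bar u_\pt(B)}\Eplus\theta\cL K$ is already linear and bounded $\Wkappa\to\Wkappa_+$, so it is its own Fr\'echet derivative at every point. It therefore suffices to show
\begin{equation}
    D_K S_1(V,\cdot)\big|_{K=0}(\dot K) = e^{-\delta\bar u_\pt(B)}\,\Eplus\theta\,\cL\dot K(B)
    \qquad (\norm{\dot K}_{\Wkappa}<\infty),
\end{equation}
the Fr\'echet differentiability of $S_1$ near $K=0$ (equivalently of $\Phi_+^K - S_0$) being already supplied by Propositions~\ref{prop:S0_obs} and~\ref{prop:Phi^K_estimate-extended}.

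First I would record the $K$-dependence of the quantities \eqref{eq:hatV_hatK_definition} entering \eqref{eq:S1_defi}. Since $Q(b)=\Loc(e^V K)(b)$ is linear in $K$, $\hat V(b)=V(b)-Q(b)$ is affine with $\hat V(b)|_{K=0}=V(b)$; consequently $e^{-\hat V(b)}-e^{-V(b)}$ vanishes at $K=0$, so $\hat K(b)|_{K=0}=0$ and
\begin{equation}
    D_K\hat K(b)\big|_{K=0}(\dot K) = \dot K(b) - e^{-V(b)}\Loc\big(e^{V(b)}\dot K(b)\big) = e^{-V(b)}(1-\Loc)\big(e^{V(b)}\dot K(b)\big).
\end{equation}
Likewise $\delta\bar u_+(B)|_{K=0}=\delta\bar u_\pt(B)$: by Lemma~\ref{lemma:Uplus} the observable coupling $\delta u_{+,\ox}=\delta u_{\pt,\ox}+K_\ox(b_\ox,0)$ reduces to $\delta u_{\pt,\ox}$ at $K=0$, and in the bulk $U_{+,\bulk}=\Phi_{\pt,\bulk}(V_\bulk-\Loc(e^{V_\bulk}K_\bulk))$ reduces to $\Phi_{\pt,\bulk}(V_\bulk)$.

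Next I would differentiate the polymer sum in \eqref{eq:S1_defi}. Each summand is the product of the $K$-dependent factors $e^{-\delta\bar u_+(B)}$, $e^{-\hat V(B\backslash X)}$, and $\hat K^X=\prod_{b\in X}\hat K(b)$; for $|X|\ge 1$ the last factor vanishes at $K=0$, so by the Leibniz rule the only surviving contribution to $D_K$ in each summand is the one in which the derivative lands on $\hat K^X$, the other two factors being frozen at $K=0$. Moreover $D_K(\hat K^X)|_{K=0}=0$ unless $|X|=1$, since for $|X|\ge 2$ differentiating one factor of $\hat K$ still leaves behind at least one factor $\hat K(b)|_{K=0}=0$. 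Thus only singletons $X=\{b_0\}$, $b_0\in\cB(B)$, survive, and using $\hat V(B\backslash\{b_0\})|_{K=0}=V(B\backslash\{b_0\})$ together with the previous display,
\begin{equation}
    D_K S_1\big|_{K=0}(\dot K) = e^{-\delta\bar u_\pt(B)}\sum_{b_0\in\cB(B)}\Eplus\theta\Big(e^{-V(B\backslash\{b_0\})}\,e^{-V(b_0)}(1-\Loc)\big(e^{V(b_0)}\dot K(b_0)\big)\Big).
\end{equation}
Since $V(B)=\sum_{b\in\cB(B)}V(b)$ and the exponents add in the commutative ring generated by the observables, $e^{-V(B\backslash\{b_0\})}e^{-V(b_0)}=e^{-V(B)}$, so the right-hand side equals $e^{-\delta\bar u_\pt(B)}\Eplus\theta\,\cL\dot K(B)$ by the definition \eqref{eq:LK0} of $\cL$; this is precisely the $K$-derivative of the explicit linear term on the right-hand side of \eqref{eq:cE_defi}, so $D_K\cE(V,\cdot)|_{K=0}=0$.

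The computation is elementary — it is the same Taylor expansion already carried out for the bulk in \cite[Section~9]{MPS23}, with the observable factors coming along for free via the product property \eqref{eq:subx} of the $T_\varphi(h)$ norm. The one place where care is needed, and where I would expect a slip to occur, is the bookkeeping of which quantities depend on $K$: one must not forget that $\delta\bar u_+$ and $e^{-\hat V(B\backslash X)}$ are themselves $K$-dependent, and must verify that their derivatives genuinely drop out because they multiply $\hat K^X|_{K=0}=0$; and one must make sure that the interchange of $D_K$ with $\Eplus\theta$ and with the polymer sum, as well as all the identifications above, take place in the $\Wkappa\to\Wkappa_+$ operator-norm topology, which is exactly where Propositions~\ref{prop:S0_obs} and~\ref{prop:Phi^K_estimate-extended} are invoked.
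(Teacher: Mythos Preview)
Your proposal is correct and takes essentially the same approach as the paper, which defers to \cite[Lemma~9.8]{MPS23}: a direct first-order Taylor expansion of $S_1$ about $K=0$ in which only the $|X|=1$ terms survive and reproduce $\cL\dot K$. One bookkeeping slip: you quote the prefactor from \eqref{eq:cE_defi} as $e^{\delta\bar u_\pt(B)}$ but then (correctly, from \eqref{eq:S1_defi}) compute $D_K S_1|_{K=0}$ with $e^{-\delta\bar u_\pt(B)}$; the sign in \eqref{eq:cE_defi} appears to be a typo in the paper, and your computation is the consistent one.
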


\begin{proof}
The proof with observables follows the same steps as the proof of
the $r=0$ case of
\cite[Lemma~9.8]{MPS23},
as we indicate in Appendix~\ref{sec:pfSProps}.
\end{proof}

By \eqref{eq:cE_defi} and Lemma~\ref{lemma:crucial_contraction_pre_obs},
\begin{align}
\label{eq:DKS1bd}
	D_K S_1 (V,0) (\dot{K})
    = e^{\delta \bar{u}_{\pt} (B)} \Eplus \theta \cL \dot{K}
	.
\end{align}
We will prove the following proposition in Section~\ref{sec:decomp_cL}.

\begin{proposition}
\label{prop:cL_contract}
Assume \ref{quote:assumPhi} and $j_+ \ge j_\ox$.
There is a $C>0$ (independent of $L$) such that
\begin{align}
	\norm{\Eplus \theta \cL \dot{K} (B) }_{\Wkappa_+}
    \le
	C \eta \norm{ \dot{K} }_{\Wkappa}.
\end{align}
\end{proposition}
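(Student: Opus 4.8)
The plan is to use the observable grading $\cN = \cN_\bulk + \cN_\o + \cN_\x + \cN_\ox$ and to bound each of the four components $\pi_*\Eplus\theta\cL\dot K(B)$ separately, exploiting that the semi-norm \eqref{eq:T_norm} is a sum over $*\in\{\bulk,\o,\x,\ox\}$ and that $\Eplus\theta$, $\Loc$, and multiplication by $e^{\pm V}$ all respect the grading. First I would record the explicit formulas. Since the only observable content of $e^{\pm V}$ is the polynomial $-\sigma_\o\varphi^{(1)}\1_{\o\in b}-\sigma_\x\varphi^{(1)}\1_{\x\in b}$, and since condition~(iii) of Definition~\ref{def:SK} forces $\dot K_\o(b)=0$ unless $b=b_\o$ (similarly for $\x$ and $\ox$), one finds that $\pi_\bulk\cL\dot K(B)=\sum_{b\in\cB(B)}e^{-V_\bulk(B)}(1-\Tay_4)(e^{V_\bulk(b)}\dot K_\bulk(b))$ coincides with the bulk operator treated in \cite[Section~9]{MPS23}; that $\pi_\z\cL\dot K(B)=\sigma_\z e^{-V_\bulk(B)}(1-\Tay_0)(e^{V_\bulk(b_\z)}(\dot K_\z(b_\z)-\varphi^{(1)}\dot K_\bulk(b_\z)))$ for $\z\in\{\o,\x\}$, carrying a single block $b_\z$; and that $\pi_\ox\cL\dot K(B)$ involves $\dot K_\ox(b_\ox)$ together with couplings of $\dot K_\o(b_\o)$, $\dot K_\x(b_\x)$, and $\dot K_\bulk$ to the observable factors of $e^{-V(B)}$, with a $(1-\Tay_0)$ or $(1-\Tay_4)$ present in each term. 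The $\bulk$-component is then bounded by the $r=0$ case of the bulk crucial-contraction estimate of \cite[Section~9]{MPS23}, whose contraction factor is, for $L$ large, dominated by $C\eta$ (indeed $\eta$ is chosen to be the weakest of the contraction factors that occur for $d\ge4$).

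Next I would treat the $\sigma_\o$ and $\sigma_\x$ components, which are genuinely new. After absorbing the prefactors $e^{\pm V_\bulk}$ into the large-field regulator using the stability estimates from \cite{MPS23} and Appendix~\ref{sec:pfSProps} (and the product property \eqref{eq:subx}), the combination $(1-\Tay_0)$ followed by $\Eplus\theta$ contracts by the same mechanism underlying the $\cL$-estimates of \cite{BBS-brief,BBS-saw4,ST-phi4}: $(1-\Tay_0)$ deletes the only relevant monomial in the $\sigma_\z$ sector, namely the constant, so the surviving activity is at least linear in the bulk field and one gains a factor of order $\kh_{\bulk,+}/\kh_\bulk$, that is, of order $L^{-(d-2)/2}$ in the $T_0(\ell)$ part and $L^{-d/4}$ in the $T^G(h)$ part. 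The change of the observable parameter from scale $j$ to scale $j+1$ contributes a further factor $\kh_{\sigma,+}/\kh_\sigma$, which by \eqref{eq:h-sig-alt}--\eqref{eq:ell-sig-1} is of order $2$ for $d=4,5$ and of order $L^{-(d-6)/2-\tau}<1$ for $d\ge6$ (in the $\ell_\sigma$ case). Multiplying these, the $\sigma_\z$ contribution is of order $\max\{L^{-(d-2)/2}\,\ell_{\sigma,+}/\ell_\sigma,\;L^{-d/4}\,h_{\sigma,+}/h_\sigma\}$, which is $\le C\eta$ by the restrictions \eqref{eq:exbd3}--\eqref{eq:exbd4}.

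The $\sigma_\ox$ component is the crux. The same mechanism applies, but the $\sigma_\ox$ parameter enters \emph{squared}, giving an observable-parameter factor $(\kh_{\sigma,+}/\kh_\sigma)^2$; since this is of order $4$ (growing) for $d=4,5$, the entire gain has to come from the bulk factor $\kh_{\bulk,+}/\kh_\bulk$ together with the change $\tilde{\scale}_+(\kb)/\tilde{\scale}(\kb)=L^{-(d-4)\kb_2}$ of the weight $\tilde{g}^{9/4}\tilde{\scale}(\kb)$ attached to the $T^G(h)$ part in \eqref{eq:Wkappa-def} (recall \eqref{eq:rhotil-def}). The couplings involving $\dot K_\o,\dot K_\x$ carry one extra power of the bulk field, and those involving $\dot K_\bulk$ carry two extra powers together with a combinatorial $|\cB(B)|$, so all of them are at least as contracting as the principal term $\sigma_\ox e^{-V_\bulk(B)}(1-\Tay_0)(e^{V_\bulk(b_\ox)}\dot K_\ox(b_\ox))$, and it suffices to bound the latter. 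Collecting factors, its $T_0(\ell)$ part is of order $(\ell_{\sigma,+}/\ell_\sigma)^2L^{-(d-2)/2}$ and its weighted $T^G(h)$ part of order $(h_{\sigma,+}/h_\sigma)^2L^{-d/4}\,L^{-(d-4)\kb_2}$; by the restrictions \eqref{eq:exbd2}--\eqref{eq:exbd3} these are bounded by $CL^{-(\tau+d-4)}$ and $CL^{-d/4-(d-4)\kb_2}$ respectively, which are exactly the two arguments of the maximum defining $\eta$. I expect the main obstacle to be precisely this bookkeeping: carrying the stability bounds through the observable sectors without losing the gains, and checking that the exponents $\tau,\kaa_2,\kb_2,\kp_2$ fixed in Section~\ref{sec:prrt} make all of these inequalities hold --- which is why $\eta$ was defined the way it was. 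Summing the four grading components then yields $\norm{\Eplus\theta\cL\dot K(B)}_{\Wkappa_+}\le C\eta\norm{\dot K}_{\Wkappa}$.
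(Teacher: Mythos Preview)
Your proposal is essentially correct and follows the same strategy as the paper, though organized differently.  The paper does not split $\pi_*\cL\dot K$ by grading and treat each sector separately; instead it packages the whole componentwise computation into a single lemma (Lemma~\ref{lem:1-Loc}), which bounds $\|E(1-\Loc)F(b)\|_{T_\varphi(\kh_+)}$ for $E,F\in\cS$ by $c\,\gamma(\kh,b)P^6_{\kh_{\bulk,+}}(\varphi)\|E\|_{T_\varphi(\kh)}\sup_{t\in[0,1]}\|F\|_{T_{t\varphi}(\kh)}$ with $\gamma(\ell,b)=L^{-3(d-2)}+L^{-(\tau+d-4)}\1_{\{\o,\x\}\cap b\neq\emptyset}$ and $\gamma(h,b)=L^{-3d/2}+L^{-d/4}\1_{\{\o,\x\}\cap b\neq\emptyset}$.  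The proof of that lemma is exactly your componentwise bookkeeping, with the ratios \eqref{eq:h-ratios}--\eqref{eq:h-ratios-2} supplying the gains you describe; the paper then sums over $b\in\cB(B)$ and assembles the $\Wkappa_+$-norm.

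One point you gloss over deserves care: your explicit formulas carry $(1-\Tay_0)$ applied to $e^{V_\bulk(b)}(\cdots)$, and the standard Taylor-remainder bound \cite[Lemma~7.5.3]{BBS-brief} requires $\sup_{t\in[0,1]}\|e^{V_\bulk(b)}(\cdots)\|_{T_{t\varphi}(\kh)}$, which is not finite because $e^{+V_\bulk}$ grows.  The paper (following \cite{MPS23}) handles this via the decomposition $\cL=\sum_{k=0}^2\frac1{k!}\cL_1(V^k\,\cdot\,)+\cL_2$: expand $e^{V(b)}=1+V+\tfrac12 V^2+F_V$, so that $(1-\Loc)$ is applied either to a polynomial in $V$ times $\dot K$ (bounded via Lemma~\ref{lemma:V_bounds}) or to $F_V\dot K$ on which $(1-\Loc)$ acts as the identity and can be rewritten as $\dot K\cdot(1-\Loc)(e^{-V}F_V)$ with $e^{-V}F_V$ bounded by \eqref{eq:4.6-3}.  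You should make this split explicit; otherwise ``absorbing $e^{\pm V_\bulk}$ into the regulator'' is not justified.  Also, your displayed formula for $\pi_\z\cL\dot K$ omits the cross-terms coming from $(e^{-V(B)})_\z$ multiplying the bulk part of $(1-\Loc)(e^{V(b)}\dot K(b))$ summed over all $b$; these are harmless (they are $\sigma_\z\varphi^{(1)}$ times the bulk $\cL$), but should be accounted for.
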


\begin{corollary}	\label{cor:DKS1}
Assume \ref{quote:assumPhi} and $j_+ \ge j_\ox$.
Then
\begin{align}
	\norm{ D_K S_1 (V,0) }_{\Wkappa \rightarrow \Wkappa_+}
		\le
	(C+2)
    \eta .
\end{align}
\end{corollary}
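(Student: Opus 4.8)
The plan is to extract the linear part of $S_1$ in $K$ and then invoke Proposition~\ref{prop:cL_contract}. Since $\cL$ in \eqref{eq:LK0} is linear in $K$, and since, by Lemma~\ref{lemma:crucial_contraction_pre_obs}, the Fr\'echet derivative of $\cE(V,\cdot)$ vanishes at $K=0$, differentiating the decomposition \eqref{eq:cE_defi} at $K=0$ gives exactly \eqref{eq:DKS1bd}, i.e.
\begin{equation}
\label{eq:DKS1-plan}
    D_K S_1(V,0)(\dot K) = e^{\delta \bar u_{\pt}(B)}\, \Eplus \theta \cL \dot K .
\end{equation}
Thus the corollary reduces to bounding the right-hand side of \eqref{eq:DKS1-plan} in $\norm{\cdot}_{\Wkappa_+}$, and the only work beyond Proposition~\ref{prop:cL_contract} is to control the prefactor $e^{\delta \bar u_{\pt}(B)}$, which is a number plus a multiple of $\sigma_\ox$.

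For the prefactor, I would use that under \ref{quote:assumPhi} one has $e^{\delta \bar u_{\pt}(B)} = e^{\delta u_{\pt,\bulk}|B|}\big(1 + \sigma_\ox C_+(\x)\,\1_{\o,\x \in B}\big)$ by \eqref{eq:deltq_pt} and $\sigma_\ox^2 = 0$. The bulk exponent $\delta u_{\pt,\bulk}|B|$ is $O(\tilde g)$ by the standard bulk perturbative estimates of \cite[Proposition~5.3.1]{BBS-brief} (see also \cite{MPS23}), and the quantity $|C_+(\x)|\,\kh_\sigma^2$ is small, exactly as in the estimate of $\norm{\sigma_\ox C_{\ka,j+1}(\x)}_{T_0(\ell_j)}$ recorded above; hence the $T_\varphi(\kh)$-seminorm of $e^{\delta \bar u_{\pt}(B)}$ is at most $2$ for $\tilde g$ small and $L$ large, with an $L$-independent bound. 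Then I would pull this prefactor through $\norm{\cdot}_{\Wkappa_+}$ using the product property of the norms (the $\lambda_K=0$ specialisation of \eqref{eq:subx}) and apply Proposition~\ref{prop:cL_contract} to $\Eplus \theta \cL \dot K(B)$, obtaining
\begin{equation}
    \norm{D_K S_1(V,0)(\dot K)}_{\Wkappa_+} \le 2\,\norm{\Eplus \theta \cL \dot K(B)}_{\Wkappa_+} \le 2C\eta\, \norm{\dot K}_{\Wkappa};
\end{equation}
taking the supremum over $\norm{\dot K}_{\Wkappa}\le 1$ and renaming the constant gives the claim.

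I do not expect a genuine obstacle at this step. The identity \eqref{eq:DKS1-plan} is immediate from linearity and Lemma~\ref{lemma:crucial_contraction_pre_obs}, and the prefactor bound is a routine seminorm computation already essentially present in the excerpt. The substantive ingredient --- the contraction factor $\eta$, which comes from the $(1-\Loc)$ projection in \eqref{eq:LK0} together with the decay of the covariance $C_+$ past the mass scale and the deliberately small choice of the observable weights $h_{\sigma,j}$ and $\ell_{\sigma,j}$ --- is the content of Proposition~\ref{prop:cL_contract}, deferred to Section~\ref{sec:decomp_cL}. The only point requiring a moment's care is that the bound on $e^{\delta \bar u_{\pt}(B)}$ be independent of $L$, so that the final constant inherits the $L$-independence of $C$ in Proposition~\ref{prop:cL_contract}; this holds since that bound depends only on $\tilde g$ being small.
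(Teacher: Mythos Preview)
Your proposal is correct and follows essentially the same route as the paper: obtain \eqref{eq:DKS1bd} from the decomposition \eqref{eq:cE_defi} and Lemma~\ref{lemma:crucial_contraction_pre_obs}, then control the two factors separately via Proposition~\ref{prop:cL_contract} and a bound on the prefactor. The only cosmetic difference is that the paper invokes Lemma~\ref{lemma:stability_estimate_obs} (applied at $K=0$, where $\delta\bar u_+ = \delta\bar u_{\pt}$) for the $L$-independent bound on $e^{\delta\bar u_{\pt}(B)}$, whereas you sketch that bound directly; the content is the same.
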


\begin{proof}
The exponential prefactor in \eqref{eq:DKS1bd} can be bounded using Lemma~\ref{lemma:stability_estimate_obs},
and $\Eplus \theta \cL \dot{K}$ is bounded using
Proposition~\ref{prop:cL_contract}.
\end{proof}

\begin{proof}[Proof of Proposition~\ref{prop:crucial}]
We combine the bounds on $D_K S_0 (V,0)$ and $D_K S_1 (V,0)$ from \eqref{eq:DKS0}
and Corollary~\ref{cor:DKS1},
to obtain
\begin{align}
	\norm{D_K \Phi_+^K (V, 0)}_{\Wkappa \rightarrow \Wkappa_+}
		\le
	(C+1)
    \eta
		.
\end{align}
It suffices to bound by $\eta$ the norm of the difference between
the derivative at $K$ and at $K=0$.  For this, we use
\begin{align}
	\label{eq:MVT_pf_contrac}
	&\norm{D_K ( \Phi_+^K (V, K) - \Phi^K_+ (V,0) ) }_{\Wkappa \rightarrow \Wkappa_+}
		\le
		\sup_{t\in [0,1]}
		\norm{D^2_K \Phi^K_+ (V,  t K) }_{ (\Wkappa)^2 \rightarrow \Wkappa_+} \norm{K}_{\Wkappa}.
\end{align}
To bound $D_K^2 \Phi^K_+ = D_K^2 S_0 + D_K^2 S_1$, we use Propositions~\ref{prop:S0_obs}--\ref{prop:S1_obs} with $\lambda_K = \tilde{g}^{9/4} \min\{ \tilde{\scale} ({\kb}) , \, \scale \}$ to see that
\begin{align}
	\norm{\Phi_+^K}_{\Wkappa_+ (\lambda_K)}
		\le O_L (1) ( \tilde{g}_+^3 \tilde{\scale}_+ (\kaa) + \lambda_K )
		\le O_L (1) \lambda_K,
\end{align}
where the final inequality follows for small $\tilde{g}$ because
$\kaa_i \ge \max\{ \kb_i ,  1 \}$ for each of $i=1,2$,
by \eqref{eq:exbd1}.
Thus by \eqref{eq:DKF},
\begin{align}
	\norm{D^2_K \Phi^K_+ (V,  t K) }_{ (\Wkappa)^2 \rightarrow \Wkappa_+} \norm{K}_{\Wkappa}
		& \le \frac{2}{\lambda_K^2} \norm{\Phi^K_+ (V,  t K) }_{ \Wkappa_+ (\lambda_K)} \norm{K}_{\Wkappa} \nnb
		& 		
		\le C_L \tilde{g}_+^{-\frac{9}{4}} \min\{ \tilde{\scale} ({\kb}) , \, \scale \}^{-1} \tilde{g}^3 \tilde{\scale} ( {\kaa})
		\le \eta
    ,
\end{align}
where the final inequality again follows for $\tilde{g}$ small depending on $L$,
since $\kaa_i \ge \max\{ \kb_i ,  1 \}$.
This concludes the proof.
\end{proof}

\subsubsection{Proof of Proposition~\ref{prop:cL_contract}}
\label{sec:decomp_cL}

It remains to prove Proposition~\ref{prop:cL_contract}.
To do so, we follow the procedure used to prove the bulk contraction
in \cite[Section~9.4.2]{MPS23}.
This requires the extension of the bulk result \cite[(10.5.1)]{BBS-brief}
to the full space $\cN$ including observables.
Unlike the situation for the bounds on $S_0$ and $S_1$ in
Propositions~\ref{prop:S0_obs}--\ref{prop:S1_obs},
the observables now play an important role in the proof.  We
prove Proposition~\ref{prop:cL_contract} here,
with reference to Appendix~\ref{sec:pfSProps} for some details.

For $\kh >0$, we define a linear polynomial in $|\varphi|$  by
\begin{equation}
\label{eq:Phdef}
    P_{\kh}(\varphi) = 1 + \frac{|\varphi|}{\kh}.
\end{equation}
We need the following lemma
from \cite{MPS23}, which
enables the regulator to be propagated in estimates from one scale to the next.

\begin{lemma} \label{lemma:EG}
{\!\! \cite[Lemma~~8.5]{MPS23}.}
For any $t \geq 0$, $p \ge 0$, $\varphi \in \R^n$, $b \in \cB$, and $B \in \cB_+$,
\begin{align}
	& \Eplus \Bigg[ G^t (b, \varphi + \zeta_b) P_{h_+}^p (\varphi + \zeta_b)  \prod_{b' \in \cB ,  \, b' \neq b} G^{2t} (b, \varphi + \zeta_{b'}) \Bigg] \leq O_{p,t} (1) G_+^t (B,\varphi)
	.
\end{align}
\end{lemma}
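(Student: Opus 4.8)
Since Lemma~\ref{lemma:EG} is quoted verbatim from \cite[Lemma~8.5]{MPS23}, the plan is simply to cite it: its left-hand side involves only the bulk regulator $G$ and the bulk large-field scale $h_{\bulk,j}$, so the observable components of $\cN$ added in this section play no role whatsoever and no adaptation is needed. For orientation I indicate how one would prove it from scratch, and where the genuine work lies.

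First I would use the block structure of the covariance to reduce to a single block. Since $C_+ = C_{\ka,j+1} = \gamma_{j+1}(\ka)P_{j+1}$ and $P_{j+1}$ acts within each scale-$(j+1)$ block, the fields $\{\zeta_{b'} : b' \in \cB(B)\}$ attached to the $j$-blocks inside a scale-$(j+1)$ block $B$ are jointly Gaussian, independent of the fields in the other scale-$(j+1)$ blocks, and satisfy $\sum_{b' \in \cB(B)}\zeta_{b'} = 0$ almost surely (each $\zeta_{j+1}$ sums to zero on every $(j+1)$-block, cf.\ \cite[(2.14)]{MPS23}). Hence $\Eplus$ factorises over $(j+1)$-blocks; each $(j+1)$-block other than the one $B$ containing $b$ contributes a harmless factor $\le\Eplus[1]=1$ (because $G^{2t}\le1$), and it remains to bound the expectation over $B$ alone.

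The heart of the matter is then a \emph{pointwise} estimate: I claim that for every admissible $\zeta$ the integrand $G^t(\varphi+\zeta_b)\,P_{h_+}^p(\varphi+\zeta_b)\prod_{b'\in\cB(B),\,b'\ne b}G^{2t}(\varphi+\zeta_{b'})$ is at most $O_{p,t,L}(1)\,G_+^t(B,\varphi)$, after which integrating this bound against $\Eplus$ (a probability measure) gives the lemma — so the only property of the Gaussian used beyond normalisation is $\sum_{b'}\zeta_{b'}=0$. The pointwise estimate rests on convexity of $y\mapsto|y|^4$, on the identity $h_{\bulk,j+1}^4 = L^{-d}(\tilde g_j/\tilde g_{j+1})\,h_{\bulk,j}^4$, and on $1\le\tilde g_j/\tilde g_{j+1}\le2$ from \eqref{eq:tgc}. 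Morally: via Jensen's inequality the $L^d$ regulator factors on $\cB(B)$ bound $\prod_{b'}G^t(\varphi+\zeta_{b'})$ above by $G_+^t(B,\varphi)$ (this is where $h_{\bulk,j+1}^4\asymp L^{-d}h_{\bulk,j}^4$ enters), and the left-hand side carries an \emph{extra} factor $G^t$ on each of the $L^d-1$ blocks $b'\ne b$; this surplus decay, together with the triangle inequality $|\varphi+\zeta_b|\le L^d|\varphi|+\sum_{b'\ne b}|\varphi+\zeta_{b'}|$ (which follows from $\sum_{b'}(\varphi+\zeta_{b'})=L^d\varphi$) and the fact that $L$ is large, is what absorbs the polynomial weight $P_{h_+}^p$, since the quartic decay of $G$ dominates any power of $P_{h_+}$ (using $h_{\bulk,j+1}\ge L^{-d/4}h_{\bulk,j}$).

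The step I expect to be the main obstacle is retaining the \emph{full} power $t$ of the target regulator, rather than only $G_+^{t/2}$: a convexity estimate over all of $\cB(B)$ centred at $\varphi$ is limited by the lone factor $G^t$ at block $b$ and yields only $G_+^{t/2}$. I would circumvent this with a case split on the size of $|\zeta_b|$ relative to $L^d|\varphi|$. If $|\zeta_b|\le\eps L^d|\varphi|$ for a small fixed $\eps$, then convexity over just the $L^d-1$ blocks $b'\ne b$ (each with weight $2t$, total of order $L^d$, exceeding the order-one cost $h_{\bulk,j}^4/h_{\bulk,j+1}^4\le L^d$) already delivers $G_+^t(B,\varphi)$ with room to spare for $P_{h_+}^p$, because that convexity is centred within $\eps L^d|\varphi|/(L^d-1)=O(\eps|\varphi|)$ of $\varphi$. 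If instead $|\zeta_b|>\eps L^d|\varphi|$, then $|\varphi+\zeta_b|\ge c_L|\varphi|$ with $c_L$ of order $L^d$, so the single factor $G^t(\varphi+\zeta_b)$ by itself produces $G_+^t(B,\varphi)$ and far more, again leaving ample surplus for $P_{h_+}^p$. The constants so produced depend only on $p,t,\kappa,n,L$ and are uniform in the scale $j$ and in $\ka\in\II_+(\tilde\ka)$, because by \eqref{eq:tgc} the ratio $h_{\bulk,j+1}/h_{\bulk,j}$ is confined to an interval independent of $j$ and $\ka$; as $L,\kappa,n$ are fixed throughout, this is the $O_{p,t}(1)$ of the statement.
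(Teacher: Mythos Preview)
Your proposal is correct and matches the paper's approach exactly: the paper does not prove this lemma but simply cites \cite[Lemma~8.5]{MPS23}, since the statement involves only the bulk regulator and bulk large-field scale and is therefore unaffected by the observable extensions introduced here. Your additional sketch of how the underlying proof works is not present in the paper (which gives no proof at all), but it is a reasonable orientation and does not conflict with anything.
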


The $T_0(\ell)$-seminorm is one of two terms in the definition of the
$\Wkappa$-norm in \eqref{eq:Wkappa-def}.  The next lemma provides a way to bound
the $T_\varphi(\ell)$-semi-norm in terms of the $\Wkappa$-norm.
It is an extension of \cite[(10.4.5)]{BBS-brief} to include the observables.

\begin{lemma} \label{lemma:FFF-1}
For $F \in \cN$ and $\varphi \in \R^n$,
\begin{align}
\label{eq:FFF-1}
    \|  {F} \|_{T_{\varphi}(\ell)}
    \le P^{10}_{\ell_\bulk} ( \varphi) \norm{ {F}}_{\Wkappa}.
\end{align}
\end{lemma}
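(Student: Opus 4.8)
The plan is to adapt the proof of the bulk estimate \cite[(10.4.5)]{BBS-brief}: for each component of $F$ we split off a low-degree Taylor polynomial at $\varphi=0$, controlled by the $T_0(\ell)$ term of the norm \eqref{eq:Wkappa-def}, and estimate the Taylor remainder using the $T^G(h)$ term. It suffices to prove, for each $b\in\cB$, that $\norm{F(b)}_{T_\varphi(\ell)}$ is at most $P^{10}_{\ell_\bulk}(\varphi)$ times the $b$-summand in \eqref{eq:Wkappa-def}; maximising over $b$ then gives \eqref{eq:FFF-1}. By the definition \eqref{eq:T_norm}, $\norm{F(b)}_{T_\varphi(\ell)}$ is a sum over $*\in\{\bulk,\o,\x,\ox\}$ of $\ell_{\sigma,j}^{e_*}\norm{F_*(b,\cdot)}_{T_\varphi(\ell)}$, where $e_*\in\{0,1,2\}$ counts the $\sigma$-factors (with $\ell_{\sigma,j}^0=1$), so we may work one scalar component at a time.

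Fix $*$ and write $F_*(b,\cdot)=\Tay_9 F_*(b,\cdot)+R_*(b,\cdot)$, with $\Tay_9$ the degree-$9$ Taylor polynomial in $\varphi$ at $0$. For the polynomial part, the standard change-of-base-point computation for the $T$-seminorm (as in \cite[Chapter~7]{BBS-brief}) gives $\norm{\Tay_9 F_*(b,\cdot)}_{T_\varphi(\ell)}\le P^{9}_{\ell_\bulk}(\varphi)\norm{F_*(b,\cdot)}_{T_0(\ell)}$, because $\Tay_9 F_*$ has degree $\le 9$ and truncation does not increase the $T_0(\ell)$-seminorm. For the remainder, Taylor's theorem with integral remainder expresses the order-$l$ derivative of $R_*(b,\cdot)$ at $\varphi$, for $l\le 9$, through the tenth derivative $D^{10}F_*(b,\cdot)$ on the segment $[0,\varphi]$, while for $l\ge 10$ it equals $D^lF_*(b,\varphi)$. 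Controlling $D^{10}F_*(b,t\varphi)$ through $\norm{F_*(b,\cdot)}_{T_{t\varphi}(h)}\le G_j(b,t\varphi)\norm{F_*(b)}_{T^G(h)}$, observing that $G_j(b,\cdot)$ in \eqref{eq:Gj} is radially decreasing so $\sup_{t\in[0,1]}G_j(b,t\varphi)=G_j(b,0)=1$, and using that $\ell_{\bulk,j}\le h_{\bulk,j}$, one obtains
\[
  \norm{R_*(b,\cdot)}_{T_\varphi(\ell)}\le 2\Big(\frac{\ell_{\bulk,j}}{h_{\bulk,j}}\Big)^{10}P^{10}_{\ell_\bulk}(\varphi)\,\norm{F_*(b)}_{T^G(h)}.
\]
Multiplying by $\ell_{\sigma,j}^{e_*}$, using $\ell_{\sigma,j}\le h_{\sigma,j}$ from \eqref{eq:exbd7} and $G_j\le1$, and summing over the four values of $*$, this yields $\norm{\Tay_9 F(b)}_{T_\varphi(\ell)}\le P^{9}_{\ell_\bulk}(\varphi)\norm{F(b)}_{T_0(\ell)}$ and $\norm{R(b)}_{T_\varphi(\ell)}\le C(\ell_{\bulk,j}/h_{\bulk,j})^{10}P^{10}_{\ell_\bulk}(\varphi)\norm{F(b)}_{T^G(h)}$ for a universal constant $C$, since each $\ell_{\sigma,j}^{e_*}\norm{F_*(b)}_{T^G(h)}=(\ell_{\sigma,j}/h_{\sigma,j})^{e_*}h_{\sigma,j}^{e_*}\norm{F_*(b)}_{T^G(h)}\le\norm{F(b)}_{T^G(h)}$.

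It remains to check $C(\ell_{\bulk,j}/h_{\bulk,j})^{10}\le\tilde g_j^{9/4}\tilde\scale_j(\kb)$. By \eqref{eq:hbulk-def}, $(\ell_{\bulk,j}/h_{\bulk,j})^{10}=(\ell_0/k_0)^{10}\tilde g_j^{5/2}L^{-5(d-4)j/2}$, while $\tilde\scale_j(\kb)=L^{-(d-4)[\kb_1(j_\ox-1)+\kb_2(j-j_\ox+1)]}$ by \eqref{eq:rhotil-def}. Since $\kb_2\le\kb_1<\frac52$ by \eqref{eq:kb12} and $(j_\ox-1)+(j-j_\ox+1)=j$, the exponent $\frac52 j-\kb_1(j_\ox-1)-\kb_2(j-j_\ox+1)\ge j(\frac52-\kb_1)\ge0$, so the inequality reduces to $C(\ell_0/k_0)^{10}\tilde g_j^{1/4}\le L$ raised to a nonnegative power, which holds once $\tilde g_j$ is small enough depending on $L$ (for $d=4$ the $\scale$-factors equal $1$ and one needs only $C(\ell_0/k_0)^{10}\tilde g_j^{1/4}\le1$). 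Adding the polynomial and remainder bounds gives $\norm{F(b)}_{T_\varphi(\ell)}\le P^{10}_{\ell_\bulk}(\varphi)\big(\norm{F(b)}_{T_0(\ell)}+\tilde g_j^{9/4}\tilde\scale_j(\kb)\norm{F(b)}_{T^G(h)}\big)$, and maximising over $b$ proves \eqref{eq:FFF-1}. The one point requiring attention is the choice of Taylor degree: it must be large enough that the gain $(\ell_{\bulk,j}/h_{\bulk,j})^{10}$ dominates the weight $\tilde g_j^{9/4}\tilde\scale_j(\kb)$ appearing in \eqref{eq:Wkappa-def}, which is precisely what the bound $\kb_1<\frac52$ of \eqref{eq:kb12} (together with smallness of $\tilde g_j$ to absorb the $L$-dependent constant $(\ell_0/k_0)^{10}$) provides. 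The observable components $\o,\x,\ox$ are treated exactly as the bulk component and contribute only the harmless factors $\ell_{\sigma,j}^{e_*}/h_{\sigma,j}^{e_*}\le1$, which is why the prefactor $P^{10}_{\ell_\bulk}$ depends on $\ell_\bulk$ alone.
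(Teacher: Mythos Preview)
Your proof is correct and follows essentially the same approach as the paper. The paper's proof first observes that the reduction to $F\in\cN_\bulk$ is immediate from the definition of the norms (your componentwise argument using $\ell_{\sigma,j}\le h_{\sigma,j}$ makes this explicit), then verifies the same key inequality $(\ell_{\bulk}/h_{\bulk})^{10}\le \tilde g^{9/4}\tilde\scale(\kb)$ via $\kb_2\le\kb_1<\tfrac52$ and smallness of $\tilde g$, and finally cites \cite[Lemma~8.2.3]{BBS-brief} for the Taylor-split estimate that you have written out in detail.
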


\begin{proof}
By definition of the norms, it suffices to prove \eqref{eq:FFF-1} for $F\in \cN_\bulk$,
which we now assume.
According to the definitions in \eqref{eq:hbulk-def},
\begin{equation}
\label{h-ell-ratio-W}
    \Big( \frac{\ell_\bulk}{h_\bulk} \Big)^{10}
    =
    \frac{\ell_0^{10}}{k_0^{10}} (\tilde g \scale)^{5/2}
    =
    \frac{\ell_0^{10}}{k_0^{10}} \tilde g^{9/4}\scale^{\kb_1} (\tilde g^{1/4}\scale^{\frac 52 - \kb_1})
    \le \frac{\ell_0^{10}}{k_0^{10}} \tilde g^{9/4} \tilde\scale (\kb)
    (\tilde g^{1/4}\scale^{\frac 52 - \kb_1}),
\end{equation}
where the inequality follows from $\kb_2 \le \kb_1$
(recall \eqref{eq:kb12}) when $d >4$ and from $\scale = \tilde{\scale} = 1$ when $d=4$.
Since $\kb_1 < \frac 52$ (again by \eqref{eq:kb12}),
the last factor in parenthesis is as small as desired
for small $\tilde g$,
and in particular the right-hand side of \eqref{h-ell-ratio-W} is bounded above
simply by $\tilde g^{9/4} \tilde\scale(\kb)$.
Then \eqref{eq:FFF-1} follows exactly as in the proof of \cite[Lemma~8.2.3]{BBS-brief}.
\end{proof}

A key step is provided by the
next lemma, which is related to \cite[Proposition~4.9]{BS-rg-IE}.
The lemma shows how, under a change of scale, the removal of relevant and
marginal terms (in the RG sense) gives rise to an inverse power of $L$, which
itself gives rise to the contraction of the RG map.
The proof proceeds by applying the bulk result componentwise in $\cN$.
For its statement, we recall that the subspace $\cS$ is defined
in Definition~\ref{def:SK}, and we define
\begin{align}
    \gamma(\ell,b) & = L^{-3(d-2)} + L^{-( \tau+d-4)}\1_{\{\o,\x\}\cap b \neq \emptyset},
    \\
    \gamma(h,b) & = L^{-3d/2} + L^{-d/4} \1_{\{\o,\x\}\cap b \neq \emptyset}.
\end{align}
In preparation for the proof, we also
observe that the definitions \eqref{eq:hbulk-def}, \eqref{eq:h-sig-alt},
\eqref{eq:ell-sig-1},
and the bound \eqref{eq:tgc} imply that
\begin{align}
\label{eq:h-ratios}
    \frac{\kh_{\varnothing,+}}{\kh_{\varnothing}}
    \le
    \begin{cases}
    L^{-\frac{d-2}{2}} & (\kh=\ell)
    \\
    2
    L^{-d/4} & (\kh=h),
    \end{cases}
    \qquad
    \frac{\kh_{\sigma,+}}{\kh_{\sigma}}
    \le
    \begin{cases}
    L^{-\tau-\frac{d-6}{2}} & (\kh=\ell)
    \\
    2 & (\kh=h),
    \end{cases}
\end{align}
\begin{align}
\label{eq:h-ratios-2}
    \frac{\kh_{\varnothing,+}}{\kh_{\varnothing}}
    \frac{\kh_{\sigma,+}}{\kh_{\sigma}}
    \le
    \begin{cases}
    L^{-(\tau+d-4)} & (\kh=\ell)
    \\
    4L^{-d/4} & (\kh=h).
    \end{cases}
\end{align}

\begin{lemma}
\label{lem:1-Loc}
There is an $L$-independent constant $c$ such that
for $E, F \in \cS$,
for $b\in\cB$, and for $\kh\in \{\ell,h\}$,
\begin{align}
    \| E (1-\Loc)F (b)\|_{T_\varphi(\kh_+)}
    \le
    c
    \gamma(\kh,b) P_{\kh_{\bulk},+}^6(\varphi)
    \norm{E(b)}_{T_{\varphi} (\kh)}
    \sup_{0\le t\le 1} \| F (b) \|_{T_{t\varphi}(\kh)}
    .
\end{align}
\end{lemma}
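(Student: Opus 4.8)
The plan is to reduce the statement to its bulk analogue \cite[(10.5.1)]{BBS-brief} (equivalently the corresponding estimate in \cite{MPS23}) applied componentwise in $\cN$, since the product structure of the $T_\varphi(\kh)$-semi-norm in \eqref{eq:T_norm} turns a bound on each of the four components $E_\bulk, E_\o, E_\x, E_\ox$ into a bound on the whole. The key point is that $\Loc$ acts componentwise (see \eqref{eq:Loc_defi}), taking the degree-$4$ Taylor polynomial on the bulk piece and the degree-$0$ Taylor polynomial on each observable piece. Thus $(1-\Loc)F$ decomposes as $(1-\Tay_4)F_\bulk$ in the bulk, and $\sigma_*(1-\Tay_0)F_*$ on each observable slot $*\in\{\o,\x,\ox\}$.

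The main steps I would carry out are as follows. First, for the bulk component: apply the known bulk estimate (the $r=0$ version of the contraction lemma used in \cite[Section~9.4.2]{MPS23}, whose content is \cite[Proposition~4.9]{BS-rg-IE} / \cite[(10.5.1)]{BBS-brief}) to get $\|E_\bulk(1-\Tay_4)F_\bulk(b)\|_{T_\varphi(\kh_{\bulk,+})} \le c\, (\kh_{\bulk,+}/\kh_{\bulk})^{3}\, P^6_{\kh_{\bulk,+}}(\varphi)\, \|E_\bulk(b)\|_{T_\varphi(\kh_\bulk)} \sup_t \|F_\bulk(b)\|_{T_{t\varphi}(\kh_\bulk)}$; here the gain of three powers of $\kh_{\bulk,+}/\kh_{\bulk}$ comes from the fact that removing the degree-$\le 4$ part leaves terms of degree $\ge 5$, and combined with \eqref{eq:h-ratios} this produces $L^{-3(d-2)}$ for $\kh=\ell$ and $(2L^{-d/4})^3 = O(L^{-3d/2})$ for $\kh=h$, the first summand of $\gamma(\kh,b)$. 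Second, for an observable component, say $*=\o$ and assuming $\o\in b$ (otherwise $F_\o(b)=0$ and there is nothing to prove): writing $(1-\Tay_0)F_\o$ means subtracting $F_\o(b,0)$, so the remainder vanishes to first order at $\varphi=0$, giving one power of $\kh_{\bulk,+}/\kh_{\bulk}$ from the bulk-field direction. Crucially the $T_\varphi(\kh)$-semi-norm of the observable $\sigma_\o$ itself is weighted by $\kh_{\sigma,+}$ at the $+$ scale versus $\kh_\sigma$ at the current scale; so the product picks up an extra factor $\kh_{\sigma,+}/\kh_\sigma$. By \eqref{eq:h-ratios-2}, the product of these two ratios is exactly $L^{-(\tau+d-4)}$ for $\kh=\ell$ and $4L^{-d/4}$ for $\kh=h$, which is the $\1_{\{\o,\x\}\cap b\neq\emptyset}$ summand in $\gamma(\kh,b)$. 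For $*=\ox$ the weight is $\kh_\sigma^2$, so one would naively get $(\kh_{\sigma,+}/\kh_\sigma)^2$; but the contraction gain needed is only a single factor, and since $\kh_{\sigma,+}/\kh_\sigma\le 1$ for $\kh=\ell$ this is harmless, while for $\kh=h$ the factor $(2)^2=4$ is still absorbed into the constant $c$ together with the single $\kh_{\bulk,+}/\kh_\bulk = 2L^{-d/4}$ gain. Third, combine the four component bounds: the $P^6$ factors are each bounded by $P^6_{\kh_{\bulk,+}}(\varphi)$ (using $\ell_{\bulk,+}\le h_{\bulk,+}$ so $P_{h_{\bulk,+}}\le P_{\ell_{\bulk,+}}$, and monotonicity of $P_\kh$ in $1/\kh$), the norm factors reassemble via \eqref{eq:T_norm} into $\|E(b)\|_{T_\varphi(\kh)}$ and $\sup_t\|F(b)\|_{T_{t\varphi}(\kh)}$, and each contraction factor is dominated by $\gamma(\kh,b)$, so taking $c$ to be the largest of the finitely many constants that appear gives the claim.

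The step I expect to be the main obstacle is the second one: getting the observable pieces to contract correctly while keeping the $P^6$-type polynomial factor under control and not accidentally losing a power of the regulator. Specifically, for the $\sigma_\ox$ component one must be careful that the natural estimate produces $(\kh_{\sigma,+}/\kh_\sigma)^2 \cdot (\kh_{\bulk,+}/\kh_\bulk)$ rather than only one power of the field ratio, and one must argue that $(\kh_{\sigma,+}/\kh_\sigma)^2$ is bounded (it equals $1$ up to the universal constant $4$ for $\kh=h$ by \eqref{eq:h-sig-alt}, since $h_{\sigma,+}/h_{\sigma}=2$; and is $\le 1$ for $\kh=\ell$) so that the required contraction still follows purely from the bulk-field direction gain. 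A secondary subtlety is the interplay of the scale-$j$ regulator $G$ (built from $h_{\bulk,j}$) appearing implicitly when one later uses this lemma inside the full $\Wkappa$-norm estimate; but within Lemma~\ref{lem:1-Loc} itself only the $T_\varphi(\kh_+)$-semi-norm appears, so the regulator does not enter here and the argument is a clean componentwise application of the bulk result together with the ratio bounds \eqref{eq:h-ratios}--\eqref{eq:h-ratios-2}. Everything else is bookkeeping: the finitely-many universal constants merge into $c$, and the polynomial factors merge into $P^6_{\kh_{\bulk,+}}(\varphi)$ after a routine estimate of the form $P^a_\kh P^b_\kh \le P^{a+b}_\kh$ and $P_\kh \le P_{\kh'}$ for $\kh\ge\kh'$.
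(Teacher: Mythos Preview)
Your approach is the same as the paper's --- componentwise application of the Taylor-remainder contraction bound, then reassembly via the product structure of the $T_\varphi(\kh)$-semi-norm --- but your bulk estimate contains a genuine error that, as written, would make the constant $L$-dependent.

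You write that the bulk piece gains ``three powers of $\kh_{\bulk,+}/\kh_{\bulk}$'' and compute $(2L^{-d/4})^3 = O(L^{-3d/2})$. But $(2L^{-d/4})^3 = 8L^{-3d/4}$, not $O(L^{-3d/2})$, and likewise $3$ powers of $L^{-(d-2)/2}$ is $L^{-3(d-2)/2}$, not $L^{-3(d-2)}$. The correct exponent is $6$, not $3$, and --- crucially --- not the $5$ that your justification ``degree~$\ge 5$'' would actually give. The point is that $F\in\cS$ (Definition~\ref{def:SK}(ii)) forces $F_\bulk$ to be even in $\varphi$, so $\Tay_4 F_\bulk = \Tay_5 F_\bulk$ and the remainder $(1-\Tay_4)F_\bulk$ has degree $\ge 6$. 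This is exactly how the paper gets $p_\varnothing = 6$ and hence $(L^{-(d-2)/2})^6 = L^{-3(d-2)}$ and $(2L^{-d/4})^6 = O(L^{-3d/2})$. With only $5$ powers one would get $L^{-5(d-2)/2}$, which is \emph{not} bounded by $c\,L^{-3(d-2)}$ with $c$ independent of $L$, so the lemma as stated would fail.

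Two smaller points. First, the $*$-component of $E(1-\Loc)F$ is a sum over factorisations $\sigma_{a_1}\sigma_{a_2}=\sigma_*$, e.g.\ the $\o$-component is $E_\bulk(1-\Tay_0)F_\o + E_\o(1-\Tay_4)F_\bulk$; you only treat the first summand. The others are harmless (the $(1-\Tay_4)F_\bulk$ contraction is stronger), but they must be accounted for to reassemble $\|E\|_{T_\varphi(\kh)}$ and $\sup_t\|F\|_{T_{t\varphi}(\kh)}$ cleanly --- the paper does this via the factorisation sum in \eqref{eq:cc-pf2}. Second, the claim $\ell_{\sigma,+}/\ell_\sigma \le 1$ is false: by \eqref{eq:h-ratios} the ratio can be as large as $2$ (e.g.\ $d=4$). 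This only costs a constant and the paper handles it by bounding $(\kh_{\sigma,+}/\kh_\sigma)^{|*|} \le 2(\kh_{\sigma,+}/\kh_\sigma)^{|*|\wedge 1}$, but your stated reason for the $\ox$ case is incorrect.
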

\begin{proof}
By definition of the norm in \eqref{eq:T_norm},
\begin{align}
\label{eq:cc-pf1}
    \| E (1-\Loc)F(b)\|_{T_\varphi(\kh_+)}
    &  =
    \sum_{* \in \{ \bulk,\o,\x,\ox \}} \|\sigma_{* }\|_{T_\varphi(\kh_+)}
    \| ( E (1-\Loc )F )_* (b) \|_{T_\varphi(\kh_+)}.
\end{align}
In the following, we sometimes omit the dependence of $E$ and $F$ on $b$
to reduce the notation.

Let $p_\varnothing = 6$ and $p_\o=p_\x = p_\ox =1$.
By \cite[Lemma~7.5.3]{BBS-brief} and by the definition of $\Loc$ in \eqref{eq:Loc_defi},
\begin{align}
    \|(1-\Loc_*)F_{*}(b)\|_{T_\varphi(\kh_+)}
    \le
    \1_{* \in b} \, 2 \Big( \frac{\kh_{\varnothing,+}}{\kh_{\varnothing}} \Big)^{p_*}
    P_{\kh_{\bulk},+}^{p_*}(\varphi)
    \sup_{0\le t\le 1} \|F_* \|_{T_{t\varphi}(\kh)}
    .
\end{align}
Here $p_\varnothing = 6$ occurs rather than $4+1=5$ because
we can replace $\Tay_4$ by $\Tay_5$ in \eqref{eq:Loc_defi}
due to our symmetry assumption on $F \in \cS$.
In the above right-hand side,
$\o \in b$ and $\x \in b$ are interpreted as usual,
$\bulk \in b$ is always true, and $\ox \in b$ means $\o \in b$ and $\x \in b$.

By expressing each $(E (1-\Loc )F)_{*}$ as a linear combination of
$E_{a_1} (1-\Loc )F_{a_2}$ such that $\sigma_* = \sigma_{a_1}  \sigma_{a_2}$,
and using the monotonicity
$\| E_{a_1} \|_{T_{\varphi}(\kh_+)} \le \| E_{a_1} \|_{T_{\varphi}(\kh)}$
(since $\kh_+ \le \kh$),
we obtain
\begin{align}
\label{eq:cc-pf2}
	& \| (E(1-\Loc)F )_{*}(b)\|_{T_\varphi(\kh_+)}
	\nnb & \qquad
	\leq
    \1_{*  \in b}
    2 \Big( \frac{\kh_{\varnothing,+}}{\kh_{\varnothing}} \Big)^{p_*}
    	P_{\kh_{\bulk},+}^6 (\varphi)
    \sum_{a_1, a_2: \sigma_{a_1} \sigma_{a_2} = \sigma_*}
    \|  E_{a_1} \|_{T_{\varphi}(\kh)}
    \sup_{0\le t\le 1} \|   F_{a_2} \|_{T_{t\varphi}(\kh)}
    	.
\end{align}
We use \eqref{eq:cc-pf2} in conjunction with \eqref{eq:cc-pf1}, and we also convert
the scale of the norm of $\sigma_*$ by using
\begin{equation}
    \|\sigma_{*}\|_{T_\varphi(\kh_+)}
    =
    \Big(\frac{\kh_{\sigma,+}}{\kh_\sigma} \Big)^{|*|} \|\sigma_{*}\|_{T_\varphi(\kh)}
    \le  2 \Big(\frac{\kh_{\sigma,+}}{\kh_\sigma}
    \Big)^{|*|\wedge 1 }
    \|\sigma_{*}\|_{T_\varphi(\kh)},
\end{equation}
where $|\bulk|=0$, $|\o|=|\x|=1$, $|\ox|=2$, and we took the worst alternative
in the ratio \eqref{eq:h-ratios} when $|*|=2$.
Then we use $\norm{\sigma_{a_1} E_{a_1}}_{T_{\varphi} (\kh)}
\le \norm{E}_{T_{\varphi} (\kh)}$, and similarly for $F$.
The result is
\begin{align}
    &\| E (1-\Loc)F(b)\|_{T_\varphi(\kh_+)}
    \nnb &\hspace{10mm}  \le
    c
    P_{\kh_{\bulk},+}^{6}(\varphi)
	\norm{E}_{T_{\varphi} (\kh)} \sup_{0\le t\le 1} \| F \|_{T_{t\varphi}(\kh)}
    \sum_{*\in \{\bulk,\o,\x,\ox\}} \1_{*\in b}
    \Big(   \frac{\kh_{\bulk,+}}{\kh_{\bulk}} \Big)^{p_*}
    \Big(   \frac{\kh_{\sigma,+}}{\kh_\sigma}
    \Big)^{|*|\wedge 1}
    .
\end{align}
Finally, we use \eqref{eq:h-ratios}--\eqref{eq:h-ratios-2} to compute the terms in the above sum.
For $\kh=\ell$, we find
$L^{-3(d-2)}$ for $*=\bulk$, and $L^{-(\tau+d-4)}$ for $*=\o,\x,\ox$.
This produces
$\gamma(\ell,b)$.  For $\kh=h$, we find respectively
$L^{-3d/2}$ and $L^{-d/4}$ which produces $\gamma(h,b)$.
This completes the proof.
\end{proof}

For the proof of Proposition~\ref{prop:cL_contract},
as in \cite[Section~9.4.2]{MPS23} we decompose the operator $\cL$ from \eqref{eq:LK0} as
\begin{align}
	\label{eq:L_decomp}
	\cL \dot{K}  = \sum_{k =0}^{2} \frac{1}{k !} \cL_1 (V^k \dot{K} ) + \cL_2 \dot{K}
	,	
\end{align}
with
\begin{align}
	\cL_1 (V^q \dot{K}) (B)
    &=
    \sum_{b\in \blocks (B)} \Big( e^{-V  (B)} (1- \Loc) (V^q \dot{K}) (b) \Big),
	\\
	\cL_2  \dot{K} (B) &=
	\sum_{b\in \blocks (B)} \Big( e^{-V  (B)} (1- \Loc)
    \Big[\big( e^{V   }  -1 - V  - \frac 12 V^2   \big) \dot{K}  \Big](b) \Big).
\end{align}
Let $F_V  = (e^{V } - 1- V  - V^2 / 2)$.
Since the Taylor expansion of $F_V$ in $\varphi$ vanishes up to and including degree five in the bulk component and degree zero in the observable components,
$1 - \Loc$ acts as the identity
on $F_V \dot{K}$ and $e^{-V } F_V $, and therefore
\begin{align}
	\cL_2 \dot{K} (B)
	&= \sum_{b\in\blocks(B)} e^{-V  (B )} F_V  (b)  \dot{K}(b)
    = \sum_{b\in\blocks(B)} e^{-V  (B \backslash b)} \dot{K}(b) (1- \Loc) \big( e^{-V  (b)} F_V  (b)  \big)
	.
\end{align}
With the definitions
\begin{align}
\begin{array}{ll}
		E_1 (b) = e^{-V (b) } ,  &\quad E_2 (b) = \dot{K} (b), \\
		F_1 (b) = \sum_{k=0}^2 \frac{1}{k!} ( V^k \dot{K} ) (b)
		,&\quad
		F_2(b) =  e^{-V}(e^{V} - 1 - V - V^2 / 2 ) (b),
\end{array}
\end{align}
the above gives
\begin{align} \label{eq:Tbdef}
  \cL \dot{K}
  &
  =
  \sum_{b\in \cB(B)} e^{-V(B \backslash b)} \sum_{i=1}^{2} E_i (b) (1- \Loc) F_i (b).
\end{align}

\begin{proof}[Proof of Proposition~\ref{prop:cL_contract}]
We use $C$ for a generic $L$-independent constant whose value may change from
line to line.
With $\eta = \max\{L^{-(\tau+d-4)},L^{-\frac d4 -(d-2)\kb_2}\}$, our goal is to prove that
\begin{align}
	\norm{\Eplus \theta \cL \dot{K} (B) }_{\Wkappa_+}
    \le
	C \eta \norm{ \dot{K} }_{\Wkappa}.
\end{align}
For this, it suffices to prove that
\begin{align}
\label{eq:cc-goal-ell}
  \|\E_+\theta \cL \dot{K} \|_{T_{0}(\ell_+)}
  & \le C L^{-( \tau+d-4)}   \norm{ \dot{K} (b)}_{\Wkappa},
\\
\label{eq:cc-goal-h}
  \|\E_+\theta \cL \dot{K} \|_{T_{\varphi}(h_+)}
  & \le C L^{-d/4}   \norm{ \dot{K} (b)}_{T^G(h)}  G_+^{\frac{1}{2}} (B,\varphi),
\end{align}
and then $\kb_2$
in $\eta$ arises from adjusting the factor $\tilde\scale(\kb)$ in the norm
for the change in scale.
By the triangle inequality,
by use of \cite[Proposition~7.3.1]{BBS-brief} to bring the norm inside
the expectation, and by the product property of the norm,
\begin{align}
\label{eq:ETell}
	\|\E_+\theta \cL \dot{K} \|_{T_{\varphi}(\kh_+)}
	&
	\leq \sum_{b\in \cB(B)}
	\E_+  \bigg[ \Big( \prod_{b'\neq b}\| e^{-V(b')}\|_{T_{\varphi+\zeta_{b'}}(\kh_+ )} \Big)
	\sum_{i=1}^2
	\| E_i (b) (1-\Loc) F_i (b)\|_{T_{\varphi+\zeta_b}(\kh_+)}  \bigg]
	.
\end{align}

Let $\hat \varphi = \varphi+\zeta_b$.
By Lemma~\ref{lem:1-Loc},
\begin{align}
	\label{eq:EFib}
    \|E_i (1-\Loc) F_i(b)\|_{T_{\hat \varphi}(\kh_+)}
    \le
    c
    \gamma(\kh,b) P_{\kh_{\bulk,+}}^6(\hat \varphi)
	\| E_i \|_{T_{\hat\varphi}(\kh)}	
	\sup_{0\le t\le 1} \| F_i \|_{T_{t\hat \varphi}(\kh)}
    .
\end{align}
For $i=1$,
it follows immediately from \eqref{eq:4.6-1} that
\begin{align}
\label{eq:app-need-1}
	\|E_1(b)\|_{T_{\hat \varphi}(\kh)}
    =
    \|e^{-V(b)}\|_{T_{\hat \varphi}(\kh)}
		\leq C G^2 (b,  \hat{\varphi})
		\qquad (\kh \in \{ h, \ell \})
		.
\end{align}
Also, with Lemma~\ref{lemma:FFF-1} (for small $g$) when $\kh=\ell$, and with
Lemma~\ref{lemma:V_bounds} for both cases,
\begin{align}
\label{eq:app-need-5}
    \|F_1(b)\|_{T_{t\hat \varphi}(\kh)}
    &\le
    \big( 1 + \| V(b)\|_{T_{t\hat \varphi}(\kh)} \big)^2
    \| \dot{K} (b)\|_{T_{t\hat \varphi}(\kh)}
    \nnb & \le
    \begin{cases}
        P^{18}_{\ell_\bulk} ( \hat \varphi) \norm{ \dot{K}(b)}_{\Wkappa} & (\kh=\ell)
        \\
        C P^{8}_{h_\bulk} ( t \hat \varphi) G^{\frac{1}{2}}  (b,t\hat \varphi) \|\dot{K}(b)\|_{T^G(h)} & (\kh=h).
    \end{cases}
\end{align}
The regulator controls powers of $P_{h_\bulk}$ for the case $\kh=h$,
and $P_{\ell_\bulk} \le P_{\ell_{\bulk,+}}$, so
\begin{equation}
    P_{\kh_{\bulk,+}}^6(\hat \varphi)\| E_1 \|_{T_{\hat\varphi}(\kh)}	
	\sup_{0\le t\le 1} \| F_1 \|_{T_{t\varphi}(\kh)}
    \le  C\times
    \begin{cases}
        P^{24}_{\ell_{\bulk,+}} ( \hat \varphi) \norm{ \dot{K}(b)}_{\Wkappa} & (\kh=\ell)
        \\
        P_{\kh_{\bulk,+}}^6(\hat \varphi)
        G  (b,  \hat{\varphi})  \|\dot{K}(b)\|_{T^G(h)} & (\kh=h).
    \end{cases}
\end{equation}
For $i=2$, it follows immediately from \eqref{eq:4.6-3} that
\begin{align}
\label{eq:app-need-2}
    \|F_2(b)\|_{T_{t\hat \varphi}(\kh)}
    \le O(1),
\end{align}
while by Lemma~\ref{lemma:FFF-1} for $\kh = \ell$ and by the definition of the norm for $\kh = h$,
\begin{align}
    \| E_2(b)\|_{T_{\hat \varphi}(\kh)}
    =
    \| \dot{K} (b)\|_{T_{\hat \varphi}(\kh)}
    \le
    \begin{cases}
        P^{10}_{\ell_\bulk} ( \hat \varphi) \norm{ \dot{K} (b)}_{\Wkappa} & (\kh=\ell)
        \\
        G^{\frac{1}{2}} (b,\hat \varphi) \| \dot{K} (b)\|_{T^G(h)} & (\kh=h).
    \end{cases}
\end{align}
Thus we have shown that
\begin{align}
\label{eq:ELocF}
    \sum_{i=1}^2
	\| E_i (b) (1-\Loc) F_i (b)\|_{T_{\hat\varphi}(\kh_+)}
    \le
    C\times
    \begin{cases}
        P^{24}_{\ell_{\bulk,+}} ( \hat \varphi) \norm{ \dot{K}(b)}_{\Wkappa} & (\kh=\ell)
        \\
        P^6_{h_{\bulk,+}} (\hat{\varphi}) G^{\frac{1}{2}}  (b,  \hat{\varphi}) \|\dot{K}(b)\|_{T^G(h)}  & (\kh=h).
    \end{cases}
\end{align}
For the product on the right-hand side of \eqref{eq:ETell}, we apply \eqref{eq:eV4.6-old} to see that
\begin{align}
\label{eq:eVreg-old}
	\norm{ e^{-V (b')} }_{T_{\hat\varphi} (\mathfrak{h}_+)}
    &\le	
    	C^{L^{-d}}G^2(b',\varphi)
    	\times \begin{cases}
			1 & (b' \neq b_\o, \,b_\x) \\
			C  & (b' = b_\o, \,b_\x)
			.
    	\end{cases}
\end{align}
When we bound products over $e^{-V(b')}$,
the presence of $C$ in the second case of \eqref{eq:eVreg-old}
does not create any $L$-dependent constant since it occurs for at most two blocks.

We can now prove our two goals \eqref{eq:cc-goal-ell}--\eqref{eq:cc-goal-h}
by combining \eqref{eq:ETell}, \eqref{eq:ELocF}, and \eqref{eq:eVreg-old}.
For \eqref{eq:cc-goal-ell}, we set $\kh=\ell$ and $\varphi=0$.
Since
$\Eplus [ P_{\ell_{\bulk,+}}^q(\zeta_b )] \le C_q$ for any $q\ge 0$
(see \cite[Lemma~10.3.1]{BBS-brief}),
we find that
\begin{align}
  \|\E_+\theta \cL \dot{K} \|_{T_{0}(\ell_+)}
  &
  \leq C \norm{ \dot{K} (b)}_{\Wkappa} \sum_{b\in \cB(B)} \gamma(\ell,b)
  \E_+ ( P_{\ell_{\bulk,+}}^{24}(\zeta_b) )
  \nnb
  &
  \leq C \norm{ \dot{K} (b)}_{\Wkappa} \sum_{b\in \cB(B)} \gamma(\ell,b)
  \nnb
  &\le C \norm{ \dot{K} (b)}_{\Wkappa}\max\{L^{d-3d(d-2)} , L^{-( \tau+d-4)} \}
  \nnb
  & = C L^{-( \tau+d-4)}   \norm{ \dot{K} (b)}_{\Wkappa},
\label{eq:cc-ellbd}
\end{align}
where the equality is due to \eqref{eq:exbd4}.

For \eqref{eq:cc-goal-h},
by using Lemma~\ref{lemma:EG} to bound the expectation, we have
\begin{align}
  \|\E_+\theta \cL \dot{K} \|_{T_{\varphi}(h_+)}
  &
  \leq C \norm{ \dot{K} (b)}_{T^G(h)}  \sum_{b\in \cB(B)} \gamma(h,b)
  \E_+\Big( P_{h_{\bulk,+}}^{6}(\varphi+\zeta_b) G^{\frac{1}{2}} (b,\varphi+\zeta_b)
  \prod_{b'\neq b}   G^2(b',\varphi+ \zeta_{b'})
  \Big)
  \nnb & \le
  C \norm{ \dot{K} (b)}_{T^G(h)} G^{\frac{1}{2}}_+(B,\varphi) \sum_{b\in \cB(B)} \gamma(h,b)
  \nnb &
  = C \norm{ \dot{K} (b)}_{T^G(h)} G^{\frac{1}{2}}_+(B,\varphi) \max\{L^{d-3d/2} , L^{-d/4} \}
  \nnb &
  = C L^{-d/4}   \norm{ \dot{K} (b)}_{T^G(h)} G^{\frac{1}{2}}_+ (B,\varphi).
\end{align}
We have proved \eqref{eq:cc-goal-ell}--\eqref{eq:cc-goal-h}, so the proof is complete.
\end{proof}

\appendix

\section{Stability and related estimates, with observables}
\label{sec:pfSProps}

Propositions~\ref{prop:S0_obs}--\ref{prop:S1_obs} and
Lemma~\ref{lemma:crucial_contraction_pre_obs}
incorporate the
observables into the bulk
statements in \cite[Propositions~9.2, 9.4]{MPS23} and \cite[Lemma~9.8]{MPS23}.
In this appendix, we indicate the relatively minor adaptations of
the analysis of \cite{MPS23} that their proofs entail.
The appendix is not self-contained and is intended to be read in conjunction
with \cite{MPS23}.

Recall from \eqref{eq:Phdef} that
    $P_{\kh} (\varphi) = 1 + |\varphi| / \kh$.
The following general lemma is useful for bounding the norm of polynomials in
the field.

\begin{lemma} \label{lemma:poly_bound}
For $p >0$ and $i_1, \ldots, i_p \in \{ 1, \ldots, n \}$, we have $\norm{\varphi^{(i_1)} \cdots \varphi^{(i_p)}}_{T_0 (\kh)} \leq \kh^p$.
If $F \in \cN_\bulk$ is a polynomial of degree $k$ in $\varphi$, then
\begin{align}
\label{eq:polybd}
	\norm{F(b, \varphi)}_{T_{\varphi,y} (\kh,\lambda_K)}
    \leq \norm{F(b, \varphi)}_{T_{0,y} (\kh,\lambda_K)}
    P_{\kh}^k (\varphi).
\end{align}
\end{lemma}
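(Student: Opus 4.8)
The plan is to read off both assertions from the definition \eqref{eq:T_z_seminorm_definition} of the $T_\varphi$ seminorm and the product property \eqref{eq:subx}; these are essentially standard facts (cf.\ \cite[Chapter~7]{BBS-brief}), and the only point requiring attention is to carry the polynomial bound through the extended norm \eqref{eq:ext-norm}. For the first assertion, the coordinate function $\varphi\mapsto\varphi^{(i)}$ vanishes at $\varphi=0$ and has a single nonvanishing Fr\'echet derivative, the linear functional $v\mapsto v^{(i)}$; since the norm on $\R^n$ in \eqref{eq:T_z_seminorm_definition} is $|\cdot|/\kh$, this functional has norm $\sup\{|v^{(i)}|:|v|\le\kh\}=\kh$, so $\norm{\varphi^{(i)}}_{T_0(\kh)}=\kh$. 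Applying \eqref{eq:subx} (with $\lambda_K=0$ and base point $0$) to the monomial then gives $\norm{\varphi^{(i_1)}\cdots\varphi^{(i_p)}}_{T_0(\kh)}\le\prod_{m=1}^p\norm{\varphi^{(i_m)}}_{T_0(\kh)}=\kh^p$.

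For \eqref{eq:polybd} I would first treat the case without $y$-derivatives, i.e.\ bound $\norm{F(b,\varphi)}_{T_\varphi(\kh)}$ by $P_\kh^k(\varphi)\,\norm{F(b,\varphi)}_{T_0(\kh)}$, where the subscript $0$ means the seminorm at base point $0$ (so the right-hand side in fact does not depend on $\varphi$). Fix $b$ and $\varphi$; since $F(b,\cdot)$ is a polynomial of degree $k$, Taylor's formula gives $F^{(p)}(b,\varphi)=\sum_{q\ge0}\tfrac{1}{q!}F^{(p+q)}(b,0)(\varphi^{\otimes q},\,\cdot\,)$ for $p\le k$ and $F^{(p)}(b,\varphi)=0$ for $p>k$. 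A vector $\varphi\in\R^n$ has norm $|\varphi|/\kh$, so filling $q$ of the $p+q$ arguments of $F^{(p+q)}(b,0)$ with $\varphi$ costs a factor $(|\varphi|/\kh)^q$, giving
\begin{equation}
  \norm{F^{(p)}(b,\varphi)}
  \le \sum_{q\ge0}\frac{1}{q!}\,\norm{F^{(p+q)}(b,0)}\Big(\frac{|\varphi|}{\kh}\Big)^q .
\end{equation}
Summing $\tfrac{1}{p!}$ times this over $p\ge0$, re-indexing by $r=p+q$, and using $\sum_{q=0}^r\binom{r}{q}(|\varphi|/\kh)^q=P_\kh^r(\varphi)$ (recall \eqref{eq:Phdef}) yields $\norm{F(b,\varphi)}_{T_\varphi(\kh)}\le\sum_{r=0}^k\tfrac{1}{r!}\norm{F^{(r)}(b,0)}\,P_\kh^r(\varphi)$; since $P_\kh(\varphi)\ge1$, I may replace every $P_\kh^r$ by $P_\kh^k$, which is the desired bound.

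To obtain \eqref{eq:polybd} itself I would apply this bound termwise in \eqref{eq:ext-norm}: for each $p$ and each fixed tuple of directions in $\cY(\lambda_K)$, the derivative $D_y^pF(b,\varphi)$ evaluated on that tuple is again a polynomial of degree at most $k$ in $\varphi$, since differentiation in $y$ does not raise the $\varphi$-degree ($y$ and $\varphi$ being independent). Hence its $T_\varphi(\kh)$ seminorm is at most $P_\kh^k(\varphi)$ times its $T_0(\kh)$ seminorm, with the \emph{same} scalar factor $P_\kh^k(\varphi)$ for all $p$ and all tuples; taking the supremum over unit directions and summing $\tfrac{1}{p!}$ times the results gives $\norm{F(b,\varphi)}_{T_{\varphi,y}(\kh,\lambda_K)}\le P_\kh^k(\varphi)\,\norm{F(b,\varphi)}_{T_{0,y}(\kh,\lambda_K)}$. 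I do not expect any genuine obstacle; the one point to verify with care is precisely that $D_y^pF$ keeps the same $\varphi$-degree as $F$, which is what allows the single factor $P_\kh^k(\varphi)$ to be pulled out of the entire sum uniformly in $p$.
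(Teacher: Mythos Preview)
Your proof is correct and follows the standard route that the paper is invoking: the paper's own proof simply says the monomial bound ``follows by definition'' and cites \cite[Exercise~7.5.2]{BBS-brief} for \eqref{eq:polybd}, which is exactly the Taylor-expansion argument you wrote out. Your additional care in carrying the bound through the extended norm---observing that $D_y^p F$ retains $\varphi$-degree at most $k$, so the single factor $P_\kh^k(\varphi)$ pulls out uniformly in $p$---is correct and makes explicit what the paper leaves implicit in its one-line citation.
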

\begin{proof}
The bound $\norm{\varphi^{(i_1)} \cdots \varphi^{(i_p)}}_{T_0 (\kh)} \leq \kh^p$ follows by definition, and \eqref{eq:polybd} is \cite[Exercise~7.5.2]{BBS-brief}.
\end{proof}

The following six lemmas provide the ingredients used in the proofs of
Propositions~\ref{prop:S0_obs}--\ref{prop:S1_obs} and
Lemma~\ref{lemma:crucial_contraction_pre_obs}.

\begin{lemma} \label{lemma:V_bounds}
Let $V \in \cD$, $j_+ \ge j_\ox$, $b \in \cB$, and $\z \in \{ \o,\x \}$.
There is an $L$-independent constant $C$ and an $L$-dependent constant $C_L$ such that,
\begin{align}
	\norm{\sigma_\z V_\z (b)}_{T_\varphi (\kh)} & \le
	\begin{cases}	
		C_L  \tilde{g} \scale P_{\ell_\bulk} (\varphi) &   (\mathfrak{h} = \ell) \\
		C P_{h_\bulk} (\varphi) &   (\mathfrak{h} = h,  \; j = j_\ox - 1) \\
		C L^{-\frac{d}{4}} P_{h_\bulk} (\varphi) &   (\mathfrak{h} = h,  \; j \ge j_\ox)
    ,
    \\
	\end{cases}
	\label{eq:Vz_bound}	 \\
	\norm{V (b)}_{T_{\varphi} (\mathfrak{h})} & \le \begin{array}{ll}
	\begin{cases}	
	C_L \tilde{g} \scale P^4_{\ell_\bulk} (\varphi) &   (\mathfrak{h} = \ell ) \\
	C P^4_{h_\bulk} (\varphi) &   (\mathfrak{h} = h).
	\end{cases}
	\end{array}
	\label{eq:V_bound0}
\end{align}
\end{lemma}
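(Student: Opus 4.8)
The plan is to split $V$ into its graded components in $\cN$ and estimate the bulk piece and the two observable pieces separately. Writing $V(b,\varphi) = V_\bulk(b,\varphi) - (\sigma_\o\1_{\o\in b} + \sigma_\x\1_{\x\in b})\varphi^{(1)}$ as in \eqref{eq:V_form}, the components in the decomposition $V = V_\bulk + \sigma_\o V_\o + \sigma_\x V_\x + \sigma_\ox V_\ox$ are $V_\o(b) = -\1_{\o\in b}\varphi^{(1)}$, $V_\x(b) = -\1_{\x\in b}\varphi^{(1)}$, and $V_\ox = 0$. By the definition of the $T_\varphi(\kh)$-seminorm on $\cN$ in \eqref{eq:T_norm},
\begin{equation}
\norm{V(b)}_{T_\varphi(\kh)} = \norm{V_\bulk(b)}_{T_\varphi(\kh_\bulk)} + \norm{\sigma_\o V_\o(b)}_{T_\varphi(\kh)} + \norm{\sigma_\x V_\x(b)}_{T_\varphi(\kh)} ,
\end{equation}
so \eqref{eq:V_bound0} follows from \eqref{eq:Vz_bound} once $\norm{V_\bulk(b)}_{T_\varphi(\kh_\bulk)}$ is controlled. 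Hence it suffices to prove \eqref{eq:Vz_bound} and to bound the bulk term.

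For \eqref{eq:Vz_bound} I would first use the definition of the norm to write $\norm{\sigma_\z V_\z(b)}_{T_\varphi(\kh)} = \1_{\z\in b}\,\norm{\varphi^{(1)}}_{T_\varphi(\kh_\bulk)}\,\norm{\sigma_\z}_{T_\varphi(\kh)} = \1_{\z\in b}\,\norm{\varphi^{(1)}}_{T_\varphi(\kh_\bulk)}\,\kh_\sigma$. Since $\varphi\mapsto\varphi^{(1)}$ is linear of degree one, Lemma~\ref{lemma:poly_bound} (with $\lambda_K=0$) gives $\norm{\varphi^{(1)}}_{T_\varphi(\kh_\bulk)} \le \norm{\varphi^{(1)}}_{T_0(\kh_\bulk)}P_{\kh_\bulk}(\varphi) = \kh_\bulk P_{\kh_\bulk}(\varphi)$, so $\norm{\sigma_\z V_\z(b)}_{T_\varphi(\kh)} \le \1_{\z\in b}\,\kh_\bulk\kh_\sigma\,P_{\kh_\bulk}(\varphi)$. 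I then insert the explicit values. By \eqref{eq:Vznorm-ell}, $\ell_{\bulk,j}\ell_{\sigma,j} = \ell_0\tilde g_j\scale_j L^{-\tau(j-j_\ox+1)} \le \ell_0\tilde g_j\scale_j$, since $j\ge j_\ox-1$ forces $j-j_\ox+1\ge0$; this gives the $\kh=\ell$ line of \eqref{eq:Vz_bound} with an $L$-dependent constant proportional to $\ell_0$. By \eqref{eq:Vznorm-h}, $h_{\bulk,j}h_{\sigma,j} = k_0(2L^{-d/4})^{j-j_\ox+1}$, which equals $k_0$ when $j = j_\ox-1$ and is at most $2k_0L^{-d/4}$ when $j\ge j_\ox$ (using $2L^{-d/4}\le1$ for large $L$); this gives the two $\kh=h$ lines with $L$-independent constants.

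For the bulk term, $V_\bulk(b,\cdot)$ is a polynomial of degree four in $\varphi$, so by Lemma~\ref{lemma:poly_bound}, $\norm{V_\bulk(b,\varphi)}_{T_\varphi(\kh_\bulk)} \le \norm{V_\bulk(b,\cdot)}_{T_0(\kh_\bulk)}\,P^4_{\kh_\bulk}(\varphi)$, and it remains to bound the $T_0(\kh_\bulk)$-seminorm of $V_\bulk(b) = L^{jd}(\frac14 g_j|\varphi|^4 + \frac12\nu_j|\varphi|^2)$. This is precisely the bulk stability input already established in \cite{MPS23} (going back to \cite[Chapter~7]{BBS-brief}): splitting into the quartic and quadratic parts and using $(g_j,\nu_j)\in\cD_j$ from \eqref{eq:cD-def} together with $\norm{|\varphi|^{2k}}_{T_0(\kh_\bulk)}\le C_n\kh_\bulk^{2k}$, the quartic part is bounded by $C_L\tilde g_j\scale_j$ for $\kh_\bulk=\ell_{\bulk,j}$ (cf.\ \eqref{eq:phi4norm}, noting $L^{jd}\ell_{\bulk,j}^4 = \ell_0^4\scale_j$) and by $O(1)$ for $\kh_\bulk=h_{\bulk,j}$ (the choice of $h_{\bulk,j}$ makes $L^{jd}g_jh_{\bulk,j}^4$ of order $k_0^4$), and the quadratic part is estimated in the same manner using the bound on $\nu_j$ in $\cD_j$. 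Combining with the observable estimate of the previous paragraph yields \eqref{eq:V_bound0}.

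I do not expect a genuine conceptual obstacle here: the lemma is a routine adaptation of standard $T_\varphi$-seminorm estimates, and the introduction of the norm \eqref{eq:T_norm} on $\cN$ makes the observable contributions strictly smaller and easy to handle. The main care needed is bookkeeping of $L$-dependence — constants attached to the $\ell$-norms are allowed to depend on $L$ through $\ell_0=L^{1+d/2}$, whereas those attached to the $h$-norms must stay $L$-independent, which is exactly what the choices of $h_{\bulk,j}$ and $h_{\sigma,j}$ guarantee. The one point that is not purely cosmetic is the extra factor $L^{-d/4}$ in $\norm{\sigma_\z V_\z(b)}_{T_\varphi(h)}$ for $j\ge j_\ox$ (absent at $j=j_\ox-1$): this small gain comes from $(2L^{-d/4})^{j-j_\ox+1}\le 2L^{-d/4}$, it is the observable counterpart of the contraction in the relevant direction, and obtaining it requires keeping the scale dependence of $h_{\sigma,j}$ in \eqref{eq:h-sig-alt} explicit rather than estimating crudely.
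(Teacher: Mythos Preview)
Your proposal is correct and follows essentially the same approach as the paper's proof: reduce to $\varphi=0$ via Lemma~\ref{lemma:poly_bound}, read off the observable norms at $\varphi=0$ from \eqref{eq:Vznorm-ell}--\eqref{eq:Vznorm-h}, and invoke the known bulk bound on $V_\bulk$ from \cite{MPS23}. The paper's version is just a terse one-liner; your expanded account is a faithful unpacking of it.
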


\begin{proof}
By Lemma~\ref{lemma:poly_bound}, it suffices to consider the case $\varphi=0$,
and for $\varphi=0$ \eqref{eq:Vz_bound} follows from
\eqref{eq:Vznorm-ell}--\eqref{eq:Vznorm-h}.
Then \eqref{eq:V_bound0} follows from the bounds on $V_\bulk$ in
\cite[(8.62)]{MPS23}.
\end{proof}

For clarity when working with the extended norm, given a block $b$
we write $K^\#$ for the function $K^\#:(K, \varphi) \mapsto K(b, \varphi)$.
This occurs, in particular, in the next proof.
By the definition of the extended norm,
\begin{equation}
\label{eq:Ksharp}
    \norm{K^{\#}}_{T_{0,y} (\kh, \lambda_K)} = \norm{K}_{T_0 (\kh)} + \lambda_K.
\end{equation}
As in \eqref{eq:hatV_hatK_definition}, $Q$ is the polynomial defined by
$Q (b) = \Loc (e^V K) (b)$.

\begin{lemma}
\label{lemma:Q_bound}
For $(V,K) \in \domRG$ and $j_+ \ge j_\ox$,
\begin{align}
\label{eq:Qphi}
	\norm{Q(b)}_{T_{\varphi,y} (\kh, \lambda_K)}
   &  \le e^{\norm{V(b)}_{T_0 (\kh)} } ( \norm{K}_{T_0 (\kh)} + \lambda_K ) P_{\kh_\bulk}^4 (\varphi)	, \\
\label{eq:Qphi_obs}
	\norm{(Q - Q_\bulk) (b)}_{T_{\varphi,y} (\kh, \lambda_K)}
    &  \le e^{\norm{V(b)}_{T_0 (\kh)} }  ( \norm{K}_{T_0 (\kh)} + \lambda_K ) P_{\kh_\bulk} (\varphi)	.
\end{align}
If $\kh_\bulk \ge \ell_\bulk$, $\kh_\sigma \ge \ell_\sigma$, and $\lambda_K \le \tilde{g} \scale$,
then for sufficiently small $\tilde{g}$,
\begin{align}
\label{eq:Qh}
	\norm{Q(b)}_{T_{0, y } (\mathfrak{h}, \lambda_K ) }
    \leq O_L  (1) \Big( \frac{\kh_\bulk}{h_\bulk} \Big)^4
    .
\end{align}
\end{lemma}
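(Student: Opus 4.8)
The plan is to unwind $Q(b)=\Loc(e^{V}K)(b)$ and to note first that, because $\Loc$ applies $\Tay_4$ to the bulk component and $\Tay_0$ to the observable components, $Q(b,\cdot)$ is a polynomial in $\varphi$ of degree $\le 4$ in its bulk part and degree $0$ in its observable parts. By the polynomial estimate \eqref{eq:polybd} of Lemma~\ref{lemma:poly_bound}, the desired $T_\varphi$-bounds therefore reduce to $T_0$-bounds: on the bulk component this costs a factor $P_{\kh_\bulk}^4(\varphi)$, while on the observable components the seminorm of $\sigma_*$ is $\varphi$-independent and $\Tay_0$ produces a constant, so no growth occurs (the crude factor $P_{\kh_\bulk}(\varphi)$ suffices) --- this is precisely the dichotomy between \eqref{eq:Qphi} and \eqref{eq:Qphi_obs}. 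Since $\Loc$ commutes with the $K$-derivatives $D_y$ and $e^{V}$ is $K$-independent, the same reduction holds verbatim for the extended norm; thus it remains only to estimate $\norm{Q(b)}_{T_{0,y}(\kh,\lambda_K)}$.

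For that, I would use that $\Loc$ truncates Taylor series and hence does not increase the $T_0$-seminorm, so $\norm{Q(b)}_{T_{0,y}(\kh,\lambda_K)}\le\norm{(e^{V}K)(b)}_{T_{0,y}(\kh,\lambda_K)}$; then the product property \eqref{eq:subx} and \eqref{eq:Ksharp} give $\norm{(e^{V}K)(b)}_{T_{0,y}(\kh,\lambda_K)}\le\norm{e^{V}(b)}_{T_0(\kh)}\bigl(\norm{K(b)}_{T_0(\kh)}+\lambda_K\bigr)$, and the standard exponential estimate (as in \cite[Chapter~7]{BBS-brief}) bounds $\norm{e^{V}(b)}_{T_0(\kh)}\le e^{\norm{V(b)}_{T_0(\kh)}}$, applied to the bulk factor $e^{V_\bulk}$ together with the two finite factors $e^{-\sigma_\z\varphi^{(1)}\1_{\z\in b}}=1-\sigma_\z\varphi^{(1)}\1_{\z\in b}$, whose norms are accounted for in $\norm{V(b)}_{T_0(\kh)}$. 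Combining with the first paragraph yields \eqref{eq:Qphi}, and restricting to the observable components gives \eqref{eq:Qphi_obs}.

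For \eqref{eq:Qh} I specialise \eqref{eq:Qphi} to $\varphi=0$, so $\norm{Q(b)}_{T_{0,y}(\kh,\lambda_K)}\le e^{\norm{V(b)}_{T_0(\kh)}}(\norm{K(b)}_{T_0(\kh)}+\lambda_K)$. Monotonicity of the $T_0$-seminorm in the weight, together with \eqref{eq:V_bound0} and \eqref{eq:Vznorm-h}, gives $\norm{V(b)}_{T_0(\kh)}\le\norm{V(b)}_{T_0(h)}=O(1)$, so the exponential is $O(1)$; it then remains to show $\norm{K(b)}_{T_0(\kh)}+\lambda_K\le O_L(1)(\kh_\bulk/h_\bulk)^4$. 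At the weight $\ell$ this holds because $\norm{K(b)}_{T_0(\ell)}\le\norm{K}_{\Wkappa}\le\tilde C_{\rg}\tilde\vartheta^3\tilde g^3\tilde\scale(\kaa)$ and $\lambda_K\le\tilde g\scale$, while $\tilde\scale(\kaa)\le\scale$ (from $\kaa_1>2$, $\kaa_2>1$, i.e.\ \eqref{eq:exbd5} and \eqref{eq:exbd1}) and $(\ell_\bulk/h_\bulk)^4\asymp\tilde g\scale$, so the sum is $\le O(\tilde g\scale)$ for small $\tilde g$; at the weight $h$ it holds because the $T^G(h)$-part of the $\Wkappa$-norm gives $\norm{K(b)}_{T_0(h)}\le\norm{K(b)}_{T^G(h)}\le\tilde g^{-9/4}\tilde\scale(\kb)^{-1}\norm{K}_{\Wkappa}=O_L(1)$ (using $\kaa_i-\kb_i=\kp_i>0$), whereas $(\kh_\bulk/h_\bulk)^4=1$. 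For a general weight between $\ell$ and $h$ --- in the applications $\kh$ is always $\ell$ or $h$, but if it is needed --- one interpolates, using that $\kh_\bulk\mapsto\norm{K(b)}_{T_0(\kh_\bulk)}$ is log-convex in $\log\kh_\bulk$ (a Cauchy--Schwarz property of the power series $\sum_p\frac1{p!}\norm{K^{(p)}(b,0)}\kh_\bulk^p$), and similarly in $\kh_\sigma$, together with $\ell_\bulk/h_\bulk\asymp(\tilde g\scale)^{1/4}$ and the largeness of $\ell_0/k_0$, taking $\tilde g$ small to absorb the $L$-dependent constants.

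The $\cN$-grading bookkeeping is routine once the product property \eqref{eq:subx} and the monotonicity estimates are in hand, so the only step requiring genuine care is \eqref{eq:Qh}: converting the crude, stability-type bound into the sharp power $(\kh_\bulk/h_\bulk)^4$, which is where the weight interpolation and the exponent inequalities $\kaa_i>2$, $\kp_i>\frac12$ of Section~\ref{sec:prrt} enter.
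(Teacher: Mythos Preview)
Your argument for \eqref{eq:Qphi} and \eqref{eq:Qphi_obs} is essentially the paper's: reduce from $T_\varphi$ to $T_0$ via Lemma~\ref{lemma:poly_bound} (using that $Q$ has degree $\le 4$ in $\varphi$ and $Q-Q_\bulk$ has degree $\le 1$), then bound $\|Q\|_{T_{0,y}}$ by observing that $\Loc$ is a $T_0$-contraction (\cite[Lemma~7.5.1]{BBS-brief}) and applying the product property \eqref{eq:subx} and \eqref{eq:Ksharp} to $e^{V}K^{\#}$.

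For \eqref{eq:Qh} you take a genuinely different route from the paper. Rather than verifying the cases $\kh=\ell$ and $\kh=h$ separately and then interpolating, the paper rescales the polynomial $Q$ directly from weight $\ell$ to weight $\kh$: since $Q$ is a polynomial of degree $\le 4$ and $\kh_\bulk\ge\ell_\bulk$, $\kh_\sigma\ge\ell_\sigma$,
\[
\norm{Q}_{T_{0,y}(\kh,\lambda_K)}
\;\le\; \Big(\frac{\kh_\bulk}{\ell_\bulk}\Big)^{4}\norm{Q}_{T_{0,y}(\ell,\lambda_K)}
\;=\; \Big(\frac{\kh_\bulk}{h_\bulk}\Big)^{4}\Big(\frac{h_\bulk}{\ell_\bulk}\Big)^{4}\norm{Q}_{T_{0,y}(\ell,\lambda_K)}.
\]
Then $(h_\bulk/\ell_\bulk)^{4}\asymp(\tilde g\scale)^{-1}$ by \eqref{eq:hbulk-def}, while $\norm{Q}_{T_{0,y}(\ell,\lambda_K)}\le e^{\norm{V}_{T_0(\ell)}}(\norm{K}_{T_0(\ell)}+\lambda_K)\le O(\tilde g\scale)$ using only $\kaa_i>1$ from \eqref{eq:exbd1} (not $\kaa_1>2$ or $\kp_i>\tfrac12$), so the two factors cancel and the desired $(\kh_\bulk/h_\bulk)^{4}$ remains. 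This one-line rescaling handles all $\kh$ at once and avoids your log-convexity interpolation, which would need more work to make precise (the $\Wkappa$-norm is a sum in both $\kh_\bulk$ and $\kh_\sigma$, while the target $(\kh_\bulk/h_\bulk)^{4}$ involves $\kh_\bulk$ alone). Note also that your monotonicity step $\norm{V}_{T_0(\kh)}\le\norm{V}_{T_0(h)}$ tacitly assumes $\kh\le h$, which is not part of the hypothesis; the paper sidesteps this by estimating at weight $\ell$, where $\norm{V(b)}_{T_0(\ell)}$ is small by Lemma~\ref{lemma:V_bounds}. Your two endpoint cases $\kh\in\{\ell,h\}$ are verified correctly, so in the applications (where only these weights occur) the conclusions agree.
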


\begin{proof}
Since $Q$ is a polynomial of degree 4,
Lemma~\ref{lemma:poly_bound} implies that
\begin{align}
	\norm{Q }_{T_{\varphi,y} (\kh, \lambda_K)} \leq \norm{Q }_{T_{0,y} (\kh, \lambda_K)} P_{\kh_\bulk}^4 (\varphi).
\end{align}
It is proved in \cite[Lemma~7.5.1]{BBS-brief} that $\norm{\Tay_k F}_{T_{0,y} (\kh, \lambda_K)} \leq \norm{F}_{T_{0,y} (\kh, \lambda_K)}$ for any $F : \R^n \rightarrow \R$ and any $k\ge 0$.  Therefore,
\begin{align}
	\norm{Q }_{T_{0,y} (\kh,  \lambda_K)}
		& \le \sum_{* \in \{ \bulk, \o,\x, \ox  \} } \big\|  \pi_* e^V K^{\#}  \big\|_{ T_{0,y} (\kh, \lambda_K)}
		= \big\|  e^V K^{\#}  \big\|_{ T_{0,y} (\kh, \lambda_K)}
		\nnb
		& \le
e^{\norm{V(b)}_{T_0 (\kh)} }
        \big( \norm{K}_{T_0 (\kh)} + \lambda_K \big),
\end{align}
where the final inequality uses the product property
\eqref{eq:subx} as well as \eqref{eq:Ksharp}.
This gives \eqref{eq:Qphi}.
The inequality \eqref{eq:Qphi_obs} follows similarly once we observe
that $Q-Q_\bulk$ is a polynomial of degree 1.

Next, since $Q$ is a polynomial of degree $4$,
if we assume $\kh_\bulk \ge \ell_\bulk$ and $\kh_\sigma \ge \ell_\sigma$ then
\begin{align}
	\norm{Q }_{T_{0, y} (\kh, \lambda_K)}
\leq
\Big( \frac{\kh_\bulk}{\ell_\bulk} \Big)^4 \norm{Q }_{T_{0, y} (\ell, \lambda_K)}
\leq
\Big( \frac{\kh_\bulk}{h_\bulk} \Big)^4
\Big( \frac{h_\bulk}{\ell_\bulk} \Big)^4
e^{\norm{V(b)}_{T_0 (\ell)} }  \norm{K^{\#}}_{T_{0, y} (\ell, \lambda_K)}
	.
\end{align}
By Lemma~\ref{lemma:V_bounds}, $\norm{V(b)}_{T_0 (\ell)}$ is bounded by a constant if $\tilde g$ is small depending on $L$, and, by \eqref{eq:hbulk-def},
\begin{equation}
\label{h-ell-ratio}
    \Big( \frac{h_\bulk}{\ell_\bulk} \Big)^4
    =
    \frac{k_0^4}{\ell_0^4} \frac{1}{\tilde g \scale}.
\end{equation}
Also,  for $d >4$,
it follows from \eqref{eq:exbd1} that $\kaa_i >1$ for $i=1,2$, so
$\norm{K}_{T_0 (\ell)} \le \tilde{C}_{\rg} \tilde{g}^3 \tilde{\scale} (\kaa) \le \tilde{g} \scale$.
The same bound is true for $d=4$ since then $\scale = \tilde{\scale} = 1$.
With our assumption on $\lambda_K$,
this proves \eqref{eq:Qh}.
\end{proof}

The next lemma is useful for bounding exponentials of polymer activities
with observables.

\begin{lemma}
\label{lemma:expF}
Suppose  $\norm{F}_{T_{\varphi,y} (\kh, \lambda_K)} < \infty$. Then
\begin{align}
	\norm{e^F}_{T_{\varphi,y} (\kh,\lambda_K)}
		\le \big\| e^{F_{\bulk}} \big\|_{T_{\varphi,y} (\kh, \lambda_K)} \big( 1 + \norm{F-F_{\bulk}}_{T_{\varphi,y} (\kh, \lambda_K)} \big)^2
	.
\end{align}
\end{lemma}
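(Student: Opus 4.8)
The plan is to exploit the fact that the observable part of $F$ is nilpotent in the graded commutative algebra $\cN$. I would write $G = F - F_{\bulk} = \sigma_\o F_\o + \sigma_\x F_\x + \sigma_\ox F_\ox$ and first record the purely algebraic consequences of the multiplication rules \eqref{eq:sigmaox}: since $\sigma_\o^2 = \sigma_\x^2 = 0$, one also has $\sigma_\ox^2 = \sigma_\o\sigma_\ox = \sigma_\x\sigma_\ox = 0$, and a direct expansion gives $G^2 = 2\sigma_\ox F_\o F_\x$ and $G^3 = 0$. Because $F_{\bulk}$ is a scalar multiple of $\sigma_\bulk = 1$, it is central in $\cN$ and commutes with $G$, so the exponential series telescopes:
\begin{equation}
    e^F = e^{F_{\bulk}+G} = e^{F_{\bulk}}\, e^{G} = e^{F_{\bulk}}\Bigl(1 + G + \tfrac{1}{2}G^2\Bigr),
\end{equation}
the last equality by $G^3 = 0$. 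This display also shows that $e^F$ is well defined, $e^{F_{\bulk}}$ being the ordinary exponential of the real-valued function $F_{\bulk}$ and the second factor a finite sum.

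Next I would apply the product property \eqref{eq:subx} of the extended semi-norm to this factorisation, which gives
\begin{equation}
    \norm{e^F}_{T_{\varphi,y}(\kh,\lambda_K)}
    \le \bigl\| e^{F_{\bulk}} \bigr\|_{T_{\varphi,y}(\kh,\lambda_K)}\,
    \bigl\| 1 + G + \tfrac{1}{2}G^2 \bigr\|_{T_{\varphi,y}(\kh,\lambda_K)}.
\end{equation}
For the second factor, the triangle inequality together with one more use of the product property (and $\norm{1}=1$) yields $\norm{1 + G + \tfrac12 G^2} \le 1 + \norm{G} + \tfrac12\norm{G}^2 \le (1+\norm{G})^2$, where $\norm{G} = \norm{F - F_{\bulk}}_{T_{\varphi,y}(\kh,\lambda_K)}$. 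Combining the two displays produces exactly the claimed bound. If $\bigl\| e^{F_{\bulk}} \bigr\|_{T_{\varphi,y}(\kh,\lambda_K)} = \infty$ the inequality is trivial; otherwise the hypothesis $\norm{F} < \infty$ and the componentwise form of the norm \eqref{eq:T_norm} guarantee $\norm{F_{\bulk}}, \norm{G} < \infty$, so all quantities are finite.

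I do not expect a real obstacle: the argument is algebraic in the $\sigma$-variables and then uses only the submultiplicativity \eqref{eq:subx}, which already holds for the extended norm including $K$-derivatives. The single point deserving a line of care is the commutativity used to split $e^{F_{\bulk}+G}$ into $e^{F_{\bulk}}e^{G}$; this is where one invokes that $F_{\bulk}$ is central (being a scalar multiple of $\sigma_\bulk=1$), and it is the only place the graded-algebra structure enters beyond the nilpotency of $G$.
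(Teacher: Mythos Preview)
Your proof is correct and follows essentially the same approach as the paper: the paper's proof simply observes that $(F-F_\bulk)^3=0$, writes $e^F = e^{F_\bulk}\sum_{k=0}^2 \frac{1}{k!}(F-F_\bulk)^k$, and invokes the product property \eqref{eq:subx}. You have supplied exactly this argument with additional detail (the explicit computation of $G^2$ and $G^3$, the remark on commutativity, and the finiteness discussion).
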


\begin{proof}
Since $(F-F_{\bulk})^3 = 0$, we have
	$e^F = e^{F_{\bulk}} \sum_{k=0}^2 \frac{1}{k !} (F- F_\bulk)^k$,
and the desired bound follows from the product property \eqref{eq:subx}.
\end{proof}

Recall from \eqref{eq:hatV_hatK_definition} that $\hat V = V-Q$
and $\hat{K} = K - (e^{-\hat{V}} - e^{-V})$.
Lemma~\ref{lemma:hat_K_bound}(ii) provides the key stability estimates which give rise to the $\cW^\kappa$-norm's regulator in Lemma~\ref{lem:kappa}.

\begin{lemma}
\label{lemma:hat_K_bound}
Assume \ref{quote:assumPhi}, $j_+ \ge j_\ox$, and $\lambda_K \leq \tilde{g} \scale$.
There is an $L$-independent constant $C$
and an $L$-dependent constant $C_L$ such that the following hold:
\begin{enumerate}
\item
For $B \in \cB_+$,
\begin{align}
	\norm{\hat V (B)}_{T_{0,y} (\mathfrak{h_+}, \lambda_K )}
    & \le \begin{array}{ll}
	\begin{cases}	
	C_L  \tilde{g} \scale  &  (\mathfrak{h} = \ell ) \\	
	C & (\mathfrak{h} = h).
	\end{cases}
	\end{array}
	\label{eq:Vhat_bound0}
\end{align}

\item
Let $\mathfrak{h} \in \{ \ell, h \}$.
There is a constant $\cst >0$
(depending only on $n$)
such that for all $\varphi\in \R^n$ and $b\in \cB$,
and for any $t\geq 0$ and any $s \in [0,1]$,
\begin{align}
	\label{eq:V-Q_bound1-h}
    \norm{ e^{-t(V - s Q) (b)} }_{T_{\varphi, y} (\mathfrak{h}, \lambda_K)}
    	& \le
		C^{t}
		e^{- 5 t\cst |\varphi / h_\bulk |^4}
    \times
    \begin{cases}
    1 & (\{\o,\x\} \cap b =\emptyset)
    \\
    C & (\{\o,\x\} \cap b \neq \emptyset),
    \end{cases}
\end{align}
\begin{align}
\label{eq:V-Q_bound1-hplus}
	\norm{ e^{-t(V - s Q) (b)} }_{T_{\varphi, y} (\mathfrak{h}_+, \lambda_K)}
    &\le	
    	(C^{t} e^{- 5 t\cst |\varphi / h_{\bulk,+} |^4})^{L^{-d}}
    \times
    \begin{cases}
    1 & (\{\o,\x\} \cap b =\emptyset)
    \\
    C & (\{\o,\x\} \cap b \neq \emptyset).
    \end{cases}
\end{align}

\item  For all $\varphi\in \R^n$ and $b\in \cB$,
\begin{align}
	& \norm{\hat{K} (b) }_{T_{0,y} (\ell, \lambda_K)} \le
	C ( \norm{K(b)}_{\Wkappa} + \lambda_K ) ,
	\label{eq:hat_K_bound2}
	\\
	& \norm{\hat{K} (b)}_{T_{\varphi,y} (h, \lambda_K)} \le
	C
	(\norm{K(b)}_{\Wkappa} + \lambda_K ) (\tilde{g}^{9/4} \tilde{\scale} (\tilde{\kb} ) )^{-1}
    e^{- 3 \cst |\varphi / h_{\bulk}|^4}
    .
	\label{eq:hat_K_bound1}
\end{align}
\end{enumerate}
\end{lemma}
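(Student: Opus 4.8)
The plan is to establish the three items in order, each time reducing to the corresponding bulk estimate from \cite{MPS23} and dealing with the observables through a single structural fact: for $F\in\cS$, the observable part $(1-\pi_\bulk)F$ lies in $\cN_\o+\cN_\x+\cN_\ox$, hence $((1-\pi_\bulk)F)^3=0$, so that exponentials of such terms are controlled by Lemma~\ref{lemma:expF}; Lemma~\ref{lemma:poly_bound} is used throughout to factor out the polynomial prefactor $P_{\kh_\bulk}^{k}(\varphi)$ from norms of degree-$k$ polynomials, reducing several steps to their value at $\varphi=0$. For (i) I would use $\hat V=V-Q$ and the triangle inequality, writing $V(B)=\sum_{b\in\cB(B)}V(b)$ and similarly for $Q(B)$. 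Since $P_{\kh_\bulk}(0)=1$, the per-block bounds follow from Lemma~\ref{lemma:V_bounds} and from \eqref{eq:Qh}, after converting the norm scale from $\kh$ to $\kh_+$ via the bounded ratios \eqref{eq:h-ratios}--\eqref{eq:h-ratios-2}. Each bulk monomial of $V$ carries the coefficient $L^{jd}=L^{-d}L^{(j+1)d}$, so at scale $+$ it has size $O(L^{-d})$ when $\kh=h$ (resp.\ $O(\tilde g\scale)$ when $\kh=\ell$); summing the $L^d$ sub-blocks then leaves an $L$-independent (resp.\ $L$-dependent) constant, where the $L^{-d}$ gain is essential. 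The observable components $\pi_\z V$, $\pi_\z Q$, $\pi_\ox Q$ vanish off the at most two blocks meeting $\{\o,\x\}$ and are bounded there by \eqref{eq:Vznorm-ell}--\eqref{eq:Vznorm-h} and \eqref{eq:Qphi_obs}, so contribute only bounded terms, yielding \eqref{eq:Vhat_bound0}.

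For (ii) I would split $V-sQ=(V_\bulk-sQ_\bulk)+R$, with $R:=(1-\pi_\bulk)(V-sQ)\in\cN_\o+\cN_\x+\cN_\ox$, so $R^3=0$ and Lemma~\ref{lemma:expF} gives
\[
\norm{e^{-t(V-sQ)(b)}}_{T_{\varphi,y}(\kh,\lambda_K)}\le\norm{e^{-t(V_\bulk-sQ_\bulk)(b)}}_{T_{\varphi,y}(\kh,\lambda_K)}\,(1+t\norm{R(b)}_{T_{\varphi,y}(\kh,\lambda_K)})^2.
\]
The bulk factor is exactly the quantity estimated in the bulk stability lemma of \cite{MPS23}: coercivity of the $g|\varphi|^4$ term dominates, $Q_\bulk$ being a lower-order perturbation made small by \eqref{eq:Qh}, which yields $C^te^{-5t\cst|\varphi/h_\bulk|^4}$ at scale $j$, and — because the $|\varphi|^4$ coefficient of a single scale-$j$ block equals $L^{-d}$ times its scale-$(j+1)$ value — the form $(C^te^{-5t\cst|\varphi/h_{\bulk,+}|^4})^{L^{-d}}$ at scale $+$. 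When $b\cap\{\o,\x\}=\emptyset$ one has $R(b)=0$ and the remaining factor is $1$; otherwise $\norm{R(b)}_{T_{\varphi,y}(\kh,\lambda_K)}=O(1)$ by Lemma~\ref{lemma:V_bounds} and \eqref{eq:Qphi_obs}, its polynomial growth in $\varphi$ being absorbed by a sliver of the quartic decay, which produces the constant $C$ in the second alternative. This proves \eqref{eq:V-Q_bound1-h}--\eqref{eq:V-Q_bound1-hplus}.

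For (iii) I would write $\hat K=K-(e^{-\hat V}-e^{-V})$ and, using that $V$ and $Q$ commute in the ring $\cN$, $e^{-\hat V}-e^{-V}=e^{-V}(e^{Q}-1)=\int_0^1 Q\,e^{-(V-sQ)}\,ds$, whence
\[
\norm{e^{-\hat V(b)}-e^{-V(b)}}_{T_{\varphi,y}(\kh,\lambda_K)}\le\norm{Q(b)}_{T_{\varphi,y}(\kh,\lambda_K)}\sup_{s\in[0,1]}\norm{e^{-(V-sQ)(b)}}_{T_{\varphi,y}(\kh,\lambda_K)}.
\]
For $\kh=\ell$ at $\varphi=0$: \eqref{eq:Qphi} bounds $\norm{Q(b)}_{T_{0,y}(\ell,\lambda_K)}$ by $C(\norm{K(b)}_{\Wkappa}+\lambda_K)$, part~(ii) bounds the supremum by $C$, and since $\norm{K^{\#}}_{T_{0,y}(\ell,\lambda_K)}=\norm{K}_{T_0(\ell)}+\lambda_K$ by \eqref{eq:Ksharp}, we get \eqref{eq:hat_K_bound2}. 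For $\kh=h$: \eqref{eq:Qphi}, together with $\norm{K(b)}_{T_0(h)}\le\norm{K(b)}_{T^G(h)}\le(\tilde g^{9/4}\tilde\scale(\tilde\kb))^{-1}\norm{K(b)}_{\Wkappa}$ and $(\tilde g^{9/4}\tilde\scale(\tilde\kb))^{-1}\ge1$, bounds $\norm{Q(b)}_{T_{\varphi,y}(h,\lambda_K)}$ by $C(\tilde g^{9/4}\tilde\scale(\tilde\kb))^{-1}(\norm{K(b)}_{\Wkappa}+\lambda_K)P_{h_\bulk}^4(\varphi)$, part~(ii) bounds the supremum by $Ce^{-5\cst|\varphi/h_\bulk|^4}$, and $P_{h_\bulk}^4(\varphi)e^{-5\cst|\varphi/h_\bulk|^4}\le Ce^{-3\cst|\varphi/h_\bulk|^4}$; the bare $K$ term contributes $G(b,\varphi)(\tilde g^{9/4}\tilde\scale(\tilde\kb))^{-1}(\norm{K(b)}_{\Wkappa}+\lambda_K)$, with $G(b,\varphi)=e^{-\kappa|\varphi/h_\bulk|^4}$ dominated by $e^{-3\cst|\varphi/h_\bulk|^4}$ thanks to the choice of $\kappa$ at \eqref{eq:kappa-def}, giving \eqref{eq:hat_K_bound1}. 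The $K$-derivative terms of the extended norm are carried along by the product property \eqref{eq:subx}, exactly as in \cite[Section~9.2]{BBS-brief}.

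The principal difficulty is bookkeeping rather than anything conceptual: obtaining the $L$-\emph{independent} constant in the $\kh=h$ cases, and the sharp $L^{-d}$ gain in the exponent of \eqref{eq:V-Q_bound1-hplus} that ultimately drives the contraction of the RG map in Proposition~\ref{prop:crucial-short-3}, requires tracking precisely which powers of $L$ come from the $L^d$ sub-blocks, from the monomial coefficient $L^{jd}$, and from the scale-change ratios \eqref{eq:h-ratios}, re-running the bulk argument of \cite{MPS23} verbatim with the (inert) observable components carried through. A secondary point is the compatibility of the various quartic exponents — the $5\cst$ of the stability estimates, the $\kappa$ of the regulator, and the slack needed to absorb polynomial prefactors — which is precisely why $\cst$ and $\kappa$ are tied to $n$ alone.
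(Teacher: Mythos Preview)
Your approach is essentially the same as the paper's: for (i) you combine Lemmas~\ref{lemma:V_bounds}--\ref{lemma:Q_bound}; for (ii) you split into bulk and observable, invoke the bulk stability bound from \cite{MPS23}, and use Lemma~\ref{lemma:expF} to control the observable factor $(1+t\|R\|)^2$, absorbing the polynomial into a sliver of the quartic decay; for (iii) you write $e^{-\hat V}-e^{-V}=\int_0^1 Q\,e^{-(V-sQ)}\,ds$ and bound $Q$ and the exponential separately. This is exactly what the paper does.

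One slip to flag: in (iii) for the $h$-norm you claim that $G(b,\varphi)=e^{-\kappa|\varphi/h_\bulk|^4}$ is \emph{dominated by} $e^{-3\cst|\varphi/h_\bulk|^4}$ ``thanks to \eqref{eq:kappa-def}''. But \eqref{eq:kappa-def} says $\kappa\le\cst$, which gives the \emph{opposite} inequality $G(b,\varphi)\ge e^{-3\cst|\varphi/h_\bulk|^4}$. The paper handles the bare $K$ term tersely (``Since $K\in\cK$\dots it suffices to bound the difference $\hat K-K$''), deriving the $e^{-3\cst}$ decay only for the difference $e^{-\hat V}-e^{-V}$; for the $K$ term itself one only gets regulator decay $G$. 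In applications this is harmless, since the downstream bounds (Propositions~\ref{prop:S0_obs}--\ref{prop:S1_obs}) require at most $G$ or $G^2$, but your justification as written has the inequality backwards.
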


\begin{proof}
(i)
This follows from Lemmas~\ref{lemma:V_bounds}--\ref{lemma:Q_bound}.

\smallskip\noindent
(ii)
By monotonicity of the norm (recall
\eqref{eq:ext-norm}
and \eqref{eq:Tnorm-ell-h})
we only have to consider the case $\kh = h$.
Both \eqref{eq:V-Q_bound1-h} and
\eqref{eq:V-Q_bound1-hplus} are proved in \cite[Lemma~8.11(ii)]{MPS23},
with $V - sQ$ replaced by $V_\bulk - s Q_\bulk$, with $C^t=2^{t/4}$,
and with the exponent $5\cst$ instead given by $c'$ (we rename the constant $4\cst$ from \cite[Lemma~8.11(ii)]{MPS23}
as $c'$, it depends only on $n$).
The same therefore hold when $\{ \o,\x \} \cap b = \emptyset$,
so we set our focus on the case $\{ \o,\x \} \cap b \neq \emptyset$.
For the observables, we use
Lemma~\ref{lemma:expF},  \eqref{eq:Vz_bound}, and \eqref{eq:Qphi_obs}
to see that,
whenever $\{ \o ,  \x \} \cap b \neq \emptyset$,
there exists $C >0$ such that
\begin{align}
	\norm{ e^{-t ( (V-V_\bulk) - s (Q - Q_\bulk) ) (b) } }_{T_{\varphi,y} (h, \lambda_K)}
		& \le
		\big( 1 + t \norm{ ((V-V_\bulk) - s (Q - Q_\bulk) ) (b) }_{T_{\varphi,y} (h, \lambda_K)} \big)^2 \nnb
		&  \le (1+  C t P_{h_{\bulk} } ( \varphi ))^2   \nnb
		&  \le \Big( 1 +  \frac{C}{c' / 4} \Big)^2   e^{ \frac{1}{2} t c' P_{h_\bulk} (\varphi) }
		\label{eq:V-Q_bound1-h_proof1}
    ,
\end{align}
where the final inequality holds by first replacing $C$ by $\frac{1}{4} c'$ using a multiplicative factor $( 1 +  \frac{C}{c' / 4} )^2$ and then bounding $1 + \frac{1}{4} t c' P_{h_\bulk}$ by $e^{\frac{1}{4} t c' P_{h_\bulk}}$.
Hence,
\begin{align}
    \norm{ e^{-t(V - s Q) (b)} }_{T_{\varphi, y} (\kh, \lambda_K)}
    	& \le
		2^{t/4}
		e^{- t c' |\varphi / h_\bulk |^4}
    	C e^{\frac{1}{2} t c' P_{h_\bulk} (\varphi)}
		\le C^{1+t} e^{-\frac{1}{2} t c' |\varphi / h_\bulk |^4}
	.
\end{align}
This proves \eqref{eq:V-Q_bound1-h} with $\cst = \frac{1}{10}  c'$.

For \eqref{eq:V-Q_bound1-hplus}, we proceed similarly.
If $b$ contains neither $\o$ nor $\x$, then the observables play no role,
and \eqref{eq:V-Q_bound1-hplus} is an immediate consequence of
\cite[Lemma~8.11(ii)]{MPS23}.
Thus we can assume that $b$ contains at least one of $\o,\x$.
In this case, we recall from Lemma~\ref{lemma:V_bounds} that $\norm{\sigma_\z V_\z (b)}_{T_{\varphi} (h_+)} \le C L^{-\frac{d}{4}} P_{h_{\bulk,+}} (\varphi)$
(since $j_+ \ge j_\ox$).
Also, since $h_{\bulk,+} \le h_\bulk$, if we take into account the
change of scale in the norm of the observables using \eqref{eq:h-ratios} then
we find that
$\norm{K}_{T_{0} (h_+)}\le 4 \norm{K}_{T_{0} (h)}  \le 1$.
With \eqref{eq:Qphi_obs}, these bounds give,
for $\{ \o,\x \} \cap b \neq \emptyset$,
\begin{align}
	\norm{(V - V_\bulk) (b)}_{T_{\varphi,y} (h_+, \lambda_K)}
		& \le
		C  L^{-\frac{d}{4}} P_{h_{\bulk,+}} (\varphi)
,
			\label{eq:V-Vbulk}
		 \\
	\norm{(Q- Q_\bulk) (b)}_{T_{\varphi,y} (h_+, \lambda_K)}
		& \le C \tilde{g} \scale P_{h_{\bulk,+}} (\varphi)
		\le  C L^{-\frac{d}{4}} P_{h_{\bulk,+}} (\varphi)
		,
			\label{eq:Q-Qbulk}
\end{align}
where the final inequality of \eqref{eq:Q-Qbulk} holds for sufficiently small $\tilde{g}$.
Then, using the same steps used to derive \eqref{eq:V-Q_bound1-h_proof1},
Lemma~\ref{lemma:expF} gives
\begin{align}
	\norm{ e^{-t ( (V-V_\bulk) - s (Q - Q_\bulk) ) (b) } }_{T_{\varphi,y} (h_+, \lambda_K)}
	& \leq C^{1+t L^{-d}} e^{\frac{1}{2} t L^{-d} c'  |\varphi / h_{\bulk,+} |}
	.
\end{align}
Thus by multiplying with the bound on $e^{-t (V_{\bulk} - s Q_{\bulk})}$,
we have
\begin{align}
	\norm{ e^{-t(V - s Q) (b)} }_{T_{\varphi, y} (\mathfrak{h}_+, \lambda_K)}
    &\le	
    	C (C^{t} e^{-  \frac{1}{2} t c' |\varphi / h_{\bulk,+} |^4})^{L^{-d}}
    	.
\end{align}
Due to our choice $\cst=\frac{1}{10}c'$, this gives
\eqref{eq:V-Q_bound1-hplus}.

\smallskip\noindent
(iii)
Since $K \in \cK$ (by assumption \ref{quote:assumPhi}), it suffices to bound
the difference $\hat{K} - K$, which is
\begin{align}
\label{eq:ehatV}
    \hat{K} - K =
	e^{-\hat{V}} - e^{-V} = Q \int_0^1 e^{-V + s Q} ds.
\end{align}
Then \eqref{eq:hat_K_bound2}
follows directly from the bounds on $Q$ and $e^{-V + sQ}$ in  Lemmas~\ref{lemma:V_bounds}--\ref{lemma:Q_bound}.
For \eqref{eq:hat_K_bound1},
we use Lemma~\ref{lemma:Q_bound} and \eqref{eq:V-Q_bound1-h} to bound the norm
of the right-hand side of \eqref{eq:ehatV} as
\begin{align}
	\big\| e^{-\hat{V}} - e^{-V} \big\|_{T_{\varphi, y} (h, \lambda_K)}
		& \le C \big( \norm{K}_{T_0 (h)} + \lambda_K \big) P_{h_{\bulk}}^4 (\varphi) e^{- 4 \cst |\varphi / h_{\bulk}|^4}.
\end{align}
We absorb the polynomial into the exponential by adjusting the latter's
coefficient from $4$ to $3$.
Also, by definition of the norm, $\norm{K (b)}_{T_0 (h)} \le \norm{K}_{\Wkappa} (\tilde{g}^{9/4} \tilde\scale (\kb) )^{-1}$.
This proves \eqref{eq:hat_K_bound1}.
\end{proof}

The next lemma collects estimates for easy reference.
For the regulator $G$ defined in \eqref{eq:Gj},
we fix $\kappa=\kappa(n)$ to be any value that obeys
\begin{equation}
\label{eq:kappa-def}
    0< \kappa \le
    \cst
	.
\end{equation}

\begin{lemma}
\label{lem:kappa}
Under the assumptions of Lemma~\ref{lemma:hat_K_bound},
for $\kh  \in \{\ell, h\}$ and $\kh_+  \in \{\ell_+, h_+\}$, we have
\begin{align}
	\norm{e^{-V (b)}} _{T_{\varphi} (\kh)}
		& \le C G^2 (b, \varphi) ,
	\label{eq:4.6-1}\\
	\norm{e^{-V (b)}} _{T_{\varphi} (\kh_+)}
		& \le C^{L^{-d}} G^2 (b, \varphi) \times \begin{cases}
		1 & (\{ \o,\x\} \cap b = \emptyset) \\
		C & (\{ \o,\x\} \cap b \neq \emptyset)
		,
	\end{cases}
	\label{eq:eV4.6-old}
\\
	\label{eq:4.6-3}
	\norm{(e^{-V} V^k) (b)}_{T_{\varphi} (\kh)} & \le O_k (1)	
	.
\end{align}
\end{lemma}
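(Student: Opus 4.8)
The plan is to read off all three bounds from Lemma~\ref{lemma:hat_K_bound}(ii), specialised to $s=0$, $t=1$, and $\lambda_K=0$ (so that $e^{-t(V-sQ)(b)}=e^{-V(b)}$ and the extended norm $T_{\varphi,y}(\kh,0)$ reduces to the ordinary $T_\varphi(\kh)$-seminorm), together with the product property \eqref{eq:subx} and the polynomial bound \eqref{eq:V_bound0} from Lemma~\ref{lemma:V_bounds}. For \eqref{eq:4.6-1} I would take $\kh=h$ in \eqref{eq:V-Q_bound1-h}, absorb the at-most-$C$ factor attached to blocks meeting $\{\o,\x\}$ into the generic constant, and use $\kappa\le\cst$ (from \eqref{eq:kappa-def}) to bound $e^{-5\cst|\varphi/h_\bulk|^4}\le e^{-2\kappa|\varphi/h_\bulk|^4}=G^2(b,\varphi)$; the case $\kh=\ell$ then follows from the monotonicity \eqref{eq:Tnorm-ell-h}.

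For \eqref{eq:eV4.6-old} I would apply \eqref{eq:V-Q_bound1-hplus} with $\kh=h$, which already produces the required shape $C^{L^{-d}}e^{-5\cst L^{-d}|\varphi/h_{\bulk,+}|^4}\times\{1\text{ or }C\}$. The one step that is not a verbatim quotation is converting the scale-$(j+1)$ Gaussian weight into the scale-$j$ regulator $G=G_j$ that appears in the statement: from \eqref{eq:hbulk-def} and \eqref{eq:tgc} one has $h_{\bulk,+}^4=k_0^4\tilde g_{j+1}^{-1}L^{-d(j+1)}\le 2L^{-d}h_{\bulk,j}^4$, hence $5\cst L^{-d}|\varphi/h_{\bulk,+}|^4\ge\frac52\cst|\varphi/h_{\bulk,j}|^4\ge 2\kappa|\varphi/h_{\bulk,j}|^4$, again using $\kappa\le\cst$, so the Gaussian factor is $\le G^2(b,\varphi)$. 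This gives \eqref{eq:eV4.6-old} for $\kh=h$, and the $\ell_+$ case follows by monotonicity. I would keep the block-dependent prefactor explicit, since it is used (and harmless, as it occurs for at most two blocks) in the proof of Proposition~\ref{prop:cL_contract}.

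For \eqref{eq:4.6-3} it again suffices, by monotonicity, to treat $\kh=h$. The product property gives $\norm{(e^{-V}V^k)(b)}_{T_\varphi(h)}\le\norm{e^{-V(b)}}_{T_\varphi(h)}\norm{V(b)}_{T_\varphi(h)}^k\le CG^2(b,\varphi)\big(CP_{h_\bulk}^4(\varphi)\big)^k$ by \eqref{eq:4.6-1} and \eqref{eq:V_bound0}. Since $\sup_{t\ge0}e^{-2\kappa t^4}(1+t)^{4k}<\infty$, the product $G^2(b,\varphi)P_{h_\bulk}^{4k}(\varphi)$ is bounded by a constant depending only on $k$ (and $n$), which closes the estimate.

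In summary, the lemma is a repackaging of Lemma~\ref{lemma:hat_K_bound}(ii): the only place where anything beyond quoting an earlier estimate is required is the exponent bookkeeping in the second paragraph, where the constant $5$ in the exponent of Lemma~\ref{lemma:hat_K_bound} and the constraint $\kappa\le\cst$ are calibrated precisely so that the coarser scale-$(j+1)$ Gaussian weight still dominates $G^2$. I do not anticipate a genuine obstacle.
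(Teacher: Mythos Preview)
Your proposal is correct and follows essentially the same route as the paper: all three bounds are obtained by specialising Lemma~\ref{lemma:hat_K_bound}(ii) to $(t,s)=(1,0)$, invoking monotonicity \eqref{eq:Tnorm-ell-h} for the $\ell$-cases, and using the product property with \eqref{eq:V_bound0} for \eqref{eq:4.6-3}. Your exponent bookkeeping for \eqref{eq:eV4.6-old} (the factor $2$ from $\tilde g_j\le 2\tilde g_{j+1}$ yielding $5\cst L^{-d}|\varphi/h_{\bulk,+}|^4\ge\frac52\cst|\varphi/h_\bulk|^4\ge 2\kappa|\varphi/h_\bulk|^4$) is in fact slightly more explicit than the paper's terse ``use $h_\bulk/h_{\bulk,+}\le L^{d/4}$''.
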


\begin{proof}
For \eqref{eq:4.6-1}, we take
$(t,s) = (1,0)$
in \eqref{eq:V-Q_bound1-h} and use the assumption $\kappa \le \cst$.
For \eqref{eq:eV4.6-old},
we take $(t,s) = (1,0)$ in \eqref{eq:V-Q_bound1-hplus} and
use $h_\bulk / h_{\bulk,+} \leq L^{\frac{d}{4}}$ to replace $L^{-d} |\varphi / h_{\bulk,+} |^4$ by $|\varphi /h_{\bulk}|^4$ in the bound.
Then the conclusion follows from $\kappa \le \cst$.
For \eqref{eq:4.6-3}, by \eqref{eq:Tnorm-ell-h} it suffices
to consider the case $\kh=h$.  For this, we use
the product property \eqref{eq:subx} of the norm, \eqref{eq:4.6-1},
and \eqref{eq:V_bound0}, to see that
\begin{align}
    \norm{(e^{-V} V^k) (b)}_{T_{\varphi} (h)}
    \le
    C G^2 (b, \varphi)  C^k P^{4k}_{h_\bulk} (\varphi)
    \le O_k (1) ,
\end{align}
where in the last inequality we used the regulator to control the polynomial.
\end{proof}

\begin{lemma}
\label{lemma:stability_estimate_obs}
Let $k_0$ be sufficiently small, let $\lambda_K \leq \tilde{g} \scale$, and assume \ref{quote:assumPhi}.  There is an $L$-independent constant $C$ such that for
all $B \in \cB_+$,
$\kh_+ \in \{ \ell_+, h_+ \}$,
and all $t \ge 0$,
\begin{align}
	\exp\big(  \norm{\delta \bar{u}_{+} (B)}_{T_{\varphi, y} (\mathfrak{h}_+, \lambda_K)}  \big)
		&  \leq C,
	\\
	\norm{ (\delta\bar u_{+} - \delta\bar u_{\pt} )(B) }_{T_\varphi (\kh_+) }
    	& \leq
    	O_L(\norm{K}_{\Wkappa}) ,
    \\
\label{eq:stability_estimate_obs}
	\big\| e^{-t U_+ (B)} \big\|_{T_{\varphi, y} (\mathfrak{h}_+, \lambda_K)}
    	& \leq C^{t} e^{ -2 t \cst |\varphi/   h_{\bulk,+} |^4 }
    	\times \begin{cases}
    		1 & ( \{\o,\x\} \cap b = \emptyset) \\
    		C & ( \{\o,\x\} \cap b \neq \emptyset)
    		.
    	\end{cases}
\end{align}
\end{lemma}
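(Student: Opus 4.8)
The plan is to reduce each of the three estimates to its bulk counterpart --- available from \cite{MPS23} (the bulk bounds on $\delta\bar u_{+,\bulk}$, on $\delta\bar u_{+,\bulk}-\delta\bar u_{\pt,\bulk}$, and on $e^{-tU_{+,\bulk}}$) --- together with an observable correction controlled by the explicit formulas for $U_+$ and $\delta u_{+,\ox}$ in Lemma~\ref{lemma:Uplus}, the norm bounds of Lemma~\ref{lemma:V_bounds}, the exponential bound of Lemma~\ref{lemma:expF}, the product property \eqref{eq:subx}, and the scale-change ratios \eqref{eq:h-ratios}. Throughout one uses \ref{quote:assumPhi}; in particular $K\in\cK_j$, so $\norm{K}_{\Wkappa_j}\le\tilde C_{\rg}\tilde g^3\tilde\scale(\kaa)$, $\lambda_K\le\tilde g\scale$, and the norms \eqref{eq:Vznorm-ell}--\eqref{eq:Vznorm-h} of the observable terms of $V$ --- and the quantity $\kh_\sigma^2|C_+(\x)|$ controlling the perturbative observable vacuum energy (cf.\ the discussion after \eqref{eq:phi4norm} and the covariance bounds of Appendix~\ref{app:covariance}) --- have been made as small as desired relative to the bulk quartic term by choosing $L$ large.

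For the first two bounds: $\delta\bar u_+(B)=\delta u_{+,\bulk}|B|+\sigma_\ox\delta u_{+,\ox}\one_{\o,\x\in B}$ is $\varphi$-independent, so its $T_{\varphi,y}(\kh_+,\lambda_K)$-norm equals $|\delta u_{+,\bulk}|\,|B|$ plus $\kh_{\sigma,+}^2$ times the extended norm of the affine-in-$K$ quantity $\delta u_{+,\ox}=C_+(\x)+K_\ox(b_\ox,0)$. The first summand is bounded by \cite{MPS23}; in the second, the $C_+(\x)$ part is small by the choice of $\kh_\sigma$, while the $K_\ox(b_\ox,0)$ part and its $K$-derivative (all that survives, by affineness) are bounded in terms of $\norm{K}_{\Wkappa_j}$ and $\lambda_K$ by extracting the $\sigma_\ox$-coefficient from the $T_0(\ell)$ and $T^G(h)$ seminorms in \eqref{eq:Wkappa-def} and converting scales via \eqref{eq:h-ratios}, just as in the proof of Lemma~\ref{lemma:hat_K_bound}(iii). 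This gives $\norm{\delta\bar u_+(B)}_{T_{\varphi,y}(\kh_+,\lambda_K)}=O_L(1)$ and hence the first bound. Since $\delta u_{\pt,\ox}=C_+(\x)$, the same computation with the $C_+(\x)$ term cancelling shows that $\delta\bar u_+-\delta\bar u_{\pt}$ equals its bulk part $(\delta u_{+,\bulk}-\delta u_{\pt,\bulk})|B|$, bounded by $O_L(\norm{K}_{\Wkappa})$ via \cite{MPS23}, plus $\sigma_\ox K_\ox(b_\ox,0)\one_{\o,\x\in B}$, bounded by $O_L(\norm{K}_{\Wkappa})$ as above; this is the second bound.

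For the third bound, monotonicity of the norm in $\kh$ (recall \eqref{eq:Tnorm-ell-h}) lets us take $\kh=h$. By Lemma~\ref{lemma:Uplus}, $e^{-tU_+(B)}=e^{-tU_{+,\bulk}(B)}e^{F}$ with $F=t(\sigma_\o\one_{\o\in B}+\sigma_\x\one_{\x\in B})\varphi^{(1)}+t\sigma_\ox\delta u_{+,\ox}\one_{\o,\x\in B}$. The bulk stability estimate of \cite{MPS23} gives $\norm{e^{-tU_{+,\bulk}(B)}}_{T_{\varphi,y}(h_+,\lambda_K)}\le C^te^{-c't|\varphi/h_{\bulk,+}|^4}$ for some $c'>2\cst$ depending only on $n$. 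Since $F_\bulk=0$ and $F^3=0$, Lemma~\ref{lemma:expF} gives $\norm{e^F}\le(1+\norm{F}_{T_{\varphi,y}(h_+,\lambda_K)})^2$; by the third line of \eqref{eq:Vz_bound} (valid since $j_+\ge j_\ox$), the scale-$j_+$ form of \eqref{eq:Vznorm-h}, and the bound on $\kh_{\sigma,+}^2|\delta u_{+,\ox}|$ above, one has $\norm{F}_{T_{\varphi,y}(h_+,\lambda_K)}\le Ct\,\epsilon_L\,P_{h_{\bulk,+}}(\varphi)$ with $\epsilon_L$ small for $L$ large. Absorbing $P_{h_{\bulk,+}}(\varphi)$ into the quartic decay by Young's inequality --- exactly as in the passage producing \eqref{eq:V-Q_bound1-h_proof1}, at the cost of a multiplicative constant and an extra $C^t$ --- turns this into $\norm{e^F}\le C^te^{(c'-2\cst)t|\varphi/h_{\bulk,+}|^4}$ on the $j_+$-block containing $\o$ and $\x$, and $\norm{e^F}=1$ on any other $j_+$-block. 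Multiplying the two factors via \eqref{eq:subx} yields \eqref{eq:stability_estimate_obs}.

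The main obstacle will be the bookkeeping of the observable contributions: one has to check that every observable piece of $\norm{F}$ (and of $\norm{\delta\bar u_+}$ for the first bound) is genuinely small --- this is precisely where the deliberate smallness of $\kh_\sigma$, the domain bound on $\norm{K}_{\Wkappa}$, the hypothesis $\lambda_K\le\tilde g\scale$, and the parameter restrictions of Section~\ref{sec:prrt} enter --- so that the linear-in-$\varphi$ growth they generate is absorbed into the quartic Gaussian decay surviving from the bulk stability estimate without spoiling the $C^t$ form. The remaining work is a routine adaptation of the bulk arguments of \cite{MPS23} to the graded space $\cN$, following the pattern already established in Lemmas~\ref{lemma:V_bounds}--\ref{lemma:hat_K_bound}.
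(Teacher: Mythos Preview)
Your proposal is correct and follows essentially the same route as the paper: reduce to the bulk bounds from \cite{MPS23} and control the observable remainders via Lemma~\ref{lemma:Uplus}, Lemma~\ref{lemma:expF}, and the smallness of $\kh_\sigma$. The only cosmetic difference is in the third bound: the paper factorises $e^{-tU_+(B)}=e^{-tV(B)}\cdot e^{-t(U_+-V)(B)}$ (using the already-proved Lemma~\ref{lemma:hat_K_bound}(ii) for the first factor, so the observable $\varphi^{(1)}$ terms are absorbed there, and only the constant $\sigma_\ox\delta u_{+,\ox}$ survives in the second), whereas you factorise $e^{-tU_+(B)}=e^{-tU_{+,\bulk}(B)}\cdot e^F$ and handle both observable pieces in $F$ at once --- either decomposition works and the bookkeeping is the same.
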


\begin{proof}
It is proved in \cite[Lemma~8.12]{MPS23} that, for the bulk,
\begin{align}
	\exp\big(  \norm{\delta \bar u_{+,\bulk} (B) }_{T_{\varphi, y} (\mathfrak{h}_+, \lambda_K)}  \big)
    & \leq 2^{1/2},
	\label{eq:vacuum_energy_estimate_MPS}
	\\
	\big| ( \delta \bar u_{+,\bulk} - \delta \bar u_{\pt, \bulk} ) (B) \big|
    & \leq O_L \big(
    \norm{K}_{\Wkappa} \big)
	.
	\label{eq:u_+_minus_u_pt_MPS}
\end{align}
It therefore remains to prove that
\begin{align}
	& \exp\big(  \norm{\sigma_\ox \delta {u}_{+,\ox} }_{T_{\varphi, y} (\mathfrak{h}_+, \lambda_K)}  \big) \leq C,
	\label{eq:deltaq_norm}
	\\
	& \norm{\sigma_\ox }_{T_\varphi (\kh_+) }  | \delta u_{+, \ox} - \delta u_{\pt,\ox} |
    \leq
    \norm{K}_{\Wkappa}
    ,
	\label{eq:deltaq_minus_qpt_norm}
\end{align}
as well as \eqref{eq:stability_estimate_obs}.

For \eqref{eq:deltaq_norm}--\eqref{eq:deltaq_minus_qpt_norm},
by Lemma~\ref{lemma:Uplus}
we know that
$\delta u_{\pt,\ox} = C_+ (\x)$ and $\delta u_{+,\ox} = \delta u_{\pt,\ox} + K_{\ox} (b_{\ox},0)$.
Thus it follows from the monotonicity of the $T_{\varphi, y} (\kh_+, \lambda_K)$-norm that, for small $\tilde{g}$,
\begin{align}
	\norm{ \sigma_\ox ( \delta u_{+,\ox} - \delta u_{\pt,\ox} ) }_{T_{\varphi, y} (\kh_+, \lambda_K)}  & \le  \norm{K^{\#}}_{\Wkappa_y (\lambda_K)}  \le 1
,
	\\
	\norm{ \sigma_\ox \delta u_{\pt,\ox}  }_{T_{\varphi, y} (\kh_+, \lambda_K)}  &  \le 2 L^{-(d-2)j} h_{\sigma,+}^2    \le 1
	,
\end{align}
where we have used the definition \eqref{eq:h-sig-alt} of $h_{\sigma,+}$
and $d \ge 4$ in the last line.
This allows us to conclude \eqref{eq:deltaq_norm}--\eqref{eq:deltaq_minus_qpt_norm}.

For \eqref{eq:stability_estimate_obs}, we already have a bound on $e^{-tV}$ from Lemma~\ref{lemma:hat_K_bound},
so it suffices to bound $e^{-t (U_+ - V)}$.
But again by Lemma~\ref{lemma:Uplus},
we have
\begin{align}
	( U_+ - V )(B) &= (U_{\bulk,+} - V_{\bulk}) (B)
    -
    \sigma_\ox\delta \bar{u}_{+,\ox} (B)
    ,
\end{align}
so together with \eqref{eq:deltaq_norm},
we now have
\begin{align}
	\norm{ e^{- t ( U_+ - V )(B)}}_{T_\varphi (\kh_+)}
		\le C^{t} \norm{ e^{ - t(U_{\bulk,+} - V_\bulk) (B) }}_{T_\varphi (\kh_+)}
	.
\end{align}
A sufficient bound on $e^{ - t(U_{\bulk,+} - V_\bulk) (B) }$ is given in
the first paragraph of the proof of \cite[Lemma~8.12]{MPS23}, so we obtain \eqref{eq:stability_estimate_obs}.
\end{proof}

\begin{proof}
[Proof of Propositions~\ref{prop:S0_obs}, \ref{prop:S1_obs}
and Lemma~\ref{lemma:crucial_contraction_pre_obs}.]

The ingredients provided by
Lemmas~\ref{lemma:V_bounds}--\ref{lemma:stability_estimate_obs}
permit the proofs of Propositions~\ref{prop:S0_obs}, \ref{prop:S1_obs} and
Lemma~\ref{lemma:crucial_contraction_pre_obs}
to be followed line by line by comparison with the proofs of
\cite[Proposition~9.2, 9.4, Lemma~9.8]{MPS23},
now with the inclusion of the observables.
What the proofs require are the properties of the norm (which persist for the norms we use here),
\cite[Lemmas~8.7,  8.9, 8.11, 8.12]{MPS23}, and the bound
\begin{align}
	\norm{Q(b)}_{T_{\varphi} (\kh)} \leq C  \norm{K(b)}_{T_0( \kh)}  P_{\kh_\bulk}^4 (\varphi)
	\label{eq:Qb_bound}
\end{align}
for both $\kh \in \{ \ell, h \}$
(see the line above \cite[(9.91)]{MPS23}).
More explicitly,
bounds on the effective potential, \cite[Lemma~8.9 and (8.62)]{MPS23} correspond to (and are extended by) Lemma~\ref{lemma:V_bounds}.
The analogue of \cite[Lemma~8.7]{MPS23}, a bound on $Q = \Loc (e^V K)$, is given by Lemma~\ref{lemma:Q_bound}.
The compilation of bounds on $\hat{V}$, $e^{-t (V - sQ)}$ and $\hat{K}$ in
\cite[Lemma~8.11]{MPS23},
corresponds to Lemma~\ref{lemma:hat_K_bound}.
The stability estimate \cite[Lemma~8.12]{MPS23} has its analogue in Lemma~\ref{lemma:stability_estimate_obs}.
Finally,
\eqref{eq:Qb_bound} is provided by Lemma~\ref{lemma:Q_bound} with $\norm{V(b)}_{T_0 (\kh)}$ bounded by \eqref{eq:V_bound0}.
A superficial difference is that the right-hand sides of
\cite[(8.64), (8.65)]{MPS23} do not have extra constant $C$
unlike the $\{\o,\x\} \cap b \neq \emptyset$ case of \eqref{eq:V-Q_bound1-h} and \eqref{eq:V-Q_bound1-hplus}.
These only affect \cite[(9.31), (9.37), (9.91)]{MPS23} by extra multiplicative constants,
which leave the proofs undamaged.
\end{proof}

\section{Covariance calculations: proof of Lemma~\ref{lem:uNox}}
\label{app:covariance}

In this appendix, we prove the statements about the hierarchical covariances
that lead to Lemma~\ref{lem:uNox},
to \eqref{eq:Cpa-plateau}, and to the two statements claimed above Corollary~\ref{cor:Gaussian}.  When $\x=\o$, inverse powers of $|\x|$ should
be interpreted as $1$; we opt not to clutter the notation to deal with
this special case.

We begin with the following lemma, which involves the metric on $\R$ defined for $r \le t$ by
\begin{equation}
    d_{\rm sq} (r,t) = \frac{1}{2}  \int_r^t |y|^{-1/2} dy
    =
    \begin{cases}
    \big| \sqrt{|r|} - \sqrt{|t|} \,\big| & (rt \ge 0)
    \\
    \sqrt{|r|} + \sqrt{|t|} & (rt < 0).
    \end{cases}
\end{equation}
Recall that $\gamma_j(\ka) = L^{2 (j-1)}(1+\ka L^{2(j-1)})^{-1}$
is defined in \eqref{eq:gamma_j}.

\begin{lemma}
\label{lem:Capp-prep}
For $d\ge 4$, for $j \le N$, and for $\ka_1, \ka_2 \ge -\frac{1}{2} L^{-2(N-1)}$,
\begin{align}
	| \gamma_j (\ka_1) - \gamma_j (\ka_2) | \le O( L^{3 (j-1)} ) d_{\rm sq} (\ka_1, \ka_2)
\end{align}
\end{lemma}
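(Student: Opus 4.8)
The plan is to deduce the claimed Lipschitz-type bound in the metric $d_{\rm sq}$ from a pointwise estimate on the derivative $\gamma_j'$, using the fact that, by its very definition, $d_{\rm sq}(\ka_1,\ka_2)$ is the length of the interval between $\ka_1$ and $\ka_2$ measured with the weight $\frac12|y|^{-1/2}\,dy$. Writing $u=L^{2(j-1)}$, so that $\gamma_j(\ka)=u/(1+\ka u)$ by \eqref{eq:gamma_j}, I would first record that for $\ka\ge-\frac12 L^{-2(N-1)}$ and $j\le N$ one has $\ka u\ge-\frac12 L^{2(j-N)}\ge-\frac12$; hence $1+\ka u\ge\frac12$, and moreover the whole segment $[\ka_1\wedge\ka_2,\ka_1\vee\ka_2]$ lies strictly inside the domain $\ka>-L^{-2(j-1)}$ on which $\gamma_j$ is smooth (because $\frac12 L^{-2(N-1)}<L^{-2(j-1)}$ when $j\le N$). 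Then $\gamma_j'(\ka)=-u^2/(1+\ka u)^2$ is well defined along the segment, and
\begin{equation}
    |\gamma_j(\ka_1)-\gamma_j(\ka_2)|=\Big|\int_{\ka_1}^{\ka_2}\gamma_j'(y)\,dy\Big|\le\int_{\ka_1\wedge\ka_2}^{\ka_1\vee\ka_2}|\gamma_j'(y)|\,dy ,
\end{equation}
so it suffices to prove the pointwise bound $|\gamma_j'(\ka)|\le\const\,L^{3(j-1)}|\ka|^{-1/2}$ and then integrate, comparing with $d_{\rm sq}(\ka_1,\ka_2)=\frac12\int_{\ka_1\wedge\ka_2}^{\ka_1\vee\ka_2}|y|^{-1/2}\,dy$.

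For the pointwise bound I would substitute $s=\ka u$ and write $|\gamma_j'(\ka)|=u^{3/2}\,|\ka|^{-1/2}\,g(s)$ with $g(s)=|s|^{1/2}/(1+s)^2$, so that the claim reduces to showing $g$ is bounded by an absolute constant on $\{s\ge-\frac12\}$. This is an elementary one-variable check: on $[-\frac12,0]$ one has $(1+s)^2\ge\frac14$ and $|s|^{1/2}\le 2^{-1/2}$, giving $g\le 2\sqrt2$; on $[0,\infty)$ the function $g$ vanishes at $0$ and at $\infty$ and has a unique interior critical point at $s=\frac13$, where $g(\frac13)=9/(16\sqrt3)$. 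Hence $g\le 3$ throughout, $u^{3/2}=L^{3(j-1)}$, so $|\gamma_j'(\ka)|\le 3L^{3(j-1)}|\ka|^{-1/2}$, and integrating gives $|\gamma_j(\ka_1)-\gamma_j(\ka_2)|\le 6L^{3(j-1)}d_{\rm sq}(\ka_1,\ka_2)$, as desired.

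The argument involves no deep step; what requires care — and is the only place the hypotheses genuinely enter — is the bookkeeping in the first paragraph: using $j\le N$ and $\ka_i\ge-\frac12 L^{-2(N-1)}$ both to keep $1+\ka u$ uniformly away from zero and to ensure the integration path avoids the pole of $\gamma_j$. A secondary, purely technical point is that $\int_{\ka_1\wedge\ka_2}^{\ka_1\vee\ka_2}|y|^{-1/2}\,dy$ is an improper integral when the segment straddles $0$; it is convergent, and this is consistent with the closed-form $d_{\rm sq}(r,t)=\sqrt{|r|}+\sqrt{|t|}$ in that case, so the comparison goes through without modification.
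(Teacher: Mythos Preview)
Your proof is correct and follows essentially the same approach as the paper: differentiate $\gamma_j$, factor out $L^{3(j-1)}|\ka|^{-1/2}$ via the substitution $s=\ka L^{2(j-1)}$, bound the remaining factor $|s|^{1/2}/(1+s)^2$ by an absolute constant on $s\ge-\tfrac12$, and integrate against the weight defining $d_{\rm sq}$. You are more explicit than the paper about checking that the integration path avoids the pole and about the improper integral at $0$, but the argument is otherwise the same.
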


\begin{proof}
We differentiate the formula for $\gamma_j$ in \eqref{eq:gamma_j} to obtain
\begin{align}
	\left| \frac{d \gamma_j (\ka)}{d \ka} \right|
    =
    \frac{L^{4(j-1)}}{(1+\ka L^{2(j-1)})^2}
		=  \frac{( |\ka| L^{2(j-1)})^{1/2} }{(1 + \ka L^{2(j-1)})^2 }
    L^{3(j-1)} |\ka|^{-1/2}.
\end{align}
The fraction on the right-hand side is bounded above by
$\sup_{t \ge - \frac{1}{2}} |t|^{1/2} (1+ t)^{-2} < \infty$, and
the desired bound then follows by integration.
\end{proof}

\begin{lemma}
\label{lem:Capp}
For $d \ge 4$, for $\ka,\ka_1,\ka_2 \in [-\frac 12 L^{-2(N-1)}, A L^{-2N}]$,
and for $x \in \Lambda_N$, the following bounds hold
(constants may depend on $A$):
\begin{align}
\label{eq:Ca0}
    |C_{\ka_1,\le N}(x) - C_{\ka_2,\le N}(x)|
    & \le  O_L \left( \frac{d_{\rm sq}(\ka_1, \ka_2)}{|x|^{d-3}} \right)
\\
\label{eq:CNinf}
    |\HLap_{0,\infty}(x) - C_{0,\le N}(x)|
    & \le O (L^{-(d-2)N}),
\\
\label{eq:CNasy}
    C_{\ka,\le N}(x) + \gamma_N(\ka)L^{-dN} &\asymp \frac{1}{|x|^{d-2}}
    .
\end{align}
In addition, \eqref{eq:Cpa-plateau} holds, i.e., if
$\ka=tL^{-(2+\delta)N}$ with $t,\delta >0$,
then $\HLap^\per_{\ka,N}(\x) \asymp \frac{1}{|x|^{d-2}} + \frac{L^{\delta N}}{tL^{(d-2)N}}$.
\end{lemma}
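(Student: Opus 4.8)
The plan is to reduce everything to the explicit block form of $P_j(\o,\x)$ and to elementary geometric sums. First I would record that, with $k=j_{\o\x}$ the coalescence scale, $P_j(\o,\x)=0$ for $j<k$, $P_k(\o,\x)=-L^{-dk}$, and $P_j(\o,\x)=L^{-d(j-1)}-L^{-dj}$ for $j>k$; in particular $|P_j(\o,\x)|=O(L^{-d(j-1)})$ for every $j\ge k$, and $L^k\asymp|x|$. The bound \eqref{eq:CNinf} is then immediate: $C_{0,\infty}(\x)-C_{0,\le N}(\x)=\sum_{j>N}L^{2(j-1)}\big(L^{-d(j-1)}-L^{-dj}\big)$ is a geometric tail with ratio $L^{-(d-2)}<1$ (here $d\ge 4>2$), which sums to $O(L^{-(d-2)N})$.

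For \eqref{eq:Ca0} I would expand $C_{\ka_1,\le N}(\x)-C_{\ka_2,\le N}(\x)=\sum_{j=k}^{N}\big(\gamma_j(\ka_1)-\gamma_j(\ka_2)\big)P_j(\o,\x)$ and insert Lemma~\ref{lem:Capp-prep}, which gives $|\gamma_j(\ka_1)-\gamma_j(\ka_2)|\le O(L^{3(j-1)})\,d_{\rm sq}(\ka_1,\ka_2)$, together with $|P_j(\o,\x)|=O(L^{-d(j-1)})$. Each summand is then bounded by $O(L^{-(d-3)(j-1)})\,d_{\rm sq}(\ka_1,\ka_2)$; since $d\ge 4$ the geometric sum over $j\ge k$ is dominated by its first term and equals $O(L^{-(d-3)k})\,d_{\rm sq}(\ka_1,\ka_2)=O(|x|^{-(d-3)})\,d_{\rm sq}(\ka_1,\ka_2)$, which is \eqref{eq:Ca0}.

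The substance of the lemma is \eqref{eq:CNasy}, and the main obstacle there is the lower bound: the terms of $C_{\ka,\le N}(\x)$ do not all have the same sign (the $j=k$ term is negative, the $j>k$ terms positive), so term-by-term estimation yields only an upper bound. The key step is an Abel (summation-by-parts) rearrangement, which I expect to collapse $C_{\ka,\le N}(\x)$ together with the added boundary term $\gamma_N(\ka)L^{-dN}$ into the cancellation-free form
\[
    C_{\ka,\le N}(\x)+\gamma_N(\ka)L^{-dN}=\sum_{i=k}^{N-1}\big(\gamma_{i+1}(\ka)-\gamma_i(\ka)\big)L^{-di}.
\]
Since $L$ is large and $\ka\in[-\frac12 L^{-2(N-1)},AL^{-2N}]$ with $i\le N$, one has $|\ka|L^{2(i-1)}\le\frac12$, so every $1+\ka L^{2(i-1)}$ is bounded above and below uniformly, whence $\gamma_{i+1}(\ka)-\gamma_i(\ka)\asymp L^{2i}$; thus each summand is $\asymp L^{-(d-2)i}$ and, as $d\ge 4$, the sum is comparable to its first term, $\asymp L^{-(d-2)k}\asymp|x|^{-(d-2)}$. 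The matching upper bound is the triangle-inequality estimate $|C_{\ka,\le N}(\x)|=O(|x|^{-(d-2)})$, from the $P_j$ bounds and $\gamma_j(\ka)=O(L^{2(j-1)})$. (The borderline case $k=N$, where the right-hand side above is an empty sum, falls outside the regime in which \eqref{eq:CNasy} is applied and would be read off directly from the formula if needed.)

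Finally, \eqref{eq:Cpa-plateau} follows from \eqref{eq:CPa} and \eqref{eq:CNasy} with little extra work: writing $\HLap^\per_{\ka,N}(\x)=\big[C_{\ka,\le N}(\x)+\gamma_N(\ka)L^{-dN}\big]+\big[(\ka^{-1}-\gamma_N(\ka))L^{-dN}\big]$ and using the identity $\ka^{-1}-\gamma_N(\ka)=\big(\ka(1+\ka L^{2(N-1)})\big)^{-1}$, the first bracket is $\asymp|x|^{-(d-2)}$ by \eqref{eq:CNasy}, while for $\ka=tL^{-(2+\delta)N}$ (so $\ka L^{2(N-1)}=tL^{-\delta N-2}\to 0$) the second bracket is $\asymp\ka^{-1}L^{-dN}=L^{\delta N}/(tL^{(d-2)N})$; as both brackets are nonnegative, their sum is comparable to $|x|^{-(d-2)}+L^{\delta N}/(tL^{(d-2)N})$. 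In short, I expect the Abel identity and the uniform control of the $\gamma$-increments over the partly negative mass interval to be the only genuinely substantive points, with everything else routine geometric-series bookkeeping.
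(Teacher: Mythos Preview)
Your proposal is correct and follows essentially the same route as the paper: both argue \eqref{eq:Ca0} and \eqref{eq:CNinf} by direct term-by-term bounds (using Lemma~\ref{lem:Capp-prep} and $|P_j(\o,\x)|=O(L^{-d(j-1)})$), both obtain \eqref{eq:CNasy} via the Abel summation $C_{\ka,\le N}=\gamma_1 Q_0+\sum_{j=1}^{N-1}(\gamma_{j+1}-\gamma_j)Q_j-\gamma_N Q_N$ together with $\gamma_{j+1}(\ka)-\gamma_j(\ka)\asymp L^{2j}$ on the given mass interval, and both derive \eqref{eq:Cpa-plateau} from the decomposition $\HLap^\per_{\ka,N}=\big[C_{\ka,\le N}+\gamma_N L^{-dN}\big]+\big[(\ka^{-1}-\gamma_N)L^{-dN}\big]$ with $\ka^{-1}-\gamma_N(\ka)=(\ka(1+\ka L^{2(N-1)}))^{-1}$.
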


\begin{proof}
We start by writing
\begin{align}
\label{eq:CQ}
    C_{\ka,\le N}
    & = \sum_{j=1}^N \gamma_j(\ka) P_j
    =
    \gamma_1(\ka) Q_0 + \sum_{j=1}^{N-1} (\gamma_{j+1}(\ka)-\gamma_j(\ka)) Q_j
    -\gamma_N(\ka) Q_N .
\end{align}
By hypothesis and Lemma~\ref{lem:Capp-prep}
 ($A$-dependence arises from the
denominator of $\gamma_j$),
\begin{equation}
	|\gamma_j(\ka)| \asymp L^{2(j-1)},
    \quad\;
    | \gamma_j (\ka_1) - \gamma_j (\ka_2) | \le O( L^{3 (j-1)} ) d_{\rm sq} (\ka_1, \ka_2),
    \quad\;
    0 < \gamma_{j+1}(\ka)-\gamma_j(\ka) \asymp L^{2j}.
\end{equation}
Since $C_j(\x)=0$ for $j < j_\ox$,
these bounds immediately lead to \eqref{eq:Ca0}--\eqref{eq:CNasy}:
\begin{align}
    |C_{\ka_1,\le N}(x) - C_{\ka_2,\le N}(x)|
    & \le O(d_{\rm sq} (\ka_1, \ka_2)) \sum_{j=j_\ox}^N  L^{3 (j-1)} L^{-d (j-1)}
    \le O_L \left( \frac{ d_{\rm sq} (\ka_1, \ka_2) }{|x|^{d-3}} \right),
    \label{eq:Ca0-2}
\\
    | \HLap_{0,\infty}(x) - C_{0,\le N}(x)|
    &
    \le O(1) \sum_{j=N+1}^\infty L^{2 (j-1)} L^{-d (j-1)} \le O(L^{-(d-2)N})
    ,
\end{align}
\begin{equation}
    C_{\ka,\le N}(x) + \gamma_N(\ka)L^{-dN}
    	= \sum_{j=j_\ox}^{N-1} (\gamma_{j+1} (\ka) - \gamma_j(\ka) ) L^{-d(j-1 )}(1-L^{-d})
     	\asymp \frac{1}{|x|^{d-2}}.
\end{equation}
Finally, we set $\ka = tL^{-(2+\delta)N}$ and use  \eqref{eq:CNasy} to see that
\begin{align}
    \HLap^\per_{\ka,N}(\x)
    &= \Big(C_{\ka,\le N}(\x)+\gamma_N(\ka) L^{-dN} \Big) + (\ka^{-1}-\gamma_N(\ka))L^{-dN}
    \nnb & =
    \Big(C_{\ka,\le N}(\x)+\gamma_N(\ka) L^{-dN} \Big) + \frac{L^{-dN}}{a(1+aL^{2(N-1)})}
    \asymp
    \frac{1}{|x|^{d-2}} + \frac{1}{aL^{dN}}.
\end{align}
This proves \eqref{eq:Cpa-plateau} and completes the proof.
\end{proof}

For the next lemma, we recall $\ka_N^*$ and $\tilde{\ka}_N^*$ from \eqref{eq:aN1}--\eqref{eq:aN2}.
We do not track $L$-dependence of the constants explicitly.
The estimates in Lemma~\ref{lem:HLapN-app} are not uniform in $s$.

\begin{lemma}
\label{lem:HLapN-app}
Let $\cO_{N,\x}$ is as in Definition~\ref{def:errorconv}.
For $\x\in\Lambda_N$, as $N \to\infty$,
there exist $|r_N|, |r'_N| \le O(|x|^{-(d-3)} L^{-N})$ such that
the following statements hold.
For $s\in \R$,
\begin{align}
\label{eq:HLaph-app}
    C_{a_N^\per(s), \le N} (\x)
    & = \HLap_{0,\infty}(\x) \big( 1+ \cO_{N,\x} \big) + r_N
    ,
    \\
\label{eq:CFBCh-app}
    C_{a_N^\free(s), \le N}(\x)
    & = \HLap_{0,\infty}(\x) \big( 1+ \cO_{N,\x} \big) + r'_N
    .
\end{align}
For $s_N >0$ such that $s_N \rightarrow s$ as $N\rightarrow \infty$,
\begin{align}
\label{eq:HLapl-app}
    C_{\tilde a_N^\per(s_N ), \le N}(\x)
    + \frac {1}{s_{N}} L^{-(d-2)N} &
    = \big( 1+ \cO_{N,\x} \big)
	\HLap_{sL^{-2N},N}^\per(\x)
    \asymp \frac{1}{|\x|^{d-2}} ,
\\
\label{eq:CFBC-app}
    C_{\tilde a_N^\free(s_N ), \le N}(\x) + \frac {1}{s_{N}} L^{-(d-2)N}
    &= \big( 1+ \cO_{N,\x} \big)
	\HLap_{(s-\qLap)L^{-2N},N}^\free(\x)
    \asymp \frac{1}{|\x|^{d-2}}  .
\end{align}
\end{lemma}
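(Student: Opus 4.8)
The plan is to derive all four statements from the covariance comparison estimates of Lemma~\ref{lem:Capp}, by realising each of the masses $\ka_N^\per(s),\ka_N^\free(s),\tilde\ka_N^\per(s),\tilde\ka_N^\free(s)$ as a small perturbation of an appropriate reference mass. For \eqref{eq:HLaph-app}--\eqref{eq:CFBCh-app} the reference mass is $0$, and one uses $\HLap_{0,\infty}(\x)=C_{0,\infty}(\x)$. For \eqref{eq:HLapl-app}--\eqref{eq:CFBC-app} the reference masses are $sL^{-2N}$ and $(s-\qLap)L^{-2N}$ respectively; these are chosen precisely so that, by \eqref{eq:ChatN} and $Q_N(\o,\x)=L^{-dN}$, the trailing term $C^{*}_{\ka,\hat N}(\x)$ in the decomposition \eqref{eq:Csum} equals exactly $\frac1s L^{-(d-2)N}$ for both boundary conditions, matching the summand already present on the left-hand sides of \eqref{eq:HLapl-app}--\eqref{eq:CFBC-app}.

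First I would record the sizes of the masses. From \eqref{eq:hNdef} and \eqref{eq:lpNdef} one has $\hh_N^{-2}L^{-dN}=o(L^{-2N})$ (it equals $(BN)^{-1/2}L^{-2N}$ for $d=4$ and $g_\infty^{1/2}L^{-dN/2}$ for $d>4$) and $\lp_N^{-2}L^{-dN}=L^{-2N}$, so by \eqref{eq:aN1}--\eqref{eq:aN2}, as $N\to\infty$ with $s$ fixed,
\[
  \ka_N^\per(s)=o(L^{-2N}),\qquad \ka_N^\free(s)=-\qLap L^{-2N}+o(L^{-2N}),\qquad \tilde\ka_N^\per(s)=sL^{-2N}(1+o(1)),\qquad \tilde\ka_N^\free(s)=(s-\qLap)L^{-2N}(1+o(1)),
\]
and each lies in the interval $[-\frac12 L^{-2(N-1)},AL^{-2N}]$ of Lemma~\ref{lem:Capp} once $L$ and $N$ are large, for a suitable $A=A(s,L)$. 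From these I read off the relevant distances in the $d_{\rm sq}$ metric: $d_{\rm sq}(\ka_N^\per(s),0)=|\ka_N^\per(s)|^{1/2}=o(L^{-N})$, $d_{\rm sq}(\ka_N^\free(s),0)=|\ka_N^\free(s)|^{1/2}=\qLap^{1/2}L^{-N}(1+o(1))$, and $d_{\rm sq}(\tilde\ka_N^\per(s),sL^{-2N})=o(L^{-N})$, $d_{\rm sq}(\tilde\ka_N^\free(s),(s-\qLap)L^{-2N})=o(L^{-N})$, the last two because $|\sqrt{\ka_1}-\sqrt{\ka_2}|=\sqrt{|\ka_2|}\,o(1)$ when $\ka_1/\ka_2\to1$; the exceptional value $s=\qLap$ in the FBC Gaussian case is handled directly, since there the reference mass is $0$ and $\tilde\ka_N^\free(\qLap)=o(L^{-2N})$.

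The second step feeds these distances into Lemma~\ref{lem:Capp}. For \eqref{eq:HLaph-app} I would write
\[
  C_{\ka_N^\per(s),\le N}(\x)-\HLap_{0,\infty}(\x)=\big(C_{\ka_N^\per(s),\le N}(\x)-C_{0,\le N}(\x)\big)+\big(C_{0,\le N}(\x)-C_{0,\infty}(\x)\big).
\]
By \eqref{eq:Ca0} the first term is $O_L\big(|\ka_N^\per(s)|^{1/2}|x|^{-(d-3)}\big)$, which divided by $\HLap_{0,\infty}(\x)\asymp|x|^{-(d-2)}$ is $O_L\big(|\ka_N^\per(s)|^{1/2}|x|\big)\le O_L\big(|\ka_N^\per(s)|^{1/2}L^{N}\big)=o(1)$ uniformly in $\x$, so it is absorbed into the factor $1+o(1)$; by \eqref{eq:CNinf} the second term is $O(L^{-(d-2)N})$, and since $|x|\le O(L^N)$ this is $\le O(|x|^{-(d-3)}L^{-N})$, so I set $r_N$ equal to it — note this $r_N$ is independent of $s$ and of the boundary condition. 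Statement \eqref{eq:CFBCh-app} follows by the same splitting, except that now the first term $C_{\ka_N^\free(s),\le N}(\x)-C_{0,\le N}(\x)$ is only $O_L\big(L^{-N}|x|^{-(d-3)}\big)$ rather than relatively small, so it too must be put into $r_N$; it still obeys the required bound. For the Gaussian statements I first note, via \eqref{eq:CPa} and \eqref{eq:ChatN}, that $\HLap^\per_{sL^{-2N},N}(\x)=C_{sL^{-2N},\le N}(\x)+\frac1s L^{-(d-2)N}$ and $\HLap^\free_{(s-\qLap)L^{-2N},N}(\x)=C_{(s-\qLap)L^{-2N},\le N}(\x)+\frac1s L^{-(d-2)N}$, and that both are $\asymp|x|^{-(d-2)}$: by \eqref{eq:CNasy} the sum $C_{\ka,\le N}(\x)+\gamma_N(\ka)L^{-dN}$ is $\asymp|x|^{-(d-2)}$, and for $\ka=sL^{-2N}$ or $(s-\qLap)L^{-2N}$ the difference $\frac1s L^{-(d-2)N}-\gamma_N(\ka)L^{-dN}$ is positive and $O(L^{-(d-2)N})\le O(|x|^{-(d-2)})$ for $L$ large. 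Then \eqref{eq:Ca0} with distance $o(L^{-N})$ gives $|C_{\tilde\ka_N^\per(s),\le N}(\x)-C_{sL^{-2N},\le N}(\x)|\le O_L\big(o(L^{-N})|x|^{-(d-3)}\big)=o(1)|x|^{-(d-2)}$; adding $\frac1s L^{-(d-2)N}$ to both sides and invoking the $\asymp$ just established converts this into $(1+o(1))\HLap^\per_{sL^{-2N},N}(\x)$, which is \eqref{eq:HLapl-app}, and \eqref{eq:CFBC-app} is identical with $(s-\qLap)L^{-2N}$ in place of $sL^{-2N}$.

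Almost everything here (the mass scaling, the $d_{\rm sq}$ estimates, the $\asymp$ verifications) is routine given Lemmas~\ref{lem:Capp-prep} and \ref{lem:Capp}. The point that needs care is the placement of error terms: the quantity $L^{-(d-2)N}$ is of the same order as $\HLap_{0,\infty}(\x)$ when $|x|\asymp L^N$, so it cannot be absorbed into a multiplicative factor $1+o(1)$ and must appear in $r_N$ — and the same is true, for a different reason, of the $O(L^{-N}|x|^{-(d-3)})$ error in the FBC non-Gaussian case, where $d_{\rm sq}(\ka_N^\free(s),0)$ is only $O(L^{-N})$ rather than $o(L^{-N})$. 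That FBC case is the one I expect to require the most attention, though it is a mild obstacle. The other matter to keep in view is the choice of reference masses so that the $Q_N$ term in \eqref{eq:Csum} produces precisely $\frac1s L^{-(d-2)N}$; this is forced by the formulas \eqref{eq:CPa}--\eqref{eq:ChatN} and is what makes the right-hand sides of \eqref{eq:HLapl-app}--\eqref{eq:CFBC-app} come out as genuine finite-volume Green functions.
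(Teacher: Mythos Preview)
Your proposal is correct and follows essentially the same approach as the paper: both proofs feed the mass asymptotics \eqref{eq:aN1}--\eqref{eq:aN2} into the comparison estimates \eqref{eq:Ca0}--\eqref{eq:CNasy} of Lemma~\ref{lem:Capp}, and both identify the reference masses $sL^{-2N}$ and $(s-\qLap)L^{-2N}$ via the requirement that the $Q_N$ contribution in \eqref{eq:Csum} equal $\tfrac1s L^{-(d-2)N}$. Your treatment of \eqref{eq:CFBCh-app} is in fact slightly cleaner than the paper's: you compare $\ka_N^\free(s)$ directly to $0$ and place the resulting $O_L(L^{-N}|x|^{-(d-3)})$ entirely into $r_N$, whereas the paper routes through an intermediate mass; both work, but yours avoids an unnecessary step.
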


Lemma~\ref{lem:uNox} is a direct consequence of Lemma~\ref{lem:HLapN-app},
as follows.

\begin{proof}[Proof of Lemma~\ref{lem:uNox}]
Let $s\in\R$.
By \eqref{eq:HLaph-app} and \eqref{eq:CFBCh-app} with \eqref{eq:qN-thm},
\begin{align}
	u_{N;\ox}(a_N^\per(s)) & =
	\HLap_{0,\infty}(\x)  (1+ \cO_{N,\x} ) + \vv (\x )
            + O( |x|^{-(d-3)} L^{-N})
    ,
    \\
	u_{N;\ox}(a_N^\free(s)) & =
	\HLap_{0,\infty}(\x)  (1+ \cO_{N,\x} ) + \vv (\x )
        	+ O( |x|^{-(d-3)} L^{-N})
    ,
\end{align}
where we used $\HLap_{0,\infty}(\x) \asymp \frac{1}{|x|^{d-2}}$ to replace $|x|^{-(d-2)} \cO_{N,\x}$ by $\HLap_{0,\infty}(\x) \cO_{N,\x}$.
These give \eqref{eq:uNox-PBC} and \eqref{eq:uNox-FBC}.

Similarly, \eqref{eq:uNox-PBC-tilde} and \eqref{eq:uNox-FBC-tilde} follow immediately from \eqref{eq:qN-thm} and \eqref{eq:HLapl-app}--\eqref{eq:CFBC-app}.
\end{proof}

\begin{proof}[Proof of Lemma~\ref{lem:HLapN-app}]
We prove the four statements one by one.
Recall from \eqref{eq:aN1}--\eqref{eq:aN2} that
\begin{align}
\label{eq:aN1-app}
    a_N^\per(s) \sim s\hh_N^{-2}L^{-dN}
    \asymp \begin{cases}
    s N^{-1/2}L^{-2N} & (d=4)
    \\
    s L^{-dN/2} & (d>4)
    \end{cases}
    ,
    \qquad a_N^\free(s) \sim s\hh_N^{-2}L^{-dN} -\qLap L^{-2N},
\end{align}
\begin{align}
\label{eq:aN2-app}
    \tilde a_N^\per(s) \sim sL^{-2N}, \qquad \tilde a_N^\free(s) \sim (s-\qLap) L^{-2N}	.
\end{align}
In particular, this shows that $\ka_N^*$ and $\tilde{\ka}_N^*$ are
within the domain of Lemma~\ref{lem:Capp}, by taking $A$ large depending on $s$.
This is one source of the $s$-dependence of the error terms.

\medskip\noindent \emph{Proof of \eqref{eq:HLaph-app}.}
It suffices to prove that
\begin{equation}
    C_{a_N^\per(s),\le N}(\x) - \HLap_{0,\infty}(\x)= |x|^{-(d-2)}   \cO_{N,\x}
    + O(L^{-(d-2)N}).
\end{equation}
By \eqref{eq:CNinf}, and since $C_{0,\infty}(x) = \HLap_{0,\infty}(x) \asymp |x|^{-(d-2)}$ by \eqref{eq:HLap0-asy},
it therefore suffices to prove that
$C_{a_N^\per(s) ,\le N}(\x) - C_{0,\le N}(\x)= |x|^{-(d-2)} \cO_{N,\x}$.
But this follows from \eqref{eq:Ca0},
which gives
\begin{align}
	\label{eq:HLaph-app2}
	| C_{a_N^\per(s) , \le N}(\x) - C_{0,\le N}(\x) |
    & \le
    O( |a_N^\per (s)|^{1/2} )  \frac{1}{|x|^{d-3}}
	\nnb
    & \le
    O\Big( \frac{1}{ L^N |x|^{d-3}} \Big)
    \times
    \begin{cases}
    N^{-1/4} & (d=4)
    \\
    L^{-N(d-4)/4 } & (d>4) .
    \end{cases}
\end{align}

\medskip\noindent \emph{Proof of \eqref{eq:CFBCh-app}.}
As in the proof of \eqref{eq:HLaph-app}, it suffices to prove that
\begin{equation}
    C_{a_N^\free(s),\le N}( \x) - C_{0,\le N}(\x)= |x|^{-(d-2)} \cO_{N,\x} + O(L^{-(d-2) N}).
\end{equation}
We decompose the left-hand side as
\begin{equation}
    [C_{a_N^\free(s_{N}) ,\le N}(\x) - C_{-qL^{-2N},\le N}(\x)]
    + [C_{-qL^{-2N} ,\le N}(\x) - C_{0,\le N}(\x)].
\end{equation}
For the first difference in the decomposition,
by \eqref{eq:aN1-app} the two masses essentially differ by $\ka_N^\per(s)$,
so it is bounded by
\begin{equation}
    O(|\ka_N^\per(s)|^{1/2} ) \frac{1}{|x|^{d-3}}.
\end{equation}
This is the same as what we encountered in the proof of \eqref{eq:HLaph-app},
so we obtain the same bound for this term.
For the second difference in the decomposition, by \eqref{eq:Ca0} we
obtain the required bound:
\begin{align}
    |C_{-qL^{-2N} ,\le N}(\x) - C_{0,\le N}(\x) |
    \le
    O\Big( \frac{1}{L^N |x|^{d-3}} \Big) .
\end{align}

\medskip\noindent \emph{Proof of \eqref{eq:HLapl-app}.}
By the formula for $\HLap_{\ka,N}^\per$ given by \eqref{eq:ChatN} and \eqref{eq:Csum},
it suffices to prove that
\begin{align}
\label{eq:HLapl-pf}
    C_{\tilde a_N^\per(s_N ) ,\le N}(\x)
    + \frac{1}{s_N} L^{-(d-2)N} &
    = \big( 1+  \cO_{N,\x}  \big) \Big[C_{sL^{-2N},\le N}(\x) + \frac 1s L^{-(d-2)N} \Big]
    \asymp \frac{1}{|\x|^{d-2}}  .
\end{align}
The second statement (equivalence to  $|x|^{-(d-2)}$)
follows from \eqref{eq:CNasy} and
\begin{align}
    0 \le \left( s^{-1}  L^{2N} - \gamma_N(sL^{-2N})\right) L^{-dN} =
    \frac{1}{s(1+sL^{-2})}L^{-(d-2)N} \le c_s L^{-(d-2)N}.
\end{align}
For the equality in \eqref{eq:HLapl-pf},
similarly to the proof of \eqref{eq:CFBCh-app} we now obtain
\begin{align}
    |  C_{\tilde a_N^\per(s_N ) ,\le N}(\x)
    - C_{sL^{-2N},\le N}(\x)|
    \le
    \frac{ \cO_{N,\x} }{L^N |x|^{d-3}} \le \frac{1}{|x|^{d-2}} \cO_{N,\x} ,
\end{align}
as required.

\medskip\noindent \emph{Proof of \eqref{eq:CFBC-app}.}
By the formula for $\HLap_{\ka,N}^\free$ given by \eqref{eq:ChatN} and \eqref{eq:Csum},
it suffices to prove that
\begin{align}
\label{eq:CFBC-pf}
    C_{\tilde a_N^\free(s_N ) ,\le N}(\x) + \frac{1}{s_N} L^{-(d-2)N}
    &= (1+  \cO_{N,\x}   )
    \Big[ C_{(s-\qLap)L^{-2N} ,\le N}(\x) + \frac 1s L^{-(d-2)N} \Big]
    \asymp \frac{1}{|\x|^{d-2}} .
\end{align}
Again there are two statements.  The asymptotic equivalence follows as in the
proof of \eqref{eq:HLapl-app} from
(after some algebra)
\begin{equation}
    s^{-1}L^{-(d-2)N} -\gamma_N((s-\qLap)L^{-2N}) L^{-dN}
    =
    L^{-(d-2)N} \frac{1}{s} \frac{1}{1+sL^{-2}(1-\qLap L^{-2})^{-1}},
\end{equation}
which is positive and $O(L^{-(d-2)N})$.
The proof that $C_{\tilde a_N^\free(s_N ) ,\le N}(\x)
= (1+\cO_{N,\x}) C_{(s-\qLap)L^{-2N},\le N}(\x)$ is also similar to the proof of the
corresponding statement for PBC in the proof of \eqref{eq:HLapl-app}, now
using the fact that $\tilde a_N^\free(s_N) = (s-\qLap + \cO_N ) L^{-2N}$ by
\eqref{eq:aN2-app}.
\end{proof}

Finally, we prove the two statements claimed above Corollary~\ref{cor:Gaussian}.

\begin{lemma} \label{lemma:corGaussianLemma}
For $s >0$ and $\x \in \Lambda_{\infty}$,
\begin{align}
	\HLap_{0,\infty} (\x)
		=\lim_{N\rightarrow \infty} \HLap^\per_{sL^{-2N} , N} (\x)
		=\lim_{N\rightarrow \infty} \HLap^\free_{(s-q) L^{-2N} , N} (\x)	
    ,
		\label{eq:corGaussianLemma1}
\end{align}
and, for $N$ sufficiently large,
\begin{align}
	\HLap^\per_{sL^{-2N} , N} (\x) \asymp \HLap^\free_{(s-q) L^{-2N} , N} (\x)
    \asymp
    \frac{1}{|x|^{d-2}} .
	\label{eq:corGaussianLemma2}	
\end{align}
\end{lemma}

\begin{proof}
The reasoning is almost identical for $\HLap^\per$ and $\HLap^\free$,  so we only
present the proof for $\HLap^\per$.
For \eqref{eq:corGaussianLemma1}, we first apply Lemma~\ref{lem:Capp} to obtain
\begin{align}
	|\HLap^\per_{sL^{-2N} , N} (\x) - \HLap_{0,\infty} (\x)|
		&\le | C_{0, \le N} (\x) - \HLap_{0,\infty} (\x) | + | C_{s L^{-2N} , \le N} (\x)  -  C_{0, \le N} (\x)  | + (s L^{-2N})^{-1} L^{-dN} \nnb
		&\le O_{s,L}(1) \Big( L^{-(d-2) N} + \frac{L^{-N}}{|\x|^{d-3}}  \Big).
\end{align}
The right-hand side goes to zero as $N \to \infty$, so this proves
\eqref{eq:corGaussianLemma1}.
For \eqref{eq:corGaussianLemma2}, we use
\begin{align}
	\HLap^\per_{sL^{-2N} , N} (\x)  &= \big( C_{sL^{-2N} ,  \le N} (\x) + \gamma_N ( sL^{-2N} ) L^{-dN} \big) + ( s^{-1} L^{2N} - \gamma_N ( sL^{-2N} ) ) L^{-dN}  \nnb
	&= \big( C_{sL^{-2N} ,  \le N} (\x) + \gamma_N ( sL^{-2N} ) L^{-dN} \big) + \frac{L^{-(d-2)N}}{s(1 + s L^{-2})}
	\asymp
    \frac{1}{|x|^{d-2}} ,
\end{align}
where the final relation is given by \eqref{eq:CNasy}.
\end{proof}

\section*{Acknowledgements}
We thank Emmanuel Michta for useful discussions at the outset of this work.
The work of GS was supported in part by NSERC of Canada.


\begin{thebibliography}{10}

\bibitem{ACG13}
A.~Abdesselam, A.~Chandra, and G.~Guadagni.
\newblock Rigorous quantum field theory functional integrals over the $p$-adics
  {I}: {Anomalous} dimensions.
\newblock Preprint, \url{https://arxiv.org/pdf/1302.5971}, (2013).

\bibitem{Aize82}
M.~Aizenman.
\newblock Geometric analysis of $\varphi^4$ fields and {I}sing models, {Parts}
  {I} and {II}.
\newblock {\em Commun. Math. Phys.}, {\bf 86}:1--48, (1982).

\bibitem{AD21}
M.~Aizenman and H.~Duminil-Copin.
\newblock Marginal triviality of the scaling limits of critical $4{D}$ {Ising}
  and $\lambda\phi_4^4$ models.
\newblock {\em Ann. Math.}, {\bf 194}:163--235, (2021).

\bibitem{BBS-phi4-log}
R.~Bauerschmidt, D.C. Brydges, and G.~Slade.
\newblock Scaling limits and critical behaviour of the $4$-dimensional
  $n$-component $|\varphi|^4$ spin model.
\newblock {\em J. Stat. Phys}, {\bf 157}:692--742, (2014).

\bibitem{BBS-saw4}
R.~Bauerschmidt, D.C. Brydges, and G.~Slade.
\newblock Critical two-point function of the 4-dimensional weakly self-avoiding
  walk.
\newblock {\em Commun.\ Math.\ Phys.}, {\bf 338}:169--193, (2015).

\bibitem{BBS-saw4-log}
R.~Bauerschmidt, D.C. Brydges, and G.~Slade.
\newblock Logarithmic correction for the susceptibility of the 4-dimensional
  weakly self-avoiding walk: a renormalisation group analysis.
\newblock {\em Commun.\ Math.\ Phys.}, {\bf 337}:817--877, (2015).

\bibitem{BBS-brief}
R.~Bauerschmidt, D.C. Brydges, and G.~Slade.
\newblock {\em Introduction to a Renormalisation Group Method}.
\newblock Springer, Singapore, (2019).
\newblock Lecture Notes in Mathematics Vol. 2242.

\bibitem{BLS20}
R.~Bauerschmidt, M.~Lohmann, and G.~Slade.
\newblock Three-dimensional tricritical spins and polymers.
\newblock {\em J. Math. Phys.}, {\bf 61}:033302, (2020).

\bibitem{BPR24}
R.~Bauerschmidt, J.~Park, and P.-F. Rodriguez.
\newblock The {Discrete} {Gaussian} model, {I.} {Renormalisation} group flow at
  high temperature.
\newblock {\em Ann. Probab.}, {\bf 52}:1253--1359, (2024).

\bibitem{BS20}
R.~Bauerschmidt and G.~Slade.
\newblock Mean-field tricritical polymers.
\newblock {\em Probab. Math. Phys}, {\bf 1}:167--204, (2020).

\bibitem{BSTW-clp}
R.~Bauerschmidt, G.~Slade, A.~Tomberg, and B.C. Wallace.
\newblock Finite-order correlation length for 4-dimensional weakly
  self-avoiding walk and $|\varphi|^4$ spins.
\newblock {\em Ann. Henri Poincar\'e}, {\bf 18}:375--402, (2017).

\bibitem{BEHK22}
B.~Berche, T.~Ellis, Y.~Holovatch, and R.~Kenna.
\newblock Phase transitions above the upper critical dimension.
\newblock {\em SciPost Phys. Lect. Notes}, {\bf 136}:paper 60, (2022).

\bibitem{BM87}
P.~M. Bleher and P.~Major.
\newblock Critical phenomena and universal exponents in statistical physics.
  {On} {D}yson's hierarchical model.
\newblock {\em Ann. Probab.}, {\bf 15}:431--477, (1987).

\bibitem{BI03c}
D.C. Brydges and J.Z. Imbrie.
\newblock End-to-end distance from the {G}reen's function for a hierarchical
  self-avoiding walk in four dimensions.
\newblock {\em Commun. Math. Phys.}, {\bf 239}:523--547, (2003).

\bibitem{BI03d}
D.C. Brydges and J.Z. Imbrie.
\newblock {G}reen's function for a hierarchical self-avoiding walk in four
  dimensions.
\newblock {\em Commun. Math. Phys.}, {\bf 239}:549--584, (2003).

\bibitem{BS-rg-norm}
D.C. Brydges and G.~Slade.
\newblock A renormalisation group method. {I}. {Gaussian} integration and
  normed algebras.
\newblock {\em J. Stat. Phys}, {\bf 159}:421--460, (2015).

\bibitem{BS-rg-IE}
D.C. Brydges and G.~Slade.
\newblock A renormalisation group method. {IV}. {Stability} analysis.
\newblock {\em J. Stat. Phys}, {\bf 159}:530--588, (2015).

\bibitem{BS-rg-step}
D.C. Brydges and G.~Slade.
\newblock A renormalisation group method. {V}. {A} single renormalisation group
  step.
\newblock {\em J. Stat. Phys}, {\bf 159}:589--667, (2015).

\bibitem{CJN21}
F.~Camia, J.~Jiang, and C.M. Newman.
\newblock The effect of free boundary conditions on the {Ising} model in high
  dimensions.
\newblock {\em Probab. Theory Related Fields}, {\bf 181}:311--328, (2021).

\bibitem{CH20}
S.~Chatterjee and J.~Hanson.
\newblock Restricted percolation critical exponents in high dimensions.
\newblock {\em Commun. Pure Appl. Math.}, {\bf 73}:2370--2429, (2020).

\bibitem{CE78}
P.~Collet and J.-P. Eckmann.
\newblock {\em A Renormalization Group Analysis of the Hierarchical Model in
  Statistical Mechanics}.
\newblock Springer, Berlin, (1978).
\newblock Lecture Notes in Physics Vol.~74.

\bibitem{DGGZ22}
Y.~Deng, T.M. Garoni, J.~Grimm, and Z.~Zhou.
\newblock Unwrapped two-point functions on high-dimensional tori.
\newblock {\em J. Stat. Mech: Theory Exp.}, 053208, (2022).

\bibitem{DGGZ24}
Y.~Deng, T.M. Garoni, J.~Grimm, and Z.~Zhou.
\newblock Two-point functions of random-length random walk on high-dimensional
  boxes.
\newblock {\em J. Stat. Mech: Theory Exp.}, 023203, (2024).

\bibitem{DP24}
H.~Duminil-Copin and R.~Panis.
\newblock New lower bounds for the (near) critical {Ising} and $\varphi^4$
  models' two-point functions.
\newblock Preprint, \url{https://arxiv.org/pdf/2404.05700}, (2024).

\bibitem{FDZ21}
S.~Fang, Y.~Deng, and Z.~Zhou.
\newblock Logarithmic finite-size scaling of the self-avoiding walk at four
  dimensions.
\newblock {\em Phys.\ Rev.\ E}, {\bf 104}:064108, (2021).

\bibitem{FFS92}
R.~Fern\'{a}ndez, J.~Fr\"{o}hlich, and A.D. Sokal.
\newblock {\em Random Walks, Critical Phenomena, and Triviality in Quantum
  Field Theory}.
\newblock Springer, Berlin, (1992).

\bibitem{Fish83}
M.E. Fisher.
\newblock Scaling, universality and renormalization group theory.
\newblock In F.J.W. Hahne, editor, {\em Critical Phenomena}. Springer, Berlin,
  (1983).
\newblock Lecture Notes in Physics Vol. 186.

\bibitem{F-SBKW16}
E.~Flores-Sola, B.~Berche, R.~Kenna, and M.~Weigel.
\newblock Role of {Fourier} modes in finite-size scaling above the upper
  critical dimension.
\newblock {\em Phys. Rev. Lett.}, {\bf 116}:115701, (2016).

\bibitem{Froh82}
J.~Fr\"{o}hlich.
\newblock On the triviality of $\varphi_d^4$ theories and the approach to the
  critical point in $d \geq 4$ dimensions.
\newblock {\em Nucl. Phys. B}, {\bf 200}:281--296, (1982).

\bibitem{GK82}
K.~Gaw\c{e}dzki and A.~Kupiainen.
\newblock Triviality of $\phi_4^4$ and all that in a hierarchical model
  approximation.
\newblock {\em J. Stat. Phys.}, {\bf 29}:683--698, (1982).

\bibitem{GK85}
K.~Gaw\c{e}dzki and A.~Kupiainen.
\newblock Massless lattice $\varphi^4_4$ theory: Rigorous control of a
  renormalizable asymptotically free model.
\newblock {\em Commun. Math. Phys.}, {\bf 99}:199--252, (1985).

\bibitem{Genn72}
P.G.~de Gennes.
\newblock Exponents for the excluded volume problem as derived by the {Wilson}
  method.
\newblock {\em Phys. Lett.}, {\bf A38}:339--340, (1972).

\bibitem{GJ87}
J.~Glimm and A.~Jaffe.
\newblock {\em Quantum Physics, A Functional Integral Point of View}.
\newblock Springer, Berlin, 2nd edition, (1987).

\bibitem{HHW01}
T.~Hara, T.~Hattori, and H.~Watanabe.
\newblock Triviality of hierarchical {Ising} model in four dimensions.
\newblock {\em Commun. Math. Phys.}, {\bf 220}:13--40, (2001).

\bibitem{HT87}
T.~Hara and H.~Tasaki.
\newblock A rigorous control of logarithmic corrections in four dimensional
  $\varphi^4$ spin systems. {II}. {Critical} behaviour of susceptibility and
  correlation length.
\newblock {\em J. Stat. Phys.}, {\bf 47}:99--121, (1987).

\bibitem{Hutc23}
T.~Hutchcroft.
\newblock Critical cluster volumes in hierarchical percolation.
\newblock To appear in {\it Proc.\ London Math.\ Soc.} Preprint,
  \url{https://arxiv.org/pdf/2211.05686}, (2022).

\bibitem{HMS23}
T.~Hutchcroft, E.~Michta, and G.~Slade.
\newblock High-dimensional near-critical percolation and the torus plateau.
\newblock {\em Ann. Probab.}, {\bf 51}:580--625, (2023).

\bibitem{Kenna04}
R.~Kenna.
\newblock Finite size scaling for ${O}({N})$ $\phi^4$-theory at the upper
  critical dimension.
\newblock {\em Nucl. Phys. B}, {\bf 691} [FS]:292--304, (2004).

\bibitem{Liu24}
Y.~Liu.
\newblock A general approach to massive upper bound for two-point function with
  application to self-avoiding walk torus plateau.
\newblock Preprint, \url{https://arxiv.org/pdf/2310.17321}, (2023).

\bibitem{LPS25-Ising}
Y.~Liu, R.~Panis, and G.~Slade.
\newblock The torus plateau for the high-dimensional {Ising} model.
\newblock Preprint, \url{https://arxiv.org/pdf/2405.17353}, (2024).

\bibitem{LPS25-universal}
Y.~Liu, J.~Park, and G.~Slade.
\newblock Universal finite-size scaling in high-dimensional critical phenomena.
\newblock Preprint, \url{https://arxiv.org/pdf/2412.08814}, (2024).

\bibitem{LS25a}
Y.~Liu and G.~Slade.
\newblock Near-critical and finite-size scaling for high-dimensional lattice
  trees and animals.
\newblock Preprint, \url{https://arxiv.org/pdf/2412.05491}, (2024).

\bibitem{LSW17}
M.~Lohmann, G.~Slade, and B.C. Wallace.
\newblock Critical two-point function for long-range {$O(n)$} models below the
  upper critical dimension.
\newblock {\em J. Stat. Phys.}, {\bf 169}:1132--1161, (2017).

\bibitem{LB97}
E.~Luijten and H.W.J. Bl\"{o}te.
\newblock Classical critical behavior of spin models with long-range
  interactions.
\newblock {\em Phys. Rev. B}, {\bf 56}:8945--8958, (1997).

\bibitem{LM16}
P.H. Lundow and K.~Markstr\"om.
\newblock The scaling window of the $5${D} {Ising} model with free boundary
  conditions.
\newblock {\em Nucl. Phys. B}, {\bf 911}:163--172, (2016).

\bibitem{MPS23}
E.~Michta, J.~Park, and G.~Slade.
\newblock Boundary conditions and universal finite-size scaling for the
  hierarchical $|\varphi|^4$ model in dimensions $4$ and higher.
\newblock Preprint, \url{https://arxiv.org/pdf/2306.00896}, (2023).

\bibitem{Saka07}
A.~Sakai.
\newblock Lace expansion for the {Ising} model.
\newblock {\em Commun. Math. Phys.}, {\bf 272}:283--344, (2007).
\newblock Correction: A.~Sakai. Correct bounds on the Ising lace-expansion
  coefficients. {\it Commun. Math. Phys.}, {\bf 392}:783--823, (2022).

\bibitem{Slad20}
G.~Slade.
\newblock Self-avoiding walk on the complete graph.
\newblock {\em J. Math. Soc. Japan}, {\bf 72}:1189--1200, (2020).

\bibitem{Slad23_wsaw}
G.~Slade.
\newblock The near-critical two-point function and the torus plateau for weakly
  self-avoiding walk in high dimensions.
\newblock {\em Math. Phys. Anal. Geom.}, {\bf 26}:article 6, (2023).

\bibitem{ST-phi4}
G.~Slade and A.~Tomberg.
\newblock Critical correlation functions for the $4$-dimensional weakly
  self-avoiding walk and $n$-component $|\varphi|^4$ model.
\newblock {\em Commun. Math. Phys.}, {\bf 342}:675--737, (2016).

\bibitem{WY14}
M.~Wittmann and A.P. Young.
\newblock Finite-size scaling above the upper critical dimension.
\newblock {\em Phys. Rev. E}, {\bf 90}:062137, (2014).

\bibitem{ZGFDG18}
Z.~Zhou, J.~Grimm, S.~Fang, Y.~Deng, and T.M. Garoni.
\newblock Random-length random walks and finite-size scaling in high
  dimensions.
\newblock {\em Phys. Rev. Lett.}, {\bf 121}:185701, (2018).

\end{thebibliography}
\end{document}